\providecommand{\tabularnewline}{\\}
\newcommand{\cA}{{\mathcal A}}
\newcommand{\bA}{{\boldsymbol A}}
\newcommand{\ba}{{\boldsymbol a}}
\newcommand{\bfeta}{{\boldsymbol \eta}}
\newcommand{\bSigma}{{\boldsymbol \Sigma}}
\newcommand{\bx}{{\boldsymbol x}}
\newcommand{\by}{{\boldsymbol y}}
\newcommand{\bH}{{\boldsymbol H}}
\newcommand{\bX}{{\boldsymbol X}}
 \DeclareMathOperator{\trace}{Tr}
 \DeclareMathOperator{\diag}{diag}
  \DeclareMathOperator{\argmin}{argmin}  
\title{Exact and Stable Covariance Estimation\\ from Quadratic Sampling via Convex Programming}
\begin{document}
\theoremstyle{plain}\newtheorem{lemma}{\textbf{Lemma}}\newtheorem{theorem}{\textbf{Theorem}}\newtheorem{fact}{\textbf{Fact}}\newtheorem{corollary}{\textbf{Corollary}}\newtheorem{assumtion}{\textbf{Assumption}}\newtheorem{example}{\textbf{Example}}\newtheorem{definition}{\textbf{Definition}}\newtheorem{proposition}{\textbf{Proposition}}

\theoremstyle{definition}

\theoremstyle{remark}\newtheorem{remark}{\textbf{Remark}}

\bibliographystyle{IEEEtran}

\global\long\def\footnotemark{}

%\date{October 2013; Revised: March 2015}
%\date{The date}

\author{Yuxin Chen, Yuejie Chi, and Andrea J. Goldsmith %
\thanks{Y. Chen is with the Department of Statistics, Stanford
University, Stanford, CA 94305, USA (email: yxchen@stanford.edu). %
} %
\thanks{Y. Chi is with the Department of Electrical and Computer Engineering,
The Ohio State University, Columbus, OH 43210, USA (email: chi.97@osu.edu). %
} %
\thanks{A. J. Goldsmith is with the Department of Electrical Engineering,
Stanford University, Stanford, CA 94305, USA (email: andrea@ee.stanford.edu). %
} %
\thanks{This paper was presented in part at the International Conference on
Acoustics, Speech, and Signal Processing (ICASSP) \cite{chen2014icassp}
and the International Symposium on Information Theory (ISIT) \cite{chen2014isit}.%
}
\thanks{
Copyright (c) 2014 IEEE. Personal use of this material is permitted.  However, permission to use this material for any other purposes must be obtained from the IEEE by sending a request to pubs-permissions@ieee.org.}
}
\maketitle

\begin{abstract}
Statistical inference and information processing of high-dimensional
data often require efficient and accurate estimation of their second-order
statistics. With rapidly changing data, limited processing power and
storage at the acquisition devices, it is desirable to extract
the covariance structure from a single pass over the data and
a small number of stored measurements. In this paper, we explore a quadratic
(or rank-one) measurement model which imposes minimal memory requirements
and low computational complexity during the sampling process, and
is shown to be optimal in preserving various low-dimensional covariance structures.
Specifically, four popular structural assumptions of covariance matrices,
namely low rank, Toeplitz low rank, sparsity, jointly rank-one and
sparse structure, are investigated, while recovery is achieved via convex
relaxation paradigms for the respective structure.

The proposed quadratic sampling framework has a variety of potential
applications including streaming data processing, high-frequency wireless
communication, phase space tomography and phase retrieval in optics, and non-coherent
subspace detection. Our method admits universally accurate covariance
estimation in the absence of noise, as soon as the number of measurements
exceeds the information theoretic limits. We also demonstrate the
robustness of this approach against noise and imperfect structural
assumptions. Our analysis is established upon a novel notion called
the mixed-norm restricted isometry property (RIP-$\ell_{2}/\ell_{1}$),
as well as the conventional RIP-$\ell_{2}/\ell_{2}$ for near-isotropic
and bounded measurements. In addition, our results improve upon the
best-known phase retrieval (including both dense and sparse signals)
guarantees using PhaseLift with a significantly simpler approach. 
\end{abstract}
%\begin{IEEEkeywords}Quadratic measurements, data
%stream, covariance sketching, energy measurements, phase retrieval, phase tomography,
%RIP-$\ell_{2}/\ell_{1}$, nearly-isotropic operators, Toeplitz, low
%rank, sparsity \end{IEEEkeywords}

\begin{IEEEkeywords}
Quadratic measurements, rank-one measurements, covariance sketching,
energy measurements, phase retrieval, phase tomography, RIP-$\ell_{2}/\ell_{1}$,
Toeplitz, low rank, sparsity 
\end{IEEEkeywords}

\section{Introduction}

Accurate estimation of second-order statistics of stochastic processes
and data streams is of ever-growing importance to various applications
that exhibit high dimensionality. Covariance estimation is the cornerstone
of modern statistical analysis and information processing, as the
covariance matrix constitutes the sufficient statistics to many signal
processing tasks, and is particularly crucial for extracting reduced-dimension
representation of the objects of interest. For signals and data streams
of high dimensionality, there might be limited memory and computation
power available at the data acquisition devices to process the rapidly
changing input, which requires the covariance estimation task to be
performed with a single pass over the data stream, minimal storage,
and low computational complexity. This is not possible unless appropriate
structural assumptions are incorporated into the high-dimensional
problems. Fortunately, a broad class of high-dimensional signals indeed
possesses low-dimensional structures, and the intrinsic dimension
of the covariance matrix is often far smaller than the ambient dimension.
For different types of data, the covariance matrix may exhibit different
structures; four of the most widely considered structures are listed
below. 
\begin{itemize}
\itemsep0.3em
\item \textit{Low Rank:} The covariance matrix is (approximately) low-rank,
which occurs when a small number of components accounts for most of
the variability in the data. Low-rank covariance matrices arise in
applications including traffic data monitoring, array signal processing,
collaborative filtering, and metric learning. 
\item \textit{Stationarity and Low Rank:} The covariance matrix is simultaneously
low-rank and Toeplitz, which arises when the random process is generated
by a few spectral spikes. Recovery of the stationary covariance matrix,
often equivalent to spectral estimation, is crucial in many tasks
in wireless communications (e.g. detecting spectral holes in cognitive
radio networks), and array signal processing (e.g. direction-of-arrival
analysis \cite{RoyKailathESPIRIT1989}). 
\item \textit{Sparsity:} The covariance matrix can be approximated in a
sparse form \cite{karoui2008operator}. This arises when a large number
of variables have small pairwise correlation, or when several variables
are mutually exclusive. Sparse covariance matrices arise in finance,
biology and spectrum estimation. %A partial list of applications includes cyclic spectrum sensing \cite{}, finance, and 

\item \textit{Joint Sparsity and Rank-One:} The covariance matrix can be
approximated by a jointly sparse and rank-one matrix. This has received
much attention in recent development of sparse PCA, and is closely
related to sparse signal recovery from magnitude measurements (called
\emph{sparse phase retrieval}). 
\end{itemize}
%This gives rise to the natural question as to how to effectively exploit the information structure and obtain faithful covariance estimation, provided that we are only allowed to obtain low-rate and noisy samples.  

In this paper, we wish to reconstruct an unknown covariance matrix
$\bSigma\in\mathbb{R}^{n\times n}$ with the above structure from
a small number of rank-one measurements. In particular, we explore
 sampling methods of the form 
\begin{align}
y_{i} & =\ba_{i}^{\top}\bSigma\ba_{i}+\eta_{i},\quad i=1,\ldots,m,\label{measurements}
\end{align}
where $\boldsymbol{y}:=\left\{ y_{i}\right\} _{i=1}^{m}$ denotes
the measurements, $\ba_{i}\in\mathbb{R}^{n}$ represents the sensing
vector, $\bfeta:=\left\{ \eta_{i}\right\} _{i=1}^{m}$ stands for
the noise term, and $m$ is the number of measurements. The noise-free
measurements $\ba_{i}^{\top}\bSigma\ba_{i}$'s are henceforth referred to as {\em quadratic measurements} (or
\emph{rank-one measurements}). In practice, the number of measurements
one can obtain is constrained by the storage requirement in data acquisition,
which could be much smaller than the ambient dimension of $\bSigma$.
This sampling scheme finds applications in a wide spectrum of practical
scenarios, admits optimal covariance estimation with tractable algorithms,
and brings in computational and storage advantages in comparison with
other types of measurements, as detailed in the rest of the paper.

\subsection{Motivation}

The quadratic measurements in the form of \eqref{measurements} are
motivated by several application scenarios listed below, which illustrate
the practicability and benefits of the proposed quadratic measurement
scheme.

\subsubsection{Covariance Sketching for Data Streams}

\label{sketching_stream}

A high-dimensional data stream model represents \emph{real-time} data
that arrives sequentially at a high rate, where each data instance
is itself high-dimensional. In many resource-constrained applications,
the available memory and processing power at the data acquisition
devices are severely limited compared with the volume and rate of
the data \cite{muthukrishnan2005data}. Therefore it is desirable
to extract the covariance matrix of the data instances from inputs
on the fly without storing the whole stream. Interestingly, the quadratic
measurement strategy can be leveraged as an effective data stream
processing method to extract the covariance information from real-time
data, with limited memory and low computational complexity.

Specifically, consider an input stream $\{\boldsymbol{x}_{t}\}_{t=1}^{\infty}$
that arrives sequentially, where each $\boldsymbol{x}_{t}\in\mathbb{R}^{n}$
is a high-dimensional data instance generated at time $t$. The goal
is to estimate the covariance matrix $\boldsymbol{\Sigma}=\mathbb{E}[\boldsymbol{x}_{t}\boldsymbol{x}_{t}^{\top}]\in\mathbb{R}^{n\times n}$.
The prohibitively high rate at which data is generated forces covariance
extraction to function with as small a memory as possible. The scenario
we consider is quite general, and we only impose that the covariance
of a random substream of the original data stream converges to the
true covariance $\bSigma$. No prior information on the correlation
statistics across consecutive instances is assumed to be known {\em
a priori} (e.g. they are not necessarily independently drawn), and
hence it is not feasible to exploit these statistics to enable lower
sample complexity.

We propose to pool the data stream $\{\boldsymbol{x}_{t}\}_{t=1}^{\infty}$
into a small set of measurements in an easy-to-adapt fashion with
a collection of sketching vectors $\{\ba_{i}\}_{i=1}^{m}$. Our covariance
sketching method, termed \emph{quadratic sketching}, is outlined as
follows: 
\begin{enumerate}
\itemsep0.3em
\item At each time $t$, we randomly choose a sketching vector indexed by
$\ell_{t}\in\{1,\ldots,m\}$, and obtain a single nonnegative \emph{quadratic
sketch} $(\boldsymbol{a}_{\ell_{t}}^{\top}\boldsymbol{x}_{t})^{2}$. 
\item All sketches employing the same sketching vector $\ba_{i}$ are aggregated
and normalized, which converge \emph{rapidly} to a {\em measurement}%
\footnote{Note that we might only be able to obtain measurements for empirical
covariance matrices instead of $\bSigma$, but this inaccuracy can
be absorbed into the noise term $\bfeta$. In fact, for stationary
data streams, $y_{i}$ converges rapidly to $\ba_{i}^{\top}\bSigma\ba_{i}$
with {\em a few} instances $\boldsymbol{x}_{t}$. %
} 
\begin{align}
y_{i} & =\mathbb{E}\left[(\boldsymbol{a}_{i}^{\top}\boldsymbol{x}_{t})^{2}\right]+\bfeta_{i}=\ba_{i}^{\top}\mathbb{E}\left[\bx_{t}\bx_{t}^{\top}\right]\ba_{i}+\bfeta_{i}\nonumber \\
 & =\boldsymbol{a}_{i}^{\top}\boldsymbol{\Sigma}\boldsymbol{a}_{i}+\bfeta_{i},\quad\quad\quad i=1,\ldots,m,\label{eq:DataStream}
\end{align}
where $\bfeta:=\left\{ \eta_{i}\right\} _{i=1}^{m}$ denotes the error
term. 
\end{enumerate}
There are several benefits of this covariance sketching method. First,
the storage complexity $m$, as will be shown, can be much smaller
than the ambient dimension of $\bSigma$. The computational cost for
sketching each instance is linear with respect to the dimension of
the instance in the data stream. Unlike the uncompressed sketching
methods where each instance one measures usually affects many stored
measurements, our scheme allows each aggregate quadratic sketch to
be composed by completely different instances, which allows sketching
to be performed in a distributed and asynchronous manner. This arises
since each randomized sketch is a compressive snapshot of the second-order
statistics, while each uncompressed measurement itself is unable to
capture the correlation information. As we will demonstrate later,
this sketching scheme allows optimal covariance estimation with information
theoretically minimal memory complexity at the data acquisition stage.
One motivating application for this covariance sketching method is
covariance estimation of ultra-wideband random processes, as is further
elaborated in Section~\ref{sec:noncoherentsignal}.

\subsubsection{Noncoherent Energy Measurements in Communications and Signal Processing}

\label{sec:noncoherentsignal} When communication takes place in the
high-frequency regime, \emph{empirical} energy measurements are often
more accurate and cheaper to obtain than phase measurements. For instance,
energy measurements will be more reliable when communication systems
are operating with extremely high carrier frequencies (e.g.
60GHz communication systems \cite{daniels200760}). % and also result in popular noncoherent detection methods which do not require prior information on the transmitted signal. 

\begin{itemize}
\itemsep0.3em
\item {\emph{Spectrum Estimation of Stochastic Processes from Energy Measurements:}
} Many wireless communication systems operating in stochastic environments
rely on reliable estimation of the spectral characteristics of random
processes \cite{jung2013compressive}, such as recovering the power
spectrum of the ultra-wideband random process characterizing the spectrum
occupancy in cognitive radio \cite{Leus2011,ariananda2012compressive}.
Moreover, optimal signal transmissions are often based on the Karhunen\textendash{}Loeve
decomposition of a random process, which requires accurate covariance
information \cite{gallager1968information}. If one employs a sensing
vector $\boldsymbol{a}_{i}$, which is implementable using random
demodulators \cite{tropp2010beyond}, and observes the average energy
measurements over $N$ instances $\left\{ \boldsymbol{x}_{t}\right\} _{1\leq t\leq N}$,
then the energy measurements read
\begin{equation}
y_{i}=\frac{1}{N}\sum_{t=1}^{N}\left|\ba_{i}^{\top}\bx_{t}\right|^{2}=\ba_{i}^{\top}\bSigma_{N}\ba_{i},\quad i=1,\ldots,m\label{sample-measurement}
\end{equation}
where $\bSigma_{N}:=\frac{1}{N}\sum_{t=1}^{N}\bx_{t}\bx_{t}^{\top}$
denotes the sample covariance matrix, leading to the quadratic-form
observations. 
\item {\emph{Noncoherent Subspace Detection from Energy Measurements:}}
Matched subspace detection \cite{scharf1994matched} spans many applications
in wireless communication, radar, and pattern recognition when the
transmitted signal is encoded by the subspaces. The problem can also
be cast as recovering the principal subspace of a dataset $\{\bx_{t}\}_{t=1}^{N}$,
with an energy detector obtaining $m$ measurements in the form of
\eqref{sample-measurement}. Thus, the noncoherent subspace detection
is subsumed by the formulation (\ref{measurements}). 
\end{itemize}

\subsubsection{Phaseless Measurements in Physics}

Optical imaging devices are incapable of acquiring phase measurements
due to ultra-high frequencies associated with light. In many applications,
measurements taking the form of \eqref{measurements} arise naturally. 
\begin{itemize}
\itemsep0.3em
\item {\emph{Compressive Phase Space Tomography:}} Phase Space Tomography
\cite{raymer1994complex} is an appealing method to measure the correlation
function of a wave field in physics. However, tomography becomes challenging
when the dimensionality of the correlation matrix becomes large. Recently,
it was proposed experimentally in \cite{tian2012experimental} to
recover an approximately low-rank correlation matrix, which often
holds in physics, by only taking a small number of measurements in
the form of \eqref{measurements}. 

\item {\emph{Phase Retrieval:}} Due to the physical constraints, one can
only measure amplitudes of the Fourier coefficients of an optical
object. This gives rise to the problem of recovering a signal $\boldsymbol{x}\in\mathbb{R}^{n}$
from magnitude measurements, which is often referred to as phase retrieval.
Several convex (e.g. \cite{candes2012phaselift,shechtman2011sparsity,waldspurger2012phase})
and nonconvex algorithms (e.g. \cite{candes2014wirtinger,shechtman2013gespar,netrapalli2013phase,Schniter2015})
have been proposed that enable exact phase retrieval (i.e. recovers
$\boldsymbol{x}\cdot\boldsymbol{x}^{\top}$) from random magnitude
measurements. If we set $\bSigma:=\boldsymbol{x}\boldsymbol{x}^{\top}$,
then our problem formulation (\ref{measurements}) subsumes phase
retrieval as a special case in the low-rank setting. 
%\item {\emph{Line Spectrum Estimation from Phaseless Measurements:}} If the signal of interest $\boldsymbol{x}\in\mathbb{C}^n$ is further a spectrally-sparse object that only contains a small number of spectrum lines in the phase retrieval problem, the lifted matrix $\bSigma$ will be simultaneously Toeplitz and rank-one.
\end{itemize}

Apart from the preceding applications, we are aware that this rank-one measurement model naturally arises in the mixture of linear regression problem \cite{chen2014convex}. All in all, all of these applications require structured matrix recovery
 from \emph{a small number }of rank-one measurements (\ref{measurements}).
The aim of this paper to develop tractable recovery algorithms that enjoy near-optimal performance guarantees.

\subsection{Contributions}

Our main contributions are three fold. First, we have developed convex
optimization algorithms for covariance estimation from a set of quadratic
measurements as given in \eqref{measurements} for a variety of structural
assumptions including low-rank, Toeplitz low-rank, sparse, and sparse
rank-one covariance matrices. The proposed algorithms exploit the
presumed low-dimensional structures using convex relaxation tailored
for respective structures. For a large class of sub-Gaussian sensing
vectors, we derive theoretical performance guarantees (Theorems~\ref{thm:ApproxLR}
-- \ref{thm:SparsePR}) from the following aspects: 
\begin{enumerate}
\itemsep0.3em
\item \textbf{Exact and universal recovery}: once the sensing vectors are
selected, then with high probability, all covariance matrices satisfying
the presumed structure can be recovered; 
\item \textbf{Stable recovery}: the proposed algorithms allow reconstruction
of the true covariance matrix to within high accuracy even under imperfect
structural assumptions; additionally, if the measurements are corrupted
by noise, possibly adversarial, the estimate deviates from the true
covariance matrix by at most a constant multiple of the noise level; 
\item \textbf{Near-minimal measurements}: the proposed algorithms succeed
as soon as the number of measurements is slightly above the information
theoretic limits for most of the respective structure. For the special
case of (sparse) rank-one matrices, our result recovers and strengthens
the best-known reconstruction guarantees of (sparse) phase retrieval
using PhaseLift \cite{candes2012phaselift,candes2012solving,li2012sparse}
with a much simpler proof technique. 
\end{enumerate}
Secondly, to obtain some of the above theoretical guarantees (Theorems~\ref{thm:ApproxLR},
\ref{thm:ApproxSP}, and \ref{thm:SparsePR}), we have introduced
a novel mixed-norm restricted isometry property, denoted by RIP-$\ell_{2}/\ell_{1}$.
An operator is said to satisfy the RIP-$\ell_{2}/\ell_{1}$ if the
strength of the signal class of interest before and after measurements
are preserved when measured in the $\ell_{2}$ norm and in the $\ell_{1}$
norm, respectively. While the conventional RIP-$\ell_{2}/\ell_{2}$
does not hold for the quadratic sensing model for general low-rank
structures as pointed out by \cite{candes2012phaselift}, we have
established that the sensing mechanism does satisfy the RIP-$\ell_{2}/\ell_{1}$
after a ``debiasing'' modification, under general low-rank, sparse,
and simultaneously sparse and rank-one structural assumptions. This
seemingly subtle change enables a significantly simpler analytical
approach without resorting to complicated dual construction as in
\cite{candes2012phaselift,candes2012solving,li2012sparse}.

On the other hand, we demonstrate, via the entropy method \cite{rudelson2008sparse},
that linear combinations of the quadratic measurements satisfy RIP-$\ell_{2}/\ell_{2}$
when restricted to \textit{Toeplitz} low-rank covariance matrices.
This leads to near-optimal recovery guarantees for Toeplitz low-rank
covariance matrices (Theorem~\ref{thm:ToeplitzPhaseLift}). Along
the way, we have also established a RIP-$\ell_{2}/\ell_{2}$ for bounded
and near-isometric operators (Theorem~\ref{thm:RIP_Isotropic}),
which strengthens previous work \cite{Gross2011recovering,liu2011universal}
by offering universal and stable recovery guarantees for a broader
class of operators including Fourier-type measurements.

Last but not least, our measurement schemes and algorithms may be
of independent interest to high-dimensional data processing. The measurements
in \eqref{measurements} are rank-one measurements with respect to
the covariance matrix, which are much easier to implement and bear
a smaller computational cost than full-rank measurement matrices with
i.i.d. entries. Moreover, the performance guarantees of the measurement
scheme \eqref{measurements} is universal, which does not require
any additional incoherence conditions on the covariance matrix as
required in the standard matrix completion framework \cite{ExactMC09,Gross2011recovering,chen2013robust}.

\subsection{Related Work}

In most existing work, the covariance matrix is estimated from a collection
of \textit{full} data samples, and fundamental guarantees have been
derived on how many samples are sufficient to approximate the ground
truth \cite{johnstone2006high,karoui2008operator}. In contrast, this
paper is motivated by the success of Compressed Sensing (CS) \cite{Don2006,CandTao06},
which asserts that compression can be achieved at the same time as
sensing without losing information. Efficient algorithms have been
developed to estimate a \textit{deterministic signal} from a much
smaller number of linear measurements that is proportional to the
complexity of the parsimonious signal model. As we will show in this
paper, covariance estimation from compressive measurements can be
highly robust.

When the covariance matrix is assumed to be approximately sparse,
recent work \cite{tian2012cyclic,Leus2011} explored reconstruction
of second-order statistics of a  cyclostationary signal from random
linear measurements, by $\ell_{1}$-minimization without performance
guarantees. Other spectral prior information has been considered as well in \cite{romero2013wideband} for stationary processes. These problem setups are quite different from \eqref{measurements} in the current work.
Another work by Dasarathy et al. \cite{dasarathy2013sketching} proposed estimating an
approximately sparse covariance matrix from measurements of the form
$\boldsymbol{Y}=\boldsymbol{A}\bSigma\boldsymbol{A}^{\top}$, where
$\boldsymbol{A}\in\mathbb{R}^{m\times n}$ denotes the sketching matrix
constructed from expander graphs. Nevertheless, this scheme cannot
accommodate low-rank covariance matrix estimation.

Our covariance estimation method is inspired by recent developments
in phase retrieval \cite{candes2012phaselift,candes2012solving,beck2012sparsity,jaganathan2012recovery,waldspurger2012phase,netrapalli2013phase},
which is tantamount to recovering rank-one covariance matrices from
quadratic measurements. In particular, our recovery algorithm coincides
with \textit{PhaseLift} \cite{candes2012phaselift,candes2012solving}
when applied to low-rank matrices. In \cite{candes2012solving}, it
is shown that PhaseLift succeeds at reconstructing a signal of dimensionality
$n$ from $\Theta(n)$ phaseless \emph{Gaussian} measurements, and
stable recovery has also been established in the presence of noise.
When specializing our result to this case, we have shown that the
same type of theoretical guarantee holds for a much larger class of
\textit{sub-Gaussian} measurements, with a different proof technique
that yields a much simpler proof. Moreover, when the signal is further
assumed to be $k$-sparse, the pioneering work \cite{li2012sparse}
showed that $O(k^{2}\log n)$ \emph{Gaussian} measurements suffice;
this result is extended to accommodate sub-Gaussian measurements and
approximately sparse signals by our framework with a much simpler
proof. More details can be found in Section~\ref{sec:PhaseRetrieval}.

We also put the proposed covariance sketching scheme in Section~\ref{sketching_stream}
into perspective. In a streaming setting, online principal component
analysis (PCA) has been an active area of research for decades \cite{oja1983subspace}
using full data samples, where non-asymptotic convergence guarantees
have only been recently developed \cite{Caramanis2013onlinePCA}.
Inspired by CS, subspace tracking from partial observations of a data
stream \cite{Balzano-2010,chi2012petrels}, which can be regarded
as a variant of incremental PCA \cite{brand2002incremental} in the
presence of missing values, is also closely related. However, existing
subspace tracking algorithms mainly aim to recover the data stream,
which is not necessary if one only cares to extract the second-order
statistics.

Finally, after we posted our work on Arxiv, Cai and Zhang made available
their manuscript \cite{cai2013rop}, an independent work that studies low-rank
matrix recovery under rank-one measurements via the notion of restricted
uniform boundedness. In comparison, our results accommodate a larger
class of covariance structures including Toeplitz low-rank, sparse,
and jointly low-rank and sparse matrices.

\subsection{Organization}

The rest of this paper is organized as follows. We first present the
convex optimization based algorithms in Section~\ref{sec:Formal-Setup},
and establish their theoretical guarantees. The analysis framework
is based upon a novel mixed-norm restricted isometry property as well
as conventional RIP for near-isotropic and bounded measurements, as
elaborated in Sections~\ref{sec:RIP} and \ref{sec:Approximate-Isometry:-RIP-22}.
The proof of main theorems is deferred to the appendices. Numerical
examples are provided in Sections~\ref{sec:numerical}. Finally,
Section \ref{sec:Conclusions-and-Future} concludes the paper with
a summary of our findings and a discussion of future directions.

%Section~\ref{sec:PhaseRetrieval} discusses sparse phase retrieval, where the proposed proof architecture is used to recover and improve upon existing results.

\subsection{Notations}

Before proceeding, we provide a brief summary of useful notations that will be
used throughout this paper. A variety of matrix norms will be discussed;
in particular, we denote by $\left\Vert \boldsymbol{X}\right\Vert $,
$\left\Vert \boldsymbol{X}\right\Vert _{\text{F}}$, and $\left\Vert \boldsymbol{X}\right\Vert _{*}$
the spectral norm, the Frobenius norm, and the nuclear norm (i.e.
sum of all singular values) of $\boldsymbol{X}$, respectively. When
$\bX$ is a positive semidefinite (PSD) matrix, the nuclear norm coincides
with the trace $\|\bX\|_{*}=\trace(\bX)$. We use $\|\bX\|_{1}$
and $\|\bX\|_{0}$ to denote the $\ell_{1}$ norm and support size
of the vectorized $\bX$, respectively. The Euclidean inner product
between $\boldsymbol{X}$ and $\boldsymbol{Y}$ is defined as $\left\langle \boldsymbol{X},\boldsymbol{Y}\right\rangle =\trace(\boldsymbol{X}^{\top}\boldsymbol{Y})$.
We will abuse the notation and let $\boldsymbol{\Sigma}_{r}$ and
$\boldsymbol{\Sigma}_{k}$ stand for the best rank-$r$ approximation
and the best $k$-term approximation of $\boldsymbol{\Sigma}$ respectively,
i.e. 
\[
\boldsymbol{\Sigma}_{r}=\argmin_{\boldsymbol{M}:\text{rank}\left(\boldsymbol{M}\right)=r}\left\Vert \boldsymbol{\Sigma}-\boldsymbol{M}\right\Vert _{\mathrm{F}},
\]
and 
\[
\boldsymbol{\Sigma}_{k}=\argmin_{\boldsymbol{M}:\left\Vert \boldsymbol{M}\right\Vert _{0}=k}\left\Vert \boldsymbol{\Sigma}-\boldsymbol{M}\right\Vert _{\mathrm{F}},
\]
whenever clear from context. Besides, we denote by $\mathcal{T}$
the orthogonal projection operator onto Toeplitz matrices, and $\mathcal{T}^{\perp}$
its orthogonal complement. Some useful notations are summarized in
Table \ref{tab:Summary-of-Notation-Nonuniform}.

\begin{table*}
\caption{\label{tab:Summary-of-Notation-Nonuniform}Summary of Notation and
Parameters}

\vspace{5pt}
 \centering{}%
\begin{tabular}{l>{\raggedright}p{5in}}
\hline 
$\bSigma$, $\bSigma_{r}$, $\bSigma_{\mathrm{c}}$  & true covariance matrix, best rank-$r$ approximation of $\bSigma$,
and $\bSigma_{\mathrm{c}}:=\bSigma-\bSigma_{r}$\tabularnewline
$\bSigma$, $\bSigma_{\Omega_{0}}$, $\bSigma_{\Omega_{0}^{\text{c}}}$  & true covariance matrix, best $k$-sparse approximation of $\bSigma$,
and $\bSigma_{\Omega_{0}^{\text{c}}}:=\bSigma-\bSigma_{\Omega_{0}}$\tabularnewline
$\mathcal{T}$, $\mathcal{T}^{\perp}$  & orthogonal projection operator onto Toeplitz matrices, and its orthogonal
complement. \tabularnewline
$\bfeta,\boldsymbol{y}\in\mathbb{R}^{m}$ ,  & noise, quadratic measurements $\left\{ \boldsymbol{a}_{i}^{\top}\bSigma\boldsymbol{a}_{i}+\eta_{i}\right\} _{1\leq i\leq m}$\tabularnewline
$\boldsymbol{a}_{i}\in\mathbb{R}^{n}$, $\boldsymbol{A}_{i}\in\mathbb{R}^{n\times n}$  & $i$th sensing vector, $i$th sensing matrix $\boldsymbol{A}_{i}:=\boldsymbol{a}_{i}\cdot\boldsymbol{a}_{i}^{\top}$\tabularnewline
$\boldsymbol{B}_{i}\in\mathbb{R}^{n\times n}$  & auxiliary sensing matrix\tabularnewline
%$T$, $\mathcal{P}_{T}$, $\mathcal{P}_{T^{\perp}}$  & tangent space at the point $\bSigma$ if $\bSigma$ is low-rank or
%at the point $\bSigma_{r}$ otherwise, orthogonal projection onto
%$T$, and its complement operator\tabularnewline
$\mathcal{A}_{i}$,$\mathcal{A}$  & linear transformation $\boldsymbol{X}\mapsto\boldsymbol{a}_{i}^{\top}\boldsymbol{X}\boldsymbol{a}_{i}$
, linear mapping $\boldsymbol{X}\mapsto\left\{ \boldsymbol{a}_{i}^{\top}\boldsymbol{X}\boldsymbol{a}_{i}\right\} _{1\leq i\leq m}$\tabularnewline
$\mathcal{B}_{i}$,$\mathcal{B}$  & linear transformation $\boldsymbol{X}\mapsto\left\langle \boldsymbol{B}_{i},\boldsymbol{X}\right\rangle $
, linear mapping $\boldsymbol{X}\mapsto\left\{ \mathcal{B}_{i}\left(\boldsymbol{X}\right)\right\} _{1\leq i\leq m}$\tabularnewline
\hline 
\end{tabular}
\end{table*}

\section{Convex Relaxation and Its Performance Guarantees}

\label{sec:Formal-Setup}

In general, recovering the covariance matrix $\bSigma\in\mathbb{R}^{n\times n}$
from $m<n(n+1)/2$ measurements is ill-posed, unless the sampling
mechanism can effectively exploit the low-dimensional covariance structure.
Random sampling often preserves the information structure from minimal
observations, and allows robust recovery from noisy measurements.

In this paper, we restrict our attention to the following random sampling
model. We assume that the sensing vectors are composed of i.i.d. \emph{sub-Gaussian}
entries. In particular, we assume $\ba_{i}$'s ($1\leq i\leq m$)
are i.i.d. copies of $\boldsymbol{z}=\left[z_{1},\cdots,z_{n}\right]^{\top}$,
where each $z_{i}$ is i.i.d. drawn from a distribution with the following
properties 
\begin{equation}
\mathbb{E}[z_{i}]=0,\quad\mathbb{E}[z_{i}^{2}]=1,\quad\mbox{and}\quad\mu_{4}:=\mathbb{E}\left[z_{i}^{4}\right]>1.\label{sampling}
\end{equation}
%where the sub-Gaussian norm of a random variable $X$ is defined as
%$$ \|X \|_{\psi_2} =  \sup_{p\geq 1} p^{-1/2}(\mathbb{E} |X|^p)^{1/p}.$$
% \|a_i \|_{\psi_2} \leq K, 

\begin{comment}
In order to accommodate noisy measurements, we consider the following
general noise model. 
\begin{itemize}
\item \textbf{Noise Model.} We assume that the noise $\bfeta:=[\eta_{1},\cdots,\eta_{m}]^{\top}$
is bounded in $\ell_{1}$ norm \textit{deterministically}%
\footnote{Extension to stochastic noise models is straightforward, which we
omit in this paper.%
}, so that 
\begin{equation}
\|\bfeta\|_{1}\leq\epsilon,\label{noise}
\end{equation}
where $\epsilon$ is known {\em a priori}. \end{itemize}
\end{comment}

We assume that the error term $\bfeta:=[\eta_{1},\cdots,\eta_{m}]^{\top}$
is bounded in either $\ell_{1}$ norm or $\ell_{2}$ norm as specified
later in the theoretical guarantees. For notational simplicity, let
$\boldsymbol{A}_{i}:=\boldsymbol{a}_{i}\boldsymbol{a}_{i}^{\top}$
represent the equivalent sensing matrix, and hence the measurements
$\boldsymbol{y}:=\left[y_{1},\cdots,y_{m}\right]^{\top}$ obeys $y_{i}:=\left\langle \boldsymbol{A}_{i},\bSigma\right\rangle +\eta_{i}$.
We also define the linear operator $\cA(\boldsymbol{M}):\mathbb{R}^{n\times n}\mapsto\mathbb{R}^{m}$
that maps a matrix $\boldsymbol{M}\in\mathbb{R}^{n\times n}$ to $\{\langle\boldsymbol{M},\bA_{i}\rangle\}_{i=1}^{m}$.
These notations allow us to express the measurements as 
\begin{equation}
\by=\mathcal{A}(\bSigma)+\bfeta.\label{eq:ModelOperator}
\end{equation}

\subsection{Recovery of Low-Rank Covariance Matrices\label{sub:Recovery-of-Low-Rank}}

%Suppose that the true covariance matrix is $\bSigma\in\mathbb{R}^{n\times n}$,
%and we obtain $m$ independent quadratic measurements
%\[
%y_{i}:=\boldsymbol{a}^{(i)*}\bSigma\boldsymbol{a}^{(i)}+\eta_{i}\quad(1\leq i\leq m),
%\]
%where $\eta_{i}$ denotes the additive noise, $\boldsymbol{a}^{(i)}\in\mathbb{R}^{n}$
%represents the $i$th sensing vector. In general, perfect recovery
%is unlikely when the number $m$ of measurements is far below the
%ambient dimension, unless the information structure can be preserved
%in a stable manner. This motivates us to examine the class of sampling
%methods that can better exploit the signal geometry. Inspired by the
%success of compressed sensing, we explore random sensing strategies
%that often preserve signal structures. Specifically, the sampling
%model considered in this paper is summarized as follows. 

%The situation considered in Section \ref{eq:ExactPR} is unrealistic.
%A more common scenario in practical applications deviates from the
%model of Section \ref{eq:ExactPR} in the following two aspects:
%\begin{itemize}
%\item The covariance $\bSigma$ is approximately low rank;
%\item The measurements are contaminated by noise.
%\end{itemize}% 

Suppose that $\boldsymbol{\Sigma}$ is approximately {low-rank},
a natural heuristic is to perform rank minimization to encourage the
low-rank structure 
\begin{align}
\hat{\bSigma}=\argmin_{\boldsymbol{M}}\mbox{rank}(\boldsymbol{M})\quad\mbox{subject to} & \quad\boldsymbol{M}\succeq0,\label{rankmin}\\
 & \quad\|\by-\mathcal{A}(\boldsymbol{M})\|_{1}\leq\epsilon_{1},\nonumber 
\end{align}
where $\epsilon_{1}$ is an upper bound on $\left\Vert \boldsymbol{\eta}\right\Vert _{1}$
and assumed known {\em a priori}. However, the rank minimization
problem is in general NP-hard. Therefore, we replace it with trace
minimization over all matrices compatible with the measurements 
\begin{align}
\hat{\bSigma}=\argmin_{\boldsymbol{M}}\trace(\boldsymbol{M})\quad\mbox{subject to} & \quad\boldsymbol{M}\succeq0,\label{tracemin}\\
 & \quad\|\by-\mathcal{A}(\boldsymbol{M})\|_{1}\leq\epsilon_{1}.\nonumber 
\end{align}
Since $\bSigma$ is PSD, the trace norm forms a convex surrogate for
the rank function, which has proved successful in matrix completion
and phase retrieval problems \cite{ExactMC09,RecFazPar07,candes2012phaselift}.
It turns out that this convex relaxation approach \eqref{tracemin}
admits stable and faithful estimates even when $\bSigma$ is approximately
low rank and/or when the measurements are corrupted by bounded noise.
This is formally stated in the following theorem.

\begin{theorem}\label{thm:ApproxLR} Consider the sub-Gaussian sampling
model in \eqref{sampling} and assume that $\left\Vert \boldsymbol{\eta}\right\Vert _{1}\leq\epsilon_{1}$.
Then with probability exceeding $1-C_{0}\exp(-c_{0}m)$, the solution
$\hat{\bSigma}$ to \eqref{tracemin} satisfies 
\begin{equation}
\Vert\hat{\bSigma}-\bSigma\Vert_{\mathrm{F}}\leq C_{1}\frac{\left\Vert \bSigma-\bSigma_{r}\right\Vert _{*}}{\sqrt{r}}+C_{2}\frac{\epsilon_{1}}{m}\label{eq:ApproxLR}
\end{equation}
simultaneously for all $\bSigma\in\mathbb{R}^{n\times n}$, provided that $m>c_{1}nr$.
Here, $\bSigma_{r}$ represents the best rank-$r$ approximation of
$\bSigma$, and $c_{0}$, $c_{1}$, $C_{0}$, $C_{1}$ and $C_{2}$
are some positive numerical constants. \end{theorem}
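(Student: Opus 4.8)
The plan is to split the theorem into two parts: (i) a mixed-norm restricted isometry property for a \emph{debiased} version of the sampling operator, where essentially all of the probabilistic work lives; and (ii) a deterministic ``RIP $\Rightarrow$ stable recovery'' argument specialized to the trace program~\eqref{tracemin}. The universality over all $\bSigma$ then comes for free, since (i) holds with high probability for the fixed random sensing vectors and (ii) is deterministic.

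For (i), I would not work with $\cA$ directly. The quadratic operator cannot be an $\ell_{2}/\ell_{2}$ near-isometry on low-rank matrices because $\mathbb{E}[\langle\bA_{i},\bX\rangle\langle\bA_{i},\bY\rangle]$ carries a rank-one ``bias'' along the $\trace$ direction, and even after subtracting that bias a single coordinate has a heavy $(\chi^{2})^{2}$-type tail that spoils $\ell_{2}$ concentration. Instead, pass to the operator $\mathcal{B}$ whose $i$-th coordinate is $\langle\boldsymbol{B}_{i},\bX\rangle$ with $\boldsymbol{B}_{i}$ a \emph{difference} of two independent rank-one terms, e.g.\ $\boldsymbol{B}_{i}\propto\ba_{2i-1}\ba_{2i-1}^{\top}-\ba_{2i}\ba_{2i}^{\top}$ (which at most doubles the sample count). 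Then $\mathbb{E}\langle\boldsymbol{B}_{i},\bX\rangle=0$; each coordinate is \emph{sub-exponential} with second moment comparable to $\|\bX\|_{\mathrm{F}}^{2}$ --- this is precisely where the hypothesis $\mu_{4}>1$ enters, as it rules out the degenerate Rademacher case in which diagonal matrices fall in $\ker\mathcal{B}$; and $\|\mathcal{B}(\boldsymbol{M}-\bSigma)\|_{1}\le\|\cA(\boldsymbol{M}-\bSigma)\|_{1}$ for every $\boldsymbol{M}$. Because each coordinate is sub-exponential, Bernstein's inequality yields sharp pointwise concentration of $\tfrac1m\|\mathcal{B}(\bX)\|_{1}$ around a positive multiple of $\|\bX\|_{\mathrm{F}}$; a union bound over an $\varepsilon$-net of the rank-$\le cr$, unit-Frobenius symmetric matrices (metric entropy $O(nr\log\tfrac{1}{\varepsilon})$) and the standard net-to-everything approximation then give, once $m>c_{1}nr$ and with probability $1-C_{0}e^{-c_{0}m}$, the RIP-$\ell_{2}/\ell_{1}$
\[
\theta_{1}\|\bX\|_{\mathrm{F}}\ \le\ \tfrac1m\|\mathcal{B}(\bX)\|_{1}\ \le\ \theta_{2}\|\bX\|_{\mathrm{F}}\qquad\text{for all symmetric }\bX\text{ with }\rank(\bX)\le cr ,
\]
with $\theta_{1},\theta_{2}>0$ absolute and $\theta_{2}/\theta_{1}<\sqrt2$ provided $c_{1}$ (hence $c$) is large enough. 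This is the content of Section~\ref{sec:RIP} and is the genuinely hard step: one must track the dependence of the Bernstein estimate on the eigenvalue profile of $\bX$ so that $\theta_{1}$ stays bounded away from $0$ uniformly over all admissible ranks (the worst case being rank one), and carry out the net bookkeeping without leaking into the lower isometry constant.

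Given the RIP, part (ii) is deterministic. Put $\bH:=\hat{\bSigma}-\bSigma$. Since $\bSigma$ is itself feasible for~\eqref{tracemin} ($\|\by-\cA(\bSigma)\|_{1}=\|\bfeta\|_{1}\le\epsilon_{1}$), trace-optimality gives $\trace(\bH)\le0$. Let $V$ span the top $r$ eigenvectors of $\bSigma$, so $\bSigma-\bSigma_{r}=P_{V^{\perp}}\bSigma P_{V^{\perp}}$, and write $\bH=\bH_{0}+\bH_{c}$ with $\bH_{c}:=P_{V^{\perp}}\bH P_{V^{\perp}}$, $\rank(\bH_{0})\le2r$, $\langle\bH_{0},\bH_{c}\rangle=0$. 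Restricting $\hat{\bSigma}\succeq0$ to $V^{\perp}$ gives $\bH_{c}\succeq-(\bSigma-\bSigma_{r})$, so the negative part of $\bH_{c}$ has nuclear norm at most $\|\bSigma-\bSigma_{r}\|_{*}$; combined with $\trace(\bH_{c})\le-\trace(\bH_{0})\le\|\bH_{0}\|_{*}$ this produces the cone inequality $\|\bH_{c}\|_{*}\le\|\bH_{0}\|_{*}+2\|\bSigma-\bSigma_{r}\|_{*}\le\sqrt{2r}\,\|\bH\|_{\mathrm{F}}+2\|\bSigma-\bSigma_{r}\|_{*}$. I would then split $\bH_{c}=\sum_{j\ge1}\bH_{c,j}$ into blocks carrying $b:=\lceil4r\rceil$ of the decreasingly-ordered singular values, so that $\sum_{j\ge2}\|\bH_{c,j}\|_{\mathrm{F}}\le b^{-1/2}\|\bH_{c}\|_{*}\le\tfrac1{\sqrt2}\|\bH_{0}+\bH_{c,1}\|_{\mathrm{F}}+\tfrac1{\sqrt r}\|\bSigma-\bSigma_{r}\|_{*}$, apply the RIP lower bound to the rank-$O(r)$ matrix $\bH_{0}+\bH_{c,1}$ and the upper bound to every tail block $\bH_{c,j}$ ($j\ge2$), and use feasibility of $\hat{\bSigma}$, namely $\|\mathcal{B}(\bH)\|_{1}\le\|\cA(\bH)\|_{1}\le2\epsilon_{1}$, to obtain
\[
2\epsilon_{1}\ \ge\ \|\mathcal{B}(\bH)\|_{1}\ \ge\ \Big(\theta_{1}-\tfrac{\theta_{2}}{\sqrt2}\Big)m\,\|\bH_{0}+\bH_{c,1}\|_{\mathrm{F}}-\tfrac{\theta_{2}m}{\sqrt r}\,\|\bSigma-\bSigma_{r}\|_{*} .
\]
Since the block size $b=4r$ is chosen exactly so that $\theta_{1}-\theta_{2}/\sqrt2$ is a positive absolute constant (using $\theta_{2}/\theta_{1}<\sqrt2$), solving for $\|\bH_{0}+\bH_{c,1}\|_{\mathrm{F}}$ and then bounding $\|\bH\|_{\mathrm{F}}\le\|\bH_{0}+\bH_{c,1}\|_{\mathrm{F}}+\sum_{j\ge2}\|\bH_{c,j}\|_{\mathrm{F}}$ yields~\eqref{eq:ApproxLR} with $C_{1},C_{2}$ depending only on $\theta_{1},\theta_{2}$. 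The one place a ``large enough constant'' is truly needed is in securing the $\sqrt2$-gap in the RIP, which is what fixes the constant $c_{1}$ in $m>c_{1}nr$; modulo that input the deduction is routine cone-geometry bookkeeping.
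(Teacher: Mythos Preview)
Your two-part plan---RIP-$\ell_2/\ell_1$ for the debiased operator $\mathcal{B}$, then a deterministic cone argument---is exactly the paper's strategy, and your derivation of the cone inequality $\|\bH_c\|_*\le\|\bH_0\|_*+2\|\bSigma-\bSigma_r\|_*$ is correct (though you deduce it from the PSD constraint and trace-optimality, whereas the paper obtains the same inequality from nuclear-norm subadditivity alone, which is why its proof extends verbatim to non-PSD $\bSigma$).

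There is, however, one genuine gap. You assert that by taking $c_1$ (and hence $m$) large enough one can force $\theta_2/\theta_1<\sqrt{2}$. This is true for RIP-$\ell_2/\ell_2$ but \emph{false} for RIP-$\ell_2/\ell_1$. As $m\to\infty$, Bernstein gives
\[
\frac{1}{m}\|\mathcal{B}(\bX)\|_1\ \longrightarrow\ \mathbb{E}\bigl|\langle\boldsymbol{B}_1,\bX\rangle\bigr|,
\]
and this limit is \emph{not} a fixed multiple of $\|\bX\|_{\mathrm{F}}$: its value genuinely depends on the spectral shape of $\bX$ (e.g.\ for Gaussian $\ba_i$, a rank-one $\bX$ gives $4/\pi$ while a high-rank $\bX$ gives $\approx\sqrt{8/\pi}$, and for general sub-Gaussians the spread can be larger). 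Thus the limiting RIP constants satisfy
\[
\theta_1\to \inf_{\|\bX\|_{\mathrm F}=1,\ \rank\le cr}\mathbb{E}\bigl|\langle\boldsymbol{B}_1,\bX\rangle\bigr|,\qquad
\theta_2\to \sup_{\|\bX\|_{\mathrm F}=1,\ \rank\le cr}\mathbb{E}\bigl|\langle\boldsymbol{B}_1,\bX\rangle\bigr|,
\]
and there is no mechanism by which more samples shrinks their ratio. Your choice $b=4r$, which needs precisely $\theta_2/\theta_1<\sqrt{2}$, therefore cannot be justified by ``$c_1$ large enough.''

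The paper's fix is simple and you should adopt it: leave the block size $K_1$ as a \emph{free parameter} and choose it after the RIP constants are known. The condition that replaces your $\theta_1-\theta_2/\sqrt{2}>0$ is
\[
\frac{1-\delta_{2r+K_1}^{\mathrm{lb}}}{\sqrt{2}}-\bigl(1+\delta_{K_1}^{\mathrm{ub}}\bigr)\sqrt{\frac{2r}{K_1}}\ \ge\ \beta_1>0,
\]
which is satisfied once $K_1\gtrsim r\,(\theta_2/\theta_1)^2$; concretely the paper takes $K_1=8(4c_2/c_1)^2 r$. This is where the constant $c_1$ in $m>c_1 nr$ actually gets spent---not on tightening the RIP ratio, but on accommodating RIP at rank $2r+K_1=\Theta(r)$. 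With this one change your deterministic step~(ii) goes through unchanged and yields \eqref{eq:ApproxLR}.
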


The main implications of Theorem \ref{thm:ApproxLR} and its associated
performance bound (\ref{eq:ApproxLR}) are listed as follows. 
\begin{enumerate}
\itemsep0.3em
\item \textbf{Exact Recovery from Noiseless Measurement}s\textbf{.} Consider
the case where $\text{rank}\left(\bSigma\right)=r$. In the absence
of noise, one can see from \eqref{eq:ApproxLR} that the trace minimization
program \eqref{tracemin} (with $\epsilon_{1}=0$) allows perfect
covariance recovery with exponentially high probability, provided
that the number $m$ of measurements exceeds the order of $nr$. Notice
that each PSD matrix can be uniquely decomposed as $\boldsymbol{\Sigma}=\boldsymbol{L}\boldsymbol{L}^{\top}$, where $\boldsymbol{L}$ has orthogonal
columns. That said, the the intrinsic degrees of freedom carried
by PSD matrices is $\Theta(nr)$, indicating that our algorithm
achieves order-wise optimal recovery. 
\item \textbf{Near-Optimal Universal Recovery}. The trace minimization program
(\ref{tracemin}) allows universal recovery, in the sense that once
the sensing vectors are chosen, \emph{all} low-rank covariance matrices
can be perfectly recovered in the absence of noise. This highlights
the power of convex programming, which allows universally accurate
estimates as soon as the number of measurements exceeds the order
of the information theoretic limit. In addition, the universality
and optimality results hold for a large class of sub-Gaussian measurements
beyond the Gaussian sampling model. 
\item \textbf{Robust Recovery for Approximately Low-Rank Matrices}. In the
absence of noise ($\epsilon_1=0$), if $\bSigma$ is approximately low-rank,
then by (\ref{eq:ApproxLR}) the reconstruction inaccuracy is at most
\[
\Vert\hat{\bSigma}-\bSigma\Vert_{\text{F}}\leq O\left(\frac{\left\Vert \bSigma-\bSigma_{r}\right\Vert _{*}}{\sqrt{r}}\right)
\]
with probability at least $1-\exp(-c_{1}m)$, as soon as $m$ is about
the same order of $nr$. One can obtain a more intuitive understanding
through the following \emph{power-law} covariance model. Let $\lambda_{\ell}$
represent the $\ell$th largest singular value of $\bSigma$, and
suppose the decay of $\lambda_{\ell}$ obeys a power law, i.e. $\lambda_{\ell}\leq\frac{\alpha}{\ell^{\beta}}$
for some constant $\alpha>0$ and decay rate exponent $\beta>1$.
Then simple computation reveals that 
\[
\frac{\left\Vert \bSigma-\bSigma_{r}\right\Vert _{*}}{\sqrt{r}}\leq\frac{1}{\sqrt{r}}\sum_{\ell=r+1}^{n}\frac{\alpha}{\ell^{\beta}}\leq\frac{\alpha}{(\beta-1)r^{\beta-\frac{1}{2}}},
\]
%and
%\[ \left\Vert \bSigma- \bSigma_{r} \right\Vert _{\mathrm{F}}\leq\sqrt{\sum_{\ell=r+1}^{n}\frac{\alpha^2}{\ell^{2\beta}}}\leq\frac{\alpha}{\sqrt{2\beta-1}r^{\beta-\frac{1}{2}}}.
%\]
which in turn implies 
\begin{equation}
\Vert\hat{\bSigma}-\bSigma\Vert_{\text{F}}={O}\left(\frac{1}{r^{\beta-\frac{1}{2}}}\right).\label{eq:PowerLawEstimate}
\end{equation}
This asserts that \eqref{tracemin} returns an almost accurate estimate
of $\bSigma$ in a manner which requires no prior knowledge on the
signal (other than the power law decay that is natural for a broad
class of data). 
\item \textbf{Stable Recovery from Noisy Measurements}. When $\bSigma$
is exactly of rank $r$ and the noise is bounded $\|\bfeta\|_{1}\leq\epsilon_{1}$,
the reconstruction inaccuracy of \eqref{tracemin} is bounded above
by 
\begin{equation}
\Vert\hat{\bSigma}-\bSigma\Vert_{\text{F}}= O\left(\frac{\epsilon_{1}}{m}\right)\label{eq:stable}
\end{equation}
with exponentially high probability, provided that $m$ exceeds $\Theta\left(nr\right)$.
This reveals that the algorithm \eqref{tracemin} recovers an unknown
object with an error at most proportional to the average \emph{per-entry}
noise level, which makes it practically appealing. 
\item \textbf{Phase Retrieval with Sub-Gaussian Measurements}. The proposed
algorithm~\eqref{tracemin} appears in the same form as the convex
algorithm called \textit{PhaseLift}, which was proposed in \cite{candes2012phaselift}
for phase retrieval. It is equivalent to treating $\bSigma$ as the
rank-one lifted matrix $\boldsymbol{x}\boldsymbol{x}^{\top}$ from
an unknown signal $\boldsymbol{x}$. It has been established in \cite{candes2012solving}
that with high probability, it is feasible to recover $\boldsymbol{x}$
exactly from $\Theta\left(n\right)$ quadratic measurements, assuming
that the sensing vectors are i.i.d. Gaussian. Our result immediately
recovers all results of \cite{candes2012phaselift,candes2012solving}
including exact and stable recovery. In fact, our analysis framework
yields a much simpler and shorter proof of all these results, and
immediately extends to a broader class of sub-Gaussian sampling mechanisms.
We will further discuss our improvement of sparse recovery from magnitude
measurements \cite{li2012sparse,ohlsson2011compressive} in Section~\ref{sec:PhaseRetrieval}. 
\end{enumerate}
\begin{remark} A lower bound on the minimax risk has recently been
established by Cai and Zhang \cite[Theorem 2.4]{cai2013rop}. Specifically,
if the noise $\boldsymbol{\eta}\sim\mathcal{N}\left({\bf 0},\sigma^{2}\boldsymbol{I}\right)$
with $\sigma=\Theta\left(\frac{\epsilon_{1}}{m}\right)$, then for
any estimator $\tilde{\boldsymbol{\Sigma}}\left(\boldsymbol{y}\right)$,
\[
\inf_{\tilde{\boldsymbol{\Sigma}}\left(\cdot\right)}\sup_{\boldsymbol{\Sigma}:\text{ }\text{rank}\left(\boldsymbol{\Sigma}\right)=r}\sqrt{\mathbb{E}_{\boldsymbol{\eta}}\left[\left\Vert \tilde{\boldsymbol{\Sigma}}\left(\boldsymbol{y}\right)-\boldsymbol{\Sigma}\right\Vert _{\mathrm{F}}^{2}\right]}\gtrsim \sigma=\Theta\left(\frac{\epsilon_{1}}{m}\right),
\]
provided that $m=\Theta\left(nr\right)$. While our results are established
for bounded (possibly adversarial) noise, it is straightforward to
see that the above argument reveals the orderwise minimaxity of our
stability bound. \end{remark}

\subsection{Recovery of Low-Rank Covariance Matrices for Stationary Instances\label{sub:ToepResult}}

Suppose that $\boldsymbol{\Sigma}\in\mathbb{R}^{n\times n}$ is simultaneously
low-rank and Toeplitz, which can represent the covariance matrix of
a wide-sense stationary random process. Similar to recovery in the
general low-rank model, we propose to seek a nuclear norm minimizer
over all matrices compatible with the measurements as well as the
Toeplitz constraint, which results in the following estimator: 
\begin{align}
\hat{\boldsymbol{\Sigma}}=\argmin_{\boldsymbol{M}}\mathrm{Tr}(\boldsymbol{M})\quad\mbox{subject to} & \quad\boldsymbol{M}\succeq0,\label{eq:ToepMC}\\
 & \quad\|\boldsymbol{y}-\mathcal{A}(\boldsymbol{M})\|_{2}\leq\epsilon_{2},\nonumber \\
 & \quad\boldsymbol{M}\text{ is }\text{Toeplitz},\nonumber 
\end{align}
where $\epsilon_{2}$ is an upper bound of $\left\Vert \boldsymbol{\eta}\right\Vert _{2}$.

\begin{figure}
\centering
\includegraphics[width=0.3\textwidth]{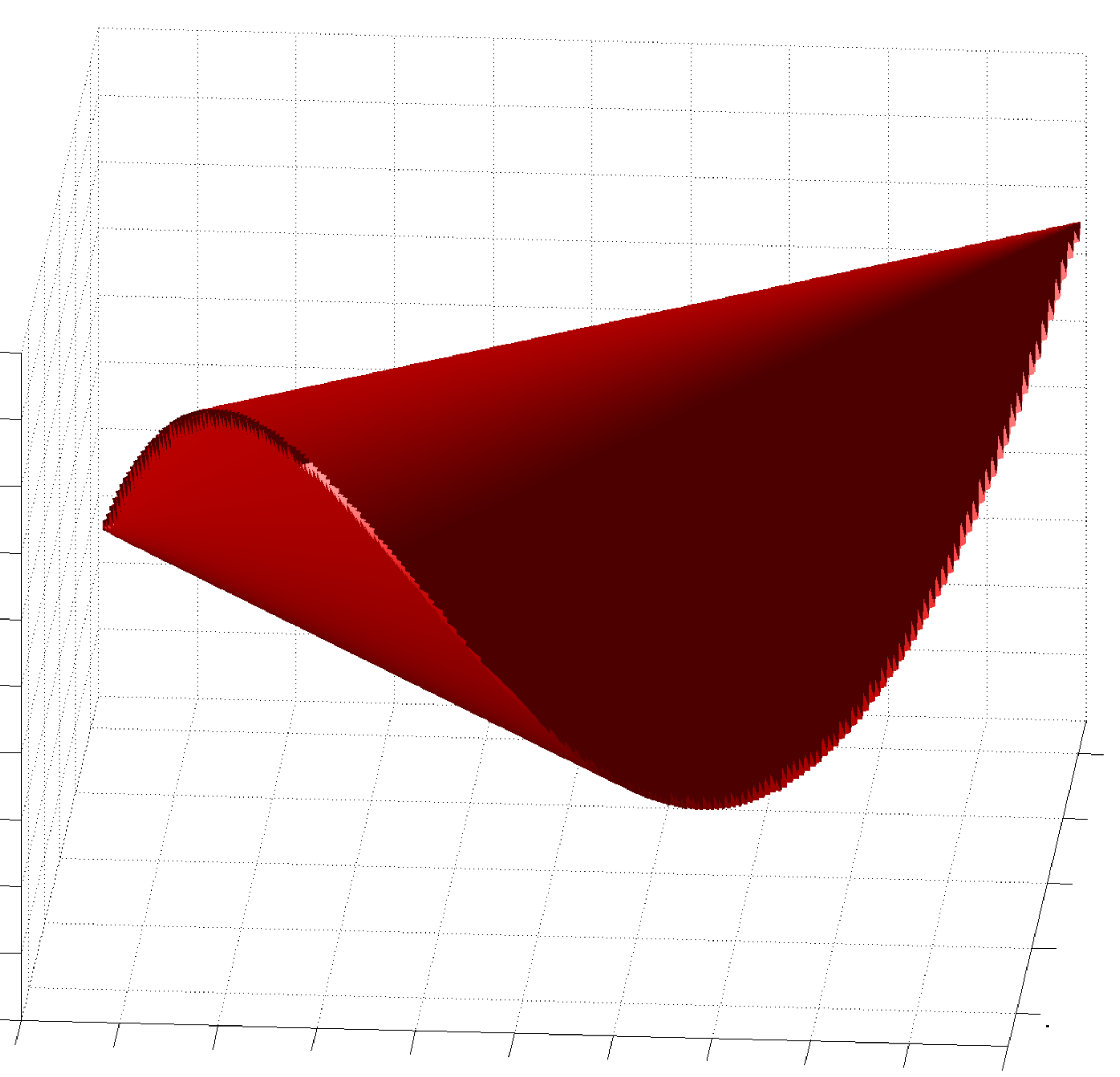}
\caption{Representation of the unit PSD Toeplitz ball consisting of all $\left(x,y,z\right)$
such that $\boldsymbol{T}=\left[\protect\begin{array}{cccc}
1/4 & x & y & z\protect\\
x & 1/4 & x & y\protect\\
y & x & 1/4 & x\protect\\
z & y & x & 1/4
\protect\end{array}\right]\succeq{\bf 0}$ and $\mathrm{Tr}\left(\boldsymbol{T}\right)=1$.}
\label{fig:toeplitz}
\end{figure}

Encouragingly,  the PSD Toeplitz cone can be very pointy around many low-rank feasible points,  as illustrated in 
Fig. \ref{fig:toeplitz}. Therefore, the intersection between the PSD Toeplitz cone and a random hyperplane passing through $\boldsymbol{\Sigma}$ often contains only a single point.  As a result, the semidefinite relaxation (\ref{eq:ToepMC}) is exact with high probability
under noise-free measurements, as stated in the following theorem.

\begin{theorem}\label{thm:ToeplitzPhaseLift}Consider the sub-Gaussian
sampling model in \eqref{sampling}, and assume that $\mu_{4}\leq3$
and $\left\Vert \boldsymbol{\eta}\right\Vert _{2}\leq\epsilon_{2}$.
Then with probability exceeding $1-1/n^{2}$, 
\begin{equation}
\Vert\hat{\bSigma}-\bSigma\Vert_{\mathrm{F}}\leq C_{2}\frac{\epsilon_{2}}{\sqrt{m}}\label{eq:ToepEstimate}
\end{equation}
holds simultaneously for all Toeplitz covariance matrices $\boldsymbol{\Sigma}$
of rank at most $r$, provided that $m>c_{0}r\log^{10}n$. Here, $c_{0}$
and $C_{2}$ are some universal constants. \end{theorem}

We highlight some implications of Theorem \ref{thm:ToeplitzPhaseLift}
as follows. 
\begin{enumerate}
\itemsep0.3em
\item \textbf{Exact Recovery without Noise}. As any rank-$r$ PSD Toeplitz
matrix admits a unique rank-$r$ Vandemonde decomposition that can
be specified by $2r$ parameters, by Theorem \ref{thm:ToeplitzPhaseLift},
exact recovery of Toeplitz low-rank covariance matrices occurs as
soon as $m$ is slightly larger than the information theoretic limit
$\Omega\left(r\right)$ (modulo some poly-logarithmic factor). Note
that this sampling requirement is much smaller than that for general
low-rank matrices, and also much smaller than the degrees of freedom
for general Toeplitz matrices (which is $n$). 
\item \textbf{Stable and Universal Recovery from Noisy Measurements}. The
proposed convex relaxation (\ref{eq:ToepMC}) returns faithful estimates
in the presence of noise, as revealed by Theorem \ref{thm:ToeplitzPhaseLift}.
This feature is universal: if $\mathcal{A}$ is randomly sampled and
then fixed thereafter, then, with high probability, the error bounds
(\ref{eq:ToepEstimate}) hold simultaneously for all Toeplitz low-rank
matrices. Note that the error bound (\ref{eq:ToepEstimate}) is stated
in terms of the $\ell_{2}$ norm of $\boldsymbol{\eta}$. This is
out of mathematical convenience for this special setup, which will
be discussed later. 
\end{enumerate}
\begin{remark}Two aspects of Theorem \ref{thm:ToeplitzPhaseLift}
are worth noting. First, Theorem \ref{thm:ToeplitzPhaseLift}
does not guarantee recovery with exponentially high probability as
ensured in Theorem \ref{thm:ApproxLR}. This arises from our use of
stochastic RIP, as will be seen in the analysis. Secondly, we are
only able to provide theoretical guarantees when $\mu_{4}\leq3$;
roughly speaking, the tails of these distributions are typically not
heavier than those of the Gaussian measure (e.g. $\mu_{4}=3$ for
Gaussian distribution and $\mu_{4}=1$ for Bernoulli distribution).
We conjecture that these two aspects can be improved via other proof
techniques.\end{remark}

%Once we obtain accurate recovery of $\boldsymbol{\Sigma}$, the underlying spectrum can be identified by conventional harmonic retrieval methods, e.g. ESPRIT \cite{RoyKailathESPIRIT1989}. 

\subsection{Recovery of Sparse Covariance Matrices\label{sub:Recovery-of-Sparse}}

Assume that $\boldsymbol{\Sigma}$ is approximately sparse, we propose
to seek a matrix with minimal support size that is compatible with
observations: 
\begin{align}
\hat{\bSigma}=\argmin_{\boldsymbol{M}}\|\boldsymbol{M}\|_{0}\quad\text{subject to} & \quad\boldsymbol{M}\succeq0,\label{cardmin}\\
 & \quad\|\by-\mathcal{A}(\boldsymbol{M})\|_{1}\leq\epsilon_{1},\nonumber 
\end{align}
where $\epsilon_{1}$ is an upper bound on $\left\Vert \boldsymbol{\eta}\right\Vert _{1}$.
However, the $\ell_{0}$ minimization problem in \eqref{cardmin}
is also intractable, and one can instead solve a tractable convex
relaxation of \eqref{cardmin}, given as 
\begin{align}
\hat{\bSigma}=\argmin_{\boldsymbol{M}}\|\boldsymbol{M}\|_{1}\quad\text{subject to} & \quad\boldsymbol{M}\succeq0,\label{l1min}\\
 & \quad\|\by-\mathcal{A}(\boldsymbol{M})\|_{1}\leq\epsilon_{1}.\nonumber 
\end{align}
Here, the $\ell_{1}$ norm is the convex relaxation of the support
size, which has proved successful in many compressed sensing algorithms
\cite{candes2008restricted,CandTao06}. It turns out that the convex
relaxation \eqref{l1min} allows stable and reliable estimates even
when $\bSigma$ is only approximately sparse and the measurements
are contaminated by noise, as stated in the following theorem.

\begin{theorem}\label{thm:ApproxSP} Consider the sub-Gaussian sampling
model in \eqref{sampling} and assume that $\left\Vert \boldsymbol{\eta}\right\Vert _{1}\leq\epsilon_{1}$.
Then with probability exceeding $1-C_{0}\exp(-c_{0}m)$, the solution
$\hat{\bSigma}$ to \eqref{l1min} satisfies 
\begin{equation}
\Vert\hat{\bSigma}-\bSigma\Vert_{\mathrm{F}}\leq C_{1}\frac{\left\Vert \bSigma-\bSigma_{\Omega}\right\Vert _{1}}{\sqrt{k}}+C_{2}\frac{\epsilon_{1}}{m},\label{eq:ApproxSP}
\end{equation}
simultaneously for all $\boldsymbol{\Sigma}\in\mathbb{R}^{n\times n}$, provided
that $m>c_{1}k\log(n^{2}/k)$. Here, $\bSigma_{\Omega}$ denotes the
best $k$-sparse approximation of $\bSigma$, and $c_{0}$, $c_{1}$,
$C_{0}$, $C_{1}$ and $C_{2}$ are positive universal constants. \end{theorem}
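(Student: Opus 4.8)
The plan is to reduce the recovery guarantee to a mixed-norm restricted isometry property (RIP-$\ell_2/\ell_1$) for the sensing operator restricted to the cone of matrices relevant to $\ell_1$ minimization, in direct analogy with the low-rank case of Theorem~\ref{thm:ApproxLR}. The first obstacle is that $\mathcal{A}$ itself does \emph{not} satisfy a clean RIP-$\ell_2/\ell_1$ on sparse matrices, because each quadratic form $\ba_i^\top \bX \ba_i$ has a nonzero bias: $\mathbb{E}[\ba_i^\top \bX \ba_i] = \trace(\bX)$, plus an extra contribution $(\mu_4-1)$ along the diagonal. So the first step is to introduce the ``debiasing'' modification alluded to in the Contributions section --- replacing $\mathcal{A}$ by an auxiliary operator $\mathcal{B}$ with sensing matrices $\boldsymbol{B}_i$ built from $\boldsymbol{A}_i = \ba_i\ba_i^\top$ (e.g.\ subtracting a multiple of the identity and a diagonal correction, possibly pairing consecutive measurements) so that $\mathcal{B}$ is mean-zero and its entries have controlled sub-exponential tails. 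One shows $\|\by - \mathcal{A}(\boldsymbol{M})\|_1$ small implies $\|\mathcal{B}(\hat\bSigma - \bSigma)\|_1$ is small up to an $O(\epsilon_1)$ term, so feasibility transfers to the debiased operator.

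The second step is to establish that $\mathcal{B}$ satisfies RIP-$\ell_2/\ell_1$ on the set of all matrices of sparsity at most $Ck$: there exist constants $0 < \theta_1 \le \theta_2$ such that $\theta_1 \|\bX\|_{\mathrm F} \le \frac1m \|\mathcal{B}(\bX)\|_1 \le \theta_2 \|\bX\|_{\mathrm F}$ uniformly over such $\bX$. The proof follows the standard recipe: pointwise concentration of $\frac1m\|\mathcal{B}(\bX)\|_1$ around a multiple of $\|\bX\|_{\mathrm F}$ via Bernstein-type bounds for sums of sub-exponential random variables (the $\mu_4 > 1$ condition guarantees the relevant population quantity is bounded away from zero), followed by a covering/union-bound argument over the $\binom{n^2}{Ck}$ choices of support together with an $\epsilon$-net on each $Ck$-dimensional subspace; the $\log(n^2/k)$ factor in the sample complexity $m > c_1 k\log(n^2/k)$ is exactly the metric entropy of the set of $Ck$-sparse unit-Frobenius-norm matrices. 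This is where the bulk of the technical work lies, and where the sub-Gaussian (rather than Gaussian) hypothesis must be handled carefully in the tail estimates.

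The final step is the deterministic cone argument. Let $\boldsymbol{H} := \hat\bSigma - \bSigma$. From optimality, $\|\hat\bSigma\|_1 \le \|\bSigma\|_1$, which --- splitting on the support $\Omega$ of the best $k$-term approximation $\bSigma_\Omega$ --- yields the usual cone condition $\|\boldsymbol{H}_{\Omega^{\mathrm c}}\|_1 \le \|\boldsymbol{H}_{\Omega}\|_1 + 2\|\bSigma - \bSigma_\Omega\|_1$. One then performs the standard sparse-approximation bucketing of $\boldsymbol{H}_{\Omega^{\mathrm c}}$ into blocks of size $k$ ordered by magnitude, uses $\|\boldsymbol{H}_{\Omega}\|_{\mathrm F} \le \frac1{m\theta_1}\|\mathcal{B}(\boldsymbol{H})\|_1$ plus the shifting inequality $\sum_{j\ge 2}\|\boldsymbol{H}_{T_j}\|_{\mathrm F} \le k^{-1/2}\|\boldsymbol{H}_{\Omega^{\mathrm c}}\|_1$, and combines with the upper RIP bound $\frac1m\|\mathcal{B}(\boldsymbol{H})\|_1 \le \frac{2\epsilon_1'}{m}$ (feasibility) and the triangle inequality across buckets. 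Collecting terms gives
\[
\|\boldsymbol{H}\|_{\mathrm F} \le C_1\,\frac{\|\bSigma - \bSigma_\Omega\|_1}{\sqrt k} + C_2\,\frac{\epsilon_1}{m},
\]
which is the claimed bound. I expect Step~2 --- in particular, the construction of a debiasing operator $\mathcal{B}$ whose tails are good enough for the covering argument to go through at the optimal $k\log(n^2/k)$ rate --- to be the main obstacle; the cone argument in Step~3 is essentially routine once the RIP-$\ell_2/\ell_1$ is in hand, and the PSD constraint $\boldsymbol{M}\succeq 0$ plays only an auxiliary role (it can be used to tighten constants but is not essential to the argument).
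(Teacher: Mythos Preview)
Your proposal is correct and follows essentially the same route as the paper: the debiasing is done precisely by pairing consecutive measurements, $\boldsymbol{B}_i := \boldsymbol{A}_{2i-1} - \boldsymbol{A}_{2i}$ (your ``subtract a multiple of the identity'' alternative is not what is used), and the RIP-$\ell_2/\ell_1$ is established for $\mathcal{B}$ via Hanson--Wright plus Bernstein for a fixed matrix followed by the covering argument you describe. The cone argument is exactly as you outline, with the one refinement that the buckets have size $K_2 = Ck$ for a sufficiently large constant $C$ (not size $k$) so that the ratio of lower to upper RIP constants beats the $\sqrt{k/K_2}$ loss; and your remark that the PSD constraint is inessential is explicitly confirmed in the paper.
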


Theorem~\ref{thm:ApproxSP} leads to similar implications as those
listed in Section \ref{sub:Recovery-of-Low-Rank}, which we briefly
summarize as follows. 
\begin{enumerate}
\itemsep0.3em
\item \textbf{Exact Recovery without Noise:} When $\bSigma$ is exactly
$k$-sparse and no noise is present, by setting $\epsilon_{1}=0$,
the solution to \eqref{l1min} is exactly equal to the ground truth
with exponentially high probability, as soon as the number $m$ of
measurements is about the order of $k\log(n^{2}/k)$. Therefore our
performance guarantee in \eqref{eq:ApproxSP} is \textit{optimal}
within a constant factor. 
\item \textbf{Universal Recovery:} Our performance guarantee in \eqref{eq:ApproxSP}
is universal in the sense that the same sensing mechanism simultaneously
works for all sparse covariance matrices. 
\item \textbf{Imperfect Structural Models:} The estimate \eqref{eq:ApproxSP}
allows robust recovery for approximately sparse matrices (which appears
in a similar form as that for CS \cite{candes2008restricted}), indicating
that quadratic measurements are order-wise at least as good as linear
measurements. 
\end{enumerate}

\subsection{Recovery of Jointly Sparse and Rank-One Matrices\label{sec:PhaseRetrieval}}

If we set the covariance matrix $\bSigma=\boldsymbol{x}\boldsymbol{x}^{\top}$
to be a rank-one matrix, then covariance estimation from quadratic
measurements is equivalent to phase retrieval as studied in \cite{candes2012phaselift}.
%Thus, our main results in Theorem \ref{thm:ApproxLR} subsume as special cases the theoretical guarantee for phase retrieval. In fact, Theorem \ref{thm:ApproxLR} recovers the best-known performance guarantee on general phase retrieval problems \cite{candes2012phaselift,candes2012solving} (for which only Gaussian sensing vectors are considered), and improves upon them by generalizing to a larger class of sub-Gaussian sampling vectors.
In addition to the general rank-one model, our approach allows simple
analysis for recovering jointly sparse and rank-one covariance matrices
or, equivalently, sparse signal recovery from magnitude measurements.
Specifically, suppose that $\boldsymbol{x}$ is (approximately) sparse,
and we collect a small number of phaseless measurements as 
\[
\boldsymbol{y}:=\left\{ \left|\left\langle \boldsymbol{a}_{i},\boldsymbol{x}\right\rangle \right|^{2}+\eta_{i}\right\} _{1\leq i\leq m}.
\]

When $\boldsymbol{x}$ is sparse, the lifting matrix $\boldsymbol{x}\boldsymbol{x}^{\top}$
is \emph{simultaneously} low rank and sparse, which motivates us to
adapt the convex program proposed in \cite{li2012sparse} to accommodate
bounded noise as follows 
\begin{align}
\hat{\boldsymbol{X}}=\argmin_{\boldsymbol{M}}\quad & \text{Tr}\left(\boldsymbol{M}\right)+\lambda\left\Vert \boldsymbol{M}\right\Vert _{1}\label{eq:AlgorithmSparsePR}\\
\text{subject to}\quad & \boldsymbol{M}\succeq0,\nonumber \\
 & \left\Vert \boldsymbol{y}-\mathcal{A}\left(\boldsymbol{M}\right)\right\Vert _{1}\leq\epsilon_{1}.\nonumber 
\end{align}
Here, $\lambda$ is a regularization parameter that balances the two
convex surrogates (i.e. trace norm and $\ell_{1}$ norm) associated
with the low-rank and sparse structural assumptions, respectively,
and $\epsilon_{1}$ is an upper bound of $\left\Vert \boldsymbol{\eta}\right\Vert _{1}$.
Our analysis framework ensures stable recovery of an approximately
sparse signal, as stated in the following theorem.

\begin{theorem}\label{thm:SparsePR} Set $\lambda\in\left[\frac{1}{n},\frac{1}{\sqrt{k}}\rho\right]$
for some quantity $\rho$. Consider the sub-Gaussian sampling model
in \eqref{sampling} and assume that $\left\Vert \boldsymbol{\eta}\right\Vert _{1}\leq\epsilon_{1}$.
Then with probability at least $1-C_{0}\exp\left(-c_{0}m\right)$,
the solution $\hat{\boldsymbol{X}}$ to (\ref{eq:AlgorithmSparsePR})
satisfies 
\begin{align}
\left\Vert \hat{\boldsymbol{X}}-\boldsymbol{x}\boldsymbol{x}^{\top}\right\Vert _{\mathrm{F}} & \leq C_{1}\left\{ \left\Vert \boldsymbol{x}\boldsymbol{x}^{\top}-\boldsymbol{x}_{\Omega}\boldsymbol{x}_{\Omega}^{\top}\right\Vert _{*}\right.\nonumber \\
 & \quad\quad+\left.\lambda\left\Vert \boldsymbol{x}\boldsymbol{x}^{\top}-\boldsymbol{x}_{\Omega}\boldsymbol{x}_{\Omega}^{\top}\right\Vert _{1}+\frac{\epsilon_{1}}{m}\right\} \label{eq:StableSparsePRBound}
\end{align}
simultaneously for all signals $\boldsymbol{x}\in\mathbb{R}^{n}$ satisfying $\frac{\left\Vert \boldsymbol{x}_{\Omega}\right\Vert _{2}}{\left\Vert \boldsymbol{x}_{\Omega}\right\Vert _{1}}\geq\rho$,
provided that $m>\frac{C_{2}\log n}{\lambda^{2}}$. Here, $\boldsymbol{x}_{\Omega}$
denotes the best $k$-sparse approximation of $\boldsymbol{x}$, and
$C_{0}$, $C_{1},C_{2}$ and $c_{0}$ are positive universal constants.\end{theorem}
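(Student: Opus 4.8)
The plan is to run the same RIP-based template that underlies Theorems~\ref{thm:ApproxLR} and \ref{thm:ApproxSP}, now adapted to the descent cone of the \emph{mixed} regularizer $\trace(\cdot)+\lambda\|\cdot\|_1$ at the lifted signal. First I would pass from the physical operator $\mathcal{A}$ to its debiased counterpart $\mathcal{B}$ (an auxiliary operator obtained from $\mathcal{A}$ by removing its bias, e.g. the centered version $\mathcal{B}_i(\bX)=\mathcal{A}_i(\bX)-\tfrac1m\sum_{j=1}^m\mathcal{A}_j(\bX)=\langle\bB_i,\bX\rangle$ of Table~\ref{tab:Summary-of-Notation-Nonuniform}), which inflates the feasibility slack only by a constant factor ($\|\mathcal{B}(\bH)\|_1\le 2\|\mathcal{A}(\bH)\|_1$). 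I would then invoke a RIP-$\ell_2/\ell_1$ for $\mathcal{B}$, instantiated on the set of matrices that are \emph{simultaneously} rank-$O(1)$ and $O(k^2)$-sparse: namely, on the relevant restricted set, $\tfrac1m\|\mathcal{B}(\bH)\|_1\gtrsim\|\bH\|_{\mathrm F}$. The entropy/chaining estimate delivers this once $m$ exceeds the effective dimension of that set, and the parameters $k$, $\rho$, $\lambda$ enter precisely here: the optimal choice $\lambda\asymp\rho/\sqrt k$ makes the effective dimension scale like $k/\rho^2\asymp 1/\lambda^2$, which is the origin of the hypothesis $m>C_2\log n/\lambda^2$ (and, for a generic flat $k$-sparse signal where $\rho\asymp 1/\sqrt k$, recovers the $k^2\log n$ rate of \cite{li2012sparse}).

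Next comes the convex-geometry step. Since $\|\bfeta\|_1\le\epsilon_1$, the matrix $\bx\bx^\top$ is feasible for \eqref{eq:AlgorithmSparsePR}, so optimality of $\hat{\bX}$ gives $\trace(\hat{\bX})+\lambda\|\hat{\bX}\|_1\le\trace(\bx\bx^\top)+\lambda\|\bx\bx^\top\|_1$, while feasibility gives $\|\mathcal{A}(\bH)\|_1\le 2\epsilon_1$ for $\bH:=\hat{\bX}-\bx\bx^\top$. I would decompose $\bH$ with respect to the rank-one tangent space $T$ at $\bx_\Omega\bx_\Omega^\top$ and the matrix support $\Omega\times\Omega$ of the best $k$-sparse approximation, then combine the subgradient inequalities for $\trace(\cdot)$ and for $\lambda\|\cdot\|_1$ with the constraint $\hat{\bX}\succeq 0$ (so that $\bH_{T^\perp}\succeq 0$ and $\|\bH_{T^\perp}\|_*=\trace(\bH_{T^\perp})$). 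This yields a cone inequality bounding the "far'' part of $\bH$ — its nuclear mass off $T$ plus its off-support $\ell_1$ mass — by the "near'' part plus the approximation tails $\|\bx\bx^\top-\bx_\Omega\bx_\Omega^\top\|_*+\lambda\|\bx\bx^\top-\bx_\Omega\bx_\Omega^\top\|_1$. The hypotheses $\lambda\le\rho/\sqrt k$ and $\|\bx_\Omega\|_2/\|\bx_\Omega\|_1\ge\rho$ are consumed here: together they force $\lambda\|\bx_\Omega\bx_\Omega^\top\|_1\le\|\bx_\Omega\bx_\Omega^\top\|_*$, i.e. the $\ell_1$ penalty remains subordinate to the trace term on the signal block, so the combined descent cone is no larger (up to constants) than the doubly-structured set on which the RIP-$\ell_2/\ell_1$ was proven.

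Finally I would combine the two bounds: cone membership (after absorbing the tails) places $\bH$ in the restricted set, so the RIP lower bound gives $\|\bH\|_{\mathrm F}\lesssim\tfrac1m\|\mathcal{B}(\bH)\|_1+\text{(tails)}\lesssim\tfrac{\epsilon_1}{m}+\|\bx\bx^\top-\bx_\Omega\bx_\Omega^\top\|_*+\lambda\|\bx\bx^\top-\bx_\Omega\bx_\Omega^\top\|_1$, which is exactly \eqref{eq:StableSparsePRBound}. The success probability $1-C_0\exp(-c_0m)$ is inherited verbatim from the event on which the RIP-$\ell_2/\ell_1$ holds.

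The main obstacle is establishing the RIP-$\ell_2/\ell_1$ for the debiased $\mathcal{B}$ over the \emph{intersection} of the low-rank and sparse models with the advertised sample complexity $m\gtrsim\log n/\lambda^2$. Unlike the pure low-rank case (Theorem~\ref{thm:ApproxLR}) or the pure sparse case (Theorem~\ref{thm:ApproxSP}), the descent cone of $\trace(\cdot)+\lambda\|\cdot\|_1$ is genuinely two-parameter, so one must run a covering / symmetrization / Gaussian-comparison argument over a union of low-dimensional faces while tracking how the $\lambda$-weighting trades the rank budget against the sparsity budget, and show the resulting metric-entropy integral scales like $1/\lambda^2$ rather than $n+k^2$ — this is what prevents the bound from degenerating to the much weaker $(n+k^2)\log n$. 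A secondary technical point is checking that the debiasing preserves the one-sided $\ell_1$ lower isometry, i.e. that the centered variables $\langle\bB_i,\bX\rangle$ still obey a uniform small-ball / anti-concentration estimate under the sub-Gaussian model \eqref{sampling}; this is the step that, as the introduction emphasizes, replaces the intricate dual certificate of \cite{li2012sparse}.
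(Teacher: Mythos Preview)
Your proposal is correct and follows essentially the paper's approach: debias $\mathcal{A}$ to $\mathcal{B}$, establish RIP-$\ell_2/\ell_1$ on a structured class whose covering entropy scales like $\log n/\lambda^2$, decompose $\bH$ relative to the tangent space $T$ at $\bx_\Omega\bx_\Omega^\top$ and the support $\Omega$, and combine the subgradient inequality (using $\bH_{T^\perp}\succeq 0$ and $\lambda\le\rho/\sqrt{k}$ to control the $\ell_1$ subgradient term) with a shell-wise application of the RIP. The only noteworthy deviations are implementation details: the paper debiases via pairwise differences $\bB_i=\bA_{2i-1}-\bA_{2i}$ (keeping the $\mathcal{B}_i$ i.i.d.) rather than empirical centering, and the relevant RIP class is $\mathcal{M}_{k,r,l}$ of Definition~\ref{defn:SparsePR-RIP}---matrices expressible as a \emph{sum} of a rank-$r$ piece supported in a $k\times k$ block and an $l$-sparse piece---which is precisely the structure produced by the three-way split $\bH_{T\cap\Omega}+\bH_{T^\perp\cap\Omega}+\bH_{\Omega^\perp}$ after peeling off the leading shells, and whose log-covering number $O(kr+l\log n)$ with $l=K_1/\lambda^2$ yields the stated sample complexity.
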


%Theorem \ref{thm:SparsePR} recovers all the theoretical performance
%guarantees established in \cite{li2012sparse} with a simpler proof,
%and improves upon them in two aspects: (i) Theorem \ref{thm:SparsePR}
%establishes the performance guarantees of the algorithm (\ref{eq:AlgorithmSparsePR})
%when the structural assumption is \emph{imperfect }and when the samples
%are \emph{noisy}; (ii) while \cite{li2012sparse} considers only Gaussian
%sensing vectors, we extend the results to a large class of sub-Gaussian sensing vectors. 

Theorem \ref{thm:SparsePR}, depending on the choice of $\lambda$,
provides universal recovery guarantees over a large class of signals
obeying $\frac{\left\Vert \boldsymbol{x}_{\Omega}\right\Vert _{2}}{\left\Vert \boldsymbol{x}_{\Omega}\right\Vert _{1}}\geq\rho$.
Some implications of Theorem \ref{thm:SparsePR} are as follows. 
\begin{enumerate}
\itemsep0.3em
\item \textbf{Exact Recovery for Exactly Sparse Signals}. When $\boldsymbol{x}$
is an exactly $k$-sparse signal, we can set $\rho=\frac{1}{\sqrt{k}}$
and $\lambda=\frac{1}{k}$ in Theorem~\ref{thm:SparsePR}, which
implies the algorithm (\ref{eq:AlgorithmSparsePR}) universally recovers
all $k$-sparse signals $\boldsymbol{x}$ from $O(k^{2}\log n)$ noise-free
measurements, with exponentially high probability. This recovers the
theoretical performance guarantees established in \cite{li2012sparse}
for Gaussian sensing vectors, but extends it to a large class of sub-Gaussian
sensing vectors, using a simpler proof. 
\item \textbf{Near-Optimal Recovery for Power-Law Exactly Sparse Signals}.
Somewhat surprisingly, if the \emph{nonzero entries} of $\boldsymbol{x}$
are known to be decaying in a power-law fashion, then the algorithm
(\ref{eq:AlgorithmSparsePR}) allows near-optimal recovery. Specifically,
suppose that the \emph{non-zero entries} of $\boldsymbol{x}$ satisfies
the power-law decay such that the magnitude of the $l$th largest
entry of $\boldsymbol{x}_{\Omega}/\left\Vert \boldsymbol{x}_{\Omega}\right\Vert _{2}$
is bounded above by ${c_{\text{pl}}}/{l^{\alpha}}$ for some constants
$c_{\text{pl}}$ and exponent $\alpha>1$, then $$\|\boldsymbol{x}_{\Omega}\|_{2}/\|\boldsymbol{x}_{\Omega}\|_{1}=O\left(1/\log k\right):=\rho.$$
By setting $\lambda=\Theta((\sqrt{k}\log n)^{-1})$, one can obtain
accurate recovery from $O\left(k\log^{2}n\right)$ noiseless samples,
which is only a logarithmic factor from the minimum sample complexity
requirement. 
\item \textbf{Stable and Universal Recovery for Imperfect Models and Noisy
Samples}. When the sparsity assumption is inexact, or measurements
are noisy, the estimate $\hat{\boldsymbol{X}}$ will not be exact,
and we can recover the estimate of the signal $\hat{\bx}$ as the
top (normalized) eigenvector of $\hat{\bX}$. Using the Davis-Kahan
theorem in standard matrix perturbation theory \cite{stewart1990matrix},
we have 
\[
\sin\angle(\hat{\bx},\bx)\leq\frac{1}{\|\bx\|_{2}^{2}}\left\Vert \hat{\boldsymbol{X}}-\boldsymbol{x}\boldsymbol{x}^{\top}\right\Vert _{\mathrm{F}}
\]
bounded by Theorem~\ref{thm:SparsePR}, where $\angle(\hat{\bx},\bx)$
represents the angle between $\hat{\bx}$ and $\bx$. The recovered
signal $\hat{\bx}$ is a highly accurate estimate if $\bx_{\Omega^{c}}$
is small enough. The estimation inaccuracy due to noise corruption
is also small, in the sense that it is at most proportional to the
per-entry noise level. This generalizes prior work \cite{li2012sparse}
to imperfect structural assumptions as well as noisy measurements. 
\end{enumerate}

\subsection{Extension to General Matrices\label{sub:Extension-to-General-Model}}

Table \ref{main_results} summarizes the main results of Theorems~\ref{thm:ApproxLR}
-- \ref{thm:SparsePR}. We further remark that the main results hold
even when $\bSigma$ is not PSD but a symmetric matrix%
\footnote{The proposed framework and proof arguments can also be easily extended to handle asymmetric matrices
without difficulty, using bilinear rank-one measurements.%
}. When $\bSigma$ is not a covariance matrix but a general low-rank,
Toeplitz low-rank, or sparse matrix, one can simply drop the PSD constraint
in the proposed algorithms, and replace the trace norm objective by
the nuclear norm in \eqref{tracemin}. As will be shown, the PSD constraint
is never invoked in the proof, hence it is straightforward to extend
all results to the more general cases where $\bSigma$ is a general
$n\times n$ low-rank, Toeplitz low-rank, or sparse matrix. Note that
in this more general scenario, the measurements in \eqref{measurements}
are no longer nonnegative.

\begin{table*}
\begin{centering}
\caption{Summary of Main Results.}

\par\end{centering}

\begin{centering}
\begin{tabular}{c|c|c|c}
\hline 
Structure  & Number of Measurements  & Noise  & RIP \tabularnewline
\hline 
rank-$r$  & $O(nr)$  & $\ell_{1}$  & $\ell_{2}/\ell_{1}$ \tabularnewline
\hline 
Toeplitz rank-$r$  & $O(r\mbox{polylog}n)$  & $\ell_{2}$  & $\ell_{2}/\ell_{2}$ \tabularnewline
\hline 
$k$-sparse  & $O(k\log(n^{2}/k))$  & $\ell_{1}$  & $\ell_{2}/\ell_{1}$\tabularnewline
\hline 
$k$-sparse and rank-one  & $O(k^{2}\log n)$ $\text{(general sparse);}$  & $\ell_{1}$  & $\ell_{2}/\ell_{1}$ \tabularnewline
 & $O(k\log^{2}n)$ $\text{(power-law sparse)}$  &  & \tabularnewline
\hline 
\end{tabular}
\par\end{centering}

\centering{}\label{main_results} 
\end{table*}

%Suppose that
%the measurements $ $$y_{i}$ is contaminated by noise satisfying
%\[
%\left\Vert \boldsymbol{y}-\mathcal{A}\left(\bSigma\right)\right\Vert _{1}\leq\epsilon,
%\]
%and the upper bound $\epsilon$ is known \emph{a priori}. To accommodate
%inaccurate measurements, we modify the algorithm as follows
%\begin{align}
%\underset{\boldsymbol{M}\in\mathcal{S}^{n\times n}}{\text{minimize}}\quad & \text{tr}\left(\boldsymbol{M}\right)\nonumber \\
%\text{subject to}\quad & \left\Vert \boldsymbol{y}-\mathcal{A}\left(\bSigma\right)\right\Vert _{1}\leq\epsilon;\label{eq:PR-Stable}\\
% & \boldsymbol{M}\succeq{\bf 0}.
%\end{align}

\section{Approximate $\ell_{2}/\ell_{1}$ Isometry for Low-rank and Sparse
Matrices\label{sec:RIP}}

In this section, we present a novel concept called the mixed-norm
restricted isometry property (RIP-$\ell_{2}/\ell_{1}$) that allows
us to establish Theorems~\ref{thm:ApproxLR}, \ref{thm:ApproxSP},
and \ref{thm:SparsePR} concerning universal recovery of low-rank,
sparse and sparse rank-one covariance matrices from quadratic measurements.

Prevailing wisdom in CS asserts that perfect recovery from minimal
samples is possible if the dimensionality reduction projection preserves
the signal strength when acting on the class of matrices of interest
\cite{CandTao06,RecFazPar07}. While there are various ways to define
the restricted isometry properties (RIP), an appropriately chosen
approximate isometry leads to a very simple yet powerful theoretical
framework.

%\begin{remark}We emphasize that the quadratic sampling operator $\mathcal{A}$
%does not satisfy the RIP-$\ell_{2}/\ell_{2}$ introduced
%in \cite{RecFazPar07,CandesPlan2011Tight}, which is crucial in guaranteeing
%exact matrix recovery in many prior works. This fact has been formally
%pointed out by Cand\`es et. al. in \cite{candes2012phaselift}, which
%motivates the exploration of a new analysis framework.\end{remark}

\subsection{Mixed-Norm Restricted Isometry (RIP-$\ell_{2}/\ell_{1}$)}

Recall that the RIP occurs if the sampling output preserves the input
strength under certain metrics. The most commonly used one is RIP-$\ell_{2}/\ell_{2}$,
for which the signal strength before and after the projection are
both measured in terms of the Frobenius norm \cite{candes2008restricted,RecFazPar07}.
This, however, fails to hold under rank-one measurements -- see detailed
arguments by Candes et. al. in \cite{candes2012phaselift}. Another
isometry concept called RIP-$\ell_{1}/\ell_{1}$ has also been investigated,
for which the signal strength before and after the operation $\mathcal{A}$
are measured both in terms of the $\ell_{1}$ norms%
\footnote{Note that the nuclear norm is the $\ell_{1}$-norm counterpart for
matrices.%
}. This is initially developed to account for measurements from expander
graphs \cite{jafarpour2009efficient}, and has become a powerful metric
when analyzing phase retrieval \cite{candes2012phaselift,candes2012solving,li2012sparse}.
Nevertheless, when considering general low-rank matrices, RIP-$\ell_{1}/\ell_{1}$
no longer holds. %
%\footnote
To see this, consider two matrices 
\begin{align*}
\boldsymbol{X}_{1} &= \mbox{diag}\{\boldsymbol{I}_{r/2},\boldsymbol{I}_{r/2}{\bf 0}\} \\
\boldsymbol{X}_{2} &= \mbox{diag}\{\boldsymbol{I}_{r/2},-\boldsymbol{I}_{r/2}{\bf 0}\}
\end{align*}
enjoying the same nuclear norm. When $m=\Omega\left(nr\right)$, one can
see from the Bernstein inequality (for sub-exponential variables)
that 
\begin{align*}
\frac{1}{m}\left\Vert \mathcal{A}\left(\boldsymbol{X}_{1}\right)\right\Vert _{1} =\Theta\left(r\right), \quad
\frac{1}{m}\left\Vert \mathcal{A}\left(\boldsymbol{X}_{2}\right)\right\Vert _{1} =\Theta\left(\sqrt{r}\right),
\end{align*} 
precluding the existence of a small RIP-$\ell_{1}/\ell_{1}$ constant. %
 Leaving out this matter, the proof based on RIP-$\ell_{1}/\ell_{1}$ typically
relies on delicate construction of dual certificates \cite{candes2012phaselift,candes2012solving,li2012sparse},
which is often mathematically complicated.

One of the key and novel ingredients in our analysis is a mixed-norm
approximate isometry, which measures the signal strength before and
after the projection with different metrics. Specifically, we introduce
RIP-$\ell_{2}/\ell_{1}$, where the input and output are measured
in terms of the Frobenius norm and the $\ell_{1}$ norm, respectively.
It turns out that as long as the input is measured with the Frobenius
norm, the standard trick pioneered in \cite{candes2008restricted}
in treating linear measurements carry over to quadratic measurements
with slight modifications and saves the need for dual construction.
We make formal definitions of RIP-$\ell_{2}/\ell_{1}$ for low-rank/sparse
matrices as follows.

\begin{definition}[\textbf{RIP-$\ell_{2}/\ell_{1}$ for low-rank
matrices}]\label{defn:LR-RIP}For the set of rank-$r$ matrices,
we define the RIP-$\ell_{2}/\ell_{1}$ constants $\delta_{r}^{\mathrm{lb}}$
and $\delta_{r}^{\mathrm{ub}}$ with respect to an operator $\mathcal{B}$
as the smallest numbers such that for all $\boldsymbol{X}$ of rank
at most $r$: 
\[
\left(1-\delta_{r}^{\mathrm{lb}}\right)\left\Vert \boldsymbol{X}\right\Vert _{\mathrm{F}}\leq\frac{1}{m}\left\Vert \mathcal{B}\left(\boldsymbol{X}\right)\right\Vert _{1}\leq\left(1+\delta_{r}^{\mathrm{ub}}\right)\left\Vert \boldsymbol{X}\right\Vert _{\mathrm{F}}.
\]
\end{definition}

\begin{definition}[\textbf{RIP-$\ell_{2}/\ell_{1}$ for sparse matrices}]\label{defn:Sparse-RIP}For
the set of $k$-sparse matrices, we define the RIP-$\ell_{2}/\ell_{1}$
constants $\gamma_{k}^{\mathrm{lb}}$ and $\gamma_{k}^{\mathrm{ub}}$
with respect to an operator $\mathcal{B}$ as the smallest numbers
such that for all $\boldsymbol{X}$ of sparsity at most $k$: 
\[
\left(1-\gamma_{k}^{\mathrm{lb}}\right)\left\Vert \boldsymbol{X}\right\Vert _{\mathrm{F}}\leq\frac{1}{m}\left\Vert \mathcal{B}\left(\boldsymbol{X}\right)\right\Vert _{1}\leq\left(1+\gamma_{k}^{\mathrm{ub}}\right)\left\Vert \boldsymbol{X}\right\Vert _{\mathrm{F}}.
\]
\end{definition}

\begin{definition}[\textbf{RIP-$\ell_{2}/\ell_{1}$ for low-rank
plus sparse matrices}]\label{defn:SparsePR-RIP}Consider the class
of index sets 
\begin{align*}
\mathcal{S}_{k}: & =\left\{ \Omega\in[n]\times[n]\text{ }\Big|\text{ }\exists\text{ an index set }\omega\in[n]\right.\\
 & \quad\quad\quad\quad\quad\left.\text{ of cardinality }k\text{ such that }\Omega=\omega\times\omega\right\} .
\end{align*}
For the set of matrices 
\begin{align}
\mathcal{M}_{k,r,l} & =\left\{ \boldsymbol{X}_{1}+\boldsymbol{X}_{2}\text{ }\Big|\text{ }\exists\Omega\in\mathcal{S}_{k},\text{ }\mathrm{rank}\left(\boldsymbol{X}_{1}\right)\leq r,\right.\label{eq:DefnMrk}\\
 & \quad\quad\quad\quad\quad\quad\quad\left.\mathrm{supp}(\boldsymbol{X}_{1})\in\Omega,\text{ }\left\Vert \boldsymbol{X}_{2}\right\Vert _{0}\leq l\right\} .\nonumber 
\end{align}
we define the RIP-$\ell_{2}/\ell_{1}$ constants $\delta_{k,r,l}^{\mathrm{lb}}$
and $\delta_{k,r,l}^{\mathrm{ub}}$ with respect to an operator $\mathcal{B}$
as the smallest numbers such that $\forall\boldsymbol{X}\in\mathcal{M}_{k,r,l}$:
\[
\left(1-\delta_{k,r,l}^{\mathrm{lb}}\right)\left\Vert \boldsymbol{X}\right\Vert _{\mathrm{F}}\leq\frac{1}{m}\left\Vert \mathcal{B}\left(\boldsymbol{X}\right)\right\Vert _{1}\leq\left(1+\delta_{k,r,l}^{\mathrm{ub}}\right)\left\Vert \boldsymbol{X}\right\Vert _{\mathrm{F}}.
\]
\end{definition}

\begin{remark}In short, any matrix within $\mathcal{M}_{k,r,l}$
can be decomposed into two components $\bX_{1}$ and $\bX_{2}$, where
$\bX_{1}$ is simultaneously low-rank and sparse, and $\bX_{2}$ is
sparse. This allows us to treat each matrix perturbation as a superposition
of a collection of jointly low-rank and sparse matrices and a collection
of general sparse matrices, where the rank-one measurements of each
term can be well controlled under minimal sample complexity.\end{remark}

\subsection{RIP-$\ell_{2}/\ell_{1}$ of Quadratic Measurements for Low-rank and
Sparse Matrices\label{sub:RIP-for-Quadratic-Measurements}}

Unfortunately, the original sampling operator $\mathcal{A}$ does
not satisfy RIP-$\ell_{2}/\ell_{1}$. This occurs primarily because
each measurement matrix $\boldsymbol{A}_{i}$ has non-zero mean, which
biases the output measurements. In order to get rid of this undesired
bias effect, we introduce a set of ``debiased'' auxiliary measurement
matrices as follows 
\begin{equation}
\boldsymbol{B}_{i}:=\boldsymbol{A}_{2i-1}-\boldsymbol{A}_{2i}.\label{eq:ZeroMeanOperator}
\end{equation}
Without loss of generality, denote $\mathcal{B}_{i}\left(\boldsymbol{X}\right):=\left\langle \boldsymbol{B}_{i},\boldsymbol{X}\right\rangle $
for all $1\leq i\leq m$, and let $\mathcal{B}\left(\boldsymbol{X}\right)$
represent the linear transformation that maps $\bX$ to $\{\mathcal{B}_{i}\left(\boldsymbol{X}\right)\}_{i=1}^{m}$.
Note that by representing the sensing process using $m$ rank-2 measurements
$\mathcal{B}_{i}$, we have implicitly doubled the number of measurements
for notational simplicity. This, however, will not change our order-wise
results.

It turns out that the auxiliary operator $\mathcal{B}$ exhibits the
RIP-$\ell_{2}/\ell_{1}$ in the presence of minimal measurements,
which can be shown by combining the following proposition with a standard
covering argument as applied in \cite{CandesPlan2011Tight}.

\begin{proposition}\label{lemma:RIP_lr} Let $\mathcal{A}$ be sampled
from the sub-Gaussian model in \eqref{sampling}. For any matrix $\bX$,
there exist universal constants $c_{1},c_{2},c_{3}>0$ such that with
probability exceeding $1-\exp\left(-c_{3}m\right)$, one has 
\begin{equation}
c_{1}\left\Vert \boldsymbol{X}\right\Vert _{\mathrm{F}}\leq\frac{1}{m}\left\Vert \mathcal{B}\left(\boldsymbol{X}\right)\right\Vert _{1}\leq c_{2}\left\Vert \boldsymbol{X}\right\Vert _{\mathrm{F}}.\label{eq:ApproximateRIP}
\end{equation}
\end{proposition}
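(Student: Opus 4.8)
The plan is to prove the \emph{pointwise} statement (a single fixed $\bX$), since the uniform RIP-$\ell_{2}/\ell_{1}$ over the sets in Definitions~\ref{defn:LR-RIP}--\ref{defn:SparsePR-RIP} is then obtained from \eqref{eq:ApproximateRIP} by the standard $\epsilon$-net argument over the relevant low-dimensional manifolds, exactly as in \cite{CandesPlan2011Tight}. By homogeneity of both inequalities in $\bX$ we may assume $\left\Vert \bX\right\Vert _{\mathrm F}=1$, and we may assume $\bX=\bX^{\top}$ (this is the case for every perturbation $\hat{\bSigma}-\bSigma$ of interest, since $\ba^{\top}\bX\ba=\ba^{\top}\bigl(\tfrac12(\bX+\bX^{\top})\bigr)\ba$; in general one replaces $\left\Vert \bX\right\Vert _{\mathrm F}$ on the left by the Frobenius norm of the symmetric part). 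Writing $W_{i}:=\mathcal{B}_{i}(\bX)=\ba_{2i-1}^{\top}\bX\ba_{2i-1}-\ba_{2i}^{\top}\bX\ba_{2i}$, these are i.i.d.\ and $\tfrac1m\left\Vert \mathcal{B}(\bX)\right\Vert _{1}=\tfrac1m\sum_{i=1}^{m}|W_{i}|$ is an empirical mean of i.i.d.\ nonnegative random variables; the entire argument reduces to (i) showing $\mathbb{E}|W_{1}|\asymp 1$ and (ii) concentrating this empirical mean around $\mathbb{E}|W_{1}|$.

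For step~(i), first note that $\mathbb{E}[\ba\ba^{\top}]=\bI$ gives $\mathbb{E}[\ba^{\top}\bX\ba]=\trace(\bX)$, so the two shifts cancel and $\mathbb{E}[W_{1}]=0$. A direct fourth-moment expansion of $\mathbb{E}[(\ba^{\top}\bX\ba)^{2}]$ (only index patterns that pair up survive) yields, for symmetric $\bX$,
\[
\operatorname{Var}(\ba^{\top}\bX\ba)=(\mu_{4}-3)\sum_{i}X_{ii}^{2}+2\left\Vert \bX\right\Vert _{\mathrm F}^{2},
\]
so $\mathbb{E}[W_{1}^{2}]=2\operatorname{Var}(\ba^{\top}\bX\ba)$ lies between $2\min\{\mu_{4}-1,2\}$ and $2\max\{\mu_{4}-1,2\}$ (using $0\le\sum_{i}X_{ii}^{2}\le\left\Vert \bX\right\Vert _{\mathrm F}^{2}=1$). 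This is precisely where the hypothesis $\mu_{4}>1$ is used: it makes $\mathbb{E}[W_{1}^{2}]$ bounded away from $0$ (for $\mu_{4}=1$ and $\bX$ diagonal one has $W_{1}\equiv0$, so the claim genuinely fails). Cauchy--Schwarz then gives the upper bound $\mathbb{E}|W_{1}|\le\sqrt{\mathbb{E}[W_{1}^{2}]}\le c_{2}$.

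For the lower bound on $\mathbb{E}|W_{1}|$ I would use the H\"older/Paley--Zygmund inequality $\mathbb{E}|W_{1}|\ge(\mathbb{E}W_{1}^{2})^{3/2}/(\mathbb{E}W_{1}^{4})^{1/2}$, so it suffices to bound $\mathbb{E}[W_{1}^{4}]$ from above. Since the $z_{i}$ are sub-Gaussian, the Hanson--Wright inequality shows $\ba^{\top}\bX\ba-\trace(\bX)$ is sub-exponential with $\psi_{1}$-norm $\lesssim\left\Vert \bX\right\Vert _{\mathrm F}=1$, hence all its moments are $O(1)$; the same holds for $W_{1}$, a difference of two independent such variables, giving $\mathbb{E}[W_{1}^{4}]=O(1)$ and therefore $\mathbb{E}|W_{1}|\ge c_{1}>0$. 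Finally, for step~(ii), the $|W_{i}|$ are i.i.d., nonnegative, and sub-exponential with uniformly bounded $\psi_{1}$-norm, so Bernstein's inequality for sub-exponential variables gives, for any $t>0$,
\[
\Pr\!\left[\Big|\tfrac1m\textstyle\sum_{i=1}^{m}|W_{i}|-\mathbb{E}|W_{1}|\Big|>t\right]\le2\exp\!\big(-c\,m\min\{t^{2},t\}\big),
\]
and choosing $t=\min\{\mathbb{E}|W_{1}|/2,\,1\}$ — an \emph{absolute} constant, $\gtrsim1$ by the previous steps — forces $\tfrac1m\left\Vert \mathcal{B}(\bX)\right\Vert _{1}\in[\tfrac12c_{1},\,2c_{2}]$ on an event of probability at least $1-2\exp(-c_{3}m)$, which is \eqref{eq:ApproximateRIP}.

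The moment bookkeeping and the Bernstein step are routine; the only genuinely delicate point is obtaining a \emph{two-sided} estimate of the first moment $\mathbb{E}|W_{1}|$, and in particular its \emph{lower} bound, which would degenerate if $\ba^{\top}\bX\ba$ were nearly deterministic. This is exactly what the anti-concentration argument of the third paragraph handles, using the standing assumptions ($\mu_{4}>1$ and $\bX$ symmetric); it is also the reason the ``universal'' constants in fact depend on the distribution of the entries, through $\mu_{4}$ and their sub-Gaussian norm.
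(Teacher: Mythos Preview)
Your proposal is correct and follows essentially the same approach as the paper: Hanson--Wright to certify that $W_1$ is sub-exponential (hence fourth moment $O(1)$), the exact second-moment computation $\mathbb{E}[W_1^2]=4\|\bX\|_{\mathrm F}^2+2(\mu_4-3)\sum_i X_{ii}^2$, the Cauchy--Schwarz/Paley--Zygmund inequality $\mathbb{E}|W_1|\ge(\mathbb{E}W_1^2)^{3/2}/(\mathbb{E}W_1^4)^{1/2}$ for the lower bound, and Bernstein for sub-exponential variables for concentration. The only cosmetic difference is that you get the upper bound on $\mathbb{E}|W_1|$ via $\sqrt{\mathbb{E}W_1^2}$ while the paper reads it off the sub-exponential norm directly; your remark that the ``universal'' constants in fact depend on $\mu_4$ and the sub-Gaussian norm is also exactly what the paper notes.
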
 \begin{proof}See Appendix \ref{sec:Proof-of-Lemma-RIP}.\end{proof}

\begin{remark}This statement extends without difficulty to the bilinear
rank-one measurement model where $\boldsymbol{y}_{i}=\boldsymbol{a}_{i}^{\top}\bSigma\boldsymbol{b}_{i}$
for some independently generated sensing vectors $\boldsymbol{a}_{i}$
and $\boldsymbol{b}_{i}$. This indicates that all our results hold
for this asymmetric sensing model as well. \end{remark}

An immediate consequence of Proposition~\ref{corollary:RIP_lr} is
the establishment of RIP-$\ell_{2}/\ell_{1}$ of the sampling operator
$\mathcal{B}$ for either general low-rank or sparse matrices. The
proof of the corollaries below follows immediately from a standard
covering argument detailed in \cite[Section III.B]{CandesPlan2011Tight}
and \cite[Section 5]{baraniuk2008simple}. We thus omit the details
but refer interested readers to the above references for details.

\begin{corollary}[\textbf{RIP-$\ell_{2}/\ell_{1}$ for low-rank matrices}]\label{corollary:RIP_lr}Consider
the sub-Gaussian sampling model in \eqref{sampling} and the universal
constants $c_{1},c_{2}>0$ given in (\ref{eq:ApproximateRIP}). There
exist universal constants $c_{3},c_{4},C_{3}>0$ such that with probability
exceeding $1-C_{3}\exp\left(-c_{3}m\right)$, $\mathcal{B}$ satisfies
RIP-$\ell_{2}/\ell_{1}$ for all matrices $\boldsymbol{X}$ of rank
at most $r$, and obeys 
\begin{equation}
1-\delta_{r}^{\mathrm{lb}}\geq\frac{c_{1}}{2},\quad1+\delta_{r}^{\mathrm{ub}}\leq2c_{2},\label{eq:RIP-Bound-LR}
\end{equation}
provided that $m>c_{4}nr$. \end{corollary}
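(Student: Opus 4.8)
The plan is to upgrade the pointwise estimate of Proposition~\ref{lemma:RIP_lr} to a bound that is uniform over the (noncompact but low-complexity) set of rank-$r$ matrices by a standard $\epsilon$-net argument, exactly along the lines of \cite[Section III.B]{CandesPlan2011Tight} and \cite[Section 5]{baraniuk2008simple}. First I would reduce to the unit Frobenius sphere: since both $\|\mathcal{B}(\cdot)\|_1$ and $\|\cdot\|_{\mathrm F}$ are positively homogeneous, it suffices to control $\frac1m\|\mathcal{B}(\bX)\|_1$ uniformly over $S_r:=\{\bX:\operatorname{rank}(\bX)\le r,\ \|\bX\|_{\mathrm F}=1\}$. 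A rank-$r$ matrix in $\mathbb{R}^{n\times n}$ is described by $O(nr)$ parameters, and a now-classical covering estimate (e.g.\ \cite{CandesPlan2011Tight}) produces, for any $\epsilon\in(0,1)$, a Frobenius-norm $\epsilon$-net $\mathcal N$ of $S_r$ with $|\mathcal N|\le(9/\epsilon)^{(2n+1)r}$; the same bound, up to the value of the base constant, covers the unit ball of rank-$\le 2r$ matrices with a net $\mathcal N'$ of cardinality $(C/\epsilon)^{O(nr)}$.

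Next I would apply Proposition~\ref{lemma:RIP_lr} to each of the finitely many fixed matrices in $\mathcal N\cup\mathcal N'$ and take a union bound. On an event of probability at least $1-(C/\epsilon)^{O(nr)}\exp(-c_3 m)$, every $\bX_0\in\mathcal N$ satisfies $c_1\le\frac1m\|\mathcal{B}(\bX_0)\|_1\le c_2$, and likewise for every point of $\mathcal N'$. The failure probability is of the form $\exp\!\big(O(nr)\log(C/\epsilon)-c_3 m\big)$, which is at most $C_3\exp(-c_3' m)$ as soon as $m>c_4 nr$ for a suitable constant $c_4=c_4(\epsilon,c_3)$; this is precisely the sample-complexity requirement in the statement, and once $\epsilon$ is fixed to a numerical value (below), $c_4$ depends only on $c_3$ and is therefore a universal constant.

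The delicate step — and the one I expect to be the main obstacle — is the transfer from the net to all of $S_r$, because a perturbation $\bX-\bX_0$ has rank up to $2r$ and so leaves the rank-$r$ class, meaning the argument cannot close by quoting the pointwise Proposition for a single perturbation and must instead be made self-referential. I would handle this by bootstrapping the operator-norm quantity $M:=\sup\{\frac1m\|\mathcal{B}(\bY)\|_1:\operatorname{rank}(\bY)\le 2r,\ \|\bY\|_{\mathrm F}\le1\}$: for any such $\bY$, choosing $\bY_0\in\mathcal N'$ with $\|\bY-\bY_0\|_{\mathrm F}\le\epsilon$ and using linearity and the triangle inequality gives $\frac1m\|\mathcal{B}(\bY)\|_1\le c_2+\epsilon M$, hence $M\le c_2/(1-\epsilon)$. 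Armed with this, for arbitrary $\bX\in S_r$ and $\bX_0\in\mathcal N$ with $\|\bX-\bX_0\|_{\mathrm F}\le\epsilon$ one gets $c_1-\frac{c_2\epsilon}{1-\epsilon}\le\frac1m\|\mathcal{B}(\bX_0)\|_1-\frac1m\|\mathcal{B}(\bX-\bX_0)\|_1\le\frac1m\|\mathcal{B}(\bX)\|_1\le\frac1m\|\mathcal{B}(\bX_0)\|_1+\frac1m\|\mathcal{B}(\bX-\bX_0)\|_1\le c_2+\frac{c_2\epsilon}{1-\epsilon}$. Fixing $\epsilon$ to be a small enough numerical constant (say small enough that $\frac{c_2\epsilon}{1-\epsilon}\le\min\{c_2,\ c_1/2\}$) yields $1-\delta_r^{\mathrm{lb}}\ge c_1/2$ and $1+\delta_r^{\mathrm{ub}}\le 2c_2$ simultaneously for all rank-$r$ matrices, which is the claim. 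All the probabilistic content resides in the single union bound over $\mathcal N\cup\mathcal N'$, so no new concentration inequality beyond Proposition~\ref{lemma:RIP_lr} is needed; the remaining work is the deterministic covering/bootstrapping bookkeeping sketched above, which is why the statement can be asserted with the details deferred to the cited references.
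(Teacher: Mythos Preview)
Your approach is exactly the one the paper intends: the paper's own ``proof'' simply defers to the standard covering argument in \cite[Section III.B]{CandesPlan2011Tight} and \cite[Section 5]{baraniuk2008simple}, and that is precisely what you outline.

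One execution detail in your bootstrapping step deserves care. You define $M=\sup\{\frac1m\|\mathcal B(\bY)\|_1:\operatorname{rank}(\bY)\le 2r,\ \|\bY\|_{\mathrm F}\le 1\}$ and then, for $\bY$ of rank $\le 2r$ and a generic net point $\bY_0$, write $\frac1m\|\mathcal B(\bY)\|_1\le c_2+\epsilon M$. As stated this requires $\frac1m\|\mathcal B(\bY-\bY_0)\|_1\le \|\bY-\bY_0\|_{\mathrm F}\, M$, but $\bY-\bY_0$ can have rank up to $4r$, so it lies outside the class over which $M$ is taken and the self-referential inequality does not close. The standard remedy (and the one actually used in \cite{CandesPlan2011Tight}) is to build the net as a \emph{product} net on the SVD factors $U,\Sigma,V$; then $\bY-\bY_0=(U-U_0)\Sigma V^\top+U_0(\Sigma-\Sigma_0)V^\top+U_0\Sigma_0(V-V_0)^\top$ splits into three pieces each of rank $\le 2r$, giving $\frac1m\|\mathcal B(\bY)\|_1\le c_2+3\epsilon M$ and hence $M\le c_2/(1-3\epsilon)$. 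With this adjustment (and the correspondingly smaller numerical choice of $\epsilon$), the remainder of your sketch goes through verbatim and yields $1-\delta_r^{\mathrm{lb}}\ge c_1/2$, $1+\delta_r^{\mathrm{ub}}\le 2c_2$ once $m>c_4 nr$.
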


\begin{comment}
\begin{proof}See Appendix \ref{sec:Proof-of-Lemma-RIP_lowrank}.\end{proof} 
\end{comment}

\begin{corollary}[\textbf{RIP-$\ell_{2}/\ell_{1}$ for sparse matrices}]\label{corollary:RIP_sparse}Consider
the sub-Gaussian sampling model in \eqref{sampling} and the universal
constants $c_{1},c_{2}>0$ given in (\ref{eq:ApproximateRIP}). Then
with probability exceeding $1-C_{3}\exp\left(-c_{3}m\right)$, $\mathcal{B}$
satisfies the RIP-$\ell_{2}/\ell_{1}$ for all matrices $\boldsymbol{X}$
of sparsity at most $k$, and obeys 
\begin{equation}
1-\gamma_{k}^{\mathrm{lb}}\geq\frac{c_{1}}{2},\quad1+\gamma_{k}^{\mathrm{ub}}\leq2c_{2},\label{eq:RIP-Bound-Sparse}
\end{equation}
provided that $m>c_{4}k\log(n^{2}/k)$, where $c_{3},c_{4},C_{3}>0$
are some universal constants.\end{corollary}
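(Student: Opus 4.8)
The plan is to combine the pointwise concentration bound of Proposition~\ref{lemma:RIP_lr} with a standard $\varepsilon$-net (covering) argument over the set of $k$-sparse matrices, exactly along the lines of \cite[Section III.B]{CandesPlan2011Tight} and \cite[Section 5]{baraniuk2008simple}. First I would fix an index set $\Omega\subseteq[n]\times[n]$ of cardinality $k$ and consider the unit sphere $\mathcal{S}_\Omega:=\{\boldsymbol{X}:\mathrm{supp}(\boldsymbol{X})\subseteq\Omega,\ \|\boldsymbol{X}\|_{\mathrm{F}}=1\}$. Since $\mathcal{S}_\Omega$ lives in a $k$-dimensional subspace, a volumetric bound furnishes an $\varepsilon$-net $\mathcal{N}_\Omega\subseteq\mathcal{S}_\Omega$ with $|\mathcal{N}_\Omega|\le(1+2/\varepsilon)^k$. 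The crucial feature of building the net \emph{per support} is that for any $\boldsymbol{X}$ supported on $\Omega$ and any $\boldsymbol{X}_0\in\mathcal{N}_\Omega$, the residual $\boldsymbol{X}-\boldsymbol{X}_0$ is again supported on $\Omega$, hence remains $k$-sparse.

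Next I would apply Proposition~\ref{lemma:RIP_lr} to each of the finitely many matrices in $\bigcup_{|\Omega|=k}\mathcal{N}_\Omega$ and take a union bound. The total number of such points is at most $\binom{n^2}{k}(1+2/\varepsilon)^k$, so using $\log\binom{n^2}{k}\le k\log(\mathrm{e}n^2/k)$ the failure probability is at most
\[
\binom{n^2}{k}(1+2/\varepsilon)^k\exp(-c_3 m)\le\exp\!\left(k\log\tfrac{\mathrm{e}n^2}{k}+k\log(1+\tfrac{2}{\varepsilon})-c_3 m\right).
\]
For a fixed numerical constant $\varepsilon$ (chosen below), taking $m>c_4 k\log(n^2/k)$ with $c_4$ large enough makes this at most $C_3\exp(-c_3 m/2)$. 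On the complementary event, every net point $\boldsymbol{X}_0$ obeys $c_1\le\frac{1}{m}\|\mathcal{B}(\boldsymbol{X}_0)\|_1\le c_2$.

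Then I would bootstrap from the net to all $k$-sparse matrices using the triangle inequality for the seminorm $\boldsymbol{X}\mapsto\frac{1}{m}\|\mathcal{B}(\boldsymbol{X})\|_1$, which holds because $\mathcal{B}$ is linear and $\|\cdot\|_1$ is a norm. Let $\Delta:=\sup\{\frac{1}{m}\|\mathcal{B}(\boldsymbol{X})\|_1:\boldsymbol{X}\text{ is }k\text{-sparse},\ \|\boldsymbol{X}\|_{\mathrm{F}}=1\}$, which is finite by compactness and homogeneous, so $\frac{1}{m}\|\mathcal{B}(\boldsymbol{Z})\|_1\le\Delta\|\boldsymbol{Z}\|_{\mathrm{F}}$ for every $k$-sparse $\boldsymbol{Z}$. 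For any $k$-sparse $\boldsymbol{X}$ with $\|\boldsymbol{X}\|_{\mathrm{F}}=1$, supported on some $\Omega$, pick $\boldsymbol{X}_0\in\mathcal{N}_\Omega$ with $\|\boldsymbol{X}-\boldsymbol{X}_0\|_{\mathrm{F}}\le\varepsilon$; then $\boldsymbol{X}-\boldsymbol{X}_0$ is $k$-sparse, so
\[
\frac{1}{m}\|\mathcal{B}(\boldsymbol{X})\|_1\le\frac{1}{m}\|\mathcal{B}(\boldsymbol{X}_0)\|_1+\frac{1}{m}\|\mathcal{B}(\boldsymbol{X}-\boldsymbol{X}_0)\|_1\le c_2+\Delta\varepsilon .
\]
Taking the supremum over $\boldsymbol{X}$ gives $\Delta\le c_2/(1-\varepsilon)$; choosing $\varepsilon:=c_1/(4c_2)\ (\le 1/4$, since $c_1\le c_2)$ yields $\Delta\le 2c_2$, i.e. $1+\gamma_k^{\mathrm{ub}}\le 2c_2$. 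For the lower bound, the reverse triangle inequality gives $\frac{1}{m}\|\mathcal{B}(\boldsymbol{X})\|_1\ge\frac{1}{m}\|\mathcal{B}(\boldsymbol{X}_0)\|_1-\Delta\varepsilon\ge c_1-2c_2\varepsilon=c_1/2$, i.e. $1-\gamma_k^{\mathrm{lb}}\ge c_1/2$. Since $\varepsilon$ is now a fixed numerical constant, the earlier choice of $c_4$ (depending only on $\varepsilon,c_1,c_2,c_3$) is legitimate, which completes the argument. The step I expect to require the most care is the union-bound bookkeeping — ensuring the per-point failure exponent $c_3 m$ genuinely dominates the combined entropy $k\log(\mathrm{e}n^2/k)+k\log(1+2/\varepsilon)$ of the $\binom{n^2}{k}$-fold union of nets, which is precisely why the sample threshold reads $m>c_4 k\log(n^2/k)$; the remaining ingredients (existence of a small net in each coordinate subspace, support-invariance of the residual, the triangle inequality, and the uniformity of the constants $c_1,c_2,c_3$ in Proposition~\ref{lemma:RIP_lr} over the test matrix) are routine.
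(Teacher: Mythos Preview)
Your proposal is correct and follows essentially the same approach the paper indicates: the paper omits the proof and simply states that it ``follows immediately from a standard covering argument detailed in \cite[Section III.B]{CandesPlan2011Tight} and \cite[Section 5]{baraniuk2008simple},'' which is precisely the per-support $\varepsilon$-net plus union-bound plus triangle-inequality bootstrap you carry out. Your explicit choice $\varepsilon=c_1/(4c_2)$ and the self-bounding argument for $\Delta$ cleanly deliver the stated constants $c_1/2$ and $2c_2$, matching (\ref{eq:RIP-Bound-Sparse}).
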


\begin{comment}
\begin{proof}See Appendix \ref{sec:Proof-of-Lemma-RIP_sparse}.\end{proof} 
\end{comment}

\begin{corollary}[\textbf{RIP-$\ell_{2}/\ell_{1}$ for low-rank plus
sparse matrices}]\label{corollary:RIP_sparse-lowrank}Consider the
sub-Gaussian sampling model in \eqref{sampling} and the universal
constants $c_{1},c_{2}>0$ given in (\ref{eq:ApproximateRIP}). Then
with probability exceeding $1-C_{3}\exp\left(-c_{3}m\right)$, $\mathcal{B}$
satisfies the RIP-$\ell_{2}/\ell_{1}$ with respect to $\mathcal{M}_{k,r,l}$
(defined in (\ref{eq:DefnMrk})), and obeys 
\begin{equation}
1-\delta_{k,r,l}^{\mathrm{lb}}\geq\frac{c_{1}}{2},\quad1+\delta_{k,r,l}^{\mathrm{ub}}\leq2c_{2},
\end{equation}
provided that $m>c_{4}\max\left\{ kr\log(n/k),l\log(n^{2}/l)\right\} $,
where $c_{3},c_{4},C_{3}>0$ are some universal constants.\end{corollary}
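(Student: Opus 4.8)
The plan is to bootstrap Proposition~\ref{lemma:RIP_lr} --- which already supplies the two-sided bound $c_{1}\|\bX\|_{\mathrm{F}}\le\frac{1}{m}\|\mathcal{B}(\bX)\|_{1}\le c_{2}\|\bX\|_{\mathrm{F}}$ for \emph{any fixed} $\bX$ with failure probability at most $\exp(-c_{3}m)$ --- into a bound that holds \emph{uniformly} over the structured set $\mathcal{M}_{k,r,l}$. By homogeneity of $\bX\mapsto\frac{1}{m}\|\mathcal{B}(\bX)\|_{1}$ it suffices to control all $\bX\in\mathcal{M}_{k,r,l}$ with $\|\bX\|_{\mathrm{F}}=1$, and the route is the standard $\varepsilon$-net plus union-bound argument, as in \cite[Section III.B]{CandesPlan2011Tight} and \cite[Section 5]{baraniuk2008simple}.

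First I would stratify $\mathcal{M}_{k,r,l}$ by combinatorial type: every $\bX=\bX_{1}+\bX_{2}$ in this set is pinned down (up to its continuous parameters) by the index set $\omega\subseteq[n]$, $|\omega|=k$, carrying the low-rank block $\bX_{1}$ and by the support $S:=\mathrm{supp}(\bX_{2})$ with $|S|\le l$; there are at most $\binom{n}{k}\binom{n^{2}}{l}\le(en/k)^{k}(en^{2}/l)^{l}$ such types. For a fixed type $(\omega,S)$, the corresponding slice lies in the Minkowski sum of the set of rank-$\le r$ matrices supported on the $k\times k$ block $\omega\times\omega$ and the $l$-dimensional coordinate subspace indexed by $S$. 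Using the classical covering bound for low-rank matrices (rank-$\le r$ matrices in $\mathbb{R}^{k\times k}$ on the Frobenius sphere admit an $\varepsilon$-net of size $(9/\varepsilon)^{(2k+1)r}$) together with the volumetric bound $(3/\varepsilon)^{l}$ for the coordinate part, this slice has an $\varepsilon$-net $\mathcal{N}_{\omega,S}$ of cardinality at most $(C/\varepsilon)^{(2k+1)r+l}$. Fixing $\varepsilon$ to be a small absolute constant (depending only on $c_{1},c_{2}$) and setting $\mathcal{N}=\bigcup_{\omega,S}\mathcal{N}_{\omega,S}$, one obtains $\log|\mathcal{N}|\lesssim k\log(en/k)+l\log(en^{2}/l)+kr+l$, which is of the order $\max\{kr\log(n/k),\,l\log(n^{2}/l)\}$ in the relevant regime $k=o(n)$.

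Next, applying Proposition~\ref{lemma:RIP_lr} to every element of $\mathcal{N}$ and union bounding, with probability at least $1-|\mathcal{N}|\exp(-c_{3}m)$ each net point obeys $c_{1}\le\frac{1}{m}\|\mathcal{B}(\bX)\|_{1}\le c_{2}$; by the entropy estimate above, the hypothesis $m>c_{4}\max\{kr\log(n/k),\,l\log(n^{2}/l)\}$ with $c_{4}$ large enough makes this at least $1-C_{3}\exp(-c_{3}'m)$. It then remains to pull the bound back to the full slice: for $\bX\in\mathcal{M}_{k,r,l}$ on the sphere of type $(\omega,S)$, pick $\bX_{0}\in\mathcal{N}_{\omega,S}$ with $\|\bX-\bX_{0}\|_{\mathrm{F}}\le\varepsilon$, note that the residual $\bX-\bX_{0}$ stays supported on the same block $\omega\times\omega$ and the same $S$ but may have rank up to $2r$, re-run the identical net argument for this slightly inflated class (which changes only constants, not the order of $m$), and use the subadditivity of $\bX\mapsto\frac{1}{m}\|\mathcal{B}(\bX)\|_{1}$ to get $\frac{1}{m}\|\mathcal{B}(\bX)\|_{1}\le c_{2}+(1+\delta^{\mathrm{ub}})\varepsilon$ and $\frac{1}{m}\|\mathcal{B}(\bX)\|_{1}\ge c_{1}-(1+\delta^{\mathrm{ub}})\varepsilon$. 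Taking $\varepsilon$ small enough turns these into $1-\delta_{k,r,l}^{\mathrm{lb}}\ge c_{1}/2$ and $1+\delta_{k,r,l}^{\mathrm{ub}}\le 2c_{2}$.

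The one place that needs care is precisely this pull-back: unlike for plain sparse vectors, a difference of two rank-$r$ matrices may have rank $2r$, and a difference of block-supported matrices with \emph{different} blocks $\omega\ne\omega'$ spreads onto $\omega\cup\omega'$, so one cannot approximate inside $\mathcal{M}_{k,r,l}$ itself --- stratifying by type before netting and running the argument for rank parameter $2r$ repairs this at the cost of harmless constants. Everything else is routine: the rate $\exp(-c_{3}m)$ in Proposition~\ref{lemma:RIP_lr} is linear in $m$ and hence dominates the polynomial-in-$(n,k,l)$ net cardinality once $m$ passes the metric-entropy threshold. Finally, Corollaries~\ref{corollary:RIP_lr} and~\ref{corollary:RIP_sparse} follow from the same template, using respectively the net of size $(9/\varepsilon)^{(2n+1)r}$ for rank-$\le r$ matrices in $\mathbb{R}^{n\times n}$ (yielding $m>c_{4}nr$) and the net $\binom{n^{2}}{k}(3/\varepsilon)^{k}$ for $k$-sparse matrices (yielding $m>c_{4}k\log(n^{2}/k)$).
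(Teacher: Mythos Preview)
Your proposal is correct and follows essentially the same route as the paper. The paper does not give a self-contained proof of this corollary at all: it simply states that the result ``follows immediately from a standard covering argument detailed in \cite[Section III.B]{CandesPlan2011Tight} and \cite[Section 5]{baraniuk2008simple}'', and then supplies a remark computing $\log\mathcal{C}_{\epsilon}(\mathcal{M}_{k,r,l})\lesssim kr\log(n/k)+l\log(n^{2}/l)$ via the product of the covering numbers of the rank-$r$ manifold in ambient dimension $k$ and the $l$-sparse manifold in ambient dimension $n^{2}$. Your stratify-by-$(\omega,S)$-then-net construction is exactly this computation made explicit, and your observation that the pull-back step forces you to pass to rank $2r$ within the fixed block (harmless for the order of $m$) is the one nontrivial detail the paper leaves implicit.
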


\begin{remark}Recall that each matrix in $\mathcal{M}_{k,r,l}$ is
a sum of some $\boldsymbol{X}_{1}$ and $\boldsymbol{X}_{2}$, where
$\boldsymbol{X}_{1}$ is a rank-$r$ matrix in a $k\times k$ subspace,
while $\boldsymbol{X}_{2}$ is an $l$-sparse matrix. Consequently,
if we let $\mathcal{C}_{\epsilon}\left(\mathcal{M}\right)$ stand for
the covering number of a set $\mathcal{M}$ (i.e. the the fewest number
of points in any $\epsilon$-net of $\mathcal{M}$), then $\mathcal{C}_{\epsilon}\left(\mathcal{M}_{k,r,l}\right)$
is apparently bounded above by the product of $\mathcal{C}_{\epsilon/2}\left(\mathcal{M}_{r}\right)$
and $\mathcal{C}_{\epsilon/2}\left(\mathcal{M}_{l}\right)$, where
$\mathcal{M}_{r}$ and $\mathcal{M}_{l}$ denotes the rank-$r$ manifold
(with ambient dimension $k$) and the $\ell$-sparse manifold (with
ambient dimension $n^{2}$), respectively. Thus, $\log\mathcal{C}_{\epsilon}\left(\mathcal{M}_{k,r,l}\right)$
cannot exceed $kr\log(n/k)+l\log\left(n^{2}/l\right)$.\end{remark}

%Note that the covering numbers for either low-rank manifold or the class of sparse matrices are well explored, which facilitates the proof for low-rank, sparse, or jointly rank-one and sparse matrices. In fact, when the operator $\mathcal{B}$ satisfies RIP-$\ell_{2}/\ell_{1}$ for these classes of matrices, Theorems~\ref{thm:ApproxLR}, \ref{thm:ApproxSP}, and \ref{thm:SparsePR} can be proved in a fairly short and simple approach without delicate construction of dual certificates. The details of the proof are deferred to Appendix~\ref{sec:proof}. 

\subsection{Proof of Theorems \ref{thm:ApproxLR}, \ref{thm:ApproxSP} and \ref{thm:SparsePR}
via RIP-$\ell_{2}/\ell_{1}$}

Theorems \ref{thm:ApproxLR} and \ref{thm:ApproxSP} can thus be proved
given that the auxiliary operator $\mathcal{B}$ satisfies RIP-$\ell_{2}/\ell_{1}$
with sufficiently small constants, as asserted in Corollaries~\ref{corollary:RIP_lr}
and \ref{corollary:RIP_sparse}. We first present Lemma~\ref{lemma:duality}
which in turn establishes Theorem~\ref{thm:ApproxLR}.

\begin{lemma}\label{lemma:duality}Consider any matrix $\bSigma=\bSigma_{r}+\bSigma_{\mathrm{c}}$,
where $\bSigma_{r}$ is the best rank-$r$ approximation of $\bSigma$.
If there exists a number $K_{1}>2r$ such that 
\begin{equation}
\frac{1-\delta_{2r+K_{1}}^{\mathrm{lb}}}{\sqrt{2}}-\left(1+\delta_{K_{1}}^{\mathrm{ub}}\right)\sqrt{\frac{2r}{K_{1}}}\geq\beta_{1}>0\label{condition_lr}
\end{equation}
holds for some numerical value $\beta_{1}$, then the minimizer $\hat{\bSigma}$
to \eqref{tracemin} obeys 
\begin{equation}
\Vert\hat{\bSigma}-\bSigma\Vert_{\mathrm{F}}\leq\left(\frac{C_{1}}{\beta_{1}}+C_{3}\right)\frac{\left\Vert \bSigma_{\mathrm{c}}\right\Vert _{*}}{\sqrt{K_{1}}}+\frac{C_{2}}{\beta_{1}}\cdot\frac{\epsilon_{1}}{m}\label{eq:ApproxDualityBound}
\end{equation}
for some positive universal constants $C_{1}$, $C_{2}$ and $C_{3}$
depending only on the RIP-$\ell_{2}/\ell_{1}$ constants. \end{lemma}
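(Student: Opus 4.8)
The plan is to run the classical restricted-isometry argument for noisy nuclear-norm minimization --- the matrix analogue of Cand\`es's compressed-sensing template --- with the output-side metric taken to be $\ell_{1}$ rather than $\ell_{2}$, which is exactly what RIP-$\ell_{2}/\ell_{1}$ (Definition~\ref{defn:LR-RIP}, available for $\mathcal{B}$ through Corollary~\ref{corollary:RIP_lr}) is designed to support. Write $\bH:=\hat{\bSigma}-\bSigma$ and $\bSigma=\bSigma_{r}+\bSigma_{\mathrm{c}}$ with $\bSigma_{r}$ obtained by SVD truncation, so that $\bSigma_{r}$ and $\bSigma_{\mathrm{c}}$ have mutually orthogonal row and column spaces. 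Let $T$ denote the tangent space to the rank-$r$ matrices at $\bSigma_{r}$, and let $\bH_{T}$, $\bH_{T^{\perp}}$ be the components of $\bH$ inside $T$ and inside $T^{\perp}$; then $\bH_{T}$ has rank at most $2r$ and $\bSigma_{\mathrm{c}}\in T^{\perp}$.

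First I would extract the three ingredients the template needs. \textbf{(i) Feasibility.} Both $\hat{\bSigma}$ and $\bSigma$ are feasible for \eqref{tracemin} (the latter since $\|\bfeta\|_{1}\le\epsilon_{1}$), so the triangle inequality gives $\|\mathcal{A}(\bH)\|_{1}\le 2\epsilon_{1}$; because each debiased coordinate $\langle\bB_{i},\bH\rangle=\langle\bA_{2i-1},\bH\rangle-\langle\bA_{2i},\bH\rangle$ is dominated entrywise by a pair of $\mathcal{A}$-coordinates, $\tfrac1m\|\mathcal{B}(\bH)\|_{1}\le 2\epsilon_{1}/m$. \textbf{(ii) Cone condition.} On the PSD cone $\trace=\|\cdot\|_{*}$, so optimality of $\hat{\bSigma}$ gives $\|\bSigma+\bH\|_{*}\le\|\bSigma\|_{*}$; grouping the left side as $(\bSigma_{r}+\bH_{T^{\perp}})+(\bSigma_{\mathrm{c}}+\bH_{T})$, using $\|\bSigma_{r}+\bH_{T^{\perp}}\|_{*}=\|\bSigma_{r}\|_{*}+\|\bH_{T^{\perp}}\|_{*}$ and $\|\bSigma\|_{*}=\|\bSigma_{r}\|_{*}+\|\bSigma_{\mathrm{c}}\|_{*}$, one arrives at $\|\bH_{T^{\perp}}\|_{*}\le\|\bH_{T}\|_{*}+2\|\bSigma_{\mathrm{c}}\|_{*}$ --- the PSD constraint enters \emph{only} to equate $\trace$ with $\|\cdot\|_{*}$, so the argument transcribes verbatim to nuclear-norm minimization. \textbf{(iii) Block decomposition.} Order the singular values of $\bH_{T^{\perp}}$ decreasingly and split them into consecutive blocks of size $K_{1}$, giving $\bH_{T^{\perp}}=\sum_{j\ge 1}\bH_{j}$; the standard estimate $\|\bH_{j}\|_{\mathrm{F}}\le\|\bH_{j-1}\|_{*}/\sqrt{K_{1}}$ for $j\ge 2$, combined with $\|\bH_{T}\|_{*}\le\sqrt{2r}\,\|\bH_{T}\|_{\mathrm{F}}$ and (ii), yields $\sum_{j\ge 2}\|\bH_{j}\|_{\mathrm{F}}\le\sqrt{2r/K_{1}}\,\|\bH_{T}\|_{\mathrm{F}}+2\|\bSigma_{\mathrm{c}}\|_{*}/\sqrt{K_{1}}$.

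Now I would close the loop with RIP-$\ell_{2}/\ell_{1}$. Put $\bH_{0}:=\bH_{T}$; then $\bH_{0}+\bH_{1}$ has rank at most $2r+K_{1}$, so the lower RIP bound gives $(1-\delta_{2r+K_{1}}^{\mathrm{lb}})\|\bH_{0}+\bH_{1}\|_{\mathrm{F}}\le\tfrac1m\|\mathcal{B}(\bH_{0}+\bH_{1})\|_{1}$. Since $\mathcal{B}(\bH_{0}+\bH_{1})=\mathcal{B}(\bH)-\sum_{j\ge 2}\mathcal{B}(\bH_{j})$, the triangle inequality, (i), and the upper RIP bound applied to each rank-$\le K_{1}$ block give $\tfrac1m\|\mathcal{B}(\bH_{0}+\bH_{1})\|_{1}\le 2\epsilon_{1}/m+(1+\delta_{K_{1}}^{\mathrm{ub}})\sum_{j\ge 2}\|\bH_{j}\|_{\mathrm{F}}$. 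Feeding in the tail bound from (iii), replacing $\|\bH_{T}\|_{\mathrm{F}}$ by $\|\bH_{0}+\bH_{1}\|_{\mathrm{F}}$, and moving the resulting $\|\bH_{0}+\bH_{1}\|_{\mathrm{F}}$-proportional terms to the left leaves a coefficient which, after a Cauchy--Schwarz step relating $\|\bH_{0}+\bH_{1}\|_{\mathrm{F}}$ to $\|\bH_{0}\|_{\mathrm{F}}+\|\bH_{1}\|_{\mathrm{F}}$, is exactly the left-hand side of \eqref{condition_lr}; the hypothesis $\beta_{1}>0$ is precisely what makes this coefficient positive, so $\|\bH_{0}+\bH_{1}\|_{\mathrm{F}}\lesssim\beta_{1}^{-1}\big(\epsilon_{1}/m+\|\bSigma_{\mathrm{c}}\|_{*}/\sqrt{K_{1}}\big)$. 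Finally $\|\bH\|_{\mathrm{F}}\le\|\bH_{0}\|_{\mathrm{F}}+\|\bH_{1}\|_{\mathrm{F}}+\sum_{j\ge 2}\|\bH_{j}\|_{\mathrm{F}}$ together with $K_{1}>2r$ (so $1+\sqrt{2r/K_{1}}<2$) turns this into \eqref{eq:ApproxDualityBound}, the standalone $C_{3}$-term being the part of the tail charged directly to $\|\bSigma_{\mathrm{c}}\|_{*}/\sqrt{K_{1}}$ rather than through $\beta_{1}$.

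I do not anticipate a genuinely hard step: all the difficulty is concentrated in Proposition~\ref{lemma:RIP_lr} and Corollary~\ref{corollary:RIP_lr}, which furnish the RIP-$\ell_{2}/\ell_{1}$ constants; given those, the lemma is a mechanical adaptation of the $\ell_{2}/\ell_{2}$ scheme. The one place demanding care is the bookkeeping in the last paragraph --- the lower RIP bound must be invoked on the single combined head $\bH_{0}+\bH_{1}$ (never on $\bH_{0}$, $\bH_{1}$ separately), the upper RIP bound only on the individual rank-$\le K_{1}$ tail blocks, and the output $\ell_{1}$ norm split purely by the triangle inequality, all while dragging along the debiasing substitution $\mathcal{A}\rightsquigarrow\mathcal{B}$ and the additive $\epsilon_{1}/m$ term without misplacing constants. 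If anything, the $\ell_{1}$ output norm tolerates this partitioning even more gracefully than the $\ell_{2}$ one, which is why the proof is shorter than its $\ell_{2}/\ell_{2}$ ancestor and needs no dual certificate as in \cite{candes2012phaselift,candes2012solving,li2012sparse}.
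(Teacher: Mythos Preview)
Your proposal is correct and follows essentially the same route as the paper's proof: derive the cone inequality $\|\bH_{T^{\perp}}\|_{*}\le\|\bH_{T}\|_{*}+2\|\bSigma_{\mathrm{c}}\|_{*}$ from optimality, split $\bH_{T^{\perp}}$ into rank-$K_{1}$ blocks, apply the lower RIP-$\ell_{2}/\ell_{1}$ bound to the head $\bH_{T}+\bH_{1}$ and the upper bound to each tail block, and combine via the $\ell_{1}$ triangle inequality with the feasibility estimate $\tfrac{1}{m}\|\mathcal{B}(\bH)\|_{1}\le 2\epsilon_{1}/m$. The only cosmetic difference is the placement of the $1/\sqrt{2}$ factor: the paper first uses $\|\bH_{T}+\bH_{1}\|_{\mathrm{F}}\ge\tfrac{1}{\sqrt{2}}(\|\bH_{T}\|_{\mathrm{F}}+\|\bH_{1}\|_{\mathrm{F}})$ and \emph{then} isolates the $\|\bH_{T}\|_{\mathrm{F}}$ term, which is why the coefficient in \eqref{condition_lr} carries the $1/\sqrt{2}$; in your ordering (bound $\|\bH_{T}\|_{\mathrm{F}}\le\|\bH_{0}+\bH_{1}\|_{\mathrm{F}}$ first, then split) the coefficient on $\|\bH_{0}+\bH_{1}\|_{\mathrm{F}}$ is the slightly larger $(1-\delta_{2r+K_{1}}^{\mathrm{lb}})-(1+\delta_{K_{1}}^{\mathrm{ub}})\sqrt{2r/K_{1}}$, which is $\ge\beta_{1}$ a fortiori, so the conclusion still follows---just not ``exactly'' the left-hand side of \eqref{condition_lr} as you wrote.
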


\begin{proof} See Appendix~\ref{proof:duality}. \end{proof}

By choosing 
\[
K_{1}=8\left(\frac{4c_{2}}{c_{1}}\right)^{2}r\geq8\left(\frac{1+\delta_{K_{1}}^{\text{ub}}}{1-\delta_{2r+K_{1}}^{\text{ub}}}\right)^{2}r
\]
for the universal constants $c_{1},c_{2}$ given in Corollary \ref{corollary:RIP_lr},
we obtain \eqref{condition_lr} when $m>c_{4}\left(K_{1}+2r\right)n$
for some constant $c_{4}$. This establishes Theorem \ref{thm:ApproxLR}.

Theorem~\ref{thm:ApproxSP} is a direct consequence from the following lemma.

%\subsection{Proof of Theorem~\ref{thm:ApproxSP}}

\begin{lemma}\label{lemma:duality-sparse}Consider any matrix $\bSigma=\bSigma_{\Omega}+\bSigma_{\Omega^{c}}$,
where $\bSigma_{\Omega}$ is the best $k$-term approximation of $\bSigma$.
If there exists a number $K_{2}>2k$ such that 
\begin{equation}
\frac{(1-\gamma_{k+K_{2}}^{\mathrm{lb}})}{\sqrt{2}}-\left(1+\gamma_{K_{2}}^{\mathrm{ub}}\right)\sqrt{\frac{k}{K_{2}}}\geq\beta_{2}>0\label{condition-sparse}
\end{equation}
holds for some numerical value $\beta_{2}$, then the minimizer $\hat{\bSigma}$
to \eqref{tracemin} obeys 
\begin{equation}
\Vert\hat{\bSigma}-\bSigma\Vert_{\mathrm{F}}\leq\left(\frac{C_{1}}{\beta_{2}}+C_{3}\right)\frac{\left\Vert \boldsymbol{\Sigma}_{\Omega^{c}}\right\Vert _{1}}{\sqrt{K_{2}}}+\frac{C_{2}}{\beta_{2}}\frac{\epsilon_{1}}{m}\label{eq:ApproxDualityBound}
\end{equation}
for some positive universal constants $C_{1}$, $C_{2}$, $C_{3}$
depending only on the RIP-$\ell_{2}/\ell_{1}$ constants. \end{lemma}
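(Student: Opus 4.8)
The plan is to mimic the classical RIP-based analysis of noisy $\ell_1$-minimization (in the spirit of \cite{candes2008restricted}), carried out entirely in the mixed $\ell_2/\ell_1$ geometry and applied to the debiased operator $\mathcal{B}$; this is the exact sparse counterpart of the argument behind Lemma~\ref{lemma:duality}. Write $\bH := \hat{\bSigma} - \bSigma$ for the estimation error. First I would record two consequences of feasibility and optimality in \eqref{l1min}: (i) since $\bSigma$ is itself feasible ($\|\by-\mathcal{A}(\bSigma)\|_1 = \|\bfeta\|_1 \le \epsilon_1$) and $\hat{\bSigma}$ is optimal, $\|\hat{\bSigma}\|_1 \le \|\bSigma\|_1$; (ii) by the triangle inequality $\|\mathcal{A}(\bH)\|_1 \le 2\epsilon_1$, and because $\boldsymbol{B}_i = \boldsymbol{A}_{2i-1}-\boldsymbol{A}_{2i}$ by \eqref{eq:ZeroMeanOperator}, this transfers to $\|\mathcal{B}(\bH)\|_1 \le \|\mathcal{A}(\bH)\|_1 \le 2\epsilon_1$ (the implicit doubling of measurements only rescales the final constants). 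The PSD constraint is never invoked, consistent with the remark in Section~\ref{sub:Extension-to-General-Model}.

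Next comes the cone condition. Splitting $\|\hat{\bSigma}\|_1 = \|\bSigma+\bH\|_1$ over the support $\Omega$ of the best $k$-term approximation $\bSigma_\Omega$ and its complement, and using (i), yields the standard bound $\|\bH_{\Omega^c}\|_1 \le \|\bH_\Omega\|_1 + 2\|\bSigma_{\Omega^c}\|_1$. I would then perform the usual shelling: sort the off-$\Omega$ entries of $\bH$ by decreasing magnitude into consecutive blocks $T_1,T_2,\ldots$ of size $K_2$; monotonicity of the sorted entries gives $\|\bH_{T_{j+1}}\|_{\mathrm F}\le\|\bH_{T_j}\|_1/\sqrt{K_2}$, hence $\sum_{j\ge 2}\|\bH_{T_j}\|_{\mathrm F}\le\|\bH_{\Omega^c}\|_1/\sqrt{K_2}\le\sqrt{k/K_2}\,\|\bH_\Omega\|_{\mathrm F}+2\|\bSigma_{\Omega^c}\|_1/\sqrt{K_2}$, using $\|\bH_\Omega\|_1\le\sqrt{k}\,\|\bH_\Omega\|_{\mathrm F}$. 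Now invoke the RIP-$\ell_2/\ell_1$ of $\mathcal{B}$ from Corollary~\ref{corollary:RIP_sparse}: apply the lower bound to $\bH_{\Omega\cup T_1}$ (sparsity $\le k+K_2$) and the upper bound to each $\bH_{T_j}$ (sparsity $\le K_2$). Combining $\|\mathcal{B}(\bH)\|_1\ge\|\mathcal{B}(\bH_{\Omega\cup T_1})\|_1-\sum_{j\ge2}\|\mathcal{B}(\bH_{T_j})\|_1$ with these bounds and (ii), and using the elementary Cauchy--Schwarz inequality $\|\bH_\Omega\|_{\mathrm F}+\|\bH_{T_1}\|_{\mathrm F}\le\sqrt 2\,\|\bH_{\Omega\cup T_1}\|_{\mathrm F}$ over the two disjoint blocks, one arrives at
\begin{equation*}
\tfrac{1-\gamma_{k+K_2}^{\mathrm{lb}}}{\sqrt 2}\bigl(\|\bH_\Omega\|_{\mathrm F}+\|\bH_{T_1}\|_{\mathrm F}\bigr)\le \tfrac{2\epsilon_1}{m}+(1+\gamma_{K_2}^{\mathrm{ub}})\bigl(\sqrt{\tfrac{k}{K_2}}\,\|\bH_\Omega\|_{\mathrm F}+\tfrac{2}{\sqrt{K_2}}\|\bSigma_{\Omega^c}\|_1\bigr).
\end{equation*}
This is precisely where the factor $1/\sqrt 2$ in hypothesis \eqref{condition-sparse} enters.

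Finally I would move the $\sqrt{k/K_2}$ term to the left (it is dominated by $\|\bH_\Omega\|_{\mathrm F}+\|\bH_{T_1}\|_{\mathrm F}$), so that the net left-hand coefficient is exactly $\beta_2>0$ by \eqref{condition-sparse}; solving yields $\|\bH_\Omega\|_{\mathrm F}+\|\bH_{T_1}\|_{\mathrm F}\le O\!\bigl(\epsilon_1/(\beta_2 m)\bigr)+O\!\bigl((1+\gamma_{K_2}^{\mathrm{ub}})\|\bSigma_{\Omega^c}\|_1/(\beta_2\sqrt{K_2})\bigr)$. The theorem's bound then follows from $\|\bH\|_{\mathrm F}\le\|\bH_\Omega\|_{\mathrm F}+\|\bH_{T_1}\|_{\mathrm F}+\sum_{j\ge2}\|\bH_{T_j}\|_{\mathrm F}$ together with the shelling estimate above, collecting constants into $C_1,C_2,C_3$ (which depend only on the RIP constants $c_1,c_2$ of \eqref{eq:ApproximateRIP}, since $K_2>2k$ makes $\sqrt{k/K_2}<1$). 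I expect the only delicate bookkeeping — and hence the "main obstacle," though a mild one — to be making the constants line up so that the effective margin is exactly $\beta_2$: one must weaken $\|\bH_{\Omega\cup T_1}\|_{\mathrm F}$ to $\tfrac{1}{\sqrt2}(\|\bH_\Omega\|_{\mathrm F}+\|\bH_{T_1}\|_{\mathrm F})$ at the right moment, so that the quantity actually controlled is the head of $\|\bH\|_{\mathrm F}$ rather than $\|\bH_{\Omega\cup T_1}\|_{\mathrm F}$, and carefully track the $\epsilon_1$-term through the debiasing step \eqref{eq:ZeroMeanOperator}. Everything else is the textbook RIP argument, now executed verbatim for the sparse model in place of the low-rank model of Lemma~\ref{lemma:duality}.
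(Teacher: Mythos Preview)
Your proposal is correct and follows essentially the same approach as the paper's own proof: cone condition from $\ell_1$-optimality, shelling of $\bH_{\Omega^c}$ into blocks of size $K_2$, RIP-$\ell_2/\ell_1$ lower bound on the head block $\bH_{\Omega\cup T_1}$ and upper bound on the tail blocks, and the $1/\sqrt{2}$ step to pass from $\|\bH_{\Omega\cup T_1}\|_{\mathrm F}$ to $\|\bH_\Omega\|_{\mathrm F}+\|\bH_{T_1}\|_{\mathrm F}$. The only cosmetic difference is notation ($T_j$ versus the paper's $\Omega_j$); note also that the reference to \eqref{tracemin} in the lemma statement is a typo for \eqref{l1min}, which you correctly read.
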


\begin{proof} See Appendix~\ref{proof:duality_sparse}. \end{proof}

By picking 
\[
K_{2}=4\left(\frac{4c_{2}}{c_{1}}\right)^{2}k\geq4\left(\frac{1+\gamma_{K_{2}}^{\text{ub}}}{1-\gamma_{k+K_{2}}^{\text{lb}}}\right)^{2}k,
\]
one obtains \eqref{condition-sparse} as soon as $m>c_{4}\left(K_{2}+2k\right)\log(n^{2}/k)$
for the constant $c_{4}$ given in Corollary~\ref{corollary:RIP_sparse}.
This concludes the proof of Theorem \ref{thm:ApproxSP}.

Furthermore, the specialized RIP-$\ell_{2}/\ell_{1}$ concept allows
us to prove Theorem~\ref{thm:SparsePR} through the following lemma.

\begin{lemma}\label{lemma:duality-SparsePR-Stable}Set $\lambda$
to be any number within the interval $\left[\frac{1}{n},\frac{1}{\sqrt{k}}\frac{\left\Vert \boldsymbol{x}_{\Omega}\right\Vert _{2}}{\left\Vert \boldsymbol{x}_{\Omega}\right\Vert _{1}}\right]$.
Suppose that $\boldsymbol{x}_{\Omega}$ is the best $k$-term approximation
of $\boldsymbol{x}$. If there exists a number $K_{1}$ such that
\begin{equation}
\begin{cases}
\frac{\frac{1}{\sqrt{3}}\left(1-\delta_{k,2K_{1},\frac{2K_{1}}{\lambda^{2}}}^{\mathrm{lb}}\right)-\frac{3}{\sqrt{K_{1}}}\left(1+\delta_{k,K_{1},\frac{K_{1}}{\lambda^{2}}}^{\mathrm{ub}}\right)}{2\max\left\{ \frac{1}{\sqrt{K_{1}}}\left(1+\delta_{k,K_{1},\frac{K_{1}}{\lambda^{2}}}^{\mathrm{ub}}\right),1\right\} }\geq\beta_{3}>0,\\
\frac{1+\delta_{k,K_{1},\frac{K_{1}}{\lambda^{2}}}^{\mathrm{ub}}}{\left(1-\delta_{k,K_{1},\frac{K_{1}}{\lambda^{2}}}^{\mathrm{lb}}\right)\sqrt{K_{1}}}\leq\beta_{4}
\end{cases}\label{eq:Beta3}
\end{equation}
for some absolute constants $\beta_{3}$ and $\beta_{4}$, then the
solution $\hat{\boldsymbol{X}}$ to (\ref{eq:AlgorithmSparsePR})
satisfies 
\begin{align}
\left\Vert \hat{\boldsymbol{X}}-\boldsymbol{x}\boldsymbol{x}^{\top}\right\Vert _{\mathrm{F}} & \leq C\left\{ \left\Vert \boldsymbol{x}\boldsymbol{x}^{\top}-\boldsymbol{x}_{\Omega}\boldsymbol{x}_{\Omega}^{\top}\right\Vert _{*}\right.\nonumber \\
 & \quad\quad\quad\left.+\lambda\left\Vert \boldsymbol{x}\boldsymbol{x}^{\top}-\boldsymbol{x}_{\Omega}\boldsymbol{x}_{\Omega}^{\top}\right\Vert _{1}+\frac{\epsilon_{1}}{m}\right\} \label{eq:ErrorBound}
\end{align}
for some constant $C$ that depends only on $\beta_{3}$ and $\beta_{4}$.\end{lemma}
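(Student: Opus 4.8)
The plan is to carry out, for the mixed objective $\mathrm{Tr}(\cdot)+\lambda\|\cdot\|_{1}$, the restricted-isometry argument that \cite{candes2008restricted} introduced for linear measurements, now driven by the specialized RIP-$\ell_{2}/\ell_{1}$ over the set $\mathcal{M}_{k,r,l}$ of Definition~\ref{defn:SparsePR-RIP}, whose constants are controlled by Corollary~\ref{corollary:RIP_sparse-lowrank}. Throughout, write $\boldsymbol{H}:=\hat{\boldsymbol{X}}-\boldsymbol{x}\boldsymbol{x}^{\top}$ for the estimation error, $\boldsymbol{X}_{0}:=\boldsymbol{x}_{\Omega}\boldsymbol{x}_{\Omega}^{\top}$ for the idealized rank-one, block-sparse target, $\Omega_{0}$ for the $k\times k$ support block of $\boldsymbol{X}_{0}$, and $\boldsymbol{E}:=\boldsymbol{x}\boldsymbol{x}^{\top}-\boldsymbol{X}_{0}$ for the model mismatch, which is supported off $\Omega_{0}$; note that $\|\boldsymbol{E}\|_{*}$ and $\|\boldsymbol{E}\|_{1}$ are exactly the first two terms on the right of \eqref{eq:ErrorBound}.

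The first two ingredients are the familiar consequences of optimality and feasibility. Since $\boldsymbol{x}\boldsymbol{x}^{\top}$ and $\hat{\boldsymbol{X}}$ are both feasible for \eqref{eq:AlgorithmSparsePR}, a triangle inequality gives $\|\mathcal{A}(\boldsymbol{H})\|_{1}\le 2\epsilon_{1}$, hence $\frac{1}{m}\|\mathcal{B}(\boldsymbol{H})\|_{1}\le\frac{4\epsilon_{1}}{m}$ (the absolute constant is immaterial). Since $\hat{\boldsymbol{X}}$ minimizes the objective and both matrices are PSD---so that $\mathrm{Tr}$ equals the nuclear norm on them---one has $\|\hat{\boldsymbol{X}}\|_{*}+\lambda\|\hat{\boldsymbol{X}}\|_{1}\le\|\boldsymbol{x}\boldsymbol{x}^{\top}\|_{*}+\lambda\|\boldsymbol{x}\boldsymbol{x}^{\top}\|_{1}$. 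I would then turn this into a cone condition: the PSD constraint furnishes a rank-one tangent direction $T$ with $\mathcal{P}_{T^{\perp}}\boldsymbol{H}\succeq 0$ (so $\mathrm{Tr}(\mathcal{P}_{T^{\perp}}\boldsymbol{H})=\|\mathcal{P}_{T^{\perp}}\boldsymbol{H}\|_{*}$); feeding this together with the $\ell_{1}$-triangle inequality split across $\Omega_{0}$ and $\Omega_{0}^{\mathrm{c}}$ into the objective inequality, and propagating $\boldsymbol{E}$ through, I expect to reach a relation of the form
\[
\|\mathcal{P}_{T^{\perp}}\boldsymbol{H}\|_{*}+\lambda\|\boldsymbol{H}_{\Omega_{0}^{\mathrm{c}}}\|_{1}\;\le\;3\big(\|\mathcal{P}_{T}\boldsymbol{H}\|_{*}+\lambda\|\boldsymbol{H}_{\Omega_{0}}\|_{1}\big)+c\big(\|\boldsymbol{E}\|_{*}+\lambda\|\boldsymbol{E}\|_{1}\big),
\]
modulo the precise constants. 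Here the admissible range for $\lambda$ already enters: the upper endpoint $\lambda\le\frac{1}{\sqrt{k}}\frac{\|\boldsymbol{x}_{\Omega}\|_{2}}{\|\boldsymbol{x}_{\Omega}\|_{1}}$ is exactly what guarantees $\lambda\|\boldsymbol{X}_{0}\|_{1}\le\|\boldsymbol{X}_{0}\|_{*}$, keeping the trace and $\ell_{1}$ penalties comparable on the ``active'' part.

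Next comes the dual-structure shelling. I would decompose $\boldsymbol{H}$ simultaneously along its low-rank content---the SVD of $\boldsymbol{H}$ restricted to $\Omega_{0}$, grouped into rank-$K_{1}$ blocks $\boldsymbol{G}^{(1)},\boldsymbol{G}^{(2)},\dots$---and its sparse content---the entries of $\boldsymbol{H}_{\Omega_{0}^{\mathrm{c}}}$ grouped by decreasing magnitude into $(K_{1}/\lambda^{2})$-sparse blocks $\boldsymbol{H}^{(1)},\boldsymbol{H}^{(2)},\dots$---and set $\boldsymbol{H}_{0}:=\boldsymbol{G}^{(1)}+\boldsymbol{H}^{(1)}$ for the head. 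Singular-value and coordinate monotonicity give the usual tail bounds $\sum_{j\ge 2}\|\boldsymbol{G}^{(j)}\|_{\mathrm{F}}\le K_{1}^{-1/2}\|\boldsymbol{H}_{\Omega_{0}}\|_{*}$ and $\sum_{j\ge 2}\|\boldsymbol{H}^{(j)}\|_{\mathrm{F}}\le (K_{1}/\lambda^{2})^{-1/2}\|\boldsymbol{H}_{\Omega_{0}^{\mathrm{c}}}\|_{1}$, both of which the cone condition controls by $O(K_{1}^{-1/2})$ times $\|\boldsymbol{H}_{0}\|_{\mathrm{F}}$ plus a multiple of the mismatch. The point of the set $\mathcal{M}_{k,r,l}$ is precisely that $\boldsymbol{H}_{0}$ and each aggregate $\boldsymbol{H}_{0}+\boldsymbol{G}^{(j)}+\boldsymbol{H}^{(j)}$ lie in $\mathcal{M}_{k,2K_{1},2K_{1}/\lambda^{2}}$, so Corollary~\ref{corollary:RIP_sparse-lowrank} applies to every piece on which the RIP is invoked, and the sample hypothesis $m>C_{2}\log n/\lambda^{2}$ is just the covering requirement $\max\{kK_{1}\log(n/k),\,(K_{1}/\lambda^{2})\log(n^{2}\lambda^{2}/K_{1})\}\lesssim m$ that underlies it. Applying the RIP lower bound to $\boldsymbol{H}_{0}$, the RIP upper bound to each tail aggregate, and the triangle inequality $\frac{1}{m}\|\mathcal{B}(\boldsymbol{H}_{0})\|_{1}\le\frac{1}{m}\|\mathcal{B}(\boldsymbol{H})\|_{1}+\sum_{j\ge 2}\frac{1}{m}\|\mathcal{B}(\boldsymbol{G}^{(j)}+\boldsymbol{H}^{(j)})\|_{1}$ then yields
\[
\big(1-\delta_{k,2K_{1},2K_{1}/\lambda^{2}}^{\mathrm{lb}}\big)\|\boldsymbol{H}_{0}\|_{\mathrm{F}}\;\le\;\frac{4\epsilon_{1}}{m}+\frac{1+\delta_{k,K_{1},K_{1}/\lambda^{2}}^{\mathrm{ub}}}{\sqrt{K_{1}}}\Big(c'\|\boldsymbol{H}_{0}\|_{\mathrm{F}}+c''\big(\|\boldsymbol{E}\|_{*}+\lambda\|\boldsymbol{E}\|_{1}\big)\Big).
\]
The first inequality of \eqref{eq:Beta3} ($\beta_{3}>0$) is exactly the statement that the $\|\boldsymbol{H}_{0}\|_{\mathrm{F}}$ term can be absorbed into the left-hand side with a positive residual, while the second ($\beta_{4}$) keeps the constant multiplying the mismatch bounded; solving for $\|\boldsymbol{H}_{0}\|_{\mathrm{F}}$ and then writing $\|\boldsymbol{H}\|_{\mathrm{F}}\le\|\boldsymbol{H}_{0}\|_{\mathrm{F}}+\sum_{j\ge 2}\|\boldsymbol{G}^{(j)}\|_{\mathrm{F}}+\sum_{j\ge 2}\|\boldsymbol{H}^{(j)}\|_{\mathrm{F}}$ and reusing the tail estimates gives \eqref{eq:ErrorBound}.

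The main obstacle is the dual-structure bookkeeping together with the delicate role played by the ratio $\|\boldsymbol{x}_{\Omega}\|_{2}/\|\boldsymbol{x}_{\Omega}\|_{1}$ through the admissible range of $\lambda$. Concretely, one must verify that every aggregate fed to the RIP really belongs to some $\mathcal{M}_{k,r,l}$ with $r=O(K_{1})$ and $l=O(K_{1}/\lambda^{2})$---so that the rank-head does not leak outside the block $\Omega_{0}$ and the sparse-head does not inflate the $\ell_{0}$-budget past $O(K_{1}/\lambda^{2})$; one must track $\|\boldsymbol{E}\|_{*}$ and $\|\boldsymbol{E}\|_{1}$ separately and with the correct $\lambda$-weighting through both the cone condition and the shelling; and, most subtly, one must ensure the $\ell_{1}$-on-block term $\lambda\|\boldsymbol{H}_{\Omega_{0}}\|_{1}$ does not overwhelm the nuclear terms, which is where the restriction of $\lambda$ to $[\,1/n,\ \frac{1}{\sqrt{k}}\|\boldsymbol{x}_{\Omega}\|_{2}/\|\boldsymbol{x}_{\Omega}\|_{1}\,]$ does its work---the lower endpoint keeping $K_{1}/\lambda^{2}$ in the regime where the sparse covering bound of Corollary~\ref{corollary:RIP_sparse-lowrank} is informative, and the upper endpoint balancing the two penalties. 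The hypotheses \eqref{eq:Beta3} are posited in the lemma precisely so that the two algebraic moves above---absorbing the head term and keeping the mismatch coefficient finite---are legitimate, whence \eqref{eq:ErrorBound} follows once the bookkeeping is complete.
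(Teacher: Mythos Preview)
Your outline has the right high-level shape, but the cone condition you write down does not close, and the place where you invoke the range of $\lambda$ is not where the argument actually needs it. A naive subgradient/triangle-inequality computation indeed produces
\[
\|\mathcal{P}_{T^{\perp}}\boldsymbol{H}\|_{*}+\lambda\|\boldsymbol{H}_{\Omega_{0}^{\mathrm{c}}}\|_{1}\;\le\;\|\mathcal{P}_{T}\boldsymbol{H}\|_{*}+\lambda\|\boldsymbol{H}_{\Omega_{0}}\|_{1}+2\bigl(\|\boldsymbol{E}\|_{*}+\lambda\|\boldsymbol{E}\|_{1}\bigr),
\]
but the term $\lambda\|\boldsymbol{H}_{\Omega_{0}}\|_{1}$ on the right cannot be absorbed by your head piece: $\boldsymbol{H}_{\Omega_{0}}$ is a full $k\times k$ block, and there is no inequality relating $\lambda\|\boldsymbol{H}_{\Omega_{0}}\|_{1}$ to $\|\boldsymbol{H}_{0}\|_{\mathrm{F}}$ with an $O(1)$ constant under the stated hypotheses (for instance, $\|\boldsymbol{H}_{\Omega_{0}}\|_{1}\le k\|\boldsymbol{H}_{\Omega_{0}}\|_{\mathrm{F}}$, and $\lambda k$ need not be bounded). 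Likewise your rank-shelling is taken on $\boldsymbol{H}_{\Omega_{0}}$, so the tail needs $\|\boldsymbol{H}_{\Omega_{0}}\|_{*}$, which your cone condition does not supply; it only supplies $\|\mathcal{P}_{T^{\perp}}\boldsymbol{H}\|_{*}$.

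The paper avoids both problems by working with the three-way split $(T\cap\Omega)\oplus(T^{\perp}\cap\Omega)\oplus\Omega^{\perp}$, which is well defined because $\mathcal{P}_{T}$ and $\mathcal{P}_{\Omega}$ commute (the ``compatibility'' of \cite{li2012sparse}). One then uses the explicit subgradient $\boldsymbol{u}\boldsymbol{u}^{\top}+\boldsymbol{W}+\lambda\,\mathrm{sign}(\boldsymbol{u})\mathrm{sign}(\boldsymbol{u})^{\top}+\lambda\boldsymbol{Y}$ and the bound
\[
\bigl\|\mathcal{P}_{T}\bigl(\mathrm{sign}(\boldsymbol{u})\mathrm{sign}(\boldsymbol{u})^{\top}\bigr)\bigr\|_{*}^{2}\le 4k\|\boldsymbol{u}\|_{1}^{2}\le\frac{4}{\lambda^{2}},
\]
which is \emph{exactly} where the upper endpoint $\lambda\le\frac{1}{\sqrt{k}}\|\boldsymbol{x}_{\Omega}\|_{2}/\|\boldsymbol{x}_{\Omega}\|_{1}$ is spent (not on balancing $\lambda\|\boldsymbol{X}_{0}\|_{1}$ against $\|\boldsymbol{X}_{0}\|_{*}$ as you suggest). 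This yields the sharper cone inequality
\[
\|\boldsymbol{H}_{T^{\perp}\cap\Omega}\|_{*}+\lambda\|\boldsymbol{H}_{\Omega^{\perp}}\|_{1}\;\le\;3\|\boldsymbol{H}_{T\cap\Omega}\|+2\|\boldsymbol{E}\|_{*}+2\lambda\|\boldsymbol{E}\|_{1},
\]
whose right-hand side involves only the operator norm of the rank-$2$ block $\boldsymbol{H}_{T\cap\Omega}$ and hence is trivially $\le 3\|\boldsymbol{H}_{T\cap\Omega}\|_{\mathrm{F}}$. The rank-$K_{1}$ shelling is then performed on $\boldsymbol{H}_{T^{\perp}\cap\Omega}$ (not on $\boldsymbol{H}_{\Omega_{0}}$), and the sparse shelling on $\boldsymbol{H}_{\Omega^{\perp}}$; with $\boldsymbol{H}_{T\cap\Omega}$ included in the head, every piece lies in the required $\mathcal{M}_{k,r,l}$ and the RIP argument closes exactly as in \eqref{eq:Beta3}. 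In short, the missing idea is the $\mathrm{sign}(\boldsymbol{u})\mathrm{sign}(\boldsymbol{u})^{\top}$ subgradient combined with $T$--$\Omega$ compatibility, which eliminates the uncontrollable $\lambda\|\boldsymbol{H}_{\Omega_{0}}\|_{1}$ term from the cone condition.
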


\begin{proof}See Appendix \ref{sec:Proof-of-Lemma-Duality-SparsePR-Stable}.
\end{proof}

From Corollary \ref{corollary:RIP_sparse-lowrank}, one can ensure small
RIP-$\ell_{2}/\ell_{1}$ constants satisfying \eqref{eq:Beta3}, provided that
\[
m>c_{4}\max\left\{ kK_{1}\log n,\frac{K_{1}}{\lambda^{2}}\log n\right\} =c_{4}\frac{K_{1}}{\lambda^{2}}\log n.
\]
This in turn establishes Theorem \ref{thm:SparsePR}.

Finally, note that we have not discussed general Toeplitz low-rank
matrices using RIP-$\ell_{2}/\ell_{2}$. We are unaware of a rigorous
approach to prove exact recovery using RIP-$\ell_{2}/\ell_{1}$ for
the Toeplitz case, partly due to the difficulty in characterizing the covering number for general low-rank Toeplitz matrices. Fortunately, the analysis for Toeplitz low-rank
matrices can be performed by means of a different method, as detailed
in the next section.

%Now that we have quantified the $\text{RIP}_{1,2}$ constants, it
%remains to establish the relevance between $\text{RIP}_{1,2}$ and
%exact recovery. This is asserted in the following lemma.

%\begin{lemma}\label{lemma:duality}Consider any $\boldsymbol{X}_{0}$
%obeying $\mathrm{rank}\left(\boldsymbol{X}_{0}\right)=r$. If there
%exists a number $K_{1}>2r$ such that $1-\delta_{2r+K_{1}}^{\text{lb}}>\left(1+\delta_{K_{1}}^{\text{ub}}\right)\sqrt{\frac{2r}{K_{1}}}$,
%then $\boldsymbol{X}_{0}$ is the unique solution to PhaseLift.\end{lemma}
%\begin{proof}See Appendix \ref{sec:Proof-of-Lemma-Duality}.\end{proof}

\section{Approximate $\ell_{2}/\ell_{2}$ Isometry for Toeplitz Low-Rank Matrices\label{sec:Approximate-Isometry:-RIP-22}}

While quadratic measurements in general do not exhibit RIP-$\ell_{2}/\ell_{2}$
(as introduced in \cite{RecFazPar07}) with respect to the set of
general low-rank matrices (as pointed out in \cite{candes2012phaselift}),
a slight variant of them can indeed satisfy RIP-$\ell_{2}/\ell_{2}$
when restricted to \emph{Toeplitz} low-rank matrices. In this section,
we first provide a characterization of RIP-$\ell_{2}/\ell_{2}$ for
the set of general low-rank matrices under bounded and near-isotropic
measurements, and then convert quadratic measurements into equivalent
isotropic measurements.

\subsection{RIP-$\ell_{2}/\ell_{2}$ for Near-Isotropic and Bounded Measurements}

Before proceeding to the Toeplitz low-rank matrices, we investigate
near-isotropic and bounded operators for the set of general low-rank
matrices as follows. For convenience of presentation, we repeat the
definition of RIP-$\ell_{2}/\ell_{2}$ as follows, followed by a theorem
characterizing RIP-$\ell_{2}/\ell_{2}$ for near-isotropic and bounded
operators.

\begin{definition}[\textbf{RIP-$\ell_{2}/\ell_{2}$ for low-rank
matrices}]\label{defn:LR-RIP-22}For the set of rank-$r$ matrices,
we define the RIP-$\ell_{2}/\ell_{2}$ constants $\delta_{r}$ w.r.t.
an operator $\mathcal{B}$ as the smallest number such that for all
$\boldsymbol{X}$ of rank at most $r$, 
\[
\left(1-\delta_{r}\right)\left\Vert \boldsymbol{X}\right\Vert _{\mathrm{F}}\leq\frac{1}{m}\left\Vert \mathcal{B}\left(\boldsymbol{X}\right)\right\Vert _{2}\leq\left(1+\delta_{r}\right)\left\Vert \boldsymbol{X}\right\Vert _{\mathrm{F}}.
\]
\end{definition}

\begin{theorem}\label{thm:RIP_Isotropic}Suppose that for all $1\leq i\leq m$,
\begin{equation}
\left\Vert \boldsymbol{B}_{i}\right\Vert \leq K\quad\text{and}\quad\left\Vert \mathbb{E}\left[\mathcal{B}_{i}^{*}\mathcal{B}_{i}\right]-\mathcal{I}\right\Vert \leq\frac{c_{5}}{n}\label{eq:B_bounded_RIP}
\end{equation}
hold for some quantity $K\leq n^{2}$. For any small constant $\delta>0$,
if $m>c_{0}rK^{2}\log^{7}n$, then with probability at least $1-1/n^{2}$,
one has%
\footnote{The proof of Theorem \ref{thm:RIP_Isotropic} follows the entropy
method introduced in \cite{rudelson2008sparse}, where $\log^{7}n$
factor is a natural consequence, and might be refined a bit by generic
chaining due to Talagrand \cite{talagrand1996majorizing} as employed
in \cite{CandesPlan2011RIPless}. However, we are unaware of an approach
that can get rid of the logarithmic factor.%
}

\vspace{0.3em}
i) $\mathcal{B}$ satisfies RIP-$\ell_{2}/\ell_{2}$ w.r.t. all matrices
of rank at most $r$ and obeys $\delta_{r}\leq\delta$;

\vspace{0.3em}
ii) Suppose that $\mathcal{K}$ is some convex set. Then for all $\boldsymbol{\Sigma}$
of rank at most $r$ and $\boldsymbol{\Sigma}\in\mathcal{K}$, if
$\left\Vert \by-\mathcal{B}(\boldsymbol{\Sigma})\right\Vert _{2}\leq\epsilon_{2}$,
the solution 
\begin{align*}
\hat{\boldsymbol{\Sigma}}=\argmin_{\boldsymbol{M}}\|\boldsymbol{M}\|_{*}\quad\mathrm{subject}\text{ }\mathrm{to}\quad & \left\Vert \by-\mathcal{B}(\boldsymbol{M})\right\Vert _{2}\leq\epsilon_{2},\\
 & \boldsymbol{M}\in\mathcal{K},
\end{align*}
satisfies 
\begin{equation}
\Vert\hat{\boldsymbol{\Sigma}}-\boldsymbol{\Sigma}\Vert_{\mathrm{F}}\leq C_{2}\frac{\epsilon_{2}}{\sqrt{m}}\label{eq:ApproxLR-2}
\end{equation}
for some universal constants $c_{0},C_{2},c_{5}>0$.\end{theorem}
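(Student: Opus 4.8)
\emph{Overview and reduction.} Part~(ii) is a now-standard consequence of the restricted isometry property in part~(i), so the substance is entirely in~(i). To prove~(i) the plan is to control, uniformly over the rank-$r$ Frobenius sphere $\mathcal{U}_r:=\{\boldsymbol{X}:\rank(\boldsymbol{X})\le r,\ \|\boldsymbol{X}\|_{\mathrm{F}}=1\}$, the random quantity $\tfrac{1}{m}\|\mathcal{B}(\boldsymbol{X})\|_2^2=\tfrac1m\sum_{i=1}^m\langle\boldsymbol{B}_i,\boldsymbol{X}\rangle^2=\big\langle\boldsymbol{X},\big(\tfrac1m\sum_i\mathcal{B}_i^*\mathcal{B}_i\big)\boldsymbol{X}\big\rangle$ and compare it with $\|\boldsymbol{X}\|_{\mathrm{F}}^2$. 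First I would peel off the bias: since $\mathbb{E}\langle\boldsymbol{B}_i,\boldsymbol{X}\rangle^2=\langle\boldsymbol{X},\mathbb{E}[\mathcal{B}_i^*\mathcal{B}_i]\boldsymbol{X}\rangle$ and $\|\mathbb{E}[\mathcal{B}_i^*\mathcal{B}_i]-\mathcal{I}\|\le c_5/n$, the mean $\mathbb{E}\big[\tfrac1m\|\mathcal{B}(\boldsymbol{X})\|_2^2\big]$ differs from $\|\boldsymbol{X}\|_{\mathrm{F}}^2$ by at most $c_5/n$, which is negligible. It then remains to bound the empirical fluctuation $\Delta:=\sup_{\boldsymbol{X}\in\mathcal{U}_r}\big|\tfrac1m\sum_{i=1}^m\big(\langle\boldsymbol{B}_i,\boldsymbol{X}\rangle^2-\mathbb{E}\langle\boldsymbol{B}_i,\boldsymbol{X}\rangle^2\big)\big|$.

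\emph{The entropy method for $\Delta$.} This is the technical heart, carried out via the entropy method of Rudelson--Vershynin~\cite{rudelson2008sparse}. Two ingredients drive the estimate: (a) the boundedness $|\langle\boldsymbol{B}_i,\boldsymbol{X}\rangle|\le\|\boldsymbol{B}_i\|\,\|\boldsymbol{X}\|_*\le K\sqrt{r}$ on $\mathcal{U}_r$, from trace duality together with $\|\boldsymbol{X}\|_*\le\sqrt{r}\,\|\boldsymbol{X}\|_{\mathrm{F}}$; and (b) the metric-entropy bound $\log N(\mathcal{U}_r,\|\cdot\|_{\mathrm{F}},u)\lesssim nr\log(1/u)$ for the rank-$r$ Frobenius sphere. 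After Rademacher symmetrization and a contraction step, a multiscale chaining argument (Dudley's integral combined with Rudelson's trick of measuring the process in a data-dependent metric) yields a self-improving inequality of the form $\mathbb{E}\Delta\lesssim\frac{K\sqrt{r}\,\mathrm{polylog}(n)}{\sqrt{m}}\big(\sqrt{\mathbb{E}\Delta}+1\big)$, where the poly-logarithmic price comes from summing the chaining increments over the low-rank set and from relating the chaining metric to the operator norm; tracking these factors gives the $\log^{7}n$ of the statement. Solving this quadratic-type bound gives $\mathbb{E}\Delta\lesssim\frac{K\sqrt{r}}{\sqrt{m}}\mathrm{polylog}(n)+\frac{K^2r}{m}\mathrm{polylog}(n)$, which falls below any prescribed $\delta/2$ once $m>c_0rK^2\log^7n/\delta^2$. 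I would then upgrade $\mathbb{E}\Delta$ to a high-probability statement using a concentration inequality for suprema of bounded empirical processes (Talagrand/Bousquet type, or a bounded-differences argument), absorbing the $1/n^2$ failure probability into the constants; this delivers $\delta_r\le\delta$ on the stated event.

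\emph{Part (ii): stability from RIP.} Given part~(i), this is the classical nuclear-norm RIP argument. Write $\boldsymbol{H}:=\hat{\boldsymbol{\Sigma}}-\boldsymbol{\Sigma}$. Feasibility of $\boldsymbol{\Sigma}$ and optimality of $\hat{\boldsymbol{\Sigma}}$ give $\|\mathcal{B}(\boldsymbol{H})\|_2\le\|\by-\mathcal{B}(\hat{\boldsymbol{\Sigma}})\|_2+\|\by-\mathcal{B}(\boldsymbol{\Sigma})\|_2\le2\epsilon_2$ and $\|\hat{\boldsymbol{\Sigma}}\|_*\le\|\boldsymbol{\Sigma}\|_*$. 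Decompose $\boldsymbol{H}=\boldsymbol{H}_0+\boldsymbol{H}_c$ where $\boldsymbol{H}_0$ is the projection onto the tangent space at $\boldsymbol{\Sigma}$ of the rank-$r$ manifold (so $\rank(\boldsymbol{H}_0)\le2r$), and split $\boldsymbol{H}_c=\sum_{j\ge1}\boldsymbol{H}_j$ into successive blocks of $2r$ singular values. Since $\boldsymbol{\Sigma}$ and $\boldsymbol{H}_c$ have orthogonal row/column spaces, $\|\boldsymbol{\Sigma}\|_*\ge\|\boldsymbol{\Sigma}+\boldsymbol{H}_c\|_*-\|\boldsymbol{H}_0\|_*=\|\boldsymbol{\Sigma}\|_*+\|\boldsymbol{H}_c\|_*-\|\boldsymbol{H}_0\|_*$, yielding the cone condition $\|\boldsymbol{H}_c\|_*\le\|\boldsymbol{H}_0\|_*$ and hence $\sum_{j\ge2}\|\boldsymbol{H}_j\|_{\mathrm{F}}\le\tfrac1{\sqrt{2r}}\|\boldsymbol{H}_c\|_*\le\|\boldsymbol{H}_0\|_{\mathrm{F}}$. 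Expanding $\|\mathcal{B}(\boldsymbol{H}_0+\boldsymbol{H}_1)\|_2^2=\langle\mathcal{B}(\boldsymbol{H}_0+\boldsymbol{H}_1),\mathcal{B}(\boldsymbol{H})\rangle-\sum_{j\ge2}\langle\mathcal{B}(\boldsymbol{H}_0+\boldsymbol{H}_1),\mathcal{B}(\boldsymbol{H}_j)\rangle$ and invoking part~(i) (with $r$ enlarged by a constant factor), both as a lower bound on the left and, via polarization, as the near-orthogonality $|\langle\mathcal{B}(\boldsymbol{A}),\mathcal{B}(\boldsymbol{B})\rangle|\le\delta'm\|\boldsymbol{A}\|_{\mathrm{F}}\|\boldsymbol{B}\|_{\mathrm{F}}$ for Frobenius-orthogonal low-rank $\boldsymbol{A},\boldsymbol{B}$, I would conclude $\|\boldsymbol{H}_0+\boldsymbol{H}_1\|_{\mathrm{F}}\lesssim\epsilon_2/\sqrt{m}$ for $\delta$ a small enough constant, whence $\|\boldsymbol{H}\|_{\mathrm{F}}\le\|\boldsymbol{H}_0+\boldsymbol{H}_1\|_{\mathrm{F}}+\sum_{j\ge2}\|\boldsymbol{H}_j\|_{\mathrm{F}}\le2\|\boldsymbol{H}_0+\boldsymbol{H}_1\|_{\mathrm{F}}\lesssim\epsilon_2/\sqrt{m}$, which is \eqref{eq:ApproxLR-2}. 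The convex set $\mathcal{K}$ enters only to ensure $\boldsymbol{\Sigma}$ is feasible and is otherwise transparent to the argument.

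\emph{Main obstacle.} The delicate part is entirely in~(i): arranging the chaining/entropy argument so that the sample complexity emerges as $rK^2\,\mathrm{polylog}(n)$ with the boundedness parameter $K$ appearing only quadratically, and so that the tail probability is $1/n^2$. The symmetrization--contraction step and the self-referential Rudelson--Vershynin bound must be set up carefully to avoid spurious extra powers of $K$ or $r$, and the $\log^{7}n$ is an intrinsic cost of the chaining that I do not expect to remove without a generic-chaining ($\gamma_2$) refinement, as the footnote already indicates.
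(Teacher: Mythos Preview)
Your overall architecture matches the paper's: peel off the bias via near-isotropy, symmetrize, bound the expectation by chaining to get a self-referential inequality, then concentrate. Part~(ii) is standard and agrees with the paper, which simply cites \cite{CandesPlan2011Tight,RecFazPar07} once RIP-$\ell_2/\ell_2$ is in hand.

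There is, however, a real gap in your entropy step. Your listed ingredient~(b), the Frobenius covering bound $\log N(\mathcal{U}_r,\|\cdot\|_{\mathrm F},u)\lesssim nr\log(1/u)$, would feed $\sqrt{nr}$ into Dudley's integral and produce $\mathbb{E}\Delta\lesssim \frac{K\sqrt{nr}}{\sqrt m}\sqrt{\mathbb{E}\Delta+1}$, hence $m\gtrsim nrK^2$ rather than the claimed $m\gtrsim rK^2\log^7 n$; that is off by a factor $n/\mathrm{polylog}(n)$. The paper never uses Frobenius covering of the rank-$r$ ball. Instead, after Bernoulli and then Gaussian symmetrization it bounds the Gaussian-process metric $d(\boldsymbol X,\boldsymbol Y)$ by $R\cdot\|\boldsymbol X-\boldsymbol Y\|_{\mathcal B}$ with $\|\boldsymbol X\|_{\mathcal B}:=\max_i|\langle\boldsymbol B_i,\boldsymbol X\rangle|$ and $R=\sup_T\|\sum_i\mathcal P_T\mathcal B_i^*\mathcal B_i\mathcal P_T\|^{1/2}$, then uses the containment $\tfrac{1}{\sqrt{2r}}\mathcal D_{2r}^2\subseteq \mathcal D^1:=\{\boldsymbol X:\|\boldsymbol X\|_*\le 1\}$ to replace the rank-constrained Frobenius ball by the full nuclear-norm ball, at the price of exactly the $\sqrt r$ that ultimately appears. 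The Dudley integral of $\mathcal D^1$ under $\|\cdot\|_{\mathcal B}$ is then bounded via a Maurey/Carl-type entropy estimate (Aubrun \cite{aubrun2009almost}), giving $\int_0^\infty\sqrt{\log N(\mathcal D^1,\|\cdot\|_{\mathcal B},u)}\,\mathrm du\lesssim K\log^{5/2}n\sqrt{\log m}\lesssim K\log^3 n$, with no $n$ or $nr$ factor. This is the ``Rudelson trick'' you allude to, but it must replace, not supplement, ingredient~(b).

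On concentration, the paper does not use Talagrand/Bousquet or bounded differences; it invokes a Banach-space concentration inequality for sums of i.i.d.\ bounded elements (Theorem~3.10 of \cite{rudelson2008sparse}) in the space equipped with $\|\mathcal H\|_\Upsilon:=\sup_{T\in\mathcal M_r^{\mathrm t}}\|\mathcal P_T\mathcal H\mathcal P_T\|$, using the bound $\|\varepsilon_i\mathcal B_i^*\mathcal B_i\|_\Upsilon\le K^2r$. Your Talagrand suggestion is in the same spirit and should work, but a raw bounded-differences argument would cost extra powers of $Kr$ and fall short of the stated sample complexity.
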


\begin{proof}See Appendix \ref{sec:Proof-of-Theorem-RIP-Isotropic}.\end{proof}

In fact, the bound on $\left\Vert \boldsymbol{B}_{i}\right\Vert $
can be as small as $\Theta\left(\sqrt{n}\right)$, and we say a measurement
matrix $\boldsymbol{B}_{i}$ is \emph{well-bounded} if $K=O\left(\sqrt{n}\mathrm{polylog}n\right)$.
Simultaneously well-bounded and near-isotropic operators (i.e. those
satisfying (\ref{eq:B_bounded_RIP})) subsume the Fourier-type basis
discussed in \cite{Gross2011recovering}. Theorem \ref{thm:RIP_Isotropic}
strengthens the result in \cite{Gross2011recovering} to admit universal
and stable recovery of low-rank matrices with random subsampling using
Fourier-type basis, by justifying RIP-$\ell_{2}/\ell_{2}$ as soon
as $m=\Omega\left(nr\mathrm{polylog}n\right)$.

Unfortunately, Theorem \ref{thm:RIP_Isotropic} cannot be directly
applied to the class of Toeplitz low-rank matrices for the following
reasons: i) The sampling operator $\mathcal{A}$ is neither isotropic
nor well-bounded; ii) Theorem \ref{thm:RIP_Isotropic} requires $m=\Omega\left(nr\mathrm{polylog}n\right)$
measurements, which far exceeds the ambient dimension of a Toeplitz
matrix, which is $n$. This motivates us to construct another set
of equivalent sampling operators that satisfies the assumptions of Theorem
\ref{thm:RIP_Isotropic}, which is the focus of the following subsection.

\subsection{Construction of RIP-$\ell_{2}/\ell_{2}$ Operators for Toeplitz Low-rank
Matrices}

Note that the quadratic measurement matrices $\boldsymbol{A}_{i}=\boldsymbol{a}_{i}\boldsymbol{a}_{i}^{\top}$
are neither non-isotropic nor well bounded. For instance, when $\boldsymbol{a}_{i}\sim\mathcal{N}\left(\boldsymbol{0},\boldsymbol{I}_{n}\right)$,
simple calculation reveals that 
\begin{equation}
\left\Vert \boldsymbol{A}_{i}\right\Vert =\Theta\left(\sqrt{n}\right),\quad\text{and}\quad\mathbb{E}\left[\boldsymbol{A}_{i}\left\langle \boldsymbol{A}_{i},\boldsymbol{X}\right\rangle \right]=2\boldsymbol{X}+\text{tr}\left(\boldsymbol{X}\right)\cdot\boldsymbol{I},\label{eq:GaussianExpected}
\end{equation}
precluding $\boldsymbol{A}_{i}$'s from being isotropic. In order
to facilitate the use of Theorem \ref{thm:RIP_Isotropic}, we generate
a new set of measurement matrices $\tilde{\boldsymbol{B}}_{i}$ through
the following procedure. 
\begin{enumerate}
\itemsep0.3em
\item Define a set of matrices $\boldsymbol{B}_{i}$ of rank at most 3 
\begin{equation}
\boldsymbol{B}_{i}:=\begin{cases}
\frac{1}{2}\left(\boldsymbol{A}_{2i-1}-\boldsymbol{A}_{2i}\right),\quad & \text{if }\mu_{4}=3,\\
\alpha\boldsymbol{A}_{3i}+\beta\boldsymbol{A}_{3i-1}+\gamma\boldsymbol{A}_{3i-2},\quad & \text{if }\mu_{4}<3,
\end{cases}
\end{equation}
where $\alpha,\beta,\gamma$ are specified in Lemma \ref{lemma:isotropy}. 
\item Generate $M$ (whose choice will be specified later) matrices independently
such that 
\begin{equation}
\hat{\boldsymbol{B}}_{i}=\begin{cases}
\sqrt{n}\mathcal{T}\left(\boldsymbol{B}_{i}\right),\quad & \text{with probability }\frac{1}{n},\\
\sqrt{\frac{n}{n-1}}\mathcal{T}^{\perp}\left(\boldsymbol{G}_{i}\right), & \text{with probability }\frac{n-1}{n},
\end{cases}\label{eq:IsotropicBhat}
\end{equation}
where $\boldsymbol{G}_{i}$ is a random matrix with i.i.d. standard
Gaussian entries. 
\item Define a \emph{truncated} version $\tilde{\boldsymbol{B}}_{i}$ of
$\hat{\boldsymbol{B}}_{i}$ as follows 
\begin{equation}
\tilde{\boldsymbol{B}}_{i}=\hat{\boldsymbol{B}}_{i}\cdot{1}_{\left\{ \left\Vert \hat{\boldsymbol{B}}_{i}\right\Vert \leq c_{10}\log^{3/2}n\right\} },\quad1\leq i\leq M.\label{eq:TruncatedBtilde}
\end{equation}

\end{enumerate}
We will demonstrate that the $\tilde{\boldsymbol{B}}_{i}$'s are nearly-isotropic
and well-bounded, and hence by Theorem \ref{thm:RIP_Isotropic} the
associated operator $\tilde{\mathcal{B}}$ enables exact and stable
recovery for all rank-$r$ matrices when $M$ exceeds $O(nr\mathrm{polylog}n)$.
This in turn establishes Theorem \ref{thm:ToeplitzPhaseLift} through
an equivalence argument, detailed later.

\subsubsection{Isotropy Trick}

While $\boldsymbol{A}_{i}$'s are in general non-isotropic, a linear
combination of them can be made isotropic when restricted to Toeplitz
matrices. This is stated in the following lemma.

\begin{lemma}\label{lemma:isotropy}Consider the sub-Gaussian sampling
model in \eqref{sampling}.

1) When $\mu_{4}=3$, then for any $\boldsymbol{X}$, the matrix 
\begin{equation}
\boldsymbol{B}_{i}=\frac{1}{2}\left(\boldsymbol{A}_{2i-1}-\boldsymbol{A}_{2i}\right)\label{eq:GaussianIsotropyConstruction}
\end{equation}
satisfies 
\begin{equation}
\mathbb{E}\left[\boldsymbol{B}_{i}\left\langle \boldsymbol{B}_{i},\boldsymbol{X}\right\rangle \right]=\text{ }\boldsymbol{X}.
\end{equation}

2) When $\mu_{4}<3$, take any constant $\xi>0$ obeying $\xi^{2}>1.5(3-\mu_{4})$
and set 
\begin{equation}
\boldsymbol{B}_{i}=\alpha\boldsymbol{A}_{3i}+\beta\boldsymbol{A}_{3i-1}+\gamma\boldsymbol{A}_{3i-2},\label{eq:IsotropyCombGeneral}
\end{equation}
with the choice of $\Delta:=-\left(1-\frac{\xi}{n}\right)^{2}-2+\frac{2\xi^{2}}{3-\mu_{4}}$,
\begin{equation}
\begin{cases}
\alpha=\sqrt{\frac{3-\mu_{4}}{2\xi^{2}}},\\
\beta:=\frac{-\left(1-\frac{\xi}{\sqrt{n}}\right)+\sqrt{\Delta}}{2}\alpha,\\
\gamma:=\frac{-\left(1-\frac{\xi}{\sqrt{n}}\right)-\sqrt{\Delta}}{2}\alpha.
\end{cases}\label{eq:alpha_beta_gamma}
\end{equation}
Then, for any norm $\left\Vert \cdot\right\Vert _{\mathrm{n}}$ and
any $\boldsymbol{X}$ that satisfies $\boldsymbol{X}_{11}=\boldsymbol{X}_{22}=\cdots=\boldsymbol{X}_{nn}$,
one has 
\begin{equation}
\begin{cases}
\mathbb{E}\left[\boldsymbol{B}_{i}\right]=\text{ }\sqrt{\frac{3-\mu_{4}}{2n}};\\
\mathbb{E}\left[\boldsymbol{B}_{i}\left\langle \boldsymbol{B}_{i},\boldsymbol{X}\right\rangle \right]=\text{ }\boldsymbol{X};\\
\left\Vert \boldsymbol{B}_{i}\right\Vert _{\mathrm{n}}\leq\sqrt{3}\max_{i:1\leq i\leq m}\left\Vert \boldsymbol{A}_{i}\right\Vert _{\mathrm{n}}.
\end{cases}\label{eq:IsotropyB-General}
\end{equation}
\end{lemma}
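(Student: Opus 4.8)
The plan is to reduce both parts of the lemma to a single fourth-moment identity for rank-one sensing matrices, and then to verify that the prescribed coefficients $\alpha,\beta,\gamma$ are engineered precisely so that two scalar constraints hold simultaneously. Throughout we may take $\boldsymbol{X}$ symmetric, since every $\boldsymbol{A}_{i}=\boldsymbol{a}_{i}\boldsymbol{a}_{i}^{\top}$ is symmetric and hence the map $\boldsymbol{X}\mapsto\mathbb{E}[\boldsymbol{B}_{i}\langle\boldsymbol{B}_{i},\boldsymbol{X}\rangle]$ only sees the symmetric part of $\boldsymbol{X}$ --- and this is the only case needed downstream, as covariance matrices, Toeplitz matrices, and their perturbations are symmetric. \textbf{Step 1 (master identity).} First I would establish that for $\boldsymbol{A}=\boldsymbol{a}\boldsymbol{a}^{\top}$ with $\boldsymbol{a}$ a generic copy of the sampling vector in \eqref{sampling} and any symmetric $\boldsymbol{X}$,
\[
\mathbb{E}\left[\boldsymbol{A}\right]=\boldsymbol{I},\qquad \mathbb{E}\left[\boldsymbol{A}\left\langle \boldsymbol{A},\boldsymbol{X}\right\rangle\right]=2\boldsymbol{X}+\left(\mu_{4}-3\right)\diag(\boldsymbol{X})+\trace(\boldsymbol{X})\boldsymbol{I}.
\]
This is a direct entrywise computation: the $(j,k)$ entry of the left-hand side is $\sum_{p,q}\mathbb{E}[a_{j}a_{k}a_{p}a_{q}]X_{pq}$, and by independence and $\mathbb{E}[a_{i}]=0$ the only surviving terms are those whose four indices split into matched pairs (or all coincide), contributing $\mu_{4}$ in the quadruple-coincidence case and $1$ otherwise. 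On the diagonal this gives $(\mu_{4}-1)X_{jj}+\trace(\boldsymbol{X})$; off the diagonal it gives $X_{jk}+X_{kj}=2X_{jk}$; collecting terms yields the claimed formula. No third-moment hypothesis is needed, since any term with exactly one unmatched index vanishes already; this identity also reproduces \eqref{eq:GaussianExpected}.

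\textbf{Step 2 (the case $\mu_{4}=3$).} Here Step 1 reads $\mathbb{E}[\boldsymbol{A}_{i}\langle\boldsymbol{A}_{i},\boldsymbol{X}\rangle]=2\boldsymbol{X}+\trace(\boldsymbol{X})\boldsymbol{I}$. Expanding $\mathbb{E}[\boldsymbol{B}_{i}\langle\boldsymbol{B}_{i},\boldsymbol{X}\rangle]$ for $\boldsymbol{B}_{i}=\tfrac12(\boldsymbol{A}_{2i-1}-\boldsymbol{A}_{2i})$ into four terms and invoking independence of $\boldsymbol{A}_{2i-1}$ and $\boldsymbol{A}_{2i}$, each of the two ``self'' terms contributes $\tfrac14(2\boldsymbol{X}+\trace(\boldsymbol{X})\boldsymbol{I})$ while each of the two ``cross'' terms contributes $-\tfrac14\mathbb{E}[\boldsymbol{A}_{2i-1}]\langle\mathbb{E}[\boldsymbol{A}_{2i}],\boldsymbol{X}\rangle=-\tfrac14\trace(\boldsymbol{X})\boldsymbol{I}$; the trace pieces cancel and one is left with $\boldsymbol{X}$. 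This also makes transparent why the ``debiasing'' subtraction removes the $\trace(\boldsymbol{X})\boldsymbol{I}$ bias.

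\textbf{Step 3 (the case $\mu_{4}<3$).} For $\boldsymbol{B}_{i}=\alpha\boldsymbol{A}_{3i}+\beta\boldsymbol{A}_{3i-1}+\gamma\boldsymbol{A}_{3i-2}$ with the three blocks independent, the same expansion yields $\mathbb{E}[\boldsymbol{B}_{i}]=(\alpha+\beta+\gamma)\boldsymbol{I}$ and, after using $(\alpha+\beta+\gamma)^{2}=(\alpha^{2}+\beta^{2}+\gamma^{2})+2(\alpha\beta+\alpha\gamma+\beta\gamma)$,
\[
\mathbb{E}\left[\boldsymbol{B}_{i}\left\langle \boldsymbol{B}_{i},\boldsymbol{X}\right\rangle\right]=2(\alpha^{2}+\beta^{2}+\gamma^{2})\boldsymbol{X}-(3-\mu_{4})(\alpha^{2}+\beta^{2}+\gamma^{2})\diag(\boldsymbol{X})+(\alpha+\beta+\gamma)^{2}\trace(\boldsymbol{X})\boldsymbol{I}.
\]
I would then check straight from \eqref{eq:alpha_beta_gamma} the two scalar identities $\alpha^{2}+\beta^{2}+\gamma^{2}=\tfrac12$ and $\alpha+\beta+\gamma=\sqrt{(3-\mu_{4})/(2n)}$: from $\beta+\gamma=-(1-\tfrac{\xi}{\sqrt n})\alpha$ we get $\alpha+\beta+\gamma=\tfrac{\xi}{\sqrt n}\alpha=\sqrt{(3-\mu_{4})/(2n)}$ since $\alpha\xi=\sqrt{(3-\mu_{4})/2}$, and $\beta^{2}+\gamma^{2}=(\beta+\gamma)^{2}-2\beta\gamma=\tfrac{\alpha^{2}}{2}\big[(1-\tfrac{\xi}{\sqrt n})^{2}+\Delta\big]$, whence $\alpha^{2}+\beta^{2}+\gamma^{2}=\tfrac{\alpha^{2}}{2}\big[2+(1-\tfrac{\xi}{\sqrt n})^{2}+\Delta\big]=\tfrac{\alpha^{2}}{2}\cdot\tfrac{2\xi^{2}}{3-\mu_{4}}=\tfrac12$ by the definitions of $\Delta$ and $\alpha$. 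Substituting these and the constant-diagonal hypothesis $\diag(\boldsymbol{X})=\tfrac{\trace(\boldsymbol{X})}{n}\boldsymbol{I}$ into the display, the last two terms cancel and $\mathbb{E}[\boldsymbol{B}_{i}\langle\boldsymbol{B}_{i},\boldsymbol{X}\rangle]=\boldsymbol{X}$. The norm bound then follows from the triangle inequality and Cauchy--Schwarz, $\|\boldsymbol{B}_{i}\|_{\mathrm n}\le(|\alpha|+|\beta|+|\gamma|)\max_{i}\|\boldsymbol{A}_{i}\|_{\mathrm n}\le\sqrt{3}\,(\alpha^{2}+\beta^{2}+\gamma^{2})^{1/2}\max_{i}\|\boldsymbol{A}_{i}\|_{\mathrm n}=\sqrt{3/2}\,\max_{i}\|\boldsymbol{A}_{i}\|_{\mathrm n}$; and $\beta,\gamma$ are real because $\Delta\ge0$, which is exactly what $\xi^{2}>1.5(3-\mu_{4})$ guarantees (together with $(1-\tfrac{\xi}{\sqrt n})^{2}\le1$).

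\textbf{Expected difficulty.} There is no conceptual obstacle --- everything is a moment computation --- so the only real care is bookkeeping: getting the diagonal-versus-off-diagonal split in Step 1 exactly right (this is where the $(\mu_{4}-3)\diag(\cdot)$ term, hence the necessity of the constant-diagonal hypothesis when $\mu_{4}<3$, originates), and then verifying that the opaque-looking coefficients in \eqref{eq:alpha_beta_gamma} are reverse-engineered precisely so that $\alpha^{2}+\beta^{2}+\gamma^{2}=\tfrac12$ and $(\alpha+\beta+\gamma)^{2}=(3-\mu_{4})/(2n)$ hold at the same time. I would present Step 1 as a standalone sub-lemma, since it is reused verbatim in both parts.
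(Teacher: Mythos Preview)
Your proposal is correct and follows the same moment-computation route as the paper's proof. The only differences are presentational: you state the master identity with the general $(\mu_4-3)\diag(\boldsymbol{X})$ term and specialize to constant diagonal only at the end (the paper's identity \eqref{eq:EAA_original} already has this substitution baked in), and you verify directly that the given $\alpha,\beta,\gamma$ satisfy the two scalar constraints $\alpha^2+\beta^2+\gamma^2=\tfrac12$ and $\alpha+\beta+\gamma=\sqrt{(3-\mu_4)/(2n)}$ rather than re-deriving them by solving a quadratic as the paper does --- incidentally yielding the slightly sharper norm constant $\sqrt{3/2}\le\sqrt{3}$.
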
 \begin{proof} See Appendix~\ref{proof:lemma:isotropy}.
\end{proof}

Lemma \ref{lemma:isotropy} asserts that a large class of measurement
matrices can be made isotropic when restricted to the class of matrices
with identical diagonal entries (e.g. Toeplitz matrices). This immediately
implies that the operator $\hat{\mathcal{B}}$ associated with $\hat{\boldsymbol{B}}_{i}$'s
(defined in (\ref{eq:IsotropicBhat})) is isotropic. Specifically,
for any symmetric $\boldsymbol{X}$, 
\begin{align*}
 & \mathbb{E}\left[\hat{\boldsymbol{B}}_{i}\left\langle \hat{\boldsymbol{B}}_{i},\boldsymbol{X}\right\rangle \right]\\
 & \quad=\mathbb{E}\left[\mathcal{T}\left(\boldsymbol{B}_{i}\right)\left\langle \boldsymbol{B}_{i},\mathcal{T}\left(\boldsymbol{X}\right)\right\rangle \right]+\mathbb{E}\left[\mathcal{T}^{\perp}\left(\boldsymbol{G}_{i}\right)\left\langle \boldsymbol{G}_{i},\mathcal{T}^{\perp}\left(\boldsymbol{X}\right)\right\rangle \right]\\
 & \quad=\mathcal{T}\left(\mathbb{E}\left[\boldsymbol{B}_{i}\left\langle \boldsymbol{B}_{i},\mathcal{T}\left(\boldsymbol{X}\right)\right\rangle \right]\right)+\mathcal{T}^{\perp}\left(\mathbb{E}\left[\boldsymbol{G}_{i}\left\langle \boldsymbol{G}_{i},\mathcal{T}^{\perp}\left(\boldsymbol{X}\right)\right\rangle \right]\right)\\
 & \quad=\mathcal{T}\left(\mathcal{T}\left(\boldsymbol{X}\right)\right)+\mathcal{T}^{\perp}\left(\mathcal{T}^{\perp}\left(\boldsymbol{X}\right)\right)=\boldsymbol{X},
\end{align*}
which follows since $\boldsymbol{B}_{i}$ and $\boldsymbol{G}_{i}$
are both isotropic matrices, a consequence of Lemma \ref{lemma:isotropy}.

\subsubsection{Truncation of $\hat{\mathcal{B}}$ is near-isotropic}

The operators associated with $\hat{\boldsymbol{B}}_{i}$'s are in
general not well-bounded. Fortunately, $\hat{\boldsymbol{B}}_{i}$'s
are well-bounded with high probability, which comes from the following
lemma whose proof can be found in Appendix~\ref{sec:Proof-of-Lemma:UB-T(zz)}.

\begin{comment}
Throughout the rest of the paper, we use $\boldsymbol{B}_{i}$ to
denote the isotropic operators in the form of (\ref{eq:IsotropyCombGeneral})
if $\mu_{4}<3$ and in the form of (\ref{eq:GaussianIsotropyConstruction})
if $\mu_{4}=3$. For notational simplicity, we introduce sampling
operators $ $$\mathcal{B}_{i}$'s such that 
\begin{equation}
\mathcal{B}_{i}\left(\boldsymbol{X}\right)=\left\langle \boldsymbol{B}_{i},\boldsymbol{X}\right\rangle .\label{eq:SamplingMatrixOperator}
\end{equation}
Set $\tilde{m}=\frac{m}{3}$ if $\mu_{4}<3$ and $\tilde{m}=\frac{m}{2}$
if $\mu_{4}=3$. Let $\mathcal{B}:\mathbb{C}^{n\times n}\mapsto\mathbb{C}^{\tilde{m}}$
be the linear operator that maps a Hermitian matrix $\boldsymbol{X}$
to $\left\{ \mathcal{B}_{i}\left(\boldsymbol{X}\right)\right\} _{1\leq i\leq\tilde{m}}$,
and let $\mathcal{B}:\mathbb{C}^{\tilde{m}}\mapsto\mathbb{C}^{n\times n}$
denote its conjugate operator. Specifically, 
\begin{equation}
\mathcal{B}\left(\boldsymbol{X}\right):=\left(\mathcal{B}_{1}\left(\boldsymbol{X}\right),\cdots,\mathcal{B}_{\tilde{m}}\left(\boldsymbol{X}\right)\right),\quad\text{and}\quad\mathcal{B}^{*}\left(\boldsymbol{\lambda}\right)=\sum_{i=1}^{\tilde{m}}\lambda_{i}\boldsymbol{B}_{i}.\label{eq:IsotropicB}
\end{equation}
In the rest of the paper, we will abuse notation and use $\tilde{m}$
and $m$ interchangeably. 
\end{comment}

\begin{lemma}\label{lemma:UB-T(zz)}Consider a random vector $\boldsymbol{z}$
that follows the sub-Gaussian sampling model as described in \eqref{sampling}.
There exists an absolute constant $c_{10}>0$ such that 
\begin{equation}
\left\Vert \mathcal{T}\left(\boldsymbol{z}\boldsymbol{z}^{\top}\right)\right\Vert \leq c_{12}\log^{\frac{3}{2}}n\label{eq:ToeplitzNorm}
\end{equation}
holds with probability exceeding $1-n^{-10}$. \end{lemma}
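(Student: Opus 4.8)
The plan is to bound the operator norm of the Toeplitz projection by a supremum, over the torus, of a scalar quadratic form in $\bz$, and then control that supremum by the Hanson--Wright inequality together with a covering argument. Concretely, $\mathcal{T}(\bz\bz^{\top})$ is the $n\times n$ Toeplitz matrix whose entry on the $k$th diagonal is $t_{k}=\frac{1}{n-|k|}\sum_{i-j=k}z_{i}z_{j}$ for $-(n-1)\le k\le n-1$ (with $t_{-k}=t_{k}$, since $\bz\bz^{\top}$ is symmetric). First I would invoke the elementary fact that any $n\times n$ Toeplitz matrix with diagonal values $\{t_{k}\}$ has operator norm at most $\sup_{\theta}|g(\theta)|$, where $g(\theta):=\sum_{k=-(n-1)}^{n-1}t_{k}e^{\mathrm{i}k\theta}$ is its symbol; this follows either from viewing $\mathcal{T}(\bz\bz^{\top})$ as a finite section of the Laurent operator with symbol $g$, or directly from a one-line Parseval estimate. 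Using $\sum_{i-j=k}z_{i}z_{j}=(n-|k|)\,t_{k}$, a short computation identifies $g(\theta)=\bz^{\top}\boldsymbol{Q}_{\theta}\bz$ as a (real) quadratic form, where $\boldsymbol{Q}_{\theta}\in\mathbb{R}^{n\times n}$ has entries $(\boldsymbol{Q}_{\theta})_{ij}=\cos\big((i-j)\theta\big)/(n-|i-j|)$.

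Next I would record the two quantities needed to apply Hanson--Wright. Since $\mathbb{E}[z_{i}]=0$ and $\mathbb{E}[z_{i}^{2}]=1$, we have $\mathbb{E}[g(\theta)]=\trace(\boldsymbol{Q}_{\theta})=1$. Grouping the Frobenius sum by diagonals,
\[
\left\Vert \boldsymbol{Q}_{\theta}\right\Vert _{\mathrm{F}}^{2}=\sum_{k=-(n-1)}^{n-1}\frac{\cos^{2}(k\theta)}{n-|k|}\le\sum_{k=-(n-1)}^{n-1}\frac{1}{n-|k|}\le 1+2\log n ,
\]
so also $\left\Vert \boldsymbol{Q}_{\theta}\right\Vert \le\left\Vert \boldsymbol{Q}_{\theta}\right\Vert _{\mathrm{F}}\le\sqrt{1+2\log n}$. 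Because the $z_{i}$ are sub-Gaussian, the Hanson--Wright inequality then gives an absolute constant $c>0$ with $\mathbb{P}\big(|g(\theta)-1|\ge t\big)\le2\exp\!\big(-c\min\{t^{2}/(1+2\log n),\,t/\sqrt{1+2\log n}\}\big)$ for every fixed $\theta$. Choosing $t=C_{0}\log^{3/2}n$, the minimum is realized (for $n$ large) by the second term, so the tail is at most $2n^{-cC_{0}}$.

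Finally I would make this uniform in $\theta$ by a net argument. On the event $\{\left\Vert \bz\right\Vert _{2}^{2}\le2n\}$, which fails with probability at most $e^{-c'n}$, the crude bound $|t_{k}|\le\left\Vert \bz\right\Vert _{2}^{2}/(n-|k|)$ shows that $g$ is Lipschitz on $[0,2\pi)$ with constant $\sum_{k}|k|\,|t_{k}|=O(n^{2}\log n)$; discretizing the torus into $N=\lceil n^{4}\rceil$ equispaced points transfers $\sup_{\theta}|g(\theta)|$ to the net up to an additive error $o(1)$. A union bound over the net, together with the pointwise tail above with $C_{0}$ chosen so that $cC_{0}\ge15$, then yields $\left\Vert \mathcal{T}(\bz\bz^{\top})\right\Vert \le\sup_{\theta}|g(\theta)|\le 1+C_{0}\log^{3/2}n+o(1)\le c_{12}\log^{3/2}n$ with probability at least $1-n^{-10}$, for a suitable absolute constant $c_{12}$ and all $n$ above a universal threshold (smaller $n$ being absorbed into $c_{12}$).

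The only genuinely nonobvious step is the first one: recognizing that the orthogonal Toeplitz projection is governed by its symbol, which is precisely what converts the spectral norm into $\sup_{\theta}|\bz^{\top}\boldsymbol{Q}_{\theta}\bz|$; the crude triangle-inequality bound $\left\Vert \mathcal{T}(\bz\bz^{\top})\right\Vert \le\sum_{k}|t_{k}|$ only gives $O(\sqrt{n})$ and is useless here. It is also worth flagging where the exponent $3/2$ originates: the normalization $1/(n-|k|)$ that makes $\mathcal{T}$ an orthogonal projection keeps $\left\Vert \boldsymbol{Q}_{\theta}\right\Vert _{\mathrm{F}}$ at $\Theta(\sqrt{\log n})$ (a harmonic-series sum), while beating a $\mathrm{poly}(n)$-size net forces the sub-exponential branch of Hanson--Wright at deviation level $\Theta(\left\Vert \boldsymbol{Q}_{\theta}\right\Vert \cdot\log n)=\Theta(\log^{3/2}n)$.
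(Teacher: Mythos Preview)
Your proof is correct and follows essentially the same strategy as the paper: bound $\|\mathcal{T}(\bz\bz^{\top})\|$ by a supremum of quadratic forms $\bz^{\top}\boldsymbol{Q}_{\theta}\bz$ whose coefficient matrices satisfy $\|\boldsymbol{Q}_{\theta}\|_{\mathrm{F}}^{2}=O(\log n)$ (the harmonic-series sum from the $1/(n-|k|)$ weights), then apply Hanson--Wright pointwise and union bound.

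The one technical difference is how the supremum over frequencies is discretized. You work with the continuous symbol $g(\theta)$ and cover the torus with an $\epsilon$-net of size $O(n^{4})$, which forces you to control a Lipschitz constant on a high-probability event. The paper instead embeds $\mathcal{T}(\bz\bz^{\top})$ into a $(2n-1)\times(2n-1)$ circulant matrix, whose eigenvalues are exactly $g(2\pi i/(2n-1))$ for $i=0,\dots,2n-2$; this reduces the union bound to $2n-1$ points with no Lipschitz argument needed. Your route is slightly more work but self-contained; the paper's circulant trick is cleaner and explains why a discrete union bound of size $O(n)$ already suffices.
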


As $\left\Vert \boldsymbol{B}_{i}\right\Vert $ can be bounded above
by $\max_{1\leq i\leq m}\left\Vert \boldsymbol{A}_{i}\right\Vert $
up to some constant factor, Lemma \ref{lemma:UB-T(zz)} reveals that $\left\Vert \mathcal{T}\left(\boldsymbol{B}_{i}\right)\right\Vert $ can be well controlled
for sub-Gaussian vectors, i.e. 
\begin{equation}
\left\Vert \mathcal{T}\left(\boldsymbol{B}_{i}\right)\right\Vert \leq c_{10}\log^{\frac{3}{2}}n,\quad1\leq i\leq m\label{eq:BoundTB}
\end{equation}
with probability exceeding $1-3n^{-8}$. Similarly, classical results
in random matrices (e.g. \cite{Tao2012RMT}) assert that $\left\Vert \boldsymbol{G}_{i}\right\Vert $
can also be bounded above by $O\left(\sqrt{n}\log n\right)$ with
overwhelming probability. These bounds taken collectively suggest
that 
\begin{equation}
\Vert\hat{\boldsymbol{B}}_{i}\Vert\leq K:=c_{10}\sqrt{n}\log^{\frac{3}{2}}n,\quad1\leq i\leq m
\end{equation}
for some constant $c_{10}>0$ with probability exceeding $1-n^{-7}$.

The above stochastic boundedness property motivates us to study
the truncated version $\tilde{\boldsymbol{B}}_{i}$ of $\hat{\boldsymbol{B}}_{i}$
as defined in (\ref{eq:TruncatedBtilde}). Interestingly, $\tilde{\boldsymbol{B}}_{i}$
is near-isotropic, a consequence of the following lemma whose proof
can be found in Appendix~\ref{sec:Proof-of-Lemma-Deviation-BB}.

\begin{lemma}\label{lemma-Deviation-BB}Suppose that the restriction
of $\mathcal{B}_{i}$ to Toeplitz matrices is isotropic. Consider
any event $E$ obeying $\mathbb{P}\left(E\right)\geq1-\frac{1}{n^{5}}.$
Then there is some constant $c_{5}>0$ such that 
\begin{equation}
\left\Vert \mathbb{E}\left[\mathcal{T}\mathcal{B}_{i}^{*}\mathcal{B}_{i}\mathcal{T}{\bf 1}_{E}\right]-\mathcal{T}\right\Vert \leq\frac{c_{5}}{n^{2}}.\label{eq:StochasticIsotropy}
\end{equation}
\end{lemma}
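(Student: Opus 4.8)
The plan is to reduce the estimate to a tail bound on $\|\mathcal{T}(\boldsymbol{B}_i)\|_{\mathrm{F}}^{2}$. First I would unpack the operator $\mathcal{T}\mathcal{B}_i^{*}\mathcal{B}_i\mathcal{T}$: writing $\mathcal{B}_i(\boldsymbol{X})=\langle\boldsymbol{B}_i,\boldsymbol{X}\rangle$ so that $\mathcal{B}_i^{*}(\lambda)=\lambda\boldsymbol{B}_i$, and using that $\mathcal{T}$ is a self-adjoint idempotent, one sees that $\mathcal{T}\mathcal{B}_i^{*}\mathcal{B}_i\mathcal{T}$ acts on symmetric matrices by the rank-one rule $\boldsymbol{X}\mapsto\langle\mathcal{T}(\boldsymbol{B}_i),\boldsymbol{X}\rangle\,\mathcal{T}(\boldsymbol{B}_i)$, whose operator norm is exactly $\|\mathcal{T}(\boldsymbol{B}_i)\|_{\mathrm{F}}^{2}$. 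Next I would take expectations: since $\mathcal{T}(\boldsymbol{X})$ is Toeplitz and hence has identical diagonal entries, the isotropy hypothesis on $\mathcal{B}_i$ (Lemma~\ref{lemma:isotropy}) gives $\mathbb{E}[\langle\boldsymbol{B}_i,\mathcal{T}(\boldsymbol{X})\rangle\,\boldsymbol{B}_i]=\mathcal{T}(\boldsymbol{X})$, and applying $\mathcal{T}$ one more time yields the exact identity $\mathbb{E}[\mathcal{T}\mathcal{B}_i^{*}\mathcal{B}_i\mathcal{T}]=\mathcal{T}$. Consequently
\[
\mathcal{T}-\mathbb{E}\big[\mathcal{T}\mathcal{B}_i^{*}\mathcal{B}_i\mathcal{T}\,\mathbf{1}_{E}\big]=\mathbb{E}\big[\mathcal{T}\mathcal{B}_i^{*}\mathcal{B}_i\mathcal{T}\,\mathbf{1}_{E^{c}}\big],
\]
and it suffices to establish $\mathbb{E}\big[\|\mathcal{T}(\boldsymbol{B}_i)\|_{\mathrm{F}}^{2}\mathbf{1}_{E^{c}}\big]\le c_{5}/n^{2}$.

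To prove this residual bound I would split $E^{c}$ according to the well-boundedness event $F:=\{\|\mathcal{T}(\boldsymbol{B}_i)\|\le c_{10}\log^{3/2}n\}$, which by \eqref{eq:BoundTB} holds with probability at least $1-3n^{-8}$. On $F$ one has $\|\mathcal{T}(\boldsymbol{B}_i)\|_{\mathrm{F}}^{2}\le n\|\mathcal{T}(\boldsymbol{B}_i)\|^{2}\le c_{10}^{2}n\log^{3}n$, so
\[
\mathbb{E}\big[\|\mathcal{T}(\boldsymbol{B}_i)\|_{\mathrm{F}}^{2}\mathbf{1}_{E^{c}}\big]\le c_{10}^{2}\,n\log^{3}n\cdot\mathbb{P}(E^{c})+\mathbb{E}\big[\|\mathcal{T}(\boldsymbol{B}_i)\|_{\mathrm{F}}^{2}\mathbf{1}_{F^{c}}\big].
\]
The first term is at most $c_{10}^{2}\log^{3}n/n^{4}$ since $\mathbb{P}(E^{c})\le n^{-5}$. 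For the second term, Cauchy--Schwarz gives $\mathbb{E}[\|\mathcal{T}(\boldsymbol{B}_i)\|_{\mathrm{F}}^{2}\mathbf{1}_{F^{c}}]\le(\mathbb{E}\|\mathcal{T}(\boldsymbol{B}_i)\|_{\mathrm{F}}^{4})^{1/2}\,\mathbb{P}(F^{c})^{1/2}$; since $\|\mathcal{T}(\boldsymbol{B}_i)\|_{\mathrm{F}}\le\|\boldsymbol{B}_i\|_{\mathrm{F}}\le C\max_{j}\|\boldsymbol{a}_{j}\|_2^{2}$ over the two or three vectors appearing in $\boldsymbol{B}_i$, and the sub-Gaussian model \eqref{sampling} gives $\mathbb{E}\|\boldsymbol{a}_{j}\|_2^{8}=O(n^{4})$, we obtain $\mathbb{E}\|\mathcal{T}(\boldsymbol{B}_i)\|_{\mathrm{F}}^{4}=O(n^{4})$, so this term is $O(n^{2})\cdot O(n^{-4})=O(n^{-2})$. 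Adding the two contributions yields a bound of the form $c_{5}/n^{2}$ for a suitable absolute constant $c_{5}$ (finitely many small $n$ being absorbed into $c_{5}$), which is \eqref{eq:StochasticIsotropy}.

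The hard part, to the extent there is one, is the exact identity $\mathbb{E}[\mathcal{T}\mathcal{B}_i^{*}\mathcal{B}_i\mathcal{T}]=\mathcal{T}$: I must be careful to invoke isotropy of $\mathcal{B}_i$ for the Toeplitz matrix $\mathcal{T}(\boldsymbol{X})$ --- the only class on which Lemma~\ref{lemma:isotropy} guarantees it --- rather than for $\boldsymbol{X}$ itself, which is precisely why the Toeplitz projections sandwich $\mathcal{B}_i^{*}\mathcal{B}_i$ in \eqref{eq:StochasticIsotropy}. Everything else is a routine truncation argument with a comfortable margin: even the crudest fourth-moment bound $\mathbb{E}\|\mathcal{T}(\boldsymbol{B}_i)\|_{\mathrm{F}}^{4}=O(n^{4})$ suffices, thanks to the polynomial decay of $\mathbb{P}(E^{c})$ supplied by the hypothesis and of $\mathbb{P}(F^{c})$ supplied by Lemma~\ref{lemma:UB-T(zz)} via \eqref{eq:BoundTB}.
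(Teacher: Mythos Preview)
Your proposal is correct and follows essentially the same truncation strategy as the paper: use isotropy to reduce to bounding $\mathbb{E}[\|\mathcal{T}(\boldsymbol{B}_i)\|_{\mathrm{F}}^{2}\mathbf{1}_{E^{c}}]$, then split according to a high-probability boundedness event. The paper truncates on $\{\|\boldsymbol{B}_i\|_{\mathrm{F}}\le 20n\log n\}$ and handles the tail by integrating a sub-exponential bound on $\|\boldsymbol{A}_i\|_{\mathrm{F}}$, whereas you truncate on the spectral-norm event from Lemma~\ref{lemma:UB-T(zz)} and dispatch the tail with Cauchy--Schwarz and the crude fourth-moment bound $\mathbb{E}\|\boldsymbol{B}_i\|_{\mathrm{F}}^{4}=O(n^{4})$; your route is slightly more direct (you avoid the paper's preliminary PSD-monotonicity step passing from $E$ to $E\cap F_i$) and otherwise equivalent.
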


The truncated version of $\boldsymbol{G}_{i}$ can be easily bounded
as in \cite{CandesPlan2011RIPless}, which we omit for simplicity
of presentation. This combined with (\ref{eq:StochasticIsotropy})
indicates that 
\begin{align*}
 & \left\Vert \mathbb{E}\left[\tilde{\mathcal{B}}_{i}^{*}\tilde{\mathcal{B}}_{i}\right]-\mathcal{I}\right\Vert \\
 & \quad\leq\left\Vert \mathbb{E}\left[\mathcal{T}\mathcal{B}_{i}^{*}\mathcal{B}_{i}\mathcal{T}\right]-\mathcal{\mathcal{T}}\right\Vert +\left\Vert \mathbb{E}\left[\mathcal{T}^{\perp}\mathcal{G}_{i}^{*}\mathcal{G}_{i}\mathcal{T}^{\perp}\right]-\mathcal{\mathcal{T}}^{\perp}\right\Vert \\
 & \quad\leq\frac{c_{5}}{n}.
\end{align*}

\subsection{Proof of Theorem \ref{thm:ToeplitzPhaseLift}}

So far we have demonstrated that $\tilde{\boldsymbol{B}}_{i}$'s are
near-isotropic and satisfy $\Vert \tilde{\boldsymbol{B}}_{i}\Vert =O\left(\sqrt{n}\log^{\frac{3}{2}}n\right)$.
Suppose that $\Vert \boldsymbol{y}-\tilde{\mathcal{B}}\left(\boldsymbol{\Sigma}\right)\Vert _{2}\leq\tilde{\epsilon}_{2}$.
Theorem \ref{thm:RIP_Isotropic} implies that if $M$ exceeds $\Theta\left(nr\log^{10}(n)\right)$,
then the solution to 
\begin{align}
\tilde{\boldsymbol{\Sigma}}:=\argmin_{\boldsymbol{M}}\left\Vert \boldsymbol{M}\right\Vert _{*}\quad\text{subject to } & \Vert \boldsymbol{y}-\tilde{\mathcal{B}}\left(\boldsymbol{M}\right)\Vert _{2}\leq\tilde{\epsilon}_{2},\nonumber \\
 & \boldsymbol{M}\text{ is Toeplitz},\label{eq:Equivalent}
\end{align}
satisfies 
\begin{equation}
\left\Vert \boldsymbol{\Sigma}-\tilde{\boldsymbol{\Sigma}}\right\Vert _{\text{F}}\leq C_{2}\frac{\tilde{\epsilon}_{2}}{\sqrt{M}}\label{eq:ApproxLR-2-1}
\end{equation}
for the entire set of rank-$r$ matrices $\boldsymbol{\Sigma}$. Apparently,
such low-rank manifolds subsume all rank-$r$ Toeplitz matrices as
special cases. This claim in turn establishes Theorem \ref{thm:ToeplitzPhaseLift}
through the following argument: 
\begin{enumerate}
\itemsep0.3em
\item From (\ref{eq:IsotropicBhat}) and the Chernoff bound, $\tilde{\mathcal{B}}$
entails $\Theta\left(\frac{M}{n}\right)=\Theta\left(r\log^{10}n\right)$
independent copies of $\sqrt{n}\mathcal{T}\left(\boldsymbol{B}_{i}\right)$,
and all other measurements are on the orthogonal complement of the
Toeplitz space. 
\item For any rank-$r$ Toeplitz matrix $\boldsymbol{\Sigma}$, the original
$\mathcal{A}$ entails $m/3>\Theta\left(r\log^{10}n\right)$ measurement
matrices of the form $\mathcal{T}\left(\boldsymbol{B}_{i}\right)$,
and any non-Toeplitz component of $\boldsymbol{X}$ is perfectly known
(i.e. equal to 0). This indicates that the convex program (\ref{eq:ToepMC})
is tighter than (\ref{eq:Equivalent}) when $\tilde{\epsilon}_{2}=\Theta\left(\sqrt{n}\epsilon_{2}\right)$,
i.e. one can construct (via coupling) a new probability space over
which if the solution $\tilde{\boldsymbol{\Sigma}}$ to (\ref{eq:Equivalent})
is exact and unique, then it will be the unique solution to (\ref{eq:ToepMC})
as well. This combined with the universal bound (\ref{eq:ApproxLR-2-1})
establishes Theorem \ref{thm:ToeplitzPhaseLift}. 
\end{enumerate}

\section{Numerical Examples}

\label{sec:numerical}

To demonstrate the practical applicability of the proposed convex
relaxation under quadratic sensing, in this section we present a variety
of numerical examples for low-rank or sparse covariance matrix estimation.

\subsection{Recovery of Low-Rank Covariance Matrices}

We conduct a series of Monte Carlo trials for various parameters.
Specifically, we choose $n=50$, and for each $(m,r)$ pair, we repeat
the following experiments 20 times. We generate $\bSigma$, an $n\times n$
PSD matrix via $\boldsymbol{\Sigma}=\boldsymbol{L}\boldsymbol{L}^{\top}$,
where $\boldsymbol{L}$ is a randomly generated $n\times r$ matrix
with independent Gaussian components. The sensing vectors are generated
as i.i.d. Gaussian vectors and Bernoulli vectors, and we obtain noiseless
quadratic measurements $\boldsymbol{y}$. We use the off-the-shelf
SDP solver SDPT3 with the modeling software CVX, and declare a matrix
$\bSigma$ to be recovered if the solution $\hat{\bSigma}$ returned
by the solver satisfies $\|\hat{\boldsymbol{\Sigma}}-\boldsymbol{\Sigma}\|_{\text{F}}/\|\boldsymbol{\Sigma}\|_{\text{F}}<10^{-3}$.
Figure \ref{fig:PTlowrank} illustrates the empirical probability
of successful recovery in these Monte Carlo trials, which is reflected
through the color of each cell. In order to compare the optimality
of the practical performance, we also plot the information theoretic
limit in red lines, i.e. the fundamental lower limit on $m$ required
to recover all rank-$r$ matrices, which is $nr-r(r-1)/2$ in our
case. It turns out that the practical phase transition curve is very
close to the theoretic sampling limit, which demonstrates the optimality
of our algorithm. 
\begin{figure*}[htp]
\centering %
\begin{tabular}{cc}
\includegraphics[width=0.37\textwidth]{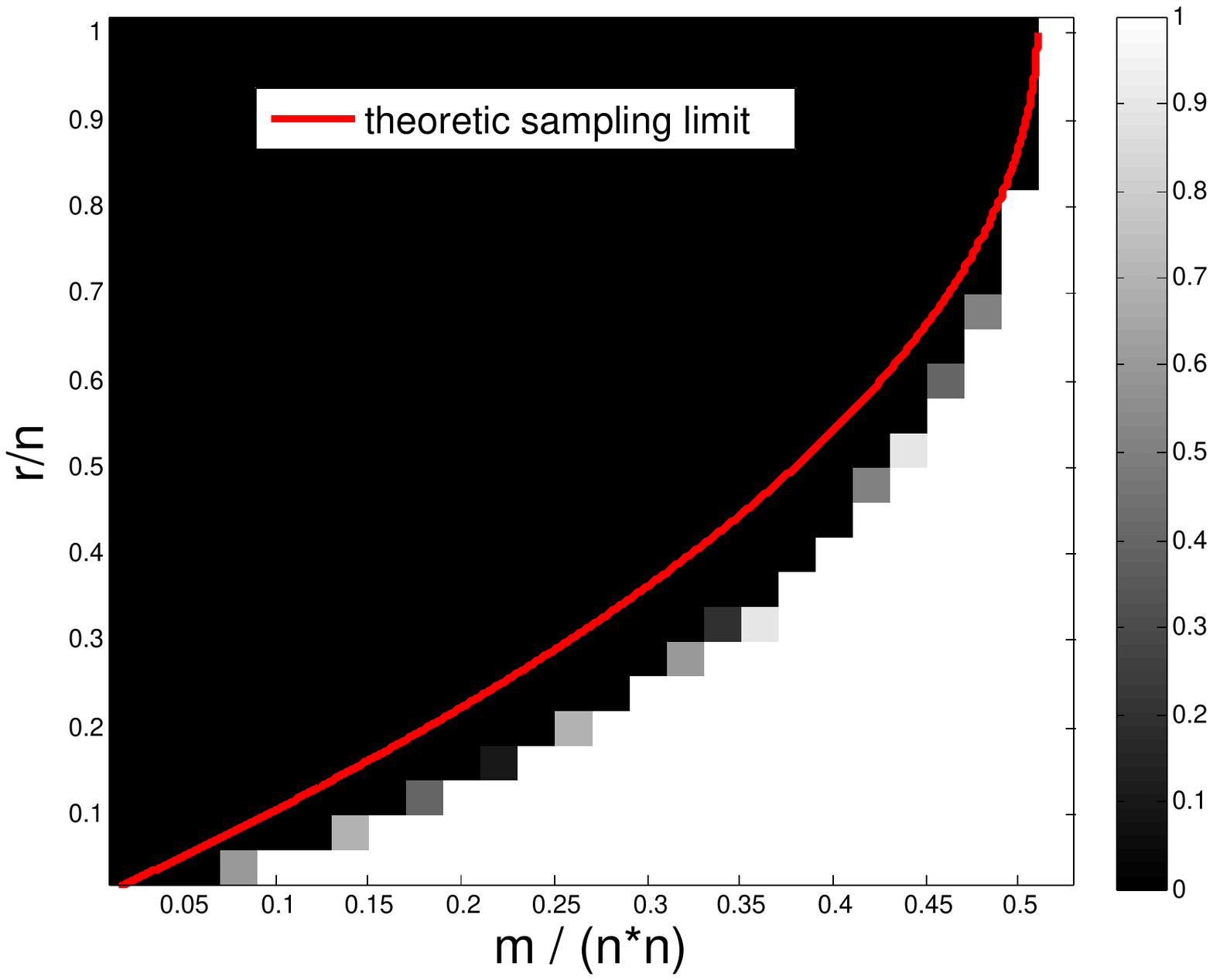}  & \includegraphics[width=0.38\textwidth]{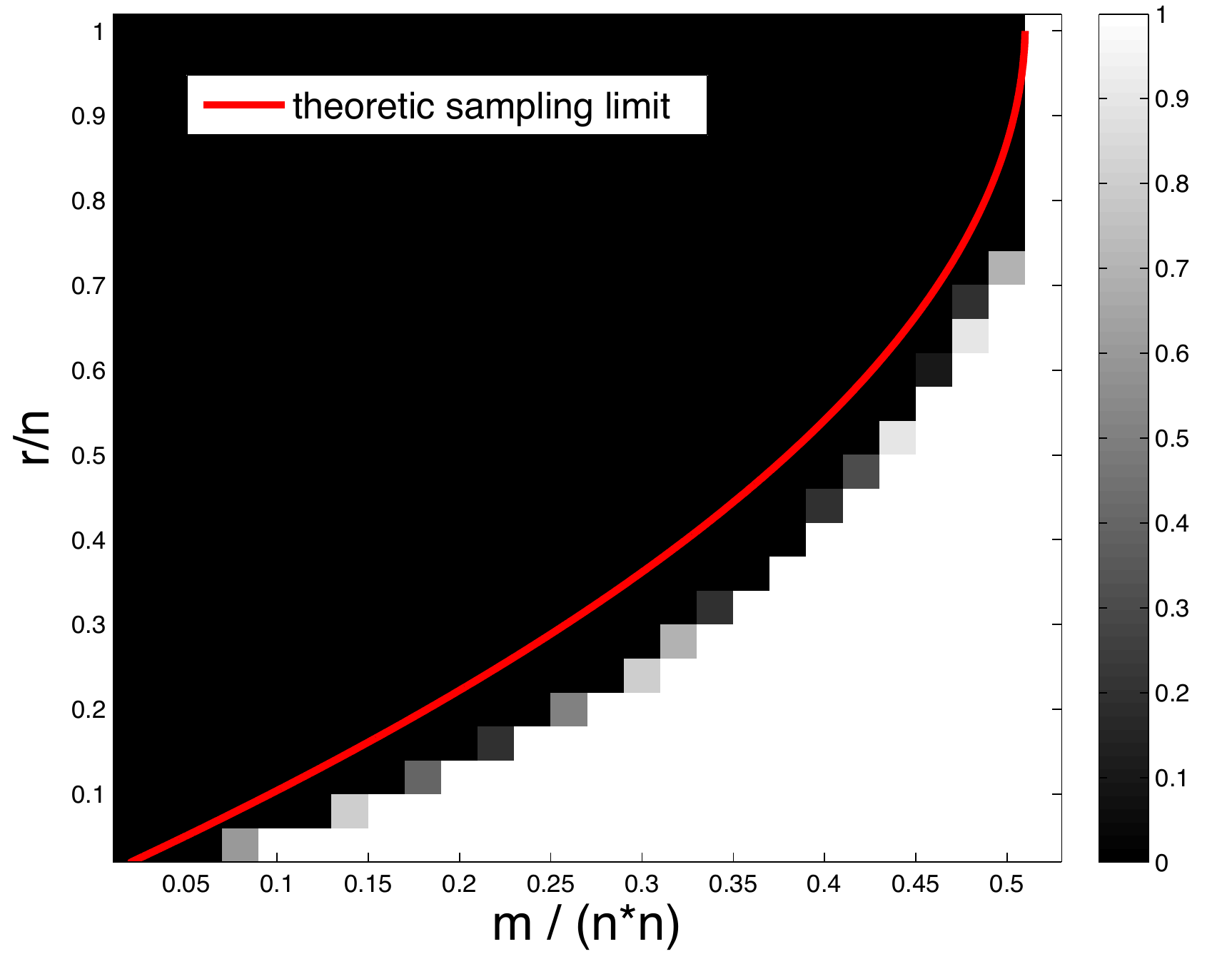}\tabularnewline
(a)  & (b)\tabularnewline
\end{tabular}\caption{\label{fig:PTlowrank}Recovery of covariance matrices from quadratic
measurements when $n=50$. For each ($m,r$) pair, we repeated Monte
Carlo trials 20 times. A PSD matrix $\bSigma$ and $m$ sensing vectors
are selected at random. The colormap for each cell indicates the empirical
probability of success, and the red line reflects the fundamental
information theoretic limit. The results are shown for (a) Gaussian
sensing vectors and (b) symmetric Bernoulli sensing vectors.}
\end{figure*}

In the second numerical example, we consider a random covariance matrix
generated via the same procedure as above but with $n=40$. We let
the rank $r$ vary as $1,3,5,10$ and the number of measurements $m$
vary from $20$ to $600$. For each pair of $(r,m)$, we perform $10$
independent experiments where in each run the sensing matrix is generated
with i.i.d. Gaussian entries. Fig.~\ref{sketch_tracemin} (a) shows
the average Normalized Mean Squared Error (NMSE) defined as $\|\hat{\boldsymbol{\Sigma}}-\boldsymbol{\Sigma}\|_{\text{F}}^{2}/\|\boldsymbol{\Sigma}\|_{\text{F}}^{2}$
with respect to $m$ for different ranks when there is no noise. We
further introduce additive bounded noise to each measurement by letting
$\lambda_{i}$ be generated from $\sigma\cdot\mathcal{U}[-1,1]$,
where $\mathcal{U}[-1,1]$ is a uniform distribution on $[-1,1]$,
$\sigma$ is the noise level. Fig.~\ref{sketch_tracemin} (b) shows
the average NMSE when $r=5$ for different noise levels by setting
$\epsilon=\sigma m$ in \eqref{tracemin}. 
\begin{figure*}[htb]
\centering %
\begin{tabular}{cc}
\includegraphics[width=0.37\textwidth]{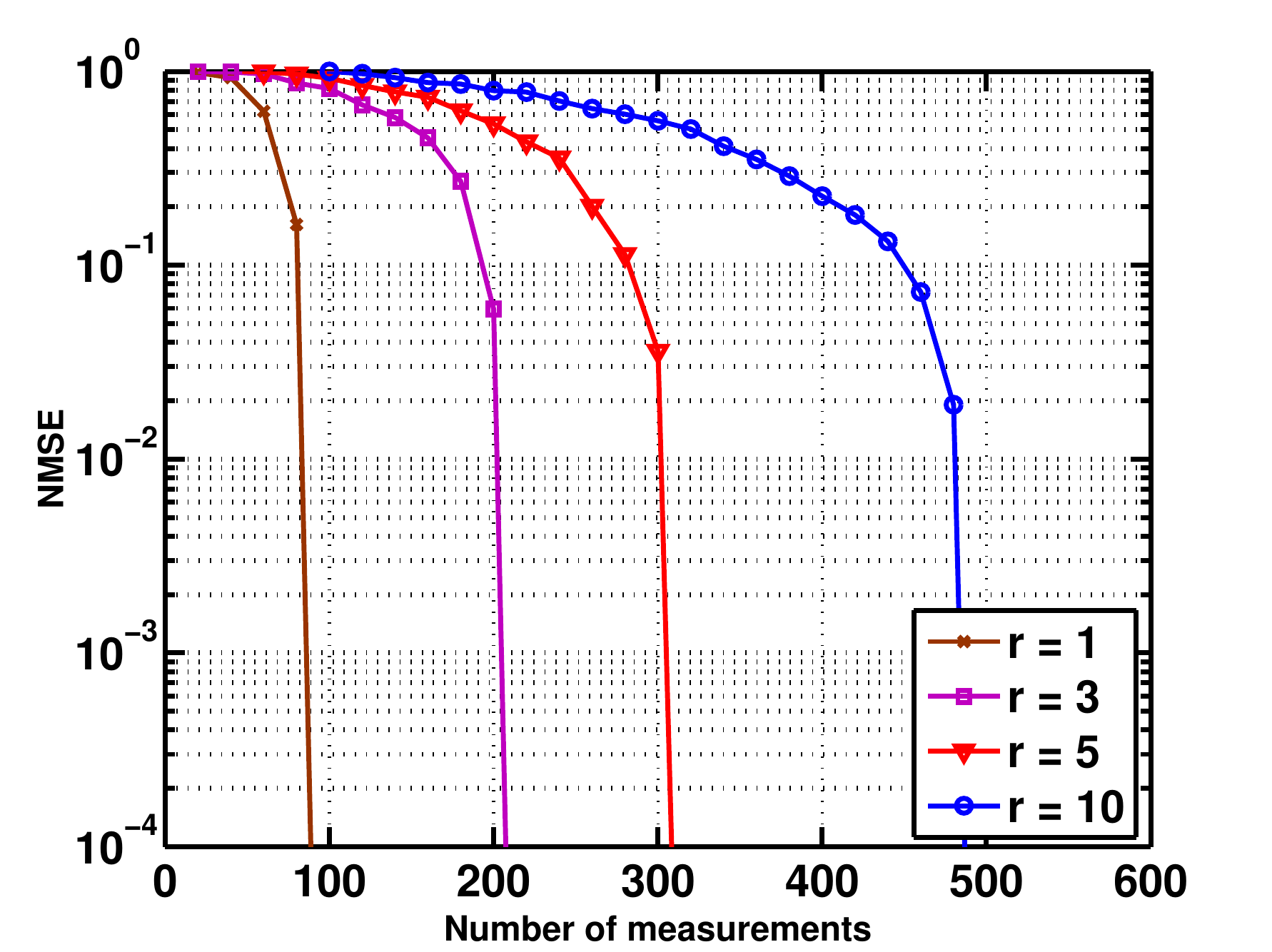}  & \includegraphics[width=0.37\textwidth]{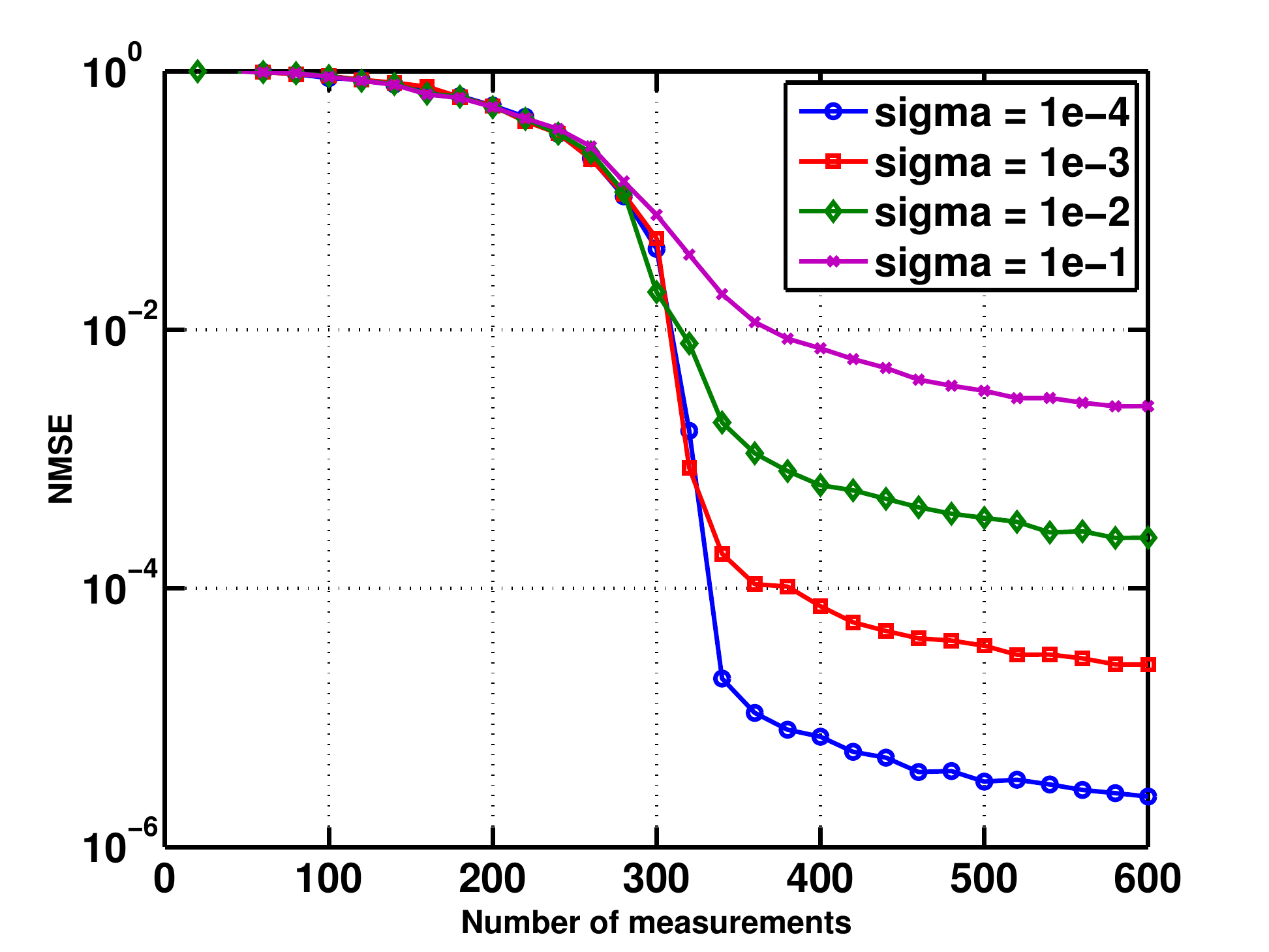} \tabularnewline
(a)  & (b) \tabularnewline
\end{tabular}\caption{The NMSE of the reconstructed covariance matrix via trace minimization
vs. the number of measurements when $n=40$: (a) for different ranks
when no noise is present; (b) for different noise levels when $r=5$.}

\label{sketch_tracemin} 
\end{figure*}

Interestingly, \cite{candes2012solving,demanet2012stable} showed
that when the covariance matrix is rank-one, if $m=O(n)$, the intersection
of two convex sets, namely $\mathcal{S}_{1}=\{\boldsymbol{M}:\mathcal{A}(\boldsymbol{M})=\boldsymbol{y}\}$
and $\mathcal{S}_{2}=\{\boldsymbol{M}:\boldsymbol{M}\succ0\}$ is
a singleton, with high probability. For the low-rank case, if the
same conclusion holds, we can find the solution via alternating projection
between two convex sets. Therefore, we experiment on the following
Projection Onto Convex Sets (POCS) procedure: 
\begin{equation}
\boldsymbol{\Sigma}_{t+1}=\mathcal{P}_{\mathcal{S}_{2}}\mathcal{P}_{\mathcal{S}_{1}}\boldsymbol{\Sigma}_{t},
\end{equation}
where $\mathcal{P}_{\mathcal{S}_{2}}$ denotes the projection onto
the PSD cone, and 
\begin{equation}
\mathcal{P}_{\mathcal{S}_{1}}\boldsymbol{\Sigma}_{t}:=\boldsymbol{\Sigma}_{t}-\mathcal{A}^{*}(\mathcal{A}\mathcal{A}^{*})^{-1}(\mathcal{A}(\boldsymbol{\Sigma_{t}})-\boldsymbol{y}).
\end{equation}

Fig.~\ref{sketch_pocs_eg} (a) shows the NMSE of the reconstruction
with respect to the number of iterations for $r=3$ and different
$m=200,250,300,350$. By comparing Fig.~\ref{sketch_tracemin}, we
see that it requires more measurements for the POCS procedure to succeed,
but the computational cost is much lower than the trace minimization.
This is further validated from Fig.~\ref{sketch_pocs_eg} (b), which
is obtained under the same simulation setup as Fig.~\ref{sketch_tracemin}
by repeating POCS with $2000$ iterations. 
\begin{figure*}[htp]
\centering %
\begin{tabular}{cc}
\includegraphics[width=0.37\textwidth]{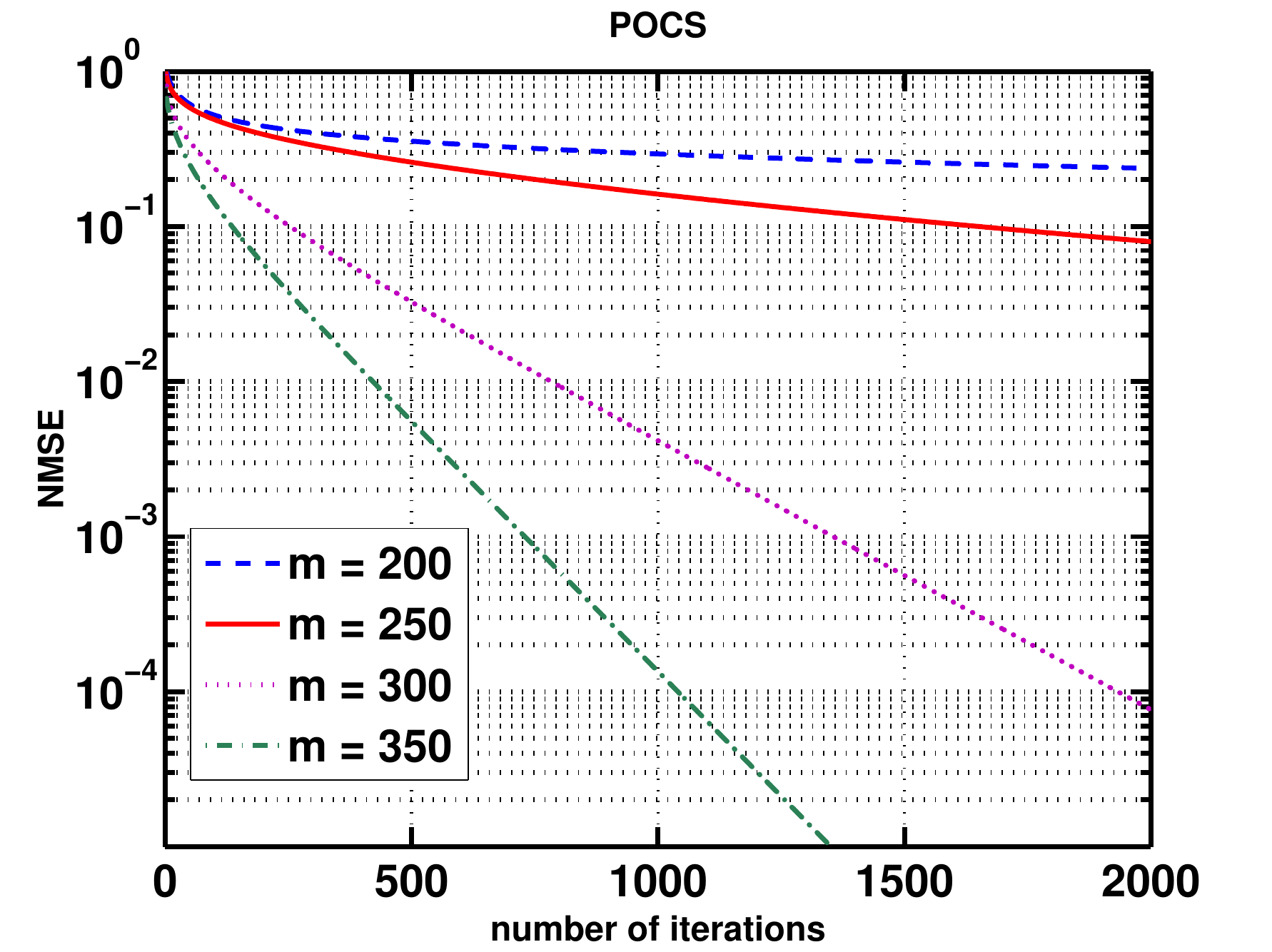}  & \includegraphics[width=0.37\textwidth]{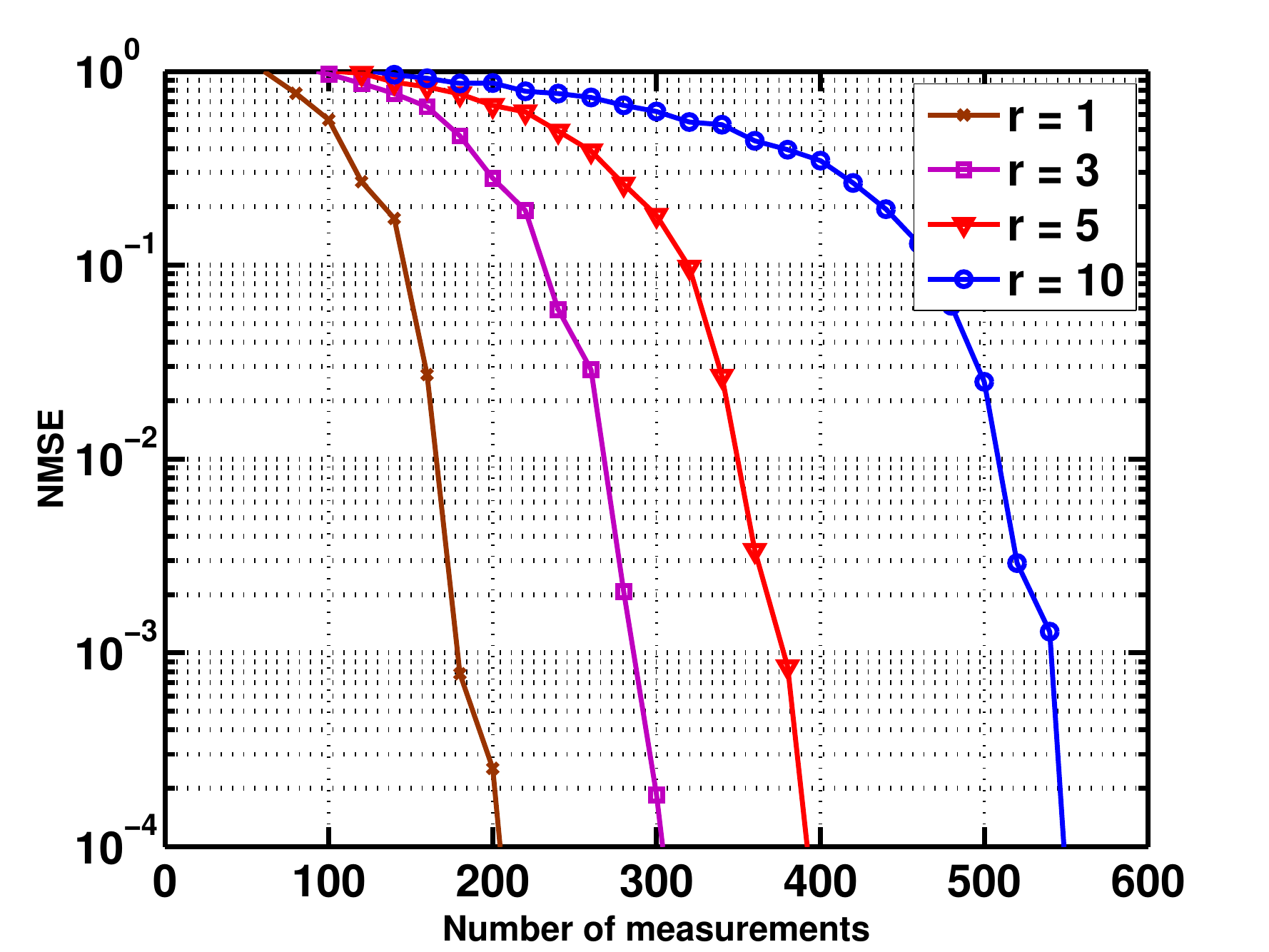} \tabularnewline
(a)  & (b) \tabularnewline
\end{tabular}\caption{The NMSE of the reconstructed covariance matrix via POCS for $n=40$:
(a) the NMSE vs. the number of iterations for different numbers of
measurements when $r=3$; (b) the NMSE vs. the number of measurements
for different ranks when running $2000$ iterations.}

\label{sketch_pocs_eg} 
\end{figure*}

\subsection{Recovery of Toeplitz Low-rank Matrices}

To justify the convex heuristic for Toeplitz low-rank matrices, we
perform a series of numerical experiments for matrices of dimension
$n=50$. By Caratheodory's theorem, each PSD Toeplitz matrix can be
uniquely decomposed into a linear combination of line spectrum \cite{grenander1958toeplitz}.
Thus, we generate the PSD Toeplitz matrix by randomly generating the
frequencies and amplitudes of each line spectra. In the real-valued
case, the underlying spectral spikes occur in conjugate pairs (i.e.
$\left(f_{1},-f_{1}\right),\left(f_{2},-f_{2}\right),\cdots$). We
independently generate $r/2$ frequency pairs within the unit disk
uniformly at random, and the amplitudes are generated as the absolute
values of i.i.d. Gaussian variables. Figure \ref{fig:Phase-transition}
illustrates the phase transition diagram for varying choices of $(m,r)$.
Each trial is declared successful if the estimate $\hat{\boldsymbol{\Sigma}}$
satisfies $\Vert\hat{\boldsymbol{\Sigma}}-\boldsymbol{\Sigma}\Vert_{\text{F}}/\left\Vert \boldsymbol{\Sigma}\right\Vert _{\text{F}}<10^{-3}.$
The empirical success rate is calculated by averaging over 50 Monte
Carlo trials, and is reflected by the color of each cell. While there
are in total $r$ degrees of freedom, our algorithm exhibits an approximately
linear phase transition curve, which confirms our theoretical prediction
in the absence of noise.

\begin{figure}[htp]
\centering{}\emph{\includegraphics[width=0.37\textwidth]{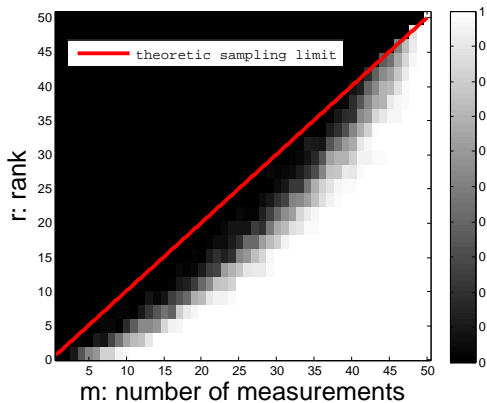}}\caption{\label{fig:Phase-transition}Phase transition plots where frequency
locations are randomly generated. The plot corresponds to the situation
where $n=50$. The empirical success rate is calculated by averaging
over 50 Monte Carlo trials. }
\end{figure}

\subsection{Recovery of Sparse Matrices}

We perform a series of Monte Carlo trials for various parameters for
matrices of dimensions $50\times50$. We first generate PSD sparse
covariance matrices in the following way. For each sparsity value
$k$, we generate a $\sqrt{k}\times\sqrt{k}$ matrix via $\boldsymbol{\Sigma}_{k}=\boldsymbol{L}\boldsymbol{L}^{\top}$,
where $\boldsymbol{L}$ is a $\sqrt{k}\times\sqrt{k}$ matrix with
independent Gaussian components. We then randomly select $\sqrt{k}$
rows and columns of $\bSigma$ and embed $\bSigma_{k}$ into the corresponding
$\sqrt{k}\times\sqrt{k}$ submatrix; all other entries of $\bSigma$
are set to 0. In addition, we also conduct numerical simulations for
general symmetric sparse matrices, where the non-zero entries are
drawn from an i.i.d. Gaussian distribution and the support is randomly
chosen. For each $(m,k)$ pair in each scenario, we repeated the experiments
20 times, and solve it using CVX. Again, a matrix $\bSigma$ is claimed
to be recovered if the solution $\hat{\bSigma}$ returned by the solver
satisfies $\|\hat{\boldsymbol{\Sigma}}-\boldsymbol{\Sigma}\|_{\text{F}}/\|\boldsymbol{\Sigma}\|_{\text{F}}<10^{-3}$.
Figure \ref{fig:PTSparse} illustrates the empirical success probability
in these Monte Carlo experiments. For ease of comparison, we also
plot the degrees of freedom in red lines, which is $\frac{\sqrt{k}\left(\sqrt{k}+1\right)}{2}$
in our case. It turns out that the practical phase transition curve
is close to the theoretic sampling limit, which demonstrates the optimality
of our algorithm. 
\begin{figure*}[htp]
\centering %
\begin{tabular}{cc}
\includegraphics[width=0.37\textwidth]{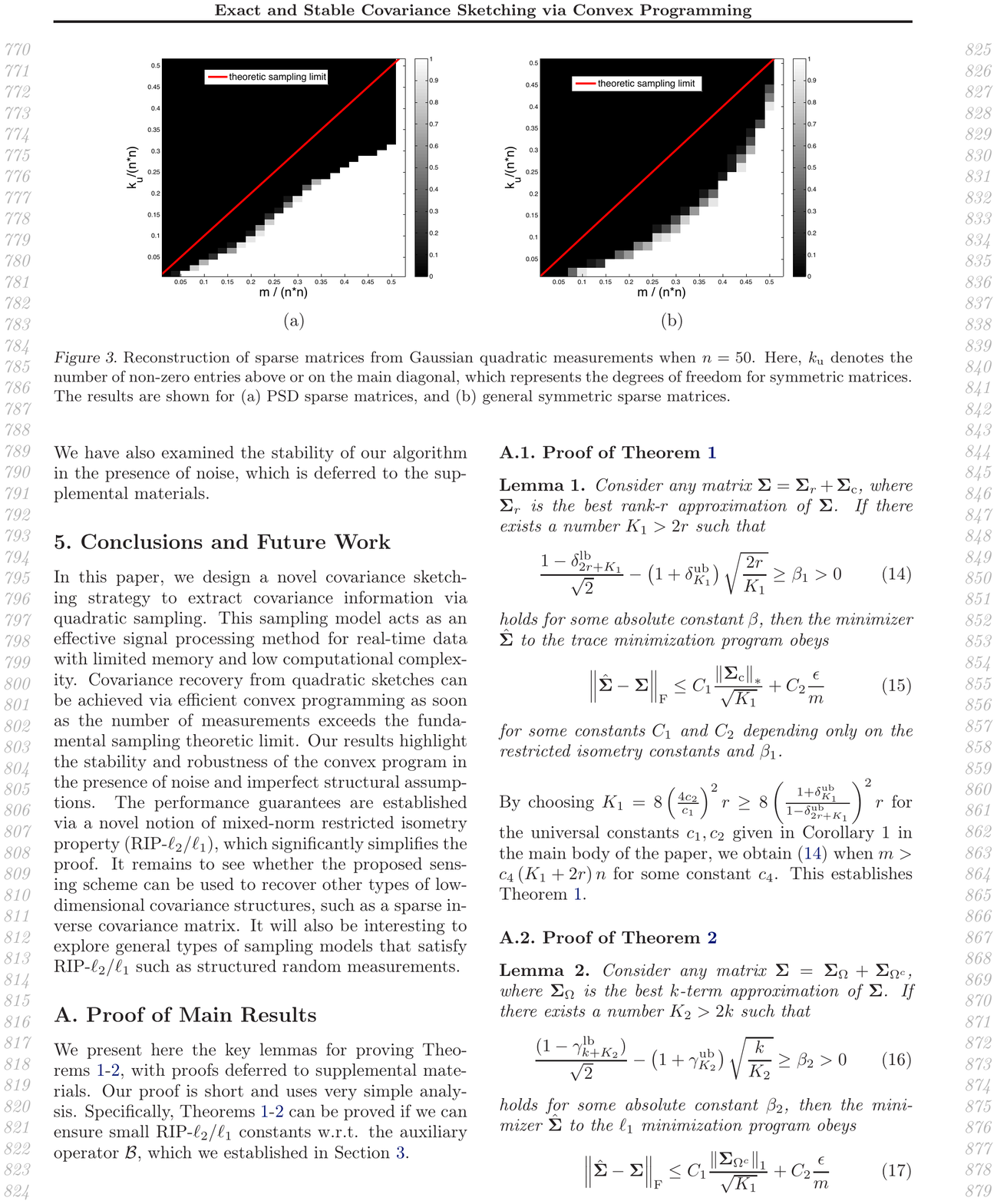}  & \includegraphics[width=0.37\textwidth]{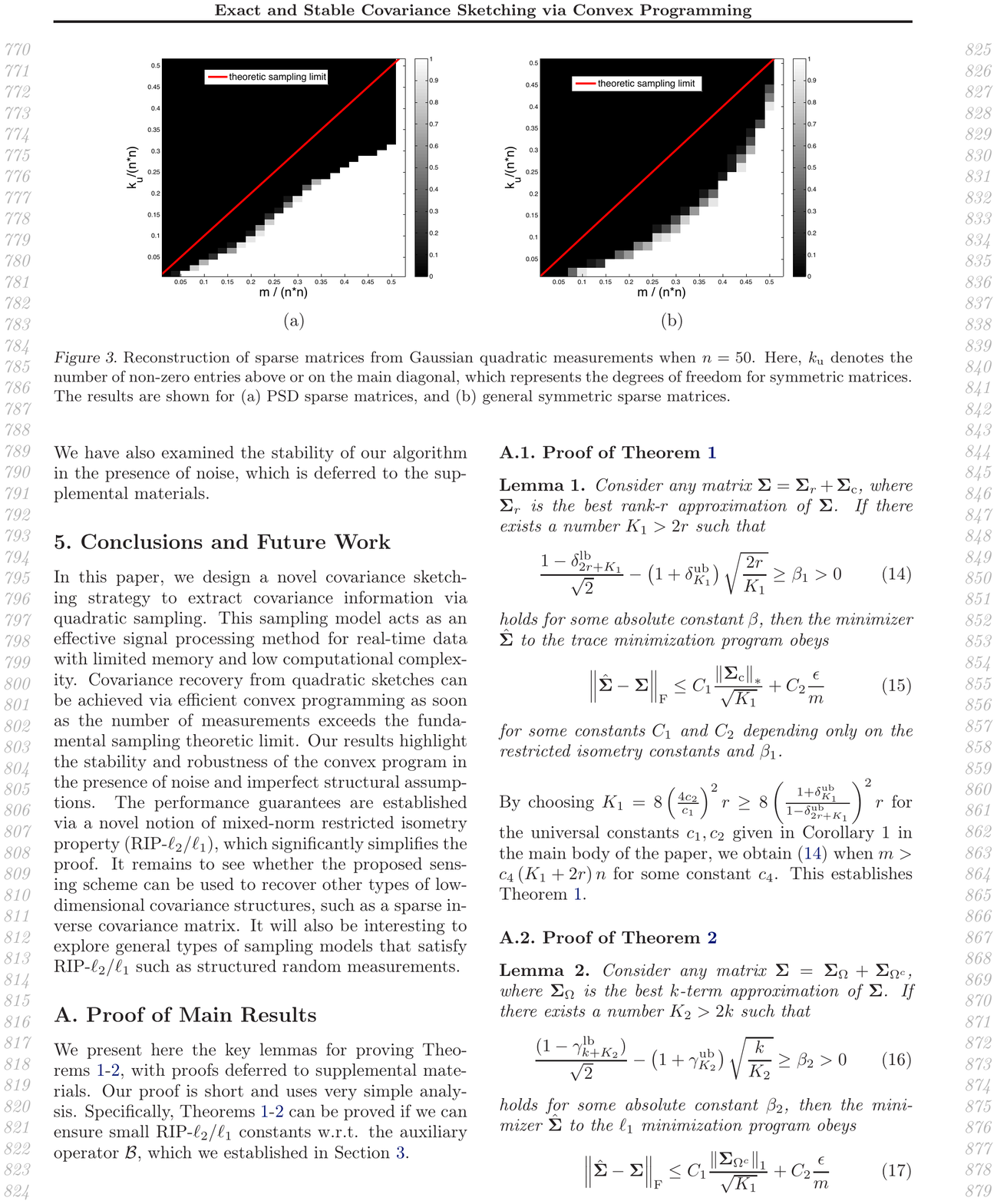}\tabularnewline
(a)  & (b)\tabularnewline
\end{tabular}\caption{\label{fig:PTSparse}Reconstruction of sparse matrices from Gaussian
quadratic measurements when $n=50$. For ease of comparison, we let
$k_{\text{u}}$ denote the number of non-zero entries above or on
the main diagonal, which represents the degrees of freedom for symmetric
matrices. For each ($m,k_{\text{u}}$) pair, we conducted Monte Carlo
experiments 20 times. A PSD matrix $\bSigma$ and $m$ sensing vectors
are selected at random. The colormap for each cell and the red line
reflects the empirical probability of success and the information
theoretic limit, respectively. The results are shown for (a) sparse
PSD matrices, and (b) sparse symmetric matrices.}
\end{figure*}

Another numerical example concerns recovery of a random symmetric
sparse matrix (not necessarily PSD). We randomly generated a symmetric
sparse matrix of sparsity level $k$ with $n=40$, and sketched it
with i.i.d. Gaussian vectors. For each pair of $(r,m)$, we perform
$10$ independent runs where in each run the sensing matrix is generated
with i.i.d. standard Gaussian entries. Fig.~\ref{sketch_l1min} (a)
shows the average NMSE with respect to $m$ for different sparsity
levels when there is no noise. We further introduce additive bounded
noise to each measurement by letting $\lambda_{i}$ be generated from
$\sigma\cdot\mathcal{U}[-1,1]$, and run $10$ trials for each pair
of $(\sigma,m)$. Fig.~\ref{sketch_l1min} (b) shows the average
NMSE when $k=240$ for different noise levels by setting $\epsilon=\sigma m$
in \eqref{l1min}. 
\begin{figure*}[htp]
\centering %
\begin{tabular}{cc}
\includegraphics[width=0.37\textwidth]{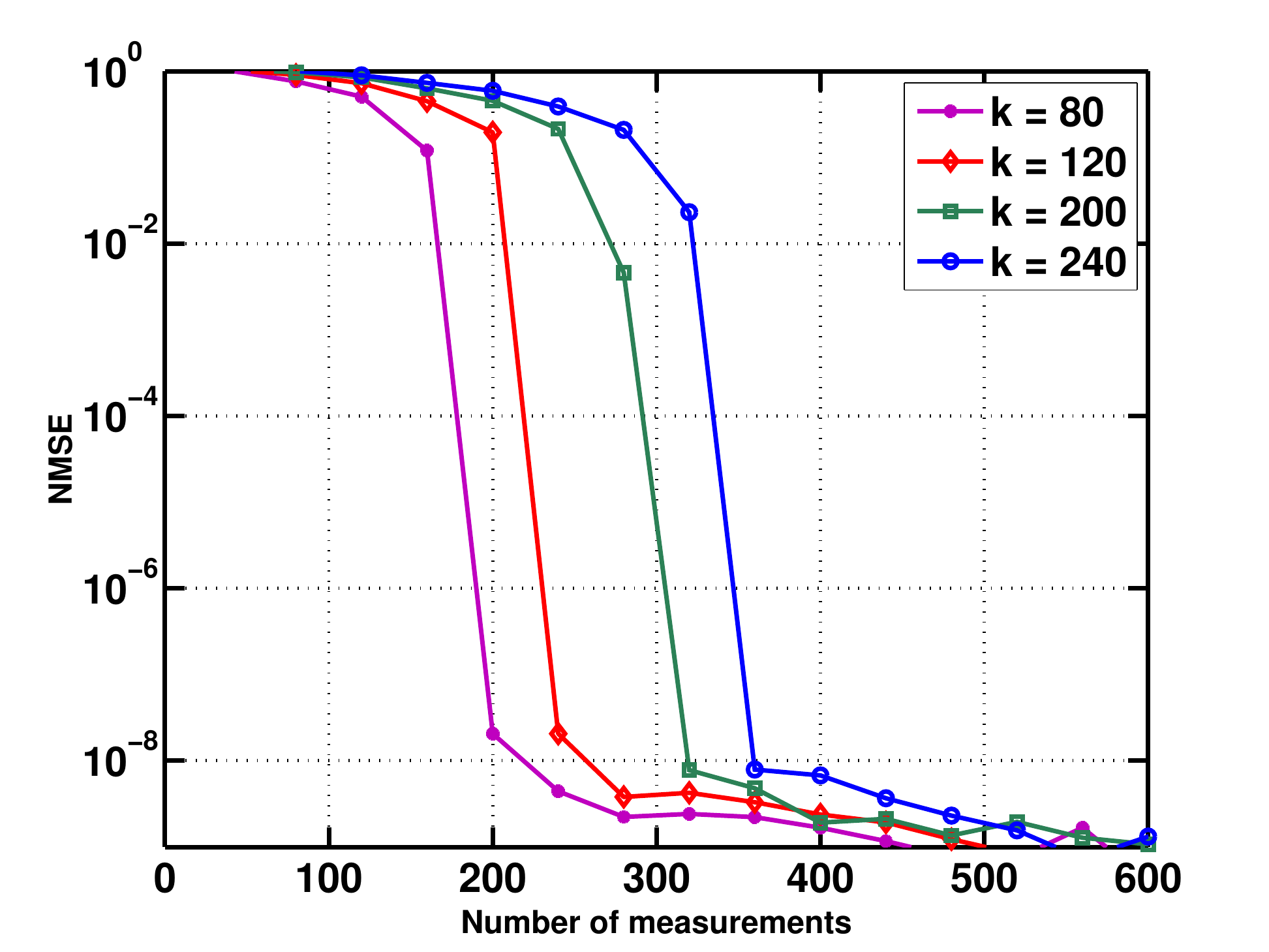}  & \includegraphics[width=0.37\textwidth]{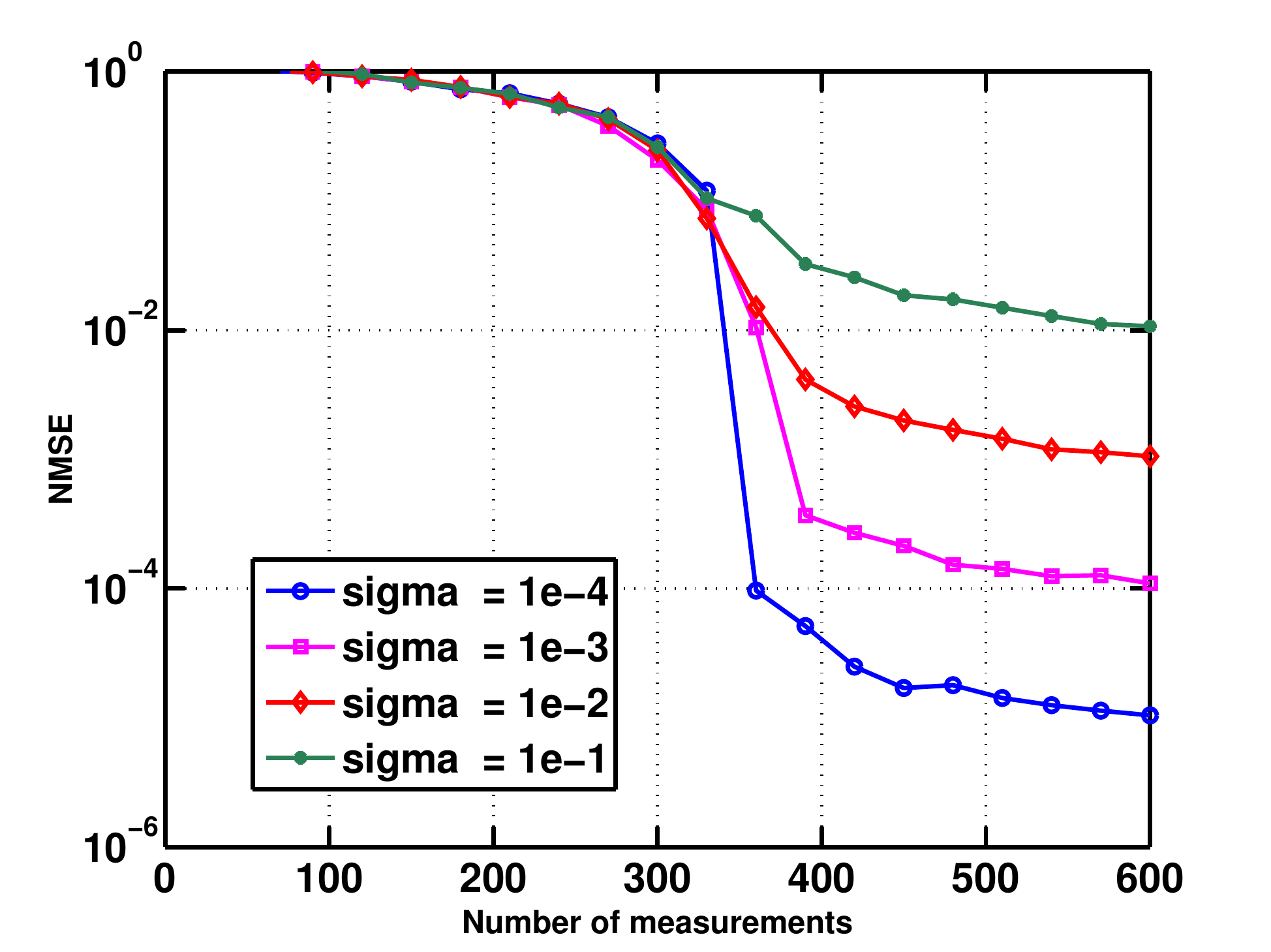} \tabularnewline
(a)  & (b) \tabularnewline
\end{tabular}\caption{The NMSE of the reconstructed sparse matrix via $\ell_{1}$ minimization
vs. the number of measurements when $n=40$: (a) for different sparsity
level when no noise is present; (b) for different noise levels when
$k=240$.}

\label{sketch_l1min} 
\end{figure*}

\section{Conclusions and Future Work\label{sec:Conclusions-and-Future}}

We have investigated a general covariance estimation problem under
a quadratic (rank-one) sampling model. This sampling model acts as an effective
signal processing method for real-time data with limited processing
power and memory at the data acquisition devices, and subsumes many
sampling strategies where we can only obtain magnitude or energy samples.
Three of the most popular covariance structures, i.e. sparsity, low
rank, and jointly Toeplitz and low-rank structure, have been explored
as well as sparse phase retrieval.

Our results indicate that covariance matrices under the above structural
assumptions can be perfectly recovered from a small set of quadratic
measurements and minimal storage, as long as the sensing vectors are
i.i.d. drawn from sub-Gaussian distributions. The recovery can be
achieved via efficient convex programming as soon as the memory complexity
exceeds the fundamental sampling theoretic limit. We also observe
universal recovery phenomena, in the sense that once the sensing vectors
are chosen, all covariance matrices possessing the presumed structure
can be recovered. Our results highlight the stability and robustness
of the convex program in the presence of noise and imperfect structural
assumptions. The performance guarantees for low-rank, sparse and jointly
rank-one and sparse models are established via a novel notion of a
mixed-norm restricted isometry property (RIP-$\ell_{2}/\ell_{1}$),
which significantly simplifies the proof. Our contribution also includes
a systematic approach to analyze Toeplitz low-rank structure, which
relies on RIP-$\ell_{2}/\ell_{2}$ under near-isotropic and bounded
operators.

Several future directions of interest are as follows. 
\begin{itemize}
\itemsep0.3em
\item Another covariance structure of interest is an approximately \textit{sparse
inverse covariance matrix} rather than a sparse covariance matrix.
In particular, when the signals are jointly Gaussian, the inverse
covariance matrix encodes the conditional independence, which is often
sparse. It remains to be seen whether the measurement scheme in \eqref{measurements}
can be used to recover a sparse inverse covariance matrix. 
\item It will be interesting to explore whether more general types of sampling
models satisfy RIP-$\ell_{2}/\ell_{1}$. For instance, when the sensing
vectors do not have i.i.d. entries, more delicate mathematical tricks
are necessary to establish RIP-$\ell_{2}/\ell_{1}$. 
\item In the case where RIP-$\ell_{2}/\ell_{1}$ is difficult to evaluate
(e.g. the case with random Fourier sensing vectors), it would be interesting
to develop an RIP-less theory in a similar flavor for linear measurement
models \cite{CandesPlan2011RIPless}. 
\end{itemize}
%which holds when most variables are conditionally independent given
%the rest variables; this structure finds applications in time series
%analysis and so on. 

\section*{Acknowledgments}

The authors would like to thank Emmanuel Candes for stimulating
discussions. We would also like to thank Yihong Wu for his helpful
comments and suggestions on Theorem \ref{thm:SparsePR}, Yudong
Chen for fruitful discussions about statistical consistency, and 
Xiaodong Li for helpful discussion on sparse phase retrieval. The
work of Y. Chen and A. J. Goldsmith is supported in part by the NSF
Center for Science of Information, and the AFOSR under MURI Grant
FA9550-12-1-0215. The work of Y. Chi is partially supported by NSF
CCF-1422966, the AFOSR Young Investigator Program under FA9550-15-1-0205,
and a Google Faculty Research Award.

\appendices{ }

%dummy comment inserted by tex2lyx to ensure that this paragraph is not empty
%dummy comment inserted by tex2lyx to ensure that this paragraph is not empty
%dummy comment inserted by tex2lyx to ensure that this paragraph is not empty
%dummy comment inserted by tex2lyx to ensure that this paragraph is not empty
%dummy comment inserted by tex2lyx to ensure that this paragraph is not empty
%\section{Proof of Main Results\label{sec:proof}}

%We provide an analysis framework for matrix recovery from quadratic measurements that achieves theoretic sampling limits. Our techniques allow short and self-contained proofs with very simple analysis. 

\section{Proof of Proposition \ref{corollary:RIP_lr}\label{sec:Proof-of-Lemma-RIP}}

To prove Proposition \ref{corollary:RIP_lr}, we will first derive
an upper bound and a lower bound on $\mathbb{E}\left[\left|\left\langle \boldsymbol{B}_{i},\boldsymbol{X}\right\rangle \right|\right]$,
and then apply the Bernstein-type inequality \cite[Proposition 5.16]{Vershynin2012}
to establish the large deviation bound.

In order to derive an upper bound on $\mathbb{E}\left[\left|\left\langle \boldsymbol{B}_{i},\boldsymbol{X}\right\rangle \right|\right]$,
the key step is to apply the Hanson-Wright inequality \cite{hanson1971bound,rudelson2013hanson},
which characterizes the concentration of measure for quadratic forms
in sub-Gaussian random variables. We adopt the version in \cite{rudelson2013hanson}
and repeat it below for completeness.

\begin{lemma}[\textbf{Hanson-Wright Inequality}]\label{lemma:hanson-wright}
Let $\bX=(X_{1},\ldots,X_{n})\in\mathbb{R}^{n}$ be a random vector
with independent components $X_{i}$ which satisfy $\mathbb{E}\left[X_{i}\right]=0$
and $\|X_{i}\|_{\psi_{2}}\leq K$. Let $\bA$ be an $n\times n$ matrix.
Then for any $t>0$, 
\begin{align}
 & \mathbb{P}\left\{ \left|\bX^{\top}\bA\bX-\mathbb{E}\left[\bX^{\top}\bA\bX\right]\right|>t\right\} \nonumber \\
 & \quad\leq2\exp\left[-c\min\left(\frac{t^{2}}{K^{4}\|\bA\|_{\mathrm{F}}^{2}},\frac{t}{K^{2}\|\bA\|}\right)\right].\label{hanson}
\end{align}

\end{lemma}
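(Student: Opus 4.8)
The plan is to control the moment generating function (MGF) of the centered quadratic form $S:=\bX^{\top}\bA\bX-\mathbb{E}[\bX^{\top}\bA\bX]$ and then convert the MGF bound into a tail bound via the standard Chernoff argument. Writing $a_{ij}$ for the entries of $\bA$, I would first split $S$ into its diagonal and off-diagonal contributions,
$$S=\underbrace{\sum_{i}a_{ii}\left(X_{i}^{2}-\mathbb{E}[X_{i}^{2}]\right)}_{=:S_{\mathrm{diag}}}+\underbrace{\sum_{i\neq j}a_{ij}X_{i}X_{j}}_{=:S_{\mathrm{off}}},$$
and treat the two pieces separately, since by a Cauchy--Schwarz split it suffices to bound $\mathbb{E}\exp(\lambda S_{\mathrm{diag}})$ and $\mathbb{E}\exp(\lambda S_{\mathrm{off}})$ individually in an admissible range of $\lambda$.

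For the diagonal part, each $X_{i}^{2}-\mathbb{E}[X_{i}^{2}]$ is a centered sub-\emph{exponential} variable with $\psi_{1}$-norm $\lesssim K^{2}$, because $X_{i}$ is sub-Gaussian with $\|X_{i}\|_{\psi_{2}}\leq K$. Since the summands are independent, $S_{\mathrm{diag}}$ is a sum of independent centered sub-exponentials, and a direct Bernstein-type MGF estimate gives $\mathbb{E}\exp(\lambda S_{\mathrm{diag}})\leq\exp\!\big(CK^{4}\lambda^{2}\sum_{i}a_{ii}^{2}\big)$ for $|\lambda|\lesssim 1/(K^{2}\max_{i}|a_{ii}|)$, which already feeds into a sub-gamma tail with variance proxy $K^{4}\|\bA\|_{\mathrm{F}}^{2}$ and scale $K^{2}\|\bA\|$.

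The off-diagonal part $S_{\mathrm{off}}$ is the crux. I would first apply a standard decoupling inequality for quadratic chaos to pass, at the cost of absolute constants in the MGF, to the decoupled form $\langle\bX,\bA\bX'\rangle$, where $\bX'$ is an independent copy of $\bX$. Conditioning on $\bX'$, the form $\langle\bX,\bA\bX'\rangle$ is a linear combination of the independent sub-Gaussian coordinates of $\bX$, so $\mathbb{E}_{\bX}\exp\!\big(\lambda\langle\bX,\bA\bX'\rangle\big)\leq\exp\!\big(CK^{2}\lambda^{2}\|\bA\bX'\|_{2}^{2}\big)$. The key device is then to \emph{linearize the square} by introducing an auxiliary standard Gaussian vector $\boldsymbol{g}$ through the identity $\exp(s\|\boldsymbol{b}\|_{2}^{2})=\mathbb{E}_{\boldsymbol{g}}\exp\!\big(\sqrt{2s}\,\langle\boldsymbol{g},\boldsymbol{b}\rangle\big)$; after interchanging expectations and using sub-Gaussianity of $\bX'$ in the remaining linear form, the problem reduces to the MGF of a \emph{genuine Gaussian} chaos $\|\bA^{\top}\boldsymbol{g}\|_{2}^{2}$. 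The latter is explicit, $\mathbb{E}_{\boldsymbol{g}}\exp\!\big(\theta\|\bA^{\top}\boldsymbol{g}\|_{2}^{2}\big)=\prod_{k}(1-2\theta\sigma_{k}^{2})^{-1/2}$ in terms of the singular values $\sigma_{k}$ of $\bA$, and its logarithm is bounded by $C\theta\|\bA\|_{\mathrm{F}}^{2}$ whenever $\theta\|\bA\|^{2}\leq c$. Tracking the substitutions yields $\mathbb{E}\exp(\lambda S_{\mathrm{off}})\leq\exp\!\big(CK^{4}\lambda^{2}\|\bA\|_{\mathrm{F}}^{2}\big)$ for $|\lambda|\lesssim 1/(K^{2}\|\bA\|)$.

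Finally, combining the diagonal and off-diagonal MGF bounds and optimizing the Chernoff exponent over $\lambda$ in the admissible range produces the two-regime estimate $\exp\!\big[-c\min\big(t^{2}/(K^{4}\|\bA\|_{\mathrm{F}}^{2}),\,t/(K^{2}\|\bA\|)\big)\big]$, which is exactly \eqref{hanson}. The main obstacle is the off-diagonal chaos: both the decoupling step and, above all, the Gaussian-linearization trick that converts the sub-Gaussian chaos into an exactly computable Gaussian chaos MGF are where the essential work lies, whereas the diagonal term and the concluding Chernoff optimization are routine by comparison.
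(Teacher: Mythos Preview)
The paper does not actually prove this lemma: it is quoted verbatim from Rudelson--Vershynin \cite{rudelson2013hanson} and used as a black box. Your proposal is a faithful outline of that very proof---diagonal/off-diagonal split, Bernstein on the sub-exponential diagonal terms, decoupling followed by the Gaussian linearization trick $\exp(s\|\boldsymbol{b}\|_2^2)=\mathbb{E}_{\boldsymbol{g}}\exp(\sqrt{2s}\,\langle\boldsymbol{g},\boldsymbol{b}\rangle)$ to reduce the sub-Gaussian chaos to an explicit Gaussian one, and Chernoff optimization---so it is correct and matches the cited source.
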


\begin{remark}Here, $\|\cdot\|_{\psi_{2}}$ denotes the sub-Gaussian
norm 
\[
\|X\|_{\psi_{2}}:=\min_{p\geq1}p^{-1/2}\left(\mathbb{E}\left[\left|X\right|^{p}\right]\right)^{1/p}.
\]
Similarly, the sub-exponential norm $\|\cdot\|_{\psi_{1}}$ is defined
as 
\[
\|X\|_{\psi_{1}}:=\min_{p\geq1}p^{-1}\left(\mathbb{E}\left[\left|X\right|^{p}\right]\right)^{1/p}.
\]
See \cite[Section 5.2.3 and 5.2.4]{Vershynin2012} for an introduction.\end{remark}

Observe that $\left\langle \boldsymbol{B}_{i},\boldsymbol{X}\right\rangle $
can be written as a symmetric quadratic form in $2n$ i.i.d. sub-Gaussian
random variables 
\[
\left\langle \boldsymbol{B}_{i},\boldsymbol{X}\right\rangle =\begin{bmatrix}\ba_{2i-1}^{\top} & \ba_{2i}^{\top}\end{bmatrix}\begin{bmatrix}\boldsymbol{X}\\
 & -\boldsymbol{X}
\end{bmatrix}\begin{bmatrix}\ba_{2i-1}\\
\ba_{2i}
\end{bmatrix}.
\]
The Hanson-Wright inequality \eqref{hanson} then asserts that: there
exists an absolute constant $c>0$ such that for any matrix $\boldsymbol{X}$,
$\left|\left\langle \boldsymbol{B}_{i},\boldsymbol{X}\right\rangle \right|\leq t$
with probability at least 
\[
1-2\exp\left[-c\min\left(\frac{t^{2}}{4K^{4}\left\Vert \boldsymbol{X}\right\Vert _{\text{F}}^{2}},\frac{t}{K^{2}\left\Vert \boldsymbol{X}\right\Vert }\right)\right].
\]
This indicates that $\left\langle \boldsymbol{B}_{i},\boldsymbol{X}\right\rangle $
is a sub-exponential random variable \cite[Section 5.2.4]{Vershynin2012}
satisfying 
\begin{equation}
\mathbb{E}\left[\left|\left\langle \boldsymbol{B}_{i},\boldsymbol{X}\right\rangle \right|\right]\leq c_{1}\left\Vert \boldsymbol{X}\right\Vert _{\text{F}}\label{eq:UBMean}
\end{equation}
for some positive constant $c_{1}$.

%can be bounded by 
%\begin{equation}
%\frac{1}{m}\left\Vert \left\langle \boldsymbol{B}_{i},\boldsymbol{X}\right\rangle \right\Vert _{\psi_{1}}\leq c_{1}\left\Vert \boldsymbol{X}\right\Vert _{\text{F}}.\label{eq:BoundSubExponentialNorm}
%\end{equation}
%This immediately gives an upper bound on $\mathbb{E}\left|\left\langle \boldsymbol{B}_{i},\boldsymbol{X}\right\rangle \right|$:

On the other hand, to derive a lower bound on $\mathbb{E}\left[\left|\left\langle \boldsymbol{B}_{i},\boldsymbol{X}\right\rangle \right|\right]$,
we notice that for a random variable $\xi$, repeatedly applying the
Cauchy-Schwartz inequality yields 
\[
\left(\mathbb{E}\left[\xi^{2}\right]\right)^{2}\leq\mathbb{E}\left[\left|\xi\right|\right]\mathbb{E}\left[\left|\xi\right|^{3}\right]\leq\mathbb{E}\left[\left|\xi\right|\right]\sqrt{\mathbb{E}\left[\xi^{2}\right]\mathbb{E}\left[\xi^{4}\right]},
\]
which further leads to 
\begin{equation}
\mathbb{E}\left[\left|\xi\right|\right]\geq\sqrt{\frac{\left(\mathbb{E}\left[\xi^{2}\right]\right)^{3}}{\mathbb{E}\left[\xi^{4}\right]}}.\label{eq:AbsoluteMeanLB}
\end{equation}

Let $\xi:=\left\langle \boldsymbol{B}_{i},\boldsymbol{X}\right\rangle $,
of which the second moment can be expressed as 
\[
\mathbb{E}\left[\xi^{2}\right]=\mathbb{E}\left[\left|\left\langle \boldsymbol{B}_{i},\boldsymbol{X}\right\rangle \right|^{2}\right]=\left\langle \boldsymbol{X},\mathbb{E}\left[\mathcal{B}_{i}^{*}\mathcal{B}_{i}\left(\boldsymbol{X}\right)\right]\right\rangle .
\]
Simple algebraic manipulation yields 
\[
\mathbb{E}\left[\mathcal{B}_{i}^{*}\mathcal{B}_{i}\left(\boldsymbol{X}\right)\right]=4\boldsymbol{X}+2\left(\mu_{4}-3\right)\diag(\bX),
\]
and hence 
\begin{align}
\mathbb{E}\left[\xi^{2}\right] & =4\left\Vert \boldsymbol{X}\right\Vert _{\text{F}}^{2}+2\left(\mu_{4}-3\right)\sum_{i=1}^{n}\left|\boldsymbol{X}_{ii}\right|^{2}\nonumber \\
 & \geq\min\{4,2(\mu_{4}-1)\}\left\Vert \boldsymbol{X}\right\Vert _{\text{F}}^{2}=c_{2}\left\Vert \boldsymbol{X}\right\Vert _{\text{F}}^{2},\label{eq:LBVar}
\end{align}
where $c_{2}:=\min\{4,2(\mu_{4}-1)\}$. Furthermore, since $\xi:=\left\langle \boldsymbol{B}_{i},\boldsymbol{X}\right\rangle $
has been shown to be sub-exponential with sub-exponential norm $\Theta\left(\left\Vert \boldsymbol{X}\right\Vert _{\text{F}}\right)$,
one can derive \cite{Vershynin2012} 
\begin{equation}
\mathbb{E}\left[\xi^{4}\right]=\left(4\left\Vert \xi\right\Vert _{\psi_{1}}\right)^{4}\leq c_{3}\left\Vert \boldsymbol{X}\right\Vert _{\text{F}}^{4}\label{eq:FourthMoment}
\end{equation}
for some constant $c_{7}>0$. This taken collectively with (\ref{eq:AbsoluteMeanLB})
and (\ref{eq:LBVar}) gives rise to 
\[
\mathbb{E}\left[\left|\left\langle \boldsymbol{B}_{i},\boldsymbol{X}\right\rangle \right|\right]\geq\sqrt{\frac{c_{2}^{3}\left\Vert \boldsymbol{X}\right\Vert _{\text{F}}^{6}}{c_{3}\left\Vert \boldsymbol{X}\right\Vert _{\text{F}}^{4}}}=c_{4}\left\Vert \boldsymbol{X}\right\Vert _{\text{F}}
\]
for some constant $c_{4}>0$.

Now, we are ready to characterize the concentration of $\left\langle \boldsymbol{B}_{i},\boldsymbol{X}\right\rangle $,
which is a simple consequence of the following sub-exponential variant
of Bernstein inequality. \begin{lemma}[{\cite[Proposition 5.16]{Vershynin2012}}]
\label{lemma:bernstein}Let $X_{1},\ldots,X_{m}$ be independent sub-exponential
random variables with $\mathbb{E}\left[X_{i}\right]=0$ and $K=\max_{i}\|X_{i}\|_{\psi_{1}}$.
Then for every $t>0$, we have 
\begin{equation}
\mathbb{P}\left\{ \frac{1}{m}\left|\sum_{i=1}^{m}X_{i}\right|\geq t\right\} \leq2\exp\left[-cm\min\left(\frac{t^{2}}{K^{2}},\frac{t}{K}\right)\right]
\end{equation}
where $c$ is an absolute constant. \end{lemma}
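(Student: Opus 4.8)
The plan is to prove the bound by the classical Chernoff (exponential Markov) argument, reducing the tail estimate to a moment-generating-function (MGF) bound for each summand and then optimizing the free exponential parameter over two separate deviation regimes. The crux is to convert the moment-growth characterization encoded in the $\psi_1$ norm into an MGF estimate valid in a neighborhood of the origin, after which the two regimes emerge automatically from the finite radius of validity.

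First I would establish, for a single centered sub-exponential random variable $X$ with $\|X\|_{\psi_1} \leq K$, the MGF bound
\[
\mathbb{E}\left[e^{\lambda X}\right] \leq \exp\left(C_0 \lambda^2 K^2\right), \qquad |\lambda| \leq \frac{c_0}{K},
\]
for absolute constants $C_0, c_0 > 0$. The sub-exponential norm controls the moments via $\left(\mathbb{E}\left[|X|^p\right]\right)^{1/p} \leq p K$ for all $p \geq 1$, hence $\mathbb{E}\left[|X|^p\right] \leq (pK)^p$. Taylor-expanding the exponential and using $\mathbb{E}[X] = 0$ to cancel the linear term,
\[
\mathbb{E}\left[e^{\lambda X}\right] = 1 + \sum_{p=2}^{\infty} \frac{\lambda^p \mathbb{E}\left[X^p\right]}{p!} \leq 1 + \sum_{p=2}^{\infty} \frac{|\lambda|^p (pK)^p}{p!}.
\]
The elementary bound $p! \geq (p/e)^p$ turns the general term into $(e|\lambda|K)^p$, so for $e|\lambda|K \leq 1/2$ the geometric series sums to at most $2(e|\lambda|K)^2$, and $1 + x \leq e^x$ yields the claimed estimate with $c_0 = 1/(2e)$.

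Next I would apply the exponential Markov inequality to the centered sum $S_m = \sum_{i=1}^m X_i$. By independence, for any $0 \leq \lambda \leq c_0/K$,
\[
\mathbb{P}\left\{S_m \geq s\right\} \leq e^{-\lambda s} \prod_{i=1}^m \mathbb{E}\left[e^{\lambda X_i}\right] \leq \exp\left(-\lambda s + C_0 m K^2 \lambda^2\right).
\]
Optimizing the quadratic exponent, the unconstrained minimizer is $\lambda^\star = s/(2C_0 m K^2)$. When $\lambda^\star \leq c_0/K$ (the small-deviation, Gaussian regime) I take $\lambda = \lambda^\star$ and obtain a bound of order $\exp(-s^2/(4C_0 m K^2))$; when $\lambda^\star > c_0/K$ (the large-deviation, exponential regime) I instead take $\lambda = c_0/K$, giving a bound of order $\exp(-c_0 s/(2K))$. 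Combining the two cases produces $\exp\left(-c\,\min\left(s^2/(mK^2), s/K\right)\right)$; substituting $s = mt$ rewrites this as $\exp\left(-cm\,\min\left(t^2/K^2, t/K\right)\right)$. Applying the identical argument to $-X_i$ (whose $\psi_1$ norm is unchanged) and taking a union bound over the two one-sided events supplies the two-sided inequality with the stated factor of $2$.

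The main obstacle is the MGF bound in the first step: it is the point where the purely moment-based definition of sub-exponentiality must be upgraded to exponential integrability near the origin, and care is needed to keep the radius $c_0/K$ of validity explicit. It is precisely this finite radius that forces the two-regime structure $\min(t^2/K^2, t/K)$, rather than a single Gaussian tail, since for large $t$ the optimal $\lambda$ exits the admissible interval and must be clipped to its boundary value.
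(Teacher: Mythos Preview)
Your proof is correct and is essentially the standard Chernoff argument for Bernstein's inequality under sub-exponential tails. Note, however, that the paper does not actually prove this lemma: it is quoted verbatim as \cite[Proposition 5.16]{Vershynin2012} and used as a black-box tool in the proof of Proposition~\ref{corollary:RIP_lr}, so there is no ``paper's own proof'' to compare against. What you have written is the same argument one finds in the cited reference.
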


Recall that it has been shown in (\ref{eq:UBMean}) that the sub-exponential
norm of $X_{i}:=\left|\left\langle \boldsymbol{B}_{i},\boldsymbol{X}\right\rangle \right|-\mathbb{E}\left[\left|\left\langle \boldsymbol{B}_{i},\boldsymbol{X}\right\rangle \right|\right]$
satisfies $\|X_{i}\|_{\psi_{1}}\leq c^{\prime}\|\bX\|_{\mathrm{F}}$
for some universal constant $c^{\prime}$. Therefore, Lemma~\ref{lemma:bernstein}
implies that for any $\epsilon>0$, one has 
\[
\left|\frac{1}{m}\left\Vert \mathcal{B}(\bX)\right\Vert _{1}-\frac{1}{m}\mathbb{E}\left[\left\Vert \mathcal{B}\left(\boldsymbol{X}\right)\right\Vert _{1}\right]\right|\leq\epsilon\|\bX\|_{\text{F}}
\]
with probability exceeding $1-2\exp(-cm\epsilon)$ for some absolute
constant $c>0$. This yields 
\[
\frac{1}{m}\left\Vert \mathcal{B}(\bX)\right\Vert _{1}\leq\frac{1}{m}\mathbb{E}\left[\left\Vert \mathcal{B}\left(\boldsymbol{X}\right)\right\Vert _{1}\right]+\epsilon\|\bX\|_{\text{F}}\leq(c_{1}+\epsilon)\|\bX\|_{\text{F}}
\]
and 
\[
\frac{1}{m}\left\Vert \mathcal{B}(\bX)\right\Vert _{1}\geq\frac{1}{m}\mathbb{E}\left[\left\Vert \mathcal{B}\left(\boldsymbol{X}\right)\right\Vert _{1}\right]-\epsilon\|\bX\|_{\text{F}}\geq(c_{4}-\epsilon)\|\bX\|_{\text{F}}
\]
with probability at least $1-2\exp(-cm\epsilon)$, where the constants
$c$, $c_{1}$ and $c_{4}$ depend only on the sub-Gaussian norm of
$a_{i}$. Renaming the universal constants establishes Proposition~\ref{corollary:RIP_lr}.

\begin{comment}
aa

\section{Proof of Corollary \ref{corollary:RIP_lr}\label{sec:Proof-of-Lemma-RIP_lowrank}}

A standard covering number argument for matrices at most rank $r$
\cite{CandesPlan2011Tight} leads to that $\mathcal{B}$ satisfies
the RIP-$\ell_{2}/\ell_{1}$ with probability exceeding $1-\exp\left(-c_{1}m\right)$
as soon as $m>c_{2}rn$.

\section{Proof of Corollary \ref{corollary:RIP_sparse}\label{sec:Proof-of-Lemma-RIP_sparse}}

The standard covering argument for sparse vectors \cite[Theorem 5.2]{baraniuk2008simple}
can be extended straightforwardly for sparse matrices by reshaping
the matrix into a vector. Therefore, $\mathcal{B}$ satisfies the
RIP-$\ell_{1}/\ell_{2}$ for matrices of sparsity level at most $k$
with probability at least 
\[
1-2{n^{2} \choose k}\left(\frac{12}{\epsilon}\right)^{k}e^{-cm\epsilon}\geq1-2\left(\frac{en^{2}}{k}\right)^{k}\left(\frac{12}{\epsilon}\right)^{k}e^{-cm\epsilon}
\]
for some small constant $\epsilon$. Therefore as long as $m>c_{1}k\log(n/k)$
for some constant $c_{1}$, then the above probability is controlled
by $1-\exp(-c_{2}m)$, which is what we need. %\section{Proof of Lemma \ref{lemma:duality}\label{sec:Proof-of-Lemma-Duality}}
%Suppose that $\bSigma+\boldsymbol{H}$ is another feasible point satisfying
%$\left\Vert \bSigma+\boldsymbol{H}\right\Vert _{*}\leq\left\Vert \bSigma\right\Vert _{*}$.
%Simple manipulation yields
%\begin{align}
%0\geq\left\Vert \bSigma+\boldsymbol{H}\right\Vert _{*}-\left\Vert \bSigma\right\Vert _{*} & \geq\left\Vert \bSigma+\boldsymbol{H}_{T^{\perp}}\right\Vert _{*}-\left\Vert \boldsymbol{H}_{T}\right\Vert _{*}-\left\Vert \bSigma\right\Vert _{*}\nonumber \\
% & =\left\Vert \bSigma\right\Vert _{*}+\left\Vert \boldsymbol{H}_{T^{\perp}}\right\Vert _{*}-\left\Vert \boldsymbol{H}_{T}\right\Vert _{*}-\left\Vert \bSigma\right\Vert _{*},
%\end{align}
%which leads to
%\begin{equation}
%\left\Vert \boldsymbol{H}_{T^{\perp}}\right\Vert _{*}\leq\left\Vert \boldsymbol{H}_{T}\right\Vert _{*}.\label{eq:HtHtperp}
%\end{equation}
%Divide $\boldsymbol{H}_{T^{\perp}}$ into $M=\left\lceil \frac{n-2}{K_{1}}\right\rceil $
%orthogonal matrices $\boldsymbol{H}_{1}$, $\boldsymbol{H}_{2}$,
%$\cdots$, $\boldsymbol{H}_{M}$ satisfying the following: (i) the
%largest singular values of $\boldsymbol{H}_{i+1}$ does not exceed
%the smallest non-zero singular value of $\boldsymbol{H}_{i}$, and
%(ii)
%\begin{equation}
%\left\Vert \boldsymbol{H}_{T^{\perp}}\right\Vert _{*}=\sum_{i=1}^{M}\left\Vert \boldsymbol{H}_{i}\right\Vert _{*}\quad\text{and}\quad\text{rank}\left(\boldsymbol{H}_{i}\right)=K_{1}\text{ }(1\leq i\leq M-1).\label{eq:RequirementH}
%\end{equation}
%This results in the following bound
%\begin{align}
%\sum_{i=2}^{M}\frac{1}{m}\left\Vert \mathcal{B}\left(\boldsymbol{H}_{i}\right)\right\Vert _{1} & \leq\left(1+\delta_{K_{1}}^{\text{ub}}\right)\sum_{i=2}^{M}\left\Vert \boldsymbol{H}_{i}\right\Vert _{\text{F}}\leq\left(1+\delta_{K_{1}}^{\text{ub}}\right)\sum_{i=2}^{M}\frac{1}{\sqrt{K_{1}}}\left\Vert \boldsymbol{H}_{i-1}\right\Vert _{*}\leq\frac{\left(1+\delta_{K_{1}}^{\text{ub}}\right)}{\sqrt{K_{1}}}\left\Vert \boldsymbol{H}_{T^{\perp}}\right\Vert _{*}\label{eq:BoundBH}\\
% & \leq\frac{\left(1+\delta_{K_{1}}^{\text{ub}}\right)}{\sqrt{K_{1}}}\left\Vert \boldsymbol{H}_{T}\right\Vert _{*}\leq\frac{\left(1+\delta_{K_{1}}^{\text{ub}}\right)\sqrt{2r}}{\sqrt{K_{1}}}\left\Vert \boldsymbol{H}_{T}\right\Vert _{\text{F}}.
%\end{align}
%It then follows that
%\begin{align*}
%0=\frac{1}{m}\left\Vert \mathcal{B}\left(\boldsymbol{H}\right)\right\Vert _{1} & \geq\frac{1}{m}\left\Vert \mathcal{B}\left(\boldsymbol{H}_{T}+\boldsymbol{H}_{1}\right)\right\Vert _{1}-\sum_{i=2}^{M}\frac{1}{m}\left\Vert \mathcal{B}\left(\boldsymbol{H}_{i}\right)\right\Vert _{1}\\
% & \geq\left(1-\delta_{2r+K_{1}}^{\text{lb}}\right)\left\Vert \boldsymbol{H}_{T}+\boldsymbol{H}_{1}\right\Vert _{\text{F}}-\left(1+\delta_{K_{1}}^{\text{ub}}\right)\sqrt{\frac{2r}{K_{1}}}\left\Vert \boldsymbol{H}_{T}\right\Vert _{\text{F}}\\
% & \geq\left[\left(1-\delta_{2r+K_{1}}^{\text{lb}}\right)-\left(1+\delta_{K_{1}}^{\text{ub}}\right)\sqrt{\frac{2r}{K_{1}}}\right]\left\Vert \boldsymbol{H}_{T}\right\Vert _{\text{F}}\geq0,
%\end{align*}
%provided that $1-\delta_{2r+K_{1}}^{\text{lb}}>\left(1+\delta_{K_{1}}^{\text{ub}}\right)\sqrt{\frac{2r}{K_{1}}}$.
%This can only occur if $\boldsymbol{H}_{T}=\boldsymbol{H}_{T^{\perp}}=0$,
%which concludes the proof.
\end{comment}

\section{Proof of Theorem \ref{thm:RIP_Isotropic}\label{sec:Proof-of-Theorem-RIP-Isotropic}}

The proof of Theorem \ref{thm:RIP_Isotropic} follows the entropy
method introduced in \cite{rudelson2008sparse} for compressed sensing
and \cite{liu2011universal} for Pauli measurements. Note, however,
that in our case, the measurement measurements do not form a basis,
and are not even bounded. Our theorem extend the results in \cite{liu2011universal}
(which focuses on Pauli basis) to general near-isotropic measurements.

Specifically, the RIP-$\ell_{2}/\ell_{2}$ constant can be bounded
by 
\begin{align}
\delta_{r} & =\sup_{\left\Vert \boldsymbol{X}\right\Vert _{\text{F}}\leq1,\mathrm{rank}\left(\boldsymbol{X}\right)\leq r}\left|\frac{1}{m}\sum_{i=1}^{m}\left|\left\langle \boldsymbol{B}_{i},\boldsymbol{X}\right\rangle \right|^{2}-\left\Vert \boldsymbol{X}\right\Vert _{\mathrm{F}}^{2}\right|\nonumber \\
 & =\sup_{T\in\mathcal{M}_{r}^{\mathrm{t}},\boldsymbol{X}\in T,\left\Vert \boldsymbol{X}\right\Vert _{\text{F}}\leq1}\left|\left\langle \boldsymbol{X},\left(\frac{1}{m}\sum_{i=1}^{m}\mathcal{B}_{i}^{*}\mathcal{B}_{i}-\mathcal{I}\right)\boldsymbol{X}\right\rangle \right|\label{eq:Tangent}\\
 & \leq\sup_{T\in\mathcal{M}_{r}^{\mathrm{t}}}\left\Vert \mathcal{P}_{T}\left(\frac{1}{m}\sum_{i=1}^{m}\mathcal{B}_{i}^{*}\mathcal{B}_{i}-\mathcal{I}\right)\mathcal{P}_{T}\right\Vert \nonumber \\
 & \leq\sup_{T\in\mathcal{M}_{r}^{\mathrm{t}}}\left\Vert \mathcal{P}_{T}\left\{ \frac{1}{m}\sum_{i=1}^{m}\left(\mathcal{B}_{i}^{*}\mathcal{B}_{i}-\mathbb{E}\mathcal{B}_{i}^{*}\mathcal{B}_{i}\right)\right\} \mathcal{P}_{T}\right\Vert +\frac{c_{5}}{n},\label{eq:last}
\end{align}
where 
\begin{equation}
\mathcal{M}_{r}^{\mathrm{t}}:=\left\{ \text{tangent space w.r.t. }\boldsymbol{M}\mid\forall\boldsymbol{M}:\text{ }\text{rank}\left(\boldsymbol{M}\right)\leq r\right\} .
\end{equation}
and hence (\ref{eq:Tangent}) arises since the supremum is taken over
all tangent space $T$ associated with rank-$r$ matrices. The last
inequality (\ref{eq:last}) follows from the near-isotropic assumption
of $\mathcal{B}_{i}$ (i.e. (\ref{eq:B_bounded_RIP})).

The first step is to prove that $\mathbb{E}\left[\delta_{r}\right]\leq\epsilon$
for some small constant $\epsilon>0$. For sufficiently large $n$,
it suffices to prove that 
\begin{equation}
E:=\mathbb{E}\left[\sup_{T\in\mathcal{M}_{r}^{\mathrm{t}}}\left\Vert \mathcal{P}_{T}\left\{ \frac{1}{m}\sum_{i=1}^{m}\left(\mathcal{B}_{i}^{*}\mathcal{B}_{i}-\mathbb{E}\left[\mathcal{B}_{i}^{*}\mathcal{B}_{i}\right]\right)\right\} \mathcal{P}_{T}\right\Vert \right]\leq\delta.\label{eq:DefnE}
\end{equation}
This can be established by a Gaussian process approach as follows.

Observe that $\frac{1}{m}\sum_{i=1}^{m}\left(\mathcal{B}_{i}^{*}\mathcal{B}_{i}-\mathbb{E}\left[\mathcal{B}_{i}^{*}\mathcal{B}_{i}\right]\right)$
is a zero-mean operator, which can be reduced to symmetric operators
via the symmetrization argument (see, e.g. \cite{Tao2012RMT}). Specifically,
let $\tilde{\mathcal{B}}_{i}$ be an independent copy of $\mathcal{B}_{i}$.
Conditioning on $\mathcal{B}_{i}$ we have 
\begin{align*}
 & \mathbb{E}\left[\left.\frac{1}{m}\sum_{i=1}^{m}\mathcal{B}_{i}^{*}\mathcal{B}_{i}-\frac{1}{m}\sum_{i=1}^{m}\tilde{\mathcal{B}}_{i}^{*}\tilde{\mathcal{B}}_{i}\right|\mathcal{B}_{i}\text{ }(1\leq i\leq m)\right]\\
 & \quad=\frac{1}{m}\sum_{i=1}^{m}\mathcal{B}_{i}^{*}\mathcal{B}_{i}-\mathbb{E}\left[\mathcal{B}_{i}^{*}\mathcal{B}_{i}\right].
\end{align*}
Since the function $f\left(\mathcal{X}\right):=\sup_{T\in\mathcal{M}_{r}^{\mathrm{t}}}\left\Vert \mathcal{P}_{T}\mathcal{X}\mathcal{P}_{T}\right\Vert $
is convex in $\mathcal{X}$, applying Jensen's inequality yields 
\begin{align*}
 & \sup_{T\in\mathcal{M}_{r}^{\mathrm{t}}}\left\Vert \mathcal{P}_{T}\left\{ \frac{1}{m}\sum_{i=1}^{m}\left(\mathcal{B}_{i}^{*}\mathcal{B}_{i}-\mathbb{E}\left[\mathcal{B}_{i}^{*}\mathcal{B}_{i}\right]\right)\right\} \mathcal{P}_{T}\right\Vert \\
 & \text{ }\text{ }\footnotesize=\sup_{T\in\mathcal{M}_{r}^{\mathrm{t}}}\left\Vert \mathbb{E}\left[\left.\mathcal{P}_{T}\left(\frac{1}{m}\sum_{i=1}^{m}\mathcal{B}_{i}^{*}\mathcal{B}_{i}-\frac{1}{m}\sum_{i=1}^{m}\tilde{\mathcal{B}}_{i}^{*}\tilde{\mathcal{B}}_{i}\right)\mathcal{P}_{T}\right|\mathcal{B}_{i}\right]\right\Vert \\
 & \text{ }\text{ }\footnotesize\leq\text{ }\mathbb{E}\left[\left.\sup_{T\in\mathcal{M}_{r}^{\mathrm{t}}}\left\Vert \mathcal{P}_{T}\left(\frac{1}{m}\sum_{i=1}^{m}\mathcal{B}_{i}^{*}\mathcal{B}_{i}-\frac{1}{m}\sum_{i=1}^{m}\tilde{\mathcal{B}}_{i}^{*}\tilde{\mathcal{B}}_{i}\right)\mathcal{P}_{T}\right\Vert \right|\mathcal{B}_{i}\right].
\end{align*}
Undoing conditioning over $\mathcal{B}_{i}$ we get 
\begin{align}
 & \mathbb{E}\left[\sup_{T\in\mathcal{M}_{r}^{\mathrm{t}}}\left\Vert \mathcal{P}_{T}\left\{ \frac{1}{m}\sum_{i=1}^{m}\left(\mathcal{B}_{i}^{*}\mathcal{B}_{i}-\mathbb{E}\left[\mathcal{B}_{i}^{*}\mathcal{B}_{i}\right]\right)\right\} \mathcal{P}_{T}\right\Vert \right]\nonumber \\
 & \quad\leq\text{ }\mathbb{E}\left[\sup_{T\in\mathcal{M}_{r}^{\mathrm{t}}}\left\Vert \mathcal{P}_{T}\left(\frac{1}{m}\sum_{i=1}^{m}\mathcal{B}_{i}^{*}\mathcal{B}_{i}-\frac{1}{m}\sum_{i=1}^{m}\tilde{\mathcal{B}}_{i}^{*}\tilde{\mathcal{B}}_{i}\right)\mathcal{P}_{T}\right\Vert \right]\nonumber \\
 & \quad\leq\text{ }2\mathbb{E}\left[\sup_{T\in\mathcal{M}_{r}^{\mathrm{t}}}\left\Vert \frac{1}{m}\sum_{i=1}^{m}\epsilon_{i}\mathcal{P}_{T}\mathcal{B}_{i}^{*}\mathcal{B}_{i}\mathcal{P}_{T}\right\Vert \right],\label{eq:E_ub}
\end{align}
where $\epsilon_{i}$'s are i.i.d. symmetric Bernoulli random variables.
Moreover, if we generate a set of i.i.d. random variables $g_{i}\sim\mathcal{N}\left(0,1\right)$,
then the conditional expectation obeys 
\begin{align*}
 & \mathbb{E}\left[\left.\frac{1}{m}\sum_{i=1}^{m}\left|g_{i}\right|\epsilon_{i}\mathcal{P}_{T}\mathcal{B}_{i}^{*}\mathcal{B}_{i}\mathcal{P}_{T}\right|\epsilon_{i},\mathcal{B}_{i}\right]\\
 & \quad=\sqrt{\frac{2}{\pi}}\frac{1}{m}\sum_{i=1}^{m}\epsilon_{i}\mathcal{P}_{T}\mathcal{B}_{i}^{*}\mathcal{B}_{i}\mathcal{P}_{T}.
\end{align*}
Similarly, by convexity of $f\left(\mathcal{X}\right):=\sup_{T\in\mathcal{M}_{r}^{\mathrm{t}}}\left\Vert \mathcal{P}_{T}\mathcal{X}\mathcal{P}_{T}\right\Vert $,
one can obtain 
\begin{align}
 & \mathbb{E}\left[\sup_{T\in\mathcal{M}_{r}^{\mathrm{t}}}\left\Vert \frac{1}{m}\sum_{i=1}^{m}\epsilon_{i}\mathcal{P}_{T}\mathcal{B}_{i}^{*}\mathcal{B}_{i}\mathcal{P}_{T}\right\Vert \right]\nonumber \\
 & \text{ }\text{ }=\sqrt{\frac{\pi}{2}}\mathbb{E}\left[\sup_{T\in\mathcal{M}_{r}^{\mathrm{t}}}\left\Vert \mathbb{E}\left[\left.\frac{1}{m}\sum_{i=1}^{m}\left|g_{i}\right|\epsilon_{i}\mathcal{P}_{T}\mathcal{B}_{i}^{*}\mathcal{B}_{i}\mathcal{P}_{T}\right|\epsilon_{i},\mathcal{B}_{i}\right]\right\Vert \right]\nonumber \\
 & \text{ }\text{ }\leq\sqrt{\frac{\pi}{2}}\mathbb{E}\left[\sup_{T\in\mathcal{M}_{r}^{\mathrm{t}}}\left\Vert \frac{1}{m}\sum_{i=1}^{m}g_{i}\mathcal{P}_{T}\mathcal{B}_{i}^{*}\mathcal{B}_{i}\mathcal{P}_{T}\right\Vert \right].\label{eq:E_ub_Gaussian}
\end{align}
Putting (\ref{eq:E_ub}) and (\ref{eq:E_ub_Gaussian}) together we
obtain 
\begin{align}
 & \mathbb{E}\left[\sup_{T\in\mathcal{M}_{r}^{\mathrm{t}}}\left\Vert \mathcal{P}_{T}\left\{ \frac{1}{m}\sum_{i=1}^{m}\left(\mathcal{B}_{i}^{*}\mathcal{B}_{i}-\mathbb{E}\left[\mathcal{B}_{i}^{*}\mathcal{B}_{i}\right]\right)\right\} \mathcal{P}_{T}\right\Vert \right]\nonumber \\
 & \quad\leq\sqrt{2\pi}\mathbb{E}\left[\sup_{T\in\mathcal{M}_{r}^{\mathrm{t}}}\left\Vert \frac{1}{m}\sum_{i=1}^{m}g_{i}\mathcal{P}_{T}\mathcal{B}_{i}^{*}\mathcal{B}_{i}\mathcal{P}_{T}\right\Vert \right]\nonumber \\
 & \quad=\sqrt{2\pi}\mathbb{E}\left[\sup_{T\in\mathcal{M}_{r}^{\mathrm{t}},\boldsymbol{X}\in T,\left\Vert \boldsymbol{X}\right\Vert _{\text{F}}=1}\left|\frac{1}{m}\sum_{i=1}^{m}g_{i}\left|\mathcal{B}_{i}\left(\boldsymbol{X}\right)\right|^{2}\right|\right].\label{eq:E1Symmetrization}
\end{align}
It then boils down to characterizing the supremum of a Gaussian process.

We now prove the following lemma.

\begin{lemma}\label{lemma:EstimatePAAP}Suppose that $g_{i}\sim\mathcal{N}(0,1)$
are i.i.d. random variables, and that $K\leq n^{2}$. Conditional
on $\mathcal{B}_{i}$'s, we have 
\begin{align}
 & \mathbb{E}\left[\left.\sup_{T\in\mathcal{M}_{r}^{\mathrm{t}}}\Bigg\|\mathcal{P}_{T}\left(\sum_{i=1}^{m}g_{i}\mathcal{B}_{i}^{*}\mathcal{B}_{i}\right)\mathcal{P}_{T}\Bigg\|\text{ }\right|\mathcal{B}_{i}\text{ }(1\leq i\leq m)\right]\nonumber \\
 & \quad\leq C_{14}\sqrt{r}K\log^{3}n\sup_{T:T\in\mathcal{M}_{r}^{\mathrm{t}}}\sqrt{\left\Vert \sum_{i=1}^{m}\mathcal{P}_{T}\mathcal{B}_{i}^{*}\mathcal{B}_{i}\mathcal{P}_{T}\right\Vert }.\label{eq:EstimatePAAP-lem}
\end{align}
\end{lemma}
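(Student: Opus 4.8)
The plan is to view the left-hand side as the expected supremum of a Gaussian process, apply Dudley's entropy bound, and then control the resulting entropy integral by Rudelson's empirical (Maurey-type) method, following the template of \cite{rudelson2008sparse} and of \cite{liu2011universal} for Pauli measurements.

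First I would record the reduction. Conditioned on the $\mathcal{B}_i$'s, the operator $\mathcal{G}:=\sum_{i=1}^m g_i\mathcal{B}_i^*\mathcal{B}_i$ is self-adjoint, so for every $T\in\mathcal{M}_r^{\mathrm{t}}$ one has $\big\|\mathcal{P}_T\mathcal{G}\mathcal{P}_T\big\|=\sup_{\boldsymbol{X}\in T,\,\|\boldsymbol{X}\|_{\mathrm{F}}\le 1}\big|\sum_i g_i|\mathcal{B}_i(\boldsymbol{X})|^2\big|$. Hence the quantity to be bounded is $\mathbb{E}_g\sup_{\boldsymbol{X}\in U}|Z_{\boldsymbol{X}}|$, where $Z_{\boldsymbol{X}}:=\sum_i g_i|\mathcal{B}_i(\boldsymbol{X})|^2$ and $U$ is the union, over all rank-$r$ tangent spaces $T$, of the unit Frobenius balls of $T$; this $\{Z_{\boldsymbol{X}}\}$ is a centered Gaussian process since $Z_{\boldsymbol{0}}=0$. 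The next step is to bound its canonical metric. Using the identity $a^2-b^2=(a-b)(a+b)$ together with H\"older's inequality,
\begin{equation*}
d(\boldsymbol{X},\boldsymbol{X}')^2=\sum_i\big(|\mathcal{B}_i(\boldsymbol{X})|^2-|\mathcal{B}_i(\boldsymbol{X}')|^2\big)^2\le\Big(\max_i|\mathcal{B}_i(\boldsymbol{X}-\boldsymbol{X}')|^2\Big)\sum_i|\mathcal{B}_i(\boldsymbol{X}+\boldsymbol{X}')|^2 .
\end{equation*}
Since each $\boldsymbol{X}\in U$ lies in a rank-$r$ tangent space, $\sum_i|\mathcal{B}_i(\boldsymbol{X})|^2=\langle\boldsymbol{X},(\sum_i\mathcal{P}_T\mathcal{B}_i^*\mathcal{B}_i\mathcal{P}_T)\boldsymbol{X}\rangle\le R\|\boldsymbol{X}\|_{\mathrm{F}}^2$ with $R:=\sup_{T\in\mathcal{M}_r^{\mathrm{t}}}\big\|\sum_i\mathcal{P}_T\mathcal{B}_i^*\mathcal{B}_i\mathcal{P}_T\big\|$, so $\sum_i|\mathcal{B}_i(\boldsymbol{X}+\boldsymbol{X}')|^2\le 4R$ and therefore $d(\boldsymbol{X},\boldsymbol{X}')\le 2\sqrt{R}\,\|\boldsymbol{X}-\boldsymbol{X}'\|_{\mathcal{B},\infty}$, where $\|\boldsymbol{W}\|_{\mathcal{B},\infty}:=\max_i|\langle\boldsymbol{B}_i,\boldsymbol{W}\rangle|$. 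Because $|\langle\boldsymbol{B}_i,\boldsymbol{X}\rangle|\le\|\boldsymbol{B}_i\|\,\|\boldsymbol{X}\|_*\le K\sqrt{2r}$ for tangent elements of unit Frobenius norm (rank at most $2r$), the diameter of $U$ in $\|\cdot\|_{\mathcal{B},\infty}$ is at most $2K\sqrt{2r}$. Dudley's inequality then gives
\begin{equation*}
\mathbb{E}_g\sup_{\boldsymbol{X}\in U}|Z_{\boldsymbol{X}}|\le C\sqrt{R}\int_{0}^{2K\sqrt{2r}}\sqrt{\log N\big(U,\|\cdot\|_{\mathcal{B},\infty},u\big)}\,du ,
\end{equation*}
so what remains is to show the entropy integral is $\lesssim\sqrt{r}\,K\log^{3}n$.

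To estimate that integral I would split it at a scale $u_0$ equal to $K\sqrt{r}$ times an inverse polynomial of $n$. On $(0,u_0]$ the crude volumetric bound $\log N(U,\|\cdot\|_{\mathcal{B},\infty},u)\lesssim rn\log(nK/u)$ suffices — the union of tangent-space balls has metric dimension $O(rn)$ and $\|\cdot\|_{\mathcal{B},\infty}\le\sqrt{n}K\|\cdot\|_{\mathrm{F}}$ — and its contribution $\lesssim u_0\sqrt{rn\log n}$ is negligible. On $[u_0,2K\sqrt{2r}]$ I would invoke Rudelson's empirical method: writing $\boldsymbol{X}\in U\subseteq\sqrt{2r}\cdot\{\text{nuclear-norm ball}\}$ as a convex combination of rank-one matrices and averaging $L\asymp rK^2(\log n)/u^2$ i.i.d. draws, and using that every rank-one atom $\boldsymbol{u}\boldsymbol{v}^{\top}$ satisfies $\|\boldsymbol{u}\boldsymbol{v}^{\top}\|_{\mathcal{B},\infty}=\max_i|\boldsymbol{u}^{\top}\boldsymbol{B}_i\boldsymbol{v}|\le K$ (so only the \emph{operator-norm} bound $K$, not the far larger $\|\boldsymbol{B}_i\|_{\mathrm{F}}$, enters), one obtains $\log N(U,\|\cdot\|_{\mathcal{B},\infty},u)\lesssim rK^2\,\mathrm{polylog}(n)/u^2$. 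Integrating $\sqrt{\log N}\lesssim\sqrt{r}\,K\,\sqrt{\mathrm{polylog}(n)}/u$ over $[u_0,2K\sqrt{2r}]$ costs one more logarithmic factor through $\int du/u$, leaving $\lesssim\sqrt{r}\,K\log^{3}n$, which combined with the previous display yields the lemma.

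The hard part will be this last step — making the entropy estimate sharp enough that the ambient dimension $n$ enters only through logarithmic factors and not as $\sqrt{n}$. This is precisely the delicate point in Rudelson's argument: one must discretize the rank-one atoms in the $\|\cdot\|_{\mathcal{B},\infty}$ geometry, control the rank of the Maurey output, and glue the coarse-scale (empirical) and fine-scale (volumetric) regimes with the right choice of $u_0$, exactly as carried out in \cite{rudelson2008sparse,liu2011universal}; the accumulated logarithmic losses here, together with the passage from this expectation bound to a high-probability statement, are what produce the $\log^{7}n$ requirement in Theorem~\ref{thm:RIP_Isotropic}.
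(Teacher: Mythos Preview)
Your proposal is correct and follows essentially the same route as the paper: reduce to the supremum of a Gaussian process indexed by rank-$2r$ unit Frobenius matrices, bound the canonical metric by $\sqrt{R}\,\|\cdot\|_{\mathcal{B},\infty}$ with $\|\boldsymbol{W}\|_{\mathcal{B},\infty}=\max_i|\langle\boldsymbol{B}_i,\boldsymbol{W}\rangle|$, and apply Dudley's inequality. The one presentational difference is that the paper, after noting $U\subseteq\sqrt{2r}\,\mathcal{D}^1$ with $\mathcal{D}^1$ the nuclear-norm ball, factors out the $\sqrt{r}$ by a change of variable and then simply invokes the known entropy-integral bound $\int_0^\infty\sqrt{\log N(\mathcal{D}^1,\|\cdot\|_{\mathcal{B}},u)}\,du\lesssim K(\log n)^{5/2}\sqrt{\log m}$ from \cite{aubrun2009almost}; you instead sketch that bound from scratch via the two-regime Maurey/Rudelson split, which is exactly what the cited reference does.
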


\begin{proof}See Appendix \ref{sec:Proof-of-Lemma-EstimatePAAP}.\end{proof}

Combining Lemma \ref{lemma:EstimatePAAP} with (\ref{eq:E1Symmetrization})
and undoing the conditioning on $\mathcal{B}_{i}$'s yield 
\begin{align*}
 & \mathbb{E}\left[\sup_{T\in\mathcal{M}_{r}^{\mathrm{t}}}\left\Vert \mathcal{P}_{T}\left\{ \frac{1}{m}\sum_{i=1}^{m}\left(\mathcal{B}_{i}^{*}\mathcal{B}_{i}-\mathbb{E}\mathcal{B}_{i}^{*}\mathcal{B}_{i}\right)\right\} \mathcal{P}_{T}\right\Vert \right]\\
 & \quad\leq\text{ }\frac{C_{15}\sqrt{r}K\log^{3}n}{m}\cdot\mathbb{E}\left[\sqrt{\sup_{T:T\in\mathcal{M}_{r}^{\mathrm{t}}}\left\Vert \sum_{i=1}^{m}\mathcal{P}_{T}\mathcal{B}_{i}^{*}\mathcal{B}_{i}\mathcal{P}_{T}\right\Vert }\right]\\
 & \quad\leq\frac{C_{15}\sqrt{r}K\log^{3}n}{\sqrt{m}}\sqrt{\mathbb{E}\left[\sup_{T:T\in\mathcal{M}_{r}^{\mathrm{t}}}\left\Vert \frac{1}{m}\sum_{i=1}^{m}\mathcal{P}_{T}\mathcal{B}_{i}^{*}\mathcal{B}_{i}\mathcal{P}_{T}\right\Vert \right]}
\end{align*}
for some universal constant $C_{15}>0$, where the last inequality
follows from Jensen's inequality. Recall the definition of $E$ in
(\ref{eq:DefnE}), then the above inequality implies 
\[
E\leq C_{15}\left(\frac{\sqrt{r}K\log^{3}n}{\sqrt{m}}\right)\sqrt{E+1},
\]
or more concretely, 
\begin{equation}
\mathbb{E}\left[\delta_{r}\right]\leq E\leq2C_{15}\frac{\sqrt{r}K\log^{3}n}{\sqrt{m}}<1\label{eq:BoundE}
\end{equation}
as soon as $m>\left(2C_{15}\sqrt{r}K\log^{3}n\right)^{2}$.

Now that we have established that $\mathbb{E}\left[\delta_{r}\right]$
can be a small constant if $m>4C_{15}^{2}rK^{2}\log^{6}n$, it remains
to show that $\delta_{r}$ sharply concentrates around $\mathbb{E}\sigma_{r}$.
To this end, consider the Banach space $\Upsilon$ of operators $\mathcal{H}:\mathbb{R}^{n\times n}\mapsto\mathbb{R}^{n\times n}$
equipped with the norm 
\[
\left\Vert \mathcal{H}\right\Vert _{\Upsilon}:=\sup_{T\in\mathcal{M}_{r}^{\mathrm{t}}}\left\Vert \mathcal{P}_{T}\mathcal{H}\mathcal{P}_{T}\right\Vert .
\]
Let $\varepsilon_{i}$'s be i.i.d. symmetric Bernoulli variables,
then the symmetrization trick (see, e.g. \cite{Tao2012RMT}) yields
\begin{align*}
 & \mathbb{E}\left[\left\Vert \frac{1}{m}\sum_{i=1}^{m}\mathcal{B}_{i}^{*}\mathcal{B}_{i}-\mathbb{E}\left[\mathcal{B}_{i}^{*}\mathcal{B}_{i}\right]\right\Vert _{\Upsilon}\right]\\
 & \quad\leq\mathbb{E}\left[\left\Vert \frac{1}{m}\sum_{i=1}^{m}\varepsilon_{i}\mathcal{B}_{i}^{*}\mathcal{B}_{i}\right\Vert _{\Upsilon}\right]\\
 & \quad\leq2\mathbb{E}\left\Vert \frac{1}{m}\sum_{i=1}^{m}\mathcal{B}_{i}^{*}\mathcal{B}_{i}-\mathbb{E}\left[\mathcal{B}_{i}^{*}\mathcal{B}_{i}\right]\right\Vert _{\Upsilon},
\end{align*}
and 
\begin{align*}
 & \mathbb{P}\left\{ \left\Vert \frac{1}{m}\sum_{i=1}^{m}\mathcal{B}_{i}^{*}\mathcal{B}_{i}-\mathbb{E}\left[\mathcal{B}_{i}^{*}\mathcal{B}_{i}\right]\right\Vert _{\Upsilon}>\right.\\
 & \quad\quad\quad\quad\left.2\mathbb{E}\left[\left\Vert \frac{1}{m}\sum_{i=1}^{m}\mathcal{B}_{i}^{*}\mathcal{B}_{i}-\mathbb{E}\left[\mathcal{B}_{i}^{*}\mathcal{B}_{i}\right]\right\Vert _{\Upsilon}\right]+u\right\} \\
 & \quad\leq\mathbb{P}\left\{ \left\Vert \frac{1}{m}\sum_{i=1}^{m}\left(\mathcal{B}_{i}^{*}\mathcal{B}_{i}-\tilde{\mathcal{B}}_{i}^{*}\tilde{\mathcal{B}}_{i}\right)\right\Vert _{\Upsilon}>u\right\} \\
 & \quad\leq2\mathbb{P}\left\{ \left\Vert \frac{1}{m}\sum_{i=1}^{m}\varepsilon_{i}\mathcal{B}_{i}^{*}\mathcal{B}_{i}\right\Vert _{\Upsilon}>\frac{u}{2}\right\} ,
\end{align*}
where $\tilde{\mathcal{B}}_{i}$ is an independent copy of $\mathcal{B}_{i}$.
Note that $\varepsilon_{i}\mathcal{B}_{i}^{*}\mathcal{B}_{i}$'s are
i.i.d. zero-mean random operators.

In addition, for any $1\leq i\leq m$, we know that 
\begin{align*}
\left\Vert \epsilon_{i}\mathcal{B}_{i}^{*}\mathcal{B}_{i}\right\Vert _{\Upsilon} & =\max_{T\in\mathcal{M}_{r}^{\mathrm{t}}}\left\Vert \mathcal{P}_{T}\mathcal{B}_{i}^{*}\mathcal{B}_{i}\mathcal{P}_{T}\right\Vert \\
 & =\max_{T\in\mathcal{M}_{r}^{\mathrm{t}},\left\Vert \boldsymbol{X}\right\Vert _{\text{F}}=1}\left|\left\langle \boldsymbol{B}_{i},\mathcal{P}_{T}\left(\boldsymbol{X}\right)\right\rangle \right|^{2}\\
 & \leq\max_{T\in\mathcal{M}_{r}^{\mathrm{t}},\left\Vert \boldsymbol{X}\right\Vert _{\text{F}}=1}\left\Vert \boldsymbol{B}_{i}\right\Vert ^{2}\left\Vert \mathcal{P}_{T}\left(\boldsymbol{X}\right)\right\Vert _{*}^{2}\leq K^{2}r.
\end{align*}
Theorem 3.10 of \cite{rudelson2008sparse} asserts that there is a
universal constant $C_{12}>0$ such that 
\begin{align*}
 & \footnotesize\mathbb{P}\left\{ \left\Vert \frac{1}{m}\sum_{i=1}^{m}\epsilon_{i}\mathcal{B}_{i}^{*}\mathcal{B}_{i}\right\Vert _{\Upsilon}>8q\mathbb{E}\left\Vert \frac{1}{m}\sum_{i=1}^{m}\epsilon_{i}\mathcal{B}_{i}^{*}\mathcal{B}_{i}\right\Vert _{\Upsilon}+\frac{2K^{2}r}{m}l+t\right\} \\
 & \quad\leq\footnotesize\left(\frac{C_{12}}{q}\right)^{l}+2\exp\left(-\frac{t^{2}}{256q\left(\mathbb{E}\left[\left\Vert \frac{1}{m}\sum_{i=1}^{m}\epsilon_{i}\mathcal{B}_{i}^{*}\mathcal{B}_{i}\right\Vert _{\Upsilon}\right]\right)^{2}}\right).
\end{align*}
If we take $q=2C_{12}$, $l=C_{13}\log n$ and $t=C_{14}\sqrt{\log n}\mathbb{E}\left[\left\Vert \frac{1}{m}\sum_{i=1}^{m}\epsilon_{i}\mathcal{B}_{i}^{*}\mathcal{B}_{i}\right\Vert _{\Upsilon}\right]$,
then for sufficiently large $C_{13}$ and $C_{14}$, there exists
an absolute constant $C_{20}>0$ such that if $m>C_{20}rK^{2}\log^{7}n$,
then for any small positive constant $\delta$ we have 
\begin{align*}
\left\Vert \frac{1}{m}\sum_{i=1}^{m}\epsilon_{i}\mathcal{B}_{i}^{*}\mathcal{B}_{i}\right\Vert _{\Upsilon} & <C_{15}\sqrt{\log n}\mathbb{E}\left[\left\Vert \frac{1}{m}\sum_{i=1}^{m}\epsilon_{i}\mathcal{B}_{i}^{*}\mathcal{B}_{i}\right\Vert _{\Upsilon}\right]\\
 & <\delta
\end{align*}
with probability exceeding $1-n^{-2}$.

Now that we have ensured a small RIP-$\ell_{2}/\ell_{2}$ constant,
repeating the argument as in \cite{CandesPlan2011Tight,RecFazPar07}
implies 
\begin{equation}
\Vert\hat{\boldsymbol{\Sigma}}-\boldsymbol{\Sigma}\Vert_{\mathrm{F}}\leq C_{2}\frac{\epsilon_{2}}{\sqrt{m}}
\end{equation}
for all $\boldsymbol{\Sigma}$ of rank at most $r$. This concludes
the proof.

\section{Proof of Lemma \ref{lemma:duality}\label{proof:duality}}

We first introduce a few mathematical notations before proceeding
to the proof. Let the singular value decomposition of a rank-$r$
matrix $\bSigma$ be $\bSigma=\boldsymbol{U}{\bf \Lambda}\boldsymbol{V}^{\top}$,
then the tangent space $T$ at the point $\bSigma$ is defined as
$T:=\left\{ \boldsymbol{U}\boldsymbol{M}_{1}+\boldsymbol{M}_{2}\boldsymbol{V}^{\top}\mid\boldsymbol{M}_{1}\in\mathbb{R}^{r\times n},\boldsymbol{M}_{2}\in\mathbb{R}^{n\times r}\right\} $.
We denote by $\mathcal{P}_{T}$ and $\mathcal{P}_{T^{\perp}}$ the
orthogonal projection onto $T$ and its orthogonal complement, respectively.
For notational simplicity, we denote $\boldsymbol{H}_{T}:=\mathcal{P}_{T}\left(\boldsymbol{H}\right)$
and $\boldsymbol{H}_{T^{\perp}}:=\boldsymbol{H}-\mathcal{P}_{T}\left(\boldsymbol{H}\right)$
for any matrices $\boldsymbol{H}\in\mathbb{R}^{n\times n}$. The proof
is inspired by the techniques introduced for operators satisfying
RIP-$\ell_{2}/\ell_{2}$ \cite{RecFazPar07,CandesPlan2011Tight}.

Write $\bSigma:=\bSigma_{r}+\bSigma_{\mathrm{c}}$, where $\bSigma_{r}$
represents the best rank-$r$ approximation of $\bSigma$. Denote
by $T$ the tangent space with respect to $\bSigma_{r}$. Suppose
that the solution to \eqref{tracemin} is given by $\hat{\bSigma}=\bSigma+\boldsymbol{H}$
for some matrix $\boldsymbol{H}$. The optimality of $\hat{\bSigma}$
yields 
\begin{align*}
0 & \geq\left\Vert \bSigma+\boldsymbol{H}\right\Vert _{*}-\left\Vert \bSigma\right\Vert _{*}\\
 & \geq\left\Vert \bSigma_{r}+\boldsymbol{H}\right\Vert _{*}-\left\Vert \bSigma_{c}\right\Vert _{*}-\left\Vert \bSigma\right\Vert _{*}\\
 & \geq\left\Vert \bSigma_{r}+\boldsymbol{H}_{T^{\perp}}\right\Vert _{*}-\left\Vert \boldsymbol{H}_{T}\right\Vert _{*}-\left\Vert \bSigma_{r}\right\Vert _{*}-2\left\Vert \bSigma_{\mathrm{c}}\right\Vert _{*}\\
 & =\left\Vert \bSigma_{r}\right\Vert _{*}+\left\Vert \boldsymbol{H}_{T^{\perp}}\right\Vert _{*}-\left\Vert \boldsymbol{H}_{T}\right\Vert _{*}-\left\Vert \bSigma_{r}\right\Vert _{*}-2\left\Vert \bSigma_{\mathrm{c}}\right\Vert _{*},
\end{align*}
which leads to 
\begin{equation}
\left\Vert \boldsymbol{H}_{T^{\perp}}\right\Vert _{*}\leq\left\Vert \boldsymbol{H}_{T}\right\Vert _{*}+2\left\Vert \bSigma_{\mathrm{c}}\right\Vert _{*}.\label{eq:HtHtperp-Approx}
\end{equation}

We then divide $\boldsymbol{H}_{T^{\perp}}$ into $M=\left\lceil \frac{n-r}{K_{1}}\right\rceil $
orthogonal matrices $\boldsymbol{H}_{1}$, $\boldsymbol{H}_{2}$,
$\cdots$, $\boldsymbol{H}_{M}$ satisfying the following: (i) the
largest singular value of $\boldsymbol{H}_{i+1}$ does not exceed
the smallest non-zero singular value of $\boldsymbol{H}_{i}$, and
(ii) 
\begin{equation}
\left\Vert \boldsymbol{H}_{T^{\perp}}\right\Vert _{*}=\sum_{i=1}^{M}\left\Vert \boldsymbol{H}_{i}\right\Vert _{*}\label{eq:RequirementH}
\end{equation}
and $\text{rank}\left(\boldsymbol{H}_{i}\right)=K_{1}$ for $1\leq i\leq M-1$.
Along with the bound \eqref{eq:HtHtperp-Approx}, this yields that
\begin{align}
\sum_{i\geq2}\left\Vert \boldsymbol{H}_{i}\right\Vert _{\text{F}} & \leq\frac{1}{\sqrt{K_{1}}}\sum_{i\geq2}\left\Vert \boldsymbol{H}_{i-1}\right\Vert _{*}\leq\frac{1}{\sqrt{K_{1}}}\left\Vert \boldsymbol{H}_{T^{\perp}}\right\Vert _{*}\nonumber \\
 & \leq\frac{1}{\sqrt{K_{1}}}\left(\left\Vert \boldsymbol{H}_{T}\right\Vert _{*}+2\left\Vert \bSigma_{\mathrm{c}}\right\Vert _{*}\right).\label{eq:HtperpUBApprox}
\end{align}
Since the feasibility constraint requires $\left\Vert \mathcal{A}\left(\bSigma\right)-\boldsymbol{y}\right\Vert _{1}\leq\epsilon_{1}$,
we have $\|\mathcal{A}(\bH)\|_{1}\leq\left\Vert \mathcal{A}\left(\bSigma\right)-\boldsymbol{y}\right\Vert _{1}+\left\Vert \mathcal{A}\left(\hat{\bSigma}\right)-\boldsymbol{y}\right\Vert _{1}\leq2\epsilon_{1}$,
and then following from the definition $\boldsymbol{B}_{i}=\boldsymbol{A}_{2i-1}-\boldsymbol{A}_{2i}$
that 
\[
\frac{1}{m}\left\Vert \mathcal{B}(\bH)\right\Vert _{1}\leq\frac{1}{m}\left\Vert \mathcal{A}(\bH)\right\Vert _{1}\leq\frac{2\epsilon_{1}}{m},
\]
yielding 
\begin{align*}
\frac{2\epsilon_{1}}{m} & \geq\frac{1}{m}\left\Vert \mathcal{B}\left(\boldsymbol{H}\right)\right\Vert _{1}\\
 & \geq\frac{1}{m}\left\Vert \mathcal{B}\left(\boldsymbol{H}_{T}+\boldsymbol{H}_{1}\right)\right\Vert _{1}-\sum_{i\geq2}\frac{1}{m}\left\Vert \mathcal{B}\left(\boldsymbol{H}_{i}\right)\right\Vert _{1}\\
 & \geq\left(1-\delta_{2r+K_{1}}^{\text{lb}}\right)\left\Vert \boldsymbol{H}_{T}+\boldsymbol{H}_{1}\right\Vert _{\text{F}}-\left(1+\delta_{K_{1}}^{\text{ub}}\right)\sum_{i\geq2}\left\Vert \boldsymbol{H}_{i}\right\Vert _{\text{F}}\\
 & \geq\frac{(1-\delta_{2r+K_{1}}^{\text{lb}})}{\sqrt{2}}\left(\left\Vert \boldsymbol{H}_{T}\right\Vert _{\text{F}}+\left\Vert \boldsymbol{H}_{1}\right\Vert _{\text{F}}\right)\\
 & \quad\quad\quad-\frac{\left(1+\delta_{K_{1}}^{\text{ub}}\right)}{\sqrt{K_{1}}}\left(\left\Vert \boldsymbol{H}_{T}\right\Vert _{*}+2\left\Vert \boldsymbol{\Sigma}_{\mathrm{c}}\right\Vert _{*}\right).
\end{align*}
By reorganizing the terms and using the fact that $\|\bH_{T}\|_{*}\leq\sqrt{2r}\|\bH_{T}\|_{\text{F}}$,
one can derive 
\begin{align}
 & \footnotesize\left[\frac{(1-\delta_{2r+K_{1}}^{\text{lb}})}{\sqrt{2}}-\frac{\left(1+\delta_{K_{1}}^{\text{ub}}\right)\sqrt{2r}}{\sqrt{K_{1}}}\right]\left\Vert \boldsymbol{H}_{T}\right\Vert _{\text{F}}+\frac{(1-\delta_{2r+K_{1}}^{\text{lb}})}{\sqrt{2}}\left\Vert \boldsymbol{H}_{1}\right\Vert _{\text{F}}\nonumber \\
 & \quad\leq\frac{2\left(1+\delta_{K_{1}}^{\text{ub}}\right)}{\sqrt{K_{1}}}\left\Vert \bSigma_{\mathrm{c}}\right\Vert _{*}+\frac{2\epsilon_{1}}{m}.\label{eq:ApproxBound}
\end{align}

The bound (\ref{eq:ApproxBound}) allows us to see that if $\frac{1-\delta_{2r+K_{1}}^{\text{lb}}}{\sqrt{2}}-\left(1+\delta_{K_{1}}^{\text{ub}}\right)\sqrt{\frac{2r}{K_{1}}}\geq\beta_{1}>0$
for some absolute constant $\beta_{1}$, then one has 
\begin{equation}
\left\Vert \boldsymbol{H}_{T}\right\Vert _{\text{F}}+\left\Vert \boldsymbol{H}_{1}\right\Vert _{\text{F}}\leq\frac{2}{\beta_{1}}\left(\frac{\left(1+\delta_{K_{1}}^{\text{ub}}\right)}{\sqrt{K_{1}}}\left\Vert \bSigma_{\mathrm{c}}\right\Vert _{*}+\frac{\epsilon_{1}}{m}\right).\label{eq:HtH1Bound}
\end{equation}
On the other hand, \eqref{eq:HtperpUBApprox} allows us to bound 
\begin{align}
\sum_{i\geq2}\left\Vert \boldsymbol{H}_{i}\right\Vert _{\text{F}} & \leq\frac{1}{\sqrt{K_{1}}}\left(\left\Vert \boldsymbol{H}_{T}\right\Vert _{*}+2\left\Vert \bSigma_{\mathrm{c}}\right\Vert _{*}\right)\nonumber \\
 & \leq\sqrt{\frac{2r}{K_{1}}}\left\Vert \boldsymbol{H}_{T}\right\Vert _{\text{F}}+\frac{2}{\sqrt{K_{1}}}\left\Vert \bSigma_{\mathrm{c}}\right\Vert _{*}.\label{eq:HiBoundRest}
\end{align}
Putting the above computation together establishes 
\begin{align*}
\left\Vert \boldsymbol{H}\right\Vert _{\text{F}} & \leq\left\Vert \boldsymbol{H}_{T}\right\Vert _{\text{F}}+\left\Vert \boldsymbol{H}_{1}\right\Vert _{\text{F}}+\sum_{i\geq2}\left\Vert \boldsymbol{H}_{i}\right\Vert _{\text{F}}\\
 & \leq\left(1+\sqrt{\frac{2r}{K_{1}}}\right)\left(\left\Vert \boldsymbol{H}_{T}\right\Vert _{\text{F}}+\left\Vert \boldsymbol{H}_{1}\right\Vert _{\text{F}}\right)+\frac{2}{\sqrt{K_{1}}}\left\Vert \bSigma_{\mathrm{c}}\right\Vert _{*}\\
 & \leq\left(\frac{C_{1}}{\beta_{1}}+C_{3}\right)\frac{\left\Vert \bSigma_{\mathrm{c}}\right\Vert _{*}}{\sqrt{K_{1}}}+\frac{C_{2}}{\beta_{1}}\cdot\frac{\epsilon_{1}}{m}
\end{align*}
for some positive universal constants $C_{1}$, $C_{2}$ and $C_{3}$.

\section{Proof of Lemma~\ref{lemma:duality-sparse}}

\label{proof:duality_sparse} For an index set $\Omega$, let $\mathcal{P}_{\Omega}$
as the orthogonal projection onto the index set $\Omega$. We denote
$\bH_{\Omega}$ as the matrix supported on $\bH_{\Omega}=\mathcal{P}_{\Omega}(\bH)$
and $\bH_{\Omega^{\perp}}$ as the projection onto the complement
support set $\Omega^{\perp}$. Write $\hat{\boldsymbol{\Sigma}}=\boldsymbol{\Sigma}+\bH$,
and $\boldsymbol{\Sigma}=\boldsymbol{\Sigma}_{\Omega_{0}}+\boldsymbol{\Sigma}_{\Omega_{0}^{c}}$,
where $\Omega_{0}$ denotes the support of the $k$ largest entries
of $\bSigma$. The feasibility constraint yields 
\[
\frac{1}{m}\left\Vert \mathcal{B}(\bH)\right\Vert _{1}\leq\frac{2}{m}\left\Vert \mathcal{A}(\hat{\boldsymbol{\Sigma}})-\mathcal{A}(\boldsymbol{\Sigma})\right\Vert _{1}\leq\frac{2\epsilon_{1}}{m}.
\]
The triangle inequality of $\ell_{1}$ norm gives 
\begin{align*}
\|\hat{\boldsymbol{\Sigma}}-\boldsymbol{\Sigma}\|_{1} & \leq\|\hat{\boldsymbol{\Sigma}}-\boldsymbol{\Sigma}_{\Omega_{0}}\|_{1}+\|\boldsymbol{\Sigma}_{\Omega_{0}^{\text{c}}}\|_{1}
\end{align*}

Decompose $\bH_{\Omega_{0}^{c}}$ into a collection of $M_{2}$ matrices
$\bH_{\Omega_{1}}$, $\bH_{\Omega_{2}}$, $\ldots$, $\boldsymbol{H}_{\Omega_{M_{2}}}$,
where $\left\Vert \bH_{\Omega_{i}}\right\Vert _{0}=K_{2}$ for all
$1\leq i<M_{2}$, $\bH_{\Omega_{1}}$ consists of the $K_{2}$ largest
entries of $\bH_{\Omega_{0}^{c}}$, $\bH_{\Omega_{2}}$ consists of
the $K_{2}$ largest entries of $\bH_{(\Omega_{0}\cup\Omega_{1})^{c}}$,
and so on. A similar argument as in \cite{candes2008restricted} implies
\begin{equation}
\sum_{i\geq2}\|\bH_{\Omega_{i}}\|_{\text{F}}\leq\frac{1}{\sqrt{K_{2}}}\sum_{i\geq1}\|\bH_{\Omega_{i}}\|_{1}=\frac{1}{\sqrt{K_{2}}}\|\bH_{\Omega_{0}^{c}}\|_{1}.\label{candes}
\end{equation}
The optimality of $\hat{\bSigma}$ yields 
\begin{align*}
\|\bSigma\|_{1} & \geq\|\bSigma+\boldsymbol{H}\|_{1}=\|\bSigma_{\Omega_{0}}+\boldsymbol{H}\|_{1}-\|\bSigma_{\Omega_{0}^{\text{c}}}\|_{1}\\
 & \geq\|\bSigma_{\Omega_{0}}\|_{1}+\|\bH_{\Omega_{0}^{c}}\|_{1}-\|\bH_{\Omega_{0}}\|_{1}-\|{\bSigma}_{\Omega^{c}}\|_{1},
\end{align*}
which gives 
\[
\|\bH_{\Omega_{0}^{c}}\|_{1}\leq\|\bH_{\Omega_{0}}\|_{1}+2\|{\bSigma}_{\Omega^{c}}\|_{1}.
\]
Combining the above bound and \eqref{candes} leads to 
\begin{align}
\sum_{i\geq2}\|\bH_{\Omega_{i}}\|_{\text{F}} & \leq\frac{1}{\sqrt{K_{2}}}(\|\bH_{\Omega_{0}}\|_{1}+2\|{\bSigma}_{\Omega^{c}}\|_{1})\nonumber \\
 & \leq\frac{1}{\sqrt{K_{2}}}\left(\sqrt{k}\|\bH_{\Omega_{0}}\|_{\text{F}}+2\|{\bSigma}_{\Omega^{c}}\|_{1}\right).\label{residual_norm}
\end{align}
It then follows that 
\begin{align*}
\frac{2\epsilon_{1}}{m}\text{ } & \geq\frac{1}{m}\left\Vert \mathcal{B}\left(\boldsymbol{H}\right)\right\Vert _{1}\\
 & \geq\frac{1}{m}\left\Vert \mathcal{B}\left(\boldsymbol{H}_{\Omega_{0}}+\boldsymbol{H}_{\Omega_{1}}\right)\right\Vert _{1}-\frac{1}{m}\sum_{i\geq2}\left\Vert \mathcal{B}\left(\boldsymbol{H}_{\Omega_{i}}\right)\right\Vert _{1}\\
 & \geq\left(1-\gamma_{k+K_{2}}^{\text{lb}}\right)\left\Vert \boldsymbol{H}_{\Omega_{0}}+\boldsymbol{H}_{\Omega_{1}}\right\Vert _{\text{F}}-\left(1+\gamma_{K_{2}}^{\text{ub}}\right)\sum_{i\geq2}\left\Vert \boldsymbol{H}_{\Omega_{i}}\right\Vert _{\text{F}}\\
 & \geq\frac{(1-\gamma_{k+K_{2}}^{\text{lb}})}{\sqrt{2}}\left(\left\Vert \boldsymbol{H}_{\Omega_{0}}\right\Vert _{\text{F}}+\left\Vert \boldsymbol{H}_{\Omega_{1}}\right\Vert _{\text{F}}\right)\\
 & \quad\quad\quad-\frac{\left(1+\gamma_{K_{2}}^{\text{ub}}\right)}{\sqrt{K_{2}}}\left(\sqrt{k}\left\Vert \boldsymbol{H}_{\Omega_{0}}\right\Vert _{\text{F}}+2\left\Vert \bSigma_{\Omega^{c}}\right\Vert _{1}\right).
\end{align*}
Reorganizing the above equation yields 
\begin{align*}
 & \footnotesize\left[\frac{(1-\gamma_{k+K_{2}}^{\text{lb}})}{\sqrt{2}}-\frac{\left(1+\gamma_{K_{2}}^{\text{ub}}\right)\sqrt{k}}{\sqrt{K_{2}}}\right]\|\boldsymbol{H}_{\Omega_{0}}\|_{\text{F}}+\frac{(1-\gamma_{k+K_{2}}^{\text{lb}})}{\sqrt{2}}\|\boldsymbol{H}_{\Omega_{1}}\|_{\text{F}}\\
 & \quad\leq\frac{2\left(1+\gamma_{K_{2}}^{\text{ub}}\right)}{\sqrt{K_{2}}}\left\Vert \boldsymbol{\Sigma}_{\Omega^{c}}\right\Vert _{1}+\frac{2\epsilon_{1}}{m}.
\end{align*}
Recalling Assumption \eqref{condition-sparse}, one has 
\[
\|\boldsymbol{H}_{\Omega_{0}}\|_{\text{F}}+\|\boldsymbol{H}_{\Omega_{1}}\|_{\text{F}}\leq\frac{2}{\beta_{2}}\left[\frac{\left(1+\gamma_{K_{2}}^{\text{ub}}\right)}{\sqrt{K_{2}}}\left\Vert \boldsymbol{\Sigma}_{\Omega^{c}}\right\Vert _{1}+\frac{\epsilon_{1}}{m}\right].
\]
This along with \eqref{residual_norm} gives 
\[
\left\Vert \boldsymbol{H}\right\Vert _{\text{F}}\leq\left(\frac{C_{1}}{\beta_{2}}+C_{3}\right)\frac{\left\Vert \boldsymbol{\Sigma}_{\Omega^{c}}\right\Vert _{1}}{\sqrt{K_{2}}}+\frac{C_{2}}{\beta_{2}}\frac{\epsilon_{1}}{m}
\]
for some universal constants $C_{1}$, $C_{2}$ and $C_{3}$, as claimed.

\section{Proof of Lemma \ref{lemma:duality-SparsePR-Stable}\label{sec:Proof-of-Lemma-Duality-SparsePR-Stable}}

Before proceeding to the proof, we introduce a few notations for convenience
of presentation. Let $\boldsymbol{X}:=\boldsymbol{x}\boldsymbol{x}^{\top}$,
$\boldsymbol{X}_{\Omega}:=\boldsymbol{x}_{\Omega}\boldsymbol{x}_{\Omega}^{\top}$
and $\boldsymbol{X}_{\text{c}}:=\boldsymbol{X}-\boldsymbol{X}_{\Omega}$,
where $\boldsymbol{x}_{\Omega}$ denotes the best $k$-term approximation
of $\boldsymbol{x}$. We set $\boldsymbol{u}:=\frac{1}{\left\Vert \boldsymbol{x}_{\Omega}\right\Vert _{2}}\boldsymbol{x}_{\Omega}$,
and hence the tangent space $T$ with respect to $\boldsymbol{X}_{\Omega}$
and its orthogonal complement $T^{\perp}$ are characterized by 
\begin{align*}
T & :=\left\{ \boldsymbol{u}\boldsymbol{z}^{\top}+\boldsymbol{z}\boldsymbol{u}^{\top}\mid\boldsymbol{z}\in\mathbb{R}^{n}\right\} ,\\
T^{\perp} & :=\left\{ \left(\boldsymbol{I}-\boldsymbol{u}\boldsymbol{u}^{\top}\right)\boldsymbol{M}\left(\boldsymbol{I}-\boldsymbol{u}\boldsymbol{u}^{\top}\right)\mid\boldsymbol{M}\in\mathbb{R}^{n\times n}\right\} .
\end{align*}
We adopt the notations introduced in \cite{li2012sparse} as follows:
let $\Omega$ denote the support of $\boldsymbol{X}_{\Omega}$, and
decompose the entire matrix space into the direct sum of 3 subspaces
as 
\begin{equation}
\left(T\cap\Omega\right)\oplus\left(T^{\perp}\cap\Omega\right)\oplus\left(\Omega^{\perp}\right).\label{eq:TOmegaDecomposition}
\end{equation}
In fact, one can verify that 
\[
T\cap\Omega=\left\{ \boldsymbol{u}\boldsymbol{z}^{\top}+\boldsymbol{z}\boldsymbol{u}^{\top}\mid\boldsymbol{z}_{\Omega^{c}}=\boldsymbol{0}\right\} ,
\]
and that both the column and row spaces of $T^{\perp}\cap\Omega$
can be spanned by a set of $k-1$ orthonormal vectors that are supported
on $\Omega$ and orthogonal to $\boldsymbol{u}$. As pointed out by
\cite{li2012sparse}, $T$ and $\Omega$ are compatible in the sense
that 
\begin{equation}
\mathcal{P}_{T}\mathcal{P}_{\Omega}=\mathcal{P}_{\Omega}\mathcal{P}_{T}=\mathcal{P}_{T\cap\Omega}.\label{eq:compatible}
\end{equation}
In the following, we will use $\delta_{r,l}^{\mathrm{lb}}$ and $\delta_{r,l}^{\mathrm{ub}}$
to represent $\delta_{k,r,l}^{\mathrm{lb}}$ and $\delta_{k,r,l}^{\mathrm{ub}}$
for brevity, whenever the value of $k$ is clear from the context.

Suppose that $\hat{\boldsymbol{X}}=\boldsymbol{x}\boldsymbol{x}^{\top}+\boldsymbol{H}$
is the solution to (\ref{eq:AlgorithmSparsePR}). Then for any $\boldsymbol{W}\in T^{\perp}$
and $\boldsymbol{Y}\in\Omega^{\perp}$ satisfying $\|\boldsymbol{W}\|\leq1$
and $\|\boldsymbol{Y}\|_{\infty}\leq1$, the matrix $\boldsymbol{u}\boldsymbol{u}^{\top}+\boldsymbol{W}+\lambda\text{sign}\left(\boldsymbol{u}\right)\text{sign}\left(\boldsymbol{u}\right)^{\top}+\lambda\boldsymbol{Y}$
forms a subgradient of the function $\left\Vert \cdot\right\Vert _{*}+\lambda\left\Vert \cdot\right\Vert _{1}$
at point $\boldsymbol{X}_{\Omega}$. If we pick $\boldsymbol{W}$
and $\boldsymbol{Y}$ such that $\boldsymbol{Y}=\text{sgn}\left(\boldsymbol{H}_{\Omega^{\perp}}\right)$
and $\left\langle \boldsymbol{W},\boldsymbol{H}\right\rangle =\left\Vert \boldsymbol{H}_{T^{\perp}\cap\Omega}\right\Vert _{*}$,
then 
\begin{align}
0\geq & \left\Vert \boldsymbol{X}+\boldsymbol{H}\right\Vert _{*}+\lambda\left\Vert \boldsymbol{X}+\boldsymbol{H}\right\Vert _{1}-\left\Vert \boldsymbol{X}\right\Vert _{*}-\lambda\left\Vert \boldsymbol{X}\right\Vert _{1}\label{inequality1}\\
\geq & \left\Vert \boldsymbol{X}_{\Omega}+\boldsymbol{H}\right\Vert _{*}-\left\Vert \boldsymbol{X}_{\text{c}}\right\Vert _{*}+\lambda\left\Vert \boldsymbol{X}_{\Omega}+\boldsymbol{H}\right\Vert _{1}-\lambda\left\Vert \boldsymbol{X}_{\text{c}}\right\Vert _{1}\nonumber \\
 & \quad-\left\Vert \boldsymbol{X}_{\Omega}\right\Vert _{*}-\left\Vert \boldsymbol{X}_{\text{c}}\right\Vert _{*}-\lambda\left\Vert \boldsymbol{X}_{\Omega}\right\Vert _{1}-\lambda\left\Vert \boldsymbol{X}_{\text{c}}\right\Vert _{1}\label{inequality2}\\
\geq & \left\langle \boldsymbol{u}\boldsymbol{u}^{\top}+\boldsymbol{W},\boldsymbol{H}\right\rangle +\left\langle \lambda\text{sign}\left(\boldsymbol{u}\right)\text{sign}\left(\boldsymbol{u}\right)^{\top}+\lambda\boldsymbol{Y},\boldsymbol{H}\right\rangle \nonumber \\
 & \quad-2\left\Vert \boldsymbol{X}_{\text{c}}\right\Vert _{*}-2\lambda\left\Vert \boldsymbol{X}_{\text{c}}\right\Vert _{1}\label{inequality3}\\
= & \left\langle \boldsymbol{u}\boldsymbol{u}^{\top},\boldsymbol{H}_{T}\right\rangle +\lambda\left\langle \mathcal{P}_{T}\left(\text{sign}\left(\boldsymbol{u}\right)\text{sign}\left(\boldsymbol{u}\right)^{\top}\right),\boldsymbol{H}_{T}\right\rangle \nonumber \\
 & \quad+\lambda\left\langle \mathcal{P}_{T^{\perp}}\left(\text{sign}\left(\boldsymbol{u}\right)\text{sign}\left(\boldsymbol{u}\right)^{\top}\right),\boldsymbol{H}_{T^{\perp}}\right\rangle \nonumber \\
 & \quad+\left\Vert \boldsymbol{H}_{T^{\perp}\cap\Omega}\right\Vert _{*}+\lambda\left\Vert \boldsymbol{H}_{\Omega^{\perp}}\right\Vert _{1}-2\left\Vert \boldsymbol{X}_{\text{c}}\right\Vert _{*}-2\lambda\left\Vert \boldsymbol{X}_{\text{c}}\right\Vert _{1}\nonumber \\
\geq & \left\langle \boldsymbol{u}\boldsymbol{u}^{\top}+\lambda\mathcal{P}_{T}\left(\text{sign}\left(\boldsymbol{u}\right)\text{sign}\left(\boldsymbol{u}\right)^{\top}\right),\boldsymbol{H}_{T\cap\Omega}\right\rangle +\left\Vert \boldsymbol{H}_{T^{\perp}\cap\Omega}\right\Vert _{*}\nonumber \\
 & \quad+\lambda\left\Vert \boldsymbol{H}_{\Omega^{\perp}}\right\Vert _{1}-2\left\Vert \boldsymbol{X}_{\text{c}}\right\Vert _{*}-2\lambda\left\Vert \boldsymbol{X}_{\text{c}}\right\Vert _{1},\label{eq:MyNewBD}
\end{align}
where \eqref{inequality1} follows from the optimality of $\hat{\boldsymbol{X}}$,
\eqref{inequality2} follows from the definitions of $\boldsymbol{X}_{\Omega}$
and $\boldsymbol{X}_{\mathrm{c}}$ and the triangle inequality, \eqref{inequality3}
follows from the definition of subgradient. Finally, \eqref{eq:MyNewBD}
follows from the following two facts:

(i) $\boldsymbol{H}_{T^{\perp}}\succeq0$, a consequence of the feasibility
constraint of (\ref{eq:AlgorithmSparsePR}). This further gives 
\begin{align*}
 & \left\langle \mathcal{P}_{T^{\perp}}\left(\text{sign}\left(\boldsymbol{u}\right)\text{sign}\left(\boldsymbol{u}\right)^{\top}\right),\boldsymbol{H}_{T^{\perp}}\right\rangle \\
 & \quad=\left\langle \text{sign}\left(\boldsymbol{u}\right)\text{sign}\left(\boldsymbol{u}\right)^{\top},\boldsymbol{H}_{T^{\perp}}\right\rangle \\
 & \quad=\text{sign}\left(\boldsymbol{u}\right)^{\top}\boldsymbol{H}_{T^{\perp}}\text{sign}\left(\boldsymbol{u}\right)\geq0.
\end{align*}

(ii) It follows from (\ref{eq:compatible}) and the fact $\text{sign}\left(\boldsymbol{u}\right)\text{sign}\left(\boldsymbol{u}\right)^{\top}\in\Omega$
that 
\begin{align}
 & \left\langle \mathcal{P}_{T}\left(\text{sign}\left(\boldsymbol{u}\right)\text{sign}\left(\boldsymbol{u}\right)^{\top}\right),\boldsymbol{H}_{T}\right\rangle \nonumber \\
 & \quad=\left\langle \mathcal{P}_{T\cap\Omega}\left(\text{sign}\left(\boldsymbol{u}\right)\text{sign}\left(\boldsymbol{u}\right)^{\top}\right),\boldsymbol{H}_{T\cap\Omega}\right\rangle .\label{eq:compatibilityTOmega}
\end{align}

Since any matrix in $T$ has rank at most 2, one can bound 
\begin{align}
 & \left\Vert \mathcal{P}_{T}\left(\text{sign}\left(\boldsymbol{u}\right)\text{sign}\left(\boldsymbol{u}\right)^{\top}\right)\right\Vert _{*}^{2}\leq2\left\Vert \mathcal{P}_{T}\left(\text{sign}\left(\boldsymbol{u}\right)\text{sign}\left(\boldsymbol{u}\right)^{\top}\right)\right\Vert _{\text{F}}^{2}\nonumber \\
 & \quad\leq4\left\Vert \boldsymbol{u}\boldsymbol{u}^{\top}\text{sign}\left(\boldsymbol{u}\right)\text{sign}\left(\boldsymbol{u}\right)^{\top}\right\Vert _{\text{F}}^{2}\label{eq:uuT}\\
 & \quad=4\left|\left\langle \boldsymbol{u},\text{sign}\left(\boldsymbol{u}\right)\right\rangle \right|^{2}\left\Vert \boldsymbol{u}\cdot\text{sign}\left(\boldsymbol{u}\right)^{\top}\right\Vert _{\text{F}}^{2}\nonumber \\
 & \quad=4\left|\left\langle \boldsymbol{u},\text{sign}\left(\boldsymbol{u}\right)\right\rangle \right|^{2}\left\Vert \text{sign}\left(\boldsymbol{u}\right)\right\Vert _{\text{F}}^{2}\nonumber \\
 & \quad\leq4k\left\Vert \boldsymbol{u}\right\Vert _{1}^{2}\left\Vert \text{sign}\left(\boldsymbol{u}\right)\right\Vert _{\infty}^{2}\leq4k\left\Vert \boldsymbol{u}\right\Vert _{1}^{2}\leq\frac{4}{\lambda^{2}},\label{eq:BoundTsignU}
\end{align}
where (\ref{eq:uuT}) follows from the definition of $\mathcal{P}_{T}$
that 
\begin{align*}
 & \small\left\Vert \mathcal{P}_{T}\left(\text{sign}\left(\boldsymbol{u}\right)\text{sign}\left(\boldsymbol{u}\right)^{\top}\right)\right\Vert _{\mathrm{F}}^{2}\\
 & \quad=\footnotesize\left\Vert \boldsymbol{u}\boldsymbol{u}^{\top}\text{sign}\left(\boldsymbol{u}\right)\text{sign}\left(\boldsymbol{u}\right)^{\top}+\left(\boldsymbol{I}-\boldsymbol{u}\boldsymbol{u}^{\top}\right)\text{sign}\left(\boldsymbol{u}\right)\text{sign}\left(\boldsymbol{u}\right)^{\top}\boldsymbol{u}\boldsymbol{u}^{\top}\right\Vert _{\mathrm{F}}^{2}\\
 & \quad=\small\left\Vert \boldsymbol{u}\boldsymbol{u}^{\top}\text{sign}\left(\boldsymbol{u}\right)\text{sign}\left(\boldsymbol{u}\right)^{\top}\right\Vert _{\mathrm{F}}^{2}\\
 & \quad\quad\quad\quad\small+\left\Vert \left(\boldsymbol{I}-\boldsymbol{u}\boldsymbol{u}^{\top}\right)\text{sign}\left(\boldsymbol{u}\right)\text{sign}\left(\boldsymbol{u}\right)^{\top}\boldsymbol{u}\boldsymbol{u}^{\top}\right\Vert _{\mathrm{F}}^{2}\\
 & \quad\leq\small\left\Vert \boldsymbol{u}\boldsymbol{u}^{\top}\text{sign}\left(\boldsymbol{u}\right)\text{sign}\left(\boldsymbol{u}\right)^{\top}\right\Vert _{\mathrm{F}}^{2}+\left\Vert \text{sign}\left(\boldsymbol{u}\right)\text{sign}\left(\boldsymbol{u}\right)^{\top}\boldsymbol{u}\boldsymbol{u}^{\top}\right\Vert _{\mathrm{F}}^{2}\\
 & \quad=\small2\left\Vert \boldsymbol{u}\boldsymbol{u}^{\top}\text{sign}\left(\boldsymbol{u}\right)\text{sign}\left(\boldsymbol{u}\right)^{\top}\right\Vert _{\mathrm{F}}^{2},
\end{align*}
and \eqref{eq:BoundTsignU} arises from the assumption on $\lambda$.
Combining \eqref{eq:BoundTsignU} with (\ref{eq:MyNewBD}) yields
\begin{align}
 & \left\Vert \boldsymbol{H}_{T^{\perp}\cap\Omega}\right\Vert _{*}+\lambda\left\Vert \boldsymbol{H}_{\Omega^{\perp}}\right\Vert _{1}\nonumber \\
 & \quad\leq-\left\langle \boldsymbol{u}\boldsymbol{u}^{\top},\boldsymbol{H}_{T\cap\Omega}\right\rangle -\lambda\left\langle \mathcal{P}_{T}\left(\text{sign}\left(\boldsymbol{u}\right)\text{sign}\left(\boldsymbol{u}\right)^{\top}\right),\boldsymbol{H}_{T\cap\Omega}\right\rangle \nonumber \\
 & \quad\quad\quad\quad+2\left\Vert \boldsymbol{X}_{\text{c}}\right\Vert _{*}+2\lambda\left\Vert \boldsymbol{X}_{\text{c}}\right\Vert _{1}\nonumber \\
 & \quad\leq\left|\boldsymbol{u}^{\top}\boldsymbol{H}_{T\cap\Omega}\boldsymbol{u}\right|+\lambda\left\Vert \mathcal{P}_{T}\left(\text{sign}\left(\boldsymbol{u}\right)\text{sign}\left(\boldsymbol{u}\right)^{\top}\right)\right\Vert _{*}\cdot\left\Vert \boldsymbol{H}_{T\cap\Omega}\right\Vert \nonumber \\
 & \quad\quad\quad\quad+2\left\Vert \boldsymbol{X}_{\text{c}}\right\Vert _{*}+2\lambda\left\Vert \boldsymbol{X}_{\text{c}}\right\Vert _{1}\nonumber \\
 & \quad\leq3\left\Vert \boldsymbol{H}_{T\cap\Omega}\right\Vert +2\left\Vert \boldsymbol{X}_{\text{c}}\right\Vert _{*}+2\lambda\left\Vert \boldsymbol{X}_{\text{c}}\right\Vert _{1},\label{eq:TcapOmegaBound-StablePR}
\end{align}
where \eqref{eq:TcapOmegaBound-StablePR} results from $\left\Vert \boldsymbol{u}\right\Vert _{2}=1$
and (\ref{eq:BoundTsignU}).

Divide $\boldsymbol{H}_{T^{\perp}\cap\Omega}$ into $M_{1}:=\left\lceil \frac{k-2}{K_{1}}\right\rceil $
orthogonal matrices $\boldsymbol{H}_{T^{\perp}\cap\Omega}^{\left(1\right)}$,
$\boldsymbol{H}_{T^{\perp}\cap\Omega}^{\left(2\right)}$, $\cdots$,
$\boldsymbol{H}_{T^{\perp}\cap\Omega}^{\left(M_{1}\right)}\in T^{\perp}\cap\Omega$
satisfying the following properties: (i) the largest singular value
of $\boldsymbol{H}_{T^{\perp}\cap\Omega}^{\left(i+1\right)}$ does
not exceed the smallest non-zero singular value of $\boldsymbol{H}_{T^{\perp}\cap\Omega}^{\left(i\right)}$,
and (ii) 
\begin{align*}
\left\Vert \boldsymbol{H}_{T^{\perp}\cap\Omega}\right\Vert _{*} & =\sum_{i=1}^{M}\left\Vert \boldsymbol{H}_{T^{\perp}\cap\Omega}^{\left(i\right)}\right\Vert _{*},\\
\text{rank}\left(\boldsymbol{H}_{T^{\perp}\cap\Omega}^{\left(i\right)}\right) & =K_{1}\quad\text{ }(1\leq i\leq M_{1}-1).
\end{align*}
In the meantime, divide $\boldsymbol{H}_{\Omega^{\perp}}$ into $M_{2}=\left\lceil \frac{n^{2}-k^{2}}{K_{2}}\right\rceil $
orthogonal matrices $\boldsymbol{H}_{\Omega^{\perp}}^{\left(1\right)}$,
$\boldsymbol{H}_{\Omega^{\perp}}^{\left(2\right)}$, $\cdots$, $\boldsymbol{H}_{\Omega^{\perp}}^{\left(M_{2}\right)}\in\Omega^{\perp}$
of \emph{non-overlapping} support such that (i) the largest entry
magnitude of $\boldsymbol{H}_{\Omega^{\perp}}^{\left(i+1\right)}$
does not exceed the magnitude of the smallest non-zero entry of $\boldsymbol{H}_{\Omega^{\perp}}^{\left(i\right)}$,
and (ii) 
\[
\left\Vert \boldsymbol{H}_{\Omega^{\perp}}^{\left(i\right)}\right\Vert _{0}=K_{2}\text{ }(1\leq i\leq M_{2}-1).
\]
This decomposition gives rise to the following bound 
\begin{align*}
\sum_{i=2}^{M_{1}}\left\Vert \boldsymbol{H}_{T^{\perp}\cap\Omega}^{\left(i\right)}\right\Vert _{\text{F}}\leq\sum_{i=2}^{M_{1}}\frac{1}{\sqrt{K_{1}}}\left\Vert \boldsymbol{H}_{T^{\perp}\cap\Omega}^{\left(i-1\right)}\right\Vert _{*}\leq\frac{1}{\sqrt{K_{1}}}\left\Vert \boldsymbol{H}_{T^{\perp}\cap\Omega}\right\Vert _{*},
\end{align*}
which combined with the RIP-$\ell_{2}/\ell_{1}$ property of $\mathcal{B}$
yields 
\begin{align}
\sum_{i=2}^{M_{1}}\frac{1}{m}\left\Vert \mathcal{B}\left(\boldsymbol{H}_{T^{\perp}\cap\Omega}^{\left(i\right)}\right)\right\Vert _{1}\leq\sum_{i=2}^{M_{1}}\frac{\left(1+\delta_{K_{1},K_{2}}^{\text{ub}}\right)}{m}\left\Vert \boldsymbol{H}_{T^{\perp}\cap\Omega}^{\left(i\right)}\right\Vert _{\mathrm{F}}\nonumber \\
\leq\frac{\left(1+\delta_{K_{1},K_{2}}^{\text{ub}}\right)}{\sqrt{K_{1}}}\left\Vert \boldsymbol{H}_{T^{\perp}\cap\Omega}\right\Vert _{*},\label{eq:BHbound}
\end{align}
and, similarly, 
\begin{align}
\sum_{i=2}^{M_{2}}\frac{1}{m}\left\Vert \mathcal{B}\left(\boldsymbol{H}_{\Omega^{\perp}}^{\left(i\right)}\right)\right\Vert _{1} & \leq\sum_{i=2}^{M_{1}}\frac{\left(1+\delta_{K_{1},K_{2}}^{\text{ub}}\right)}{m}\left\Vert \boldsymbol{H}_{\Omega^{\perp}}^{\left(i\right)}\right\Vert _{\mathrm{F}}\label{eq:BHOmegaBound}\\
 & \leq\frac{\left(1+\delta_{K_{1},K_{2}}^{\text{ub}}\right)}{\sqrt{K_{2}}}\left\Vert \boldsymbol{H}_{\Omega^{\perp}}\right\Vert _{1}.\nonumber 
\end{align}
The above argument relies on our construction scheme that $\text{rank}(\boldsymbol{H}_{T^{\perp}\cap\Omega}^{\left(i\right)})\leq K_{1}$,
$\text{supp}\left(\boldsymbol{H}_{T^{\perp}\cap\Omega}^{\left(i\right)}\right)\subseteq\Omega$,
and $\left\Vert \boldsymbol{H}_{\Omega^{\perp}}^{\left(i\right)}\right\Vert _{0}\leq K_{2}$,
and hence all of $\boldsymbol{H}_{T^{\perp}\cap\Omega}^{\left(i\right)}$
and $\boldsymbol{H}_{\Omega^{\perp}}^{\left(i\right)}$ ($i\geq1$)
belong to $\mathcal{M}_{k,K_{1},K_{2}}$.

Set $K_{2}:=\left\lceil \frac{K_{1}}{\lambda^{2}}\right\rceil $,
and hence $\sqrt{\frac{K_{1}}{K_{2}}}\leq\lambda$. Recalling $\boldsymbol{H}=\boldsymbol{H}_{T\cap\Omega}+\boldsymbol{H}_{T^{\perp}\cap\Omega}+\boldsymbol{H}_{\Omega^{\perp}}$,
one can proceed as follows 
\begin{align*}
\frac{2\epsilon_{1}}{m}\geq & \text{ }\small\frac{1}{m}\left\Vert \mathcal{B}\left(\boldsymbol{H}\right)\right\Vert _{1}\\
\geq & \text{ }\small\frac{1}{m}\left\Vert \mathcal{B}\left(\boldsymbol{H}_{T\cap\Omega}+\boldsymbol{H}_{T^{\perp}\cap\Omega}^{(1)}+\boldsymbol{H}_{\Omega^{\perp}}^{(1)}\right)\right\Vert _{1}\\
 & \quad-\small\sum_{i=2}^{M_{1}}\frac{1}{m}\left\Vert \mathcal{B}\left(\boldsymbol{H}_{T^{\perp}\cap\Omega}^{\left(i\right)}\right)\right\Vert _{1}-\sum_{i=2}^{M_{2}}\frac{1}{m}\left\Vert \mathcal{B}\left(\boldsymbol{H}_{\Omega^{\perp}}^{\left(i\right)}\right)\right\Vert _{1}\\
\geq & \small\left(1-\delta_{2K_{1},2K_{2}}^{\text{lb}}\right)\left\Vert \boldsymbol{H}_{T\cap\Omega}+\boldsymbol{H}_{T^{\perp}\cap\Omega}^{(1)}+\boldsymbol{H}_{\Omega^{\perp}}^{(1)}\right\Vert _{\text{F}}\\
 & \quad\small-\frac{\left(1+\delta_{K_{1},K_{2}}^{\text{ub}}\right)\left\Vert \boldsymbol{H}_{T^{\perp}\cap\Omega}\right\Vert _{*}}{\sqrt{K_{1}}}-\frac{\left(1+\delta_{K_{1},K_{2}}^{\text{ub}}\right)\left\Vert \boldsymbol{H}_{\Omega^{\perp}}\right\Vert _{1}}{\sqrt{K_{2}}}\\
\geq & \small\left(1-\delta_{2K_{1},2K_{2}}^{\text{lb}}\right)\left\Vert \boldsymbol{H}_{T\cap\Omega}+\boldsymbol{H}_{T^{\perp}\cap\Omega}^{(1)}+\boldsymbol{H}_{\Omega^{\perp}}^{(1)}\right\Vert _{\text{F}}\\
 & \quad\small-\frac{\left(1+\delta_{K_{1},K_{2}}^{\text{ub}}\right)}{\sqrt{K_{1}}}\left(\left\Vert \boldsymbol{H}_{T^{\perp}\cap\Omega}\right\Vert _{*}+\lambda\left\Vert \boldsymbol{H}_{\Omega^{\perp}}\right\Vert _{1}\right)\\
\geq & \small\text{ }\frac{\left(1-\delta_{2K_{1},2K_{2}}^{\text{lb}}\right)}{\sqrt{3}}\left(\left\Vert \boldsymbol{H}_{T\cap\Omega}\right\Vert _{\text{F}}+\left\Vert \boldsymbol{H}_{T^{\perp}\cap\Omega}^{(1)}\right\Vert _{\text{F}}+\left\Vert \boldsymbol{H}_{\Omega^{\perp}}^{(1)}\right\Vert _{\text{F}}\right)\\
 & \quad\small-\frac{\left(1+\delta_{K_{1},K_{2}}^{\text{ub}}\right)}{\sqrt{K_{1}}}\left(\left\Vert \boldsymbol{H}_{T^{\perp}\cap\Omega}\right\Vert _{*}+\lambda\left\Vert \boldsymbol{H}_{\Omega^{\perp}}\right\Vert _{1}\right).
\end{align*}
This taken collectively with (\ref{eq:TcapOmegaBound-StablePR}) gives
\begin{align*}
 & \frac{2\left(1+\delta_{K_{1},K_{2}}^{\text{ub}}\right)}{\sqrt{K_{1}}}\left(\left\Vert \boldsymbol{X}_{\text{c}}\right\Vert _{*}+\lambda\left\Vert \boldsymbol{X}_{\text{c}}\right\Vert _{1}\right)+\frac{2\epsilon}{m}\\
 & \quad\geq\left(\frac{1-\delta_{2K_{1},2K_{2}}^{\text{lb}}}{\sqrt{3}}-\frac{3\left(1+\delta_{K_{1},K_{2}}^{\text{ub}}\right)}{\sqrt{K_{1}}}\right)\cdot\\
 & \quad\quad\quad\quad\left(\left\Vert \boldsymbol{H}_{T\cap\Omega}\right\Vert _{\text{F}}+\left\Vert \boldsymbol{H}_{T^{\perp}\cap\Omega}^{(1)}\right\Vert _{\text{F}}+\left\Vert \boldsymbol{H}_{\Omega^{\perp}}^{(1)}\right\Vert _{\text{F}}\right).
\end{align*}
Therefore, if we know that 
\[
\frac{\frac{1-\delta_{2K_{1},2K_{2}}^{\text{lb}}}{\sqrt{3}}-\frac{3\left(1+\delta_{K_{1},K_{2}}^{\text{ub}}\right)}{\sqrt{K_{1}}}}{2\max\left\{ \frac{1+\delta_{K_{1},K_{2}}^{\text{ub}}}{\sqrt{K_{1}}},1\right\} }\geq\beta_{3}>0
\]
for some absolute constant $\beta_{3}$, then 
\begin{align}
 & \left\Vert \boldsymbol{H}_{T\cap\Omega}\right\Vert _{\text{F}}+\left\Vert \boldsymbol{H}_{T^{\perp}\cap\Omega}^{(1)}\right\Vert _{\text{F}}+\left\Vert \boldsymbol{H}_{\Omega^{\perp}}^{(1)}\right\Vert _{\text{F}}\nonumber \\
 & \quad\leq\frac{1}{\beta_{3}}\left(\left\Vert \boldsymbol{X}_{\text{c}}\right\Vert _{*}+\lambda\left\Vert \boldsymbol{X}_{\text{c}}\right\Vert _{1}+\frac{\epsilon_{1}}{m}\right).\label{eq:UBSparsePR}
\end{align}

On the other hand, we know from (\ref{eq:BHbound}) and (\ref{eq:BHOmegaBound})
that 
\begin{align}
 & \sum_{i=2}^{M_{1}}\left\Vert \boldsymbol{H}_{T^{\perp}\cap\Omega}^{(i)}\right\Vert _{\text{F}}+\sum_{i=2}^{M_{2}}\left\Vert \boldsymbol{H}_{\Omega^{\perp}}^{(i)}\right\Vert _{\text{F}}\nonumber \\
 & \quad\leq\small\frac{1}{1-\delta_{K_{1},K_{2}}^{\text{lb}}}\sum_{i=2}^{M_{1}}\left\Vert \mathcal{B}\left(\boldsymbol{H}_{T^{\perp}\cap\Omega}^{(i)}\right)\right\Vert _{1}+\sum_{i=2}^{M_{2}}\left\Vert \mathcal{B}\left(\boldsymbol{H}_{\Omega^{\perp}}^{(i)}\right)\right\Vert _{1}\nonumber \\
 & \quad\leq\small\frac{\left(1+\delta_{K_{1},K_{2}}^{\text{ub}}\right)\left\Vert \boldsymbol{H}_{T^{\perp}\cap\Omega}\right\Vert _{*}}{\left(1-\delta_{K_{1},K_{2}}^{\text{lb}}\right)\sqrt{K_{1}}}+\frac{\left(1+\delta_{K_{1},K_{2}}^{\text{ub}}\right)\left\Vert \boldsymbol{H}_{\Omega^{\perp}}\right\Vert _{1}}{\left(1-\delta_{K_{1},K_{2}}^{\text{lb}}\right)\sqrt{K_{2}}}\label{eq:Second}\\
 & \quad=\small\frac{1+\delta_{K_{1},K_{2}}^{\text{ub}}}{\left(1-\delta_{K_{1},K_{2}}^{\text{lb}}\right)\sqrt{K_{1}}}\left(\left\Vert \boldsymbol{H}_{T^{\perp}\cap\Omega}\right\Vert _{*}+\lambda\left\Vert \boldsymbol{H}_{\Omega^{\perp}}\right\Vert _{1}\right)\nonumber \\
 & \quad\leq\small\frac{1+\delta_{K_{1},K_{2}}^{\text{ub}}}{\left(1-\delta_{K_{1},K_{2}}^{\text{lb}}\right)\sqrt{K_{1}}}\left(3\left\Vert \boldsymbol{H}_{T\cap\Omega}\right\Vert +2\left\Vert \boldsymbol{X}_{\text{c}}\right\Vert _{*}+2\lambda\left\Vert \boldsymbol{X}_{\text{c}}\right\Vert _{1}\right),\nonumber 
\end{align}
where (\ref{eq:Second}) is a consequence of (\ref{eq:BHbound}) and
(\ref{eq:BHOmegaBound}), and the last inequality arises from (\ref{eq:TcapOmegaBound-StablePR}).
This together with (\ref{eq:UBSparsePR}) completes the proof.

\section{Proof of Lemma \ref{lemma:isotropy}}

\label{proof:lemma:isotropy}Simple calculation yields that 
\begin{equation}
\mathbb{E}\left[\boldsymbol{A}_{i}\left\langle \boldsymbol{A}_{i},\boldsymbol{X}\right\rangle \right]=2\boldsymbol{X}+\left(1+\frac{\mu_{4}-3}{n}\right)\text{tr}\left(\boldsymbol{X}\right)\cdot\boldsymbol{I}.\label{eq:EAA_original}
\end{equation}

When $\mu_{4}=3$, one can see that 
\begin{align}
\mathbb{E}\left[\boldsymbol{B}_{i}\left\langle \boldsymbol{B}_{i},\boldsymbol{X}\right\rangle \right] & =\frac{1}{4}\mathbb{E}\left[\left(\boldsymbol{A}_{2i-1}-\boldsymbol{A}_{2i}\right)\left\langle \boldsymbol{A}_{2i-1}-\boldsymbol{A}_{2i},\boldsymbol{X}\right\rangle \right]\nonumber \\
 & =\boldsymbol{X}.\label{eq:EAA_original-1}
\end{align}

When $\mu_{4}\neq3$, consider the linear combination 
\[
\boldsymbol{B}=a\boldsymbol{A}_{1}+b\boldsymbol{A}_{2}+c\boldsymbol{A}_{3},
\]
where we aim to find the coefficients $a,b$ and $c$ that makes $\boldsymbol{B}$
isotropic. If we further require 
\begin{equation}
\mathbb{E}\left[\boldsymbol{B}\right]=a+b+c=\frac{\epsilon}{\sqrt{n}},\label{eq:EBMean}
\end{equation}
then one can compute 
\begin{align*}
 & \mathbb{E}\left[\boldsymbol{B}\left\langle \boldsymbol{B},\boldsymbol{X}\right\rangle \right]=2\left(a^{2}+b^{2}+c^{2}\right)\boldsymbol{X}+\\
 & \quad\small\left[\left(1+\frac{\mu_{4}-3}{n}\right)\left(a^{2}+b^{2}+c^{2}\right)+2\left(ab+bc+ac\right)\right]\text{tr}\left(\boldsymbol{X}\right)\cdot\boldsymbol{I}.
\end{align*}
Our goal is thus to determine $a$, $b$ and $c$ that satisfy 
\[
\left(1+\frac{\mu_{4}-3}{n}\right)\left(a^{2}+b^{2}+c^{2}\right)+2\left(ab+bc+ac\right)=0,
\]
which combined with (\ref{eq:EBMean}) gives 
\begin{equation}
\frac{\mu_{4}-3}{n}\left(a^{2}+b^{2}+c^{2}\right)+\frac{\epsilon^{2}}{n}=0.\label{eq:Constraint_abc}
\end{equation}

If we set $a=1$, then (\ref{eq:Constraint_abc}) reduces to 
\[
\frac{\mu_{4}-3}{n}\left(1+b^{2}+\left(\frac{\epsilon}{\sqrt{n}}-1-b\right)^{2}\right)+\frac{\epsilon^{2}}{n}=0
\]
\[
\Rightarrow\quad b^{2}+b\left(1-\frac{\epsilon}{\sqrt{n}}\right)+\frac{1}{2}\left(1-\frac{\epsilon}{\sqrt{n}}\right)^{2}+\frac{1}{2}+\frac{\epsilon^{2}}{2\left(\mu_{4}-3\right)}=0.
\]
Solving this quadratic equation yields 
\begin{equation}
b=\frac{-\left(1-\frac{\epsilon}{\sqrt{n}}\right)+\sqrt{\Delta}}{2};\quad c=\frac{-\left(1-\frac{\epsilon}{\sqrt{n}}\right)-\sqrt{\Delta}}{2},
\end{equation}
where 
\begin{align*}
\Delta & :=\left(1-\frac{\epsilon}{\sqrt{n}}\right)^{2}-4\left(\frac{1}{2}\left(1-\frac{\epsilon}{\sqrt{n}}\right)^{2}+\frac{1}{2}+\frac{\epsilon^{2}}{2\left(\mu_{4}-3\right)}\right)\\
 & =-\left(1-\frac{\epsilon}{n}\right)^{2}-2-\frac{2\epsilon^{2}}{\mu_{4}-3}.
\end{align*}
Note that $\Delta>0$ when $\epsilon^{2}>1.5\cdot\left|3-\mu_{4}\right|$
. Also, $b$ and $c$ satisfy 
\begin{equation}
1+b^{2}+c^{2}=\frac{\epsilon^{2}}{3-\mu_{4}}.
\end{equation}
By choosing $\alpha=\sqrt{\frac{3-\mu_{4}}{2\epsilon^{2}}}$, $\beta=b\alpha$,
and $\gamma=c\alpha$, we derive the form of $\boldsymbol{B}_{i}$
as introduced in (\ref{eq:alpha_beta_gamma}), which satisfies 
\[
\mathbb{E}\left[\boldsymbol{B}_{i}\left\langle \boldsymbol{B}_{i},\boldsymbol{X}\right\rangle \right]=\boldsymbol{X}.
\]

Finally, we remark that for \emph{any} norm $\left\Vert \cdot\right\Vert _{\text{ n}}$.
This can be easily bounded as follows 
\begin{align}
\left\Vert \boldsymbol{B}_{i}\right\Vert _{\text{n}} & \leq\sqrt{\frac{\left|3-\mu_{4}\right|}{2\epsilon^{2}}}\left(1+\left|b\right|+\left|c\right|\right)\max_{i:1\leq i\leq m}\left\Vert \boldsymbol{A}_{i}\right\Vert _{\text{n}}\nonumber \\
 & \leq\sqrt{3}\sqrt{\frac{\left|3-\mu_{4}\right|}{2\epsilon^{2}}\left(1+b^{2}+c^{2}\right)}\max_{i:1\leq i\leq m}\left\Vert \boldsymbol{A}_{i}\right\Vert _{\text{n}}\\
 & =\sqrt{3}\max_{i:1\leq i\leq m}\left\Vert \boldsymbol{A}_{i}\right\Vert _{\text{n}}.
\end{align}
This concludes the proof.

\section{Proof of Lemma \ref{lemma:UB-T(zz)}\label{sec:Proof-of-Lemma:UB-T(zz)}}

Let $\boldsymbol{M}$ represent the symmetric Toeplitz matrix as follows
\[
\boldsymbol{M}=\left[\boldsymbol{M}_{\left|i-l\right|}\right]_{1\leq i,l\leq n}:=\mathcal{T}\left(\boldsymbol{z}\boldsymbol{z}^{\top}\right),
\]
and since each descending diagonal of a Toeplitz matrix is constant,
the entry $\boldsymbol{M}_{k}$ is given by the average of the corresponding
diagonal, i.e. 
\[
\boldsymbol{M}_{k}:=\frac{1}{n-k}\sum_{l=k+1}^{n}\boldsymbol{z}_{l}\boldsymbol{z}_{l-k},\quad0\leq k<n.
\]
Apparently, one has $\mathbb{E}\left[\boldsymbol{M}_{0}\right]=1$
and $\mathbb{E}\left[\boldsymbol{M}_{k}\right]=0$ ($1\leq k<n$).

\begin{comment}
The variance satisfies 
\begin{align}
\forall k\text{ }\left(1\leq k\leq n-1\right):\quad{\bf Var}\left(\boldsymbol{M}_{k}\right) & \leq\mathbb{E}\left(\frac{1}{n-k}\sum_{j=k+1}^{n}\boldsymbol{z}_{j}\boldsymbol{z}_{j-k}\right)^{2}\nonumber \\
 & \leq\frac{3}{\left(n-k\right)^{2}}\sum_{j=k+1}^{n}\left(\mathbb{E}\boldsymbol{z}_{j}^{2}\right)\left(\mathbb{E}\boldsymbol{z}_{j-k}^{2}\right)\nonumber \\
 & =\frac{3}{n-k}.\label{eq:VarMk}
\end{align}
and 
\begin{equation}
{\bf Var}\left(\boldsymbol{M}_{0}\right)=\frac{1}{n}\mathbb{E}\left(\boldsymbol{z}_{1}^{4}\right)=\frac{3}{n}.\label{eq:VarM0}
\end{equation}
In addition, one can also verify that for any $0\leq k<l\leq n-1$,
the cross terms are uncorrelated, i.e. 
\begin{equation}
\mathbb{E}\boldsymbol{M}_{k}\boldsymbol{M}_{l}=\frac{1}{\left(n-k\right)\left(n-l\right)}\left(\sum_{j_{2}=l+1}^{n}\sum_{j_{1}=k+1}^{n}\boldsymbol{z}_{j_{1}}\boldsymbol{z}_{j_{1}-k}\boldsymbol{z}_{j_{2}}\boldsymbol{z}_{j_{2}-l}\right)=0.\label{eq:CorrelationM}
\end{equation}
\end{comment}

The harmonic structure of the Toeplitz matrix $\boldsymbol{M}$ motivates
us to embed it into a circulant matrix $\boldsymbol{C}_{\boldsymbol{M}}$.
Specifically, a $\left(2n-1\right)\times\left(2n-1\right)$ circulant
matrix 
\[
\boldsymbol{C}_{\boldsymbol{M}}:=\left[\begin{array}{cccc}
c_{0} & c_{1} & \cdots & c_{2n-2}\\
c_{2n-2} & c_{0} & c_{1} & c_{2}\\
\vdots & \vdots & \ddots & \vdots\\
c_{1} & c_{2} & \cdots & c_{0}
\end{array}\right]
\]
is constructed such that 
\[
c_{i}:=\begin{cases}
\boldsymbol{M}_{i},\quad & \text{if }0\leq i<n;\\
\boldsymbol{M}_{2n-i-1},\quad & \text{if }n\leq i\leq2n-2.
\end{cases}
\]
Since $\boldsymbol{M}$ is a submatrix of $\boldsymbol{C}_{\boldsymbol{M}}$,
it suffices to bound the spectral norm of $\boldsymbol{C}_{\boldsymbol{M}}$.
Define $\omega_{i}:=\exp\left(\frac{2\pi j}{2n-1}\cdot i\right),$
then the corresponding eigenvalues of $\boldsymbol{C}_{\boldsymbol{M}}$
are given by 
\begin{align*}
\lambda_{i}: & =\sum_{l}c_{l}\omega_{i}^{l}=\boldsymbol{M}_{0}+\sum_{l=1}^{n-1}\boldsymbol{M}_{l}\omega_{i}^{l}+\sum_{l=n}^{2n-2}\boldsymbol{M}_{2n-l-1}\omega_{i}^{l}\\
 & =\boldsymbol{M}_{0}+2\sum_{l=1}^{n-1}\boldsymbol{M}_{l}\cos\left(\frac{2\pi il}{2n-1}\right),
\end{align*}
for $i=0,1,\cdots,2n-2$, which satisfies $\mathbb{E}\lambda_{i}=\mathbb{E}\boldsymbol{M}_{0}=1$.
This leads to an upper bound as follows 
\begin{equation}
\left\Vert \boldsymbol{M}\right\Vert \leq\left\Vert \boldsymbol{C}_{\boldsymbol{M}}\right\Vert \leq\max_{0\leq i\leq2n-2}\left|\lambda_{i}\right|.\label{eq:BoundCirculantTrick}
\end{equation}

Note that $\lambda_{i}$ is a quadratic form in $\left\{ \boldsymbol{z}_{1},\boldsymbol{z}_{2},\cdots,\boldsymbol{z}_{n}\right\} $.
Define the symmetric coefficient matrix $\boldsymbol{G}^{\left(i\right)}$
such that for any $1\leq\alpha,\beta\leq n$, 
\[
\boldsymbol{G}_{\alpha,\beta}^{(i)}=\frac{1}{n-\left|l\right|}\cos\left(\frac{2\pi i\left|l\right|}{2n-1}\right),\quad\text{if }\alpha-\beta=l,
\]
which satisfies 
\begin{align*}
\lambda_{i} & =\mathbb{E}\left[\boldsymbol{M}_{0}\right]+\sum_{1\leq\alpha,\beta\leq n}\boldsymbol{G}_{\alpha,\beta}^{(i)}\left(\boldsymbol{z}_{\alpha}\boldsymbol{z}_{\beta}-\mathbb{E}\left[\boldsymbol{z}_{\alpha}\boldsymbol{z}_{\beta}\right]\right)\\
 & =1+\sum_{1\leq\alpha,\beta\leq n}\boldsymbol{G}_{\alpha,\beta}^{(i)}\left(\boldsymbol{z}_{\alpha}\boldsymbol{z}_{\beta}-\mathbb{E}\left[\boldsymbol{z}_{\alpha}\boldsymbol{z}_{\beta}\right]\right).
\end{align*}
When $\boldsymbol{z}$ are drawn from a sub-Gaussian measure, Lemma~\ref{lemma:hanson-wright}
asserts that there exists an absolute constant $c_{10}>0$ such that
\begin{equation}
\mathbb{P}\left\{ \left|\lambda_{i}-1\right|\geq t\right\} \leq\exp\left(-c_{10}\min\left\{ \frac{t}{\Vert\boldsymbol{G}^{(i)}\Vert},\frac{t^{2}}{\Vert\boldsymbol{G}^{(i)}\Vert_{\text{F}}^{2}}\right\} \right)\label{eq:UBquadraticForm}
\end{equation}
holds for any $t>0$.

It remains to compute $\Vert\boldsymbol{G}^{(i)}\Vert_{\text{F}}$
and $\Vert\boldsymbol{G}^{(i)}\Vert$. Since $\boldsymbol{G}^{(i)}$
is a symmetric Toeplitz matrix, we have 
\begin{equation}
\Vert\boldsymbol{G}^{(i)}\Vert_{\text{F}}^{2}=\sum_{\alpha,\beta=1}^{n}\left|\boldsymbol{G}_{\alpha,\beta}\right|^{2}\leq2\sum_{l=0}^{n-1}\frac{1}{n-l}\leq2\log n.\label{eq:FnormG}
\end{equation}
It then follows that 
\begin{equation}
\Vert\boldsymbol{G}^{(i)}\Vert\leq\Vert\boldsymbol{G}^{(i)}\Vert_{\text{F}}\leq\sqrt{2\log n}.\label{eq:NormG}
\end{equation}
Substituting these two bounds into (\ref{eq:UBquadraticForm}) immediately
yields that there exists a constant $c_{12}>0$ such that 
\begin{equation}
\lambda_{i}\leq c_{12}\log^{\frac{3}{2}}n,\quad1\leq i\leq2n-2
\end{equation}
holds with probability exceeding $1-\frac{1}{n^{10}}$. This taken
collectively with (\ref{eq:BoundCirculantTrick}) concludes the proof.

\section{Proof of Lemma \ref{lemma-Deviation-BB}\label{sec:Proof-of-Lemma-Deviation-BB}}

For technical convenience, we introduce another collection of events
\[
\forall1\leq i\leq m:\quad F_{i}:=\left\{ \left\Vert \boldsymbol{B}_{i}\right\Vert _{\text{F}}\leq20n\log n\right\} .
\]
Since the restriction of $\mathcal{B}_{i}$ to Toeplitz matrices is
isotropic and $\mathcal{T}\mathcal{B}_{i}^{*}\mathcal{B}_{i}\mathcal{T}\succeq0$,
we have $\mathcal{\mathcal{T}}=\mathbb{E}\left[\mathcal{T}\mathcal{B}_{i}^{*}\mathcal{B}_{i}\mathcal{T}\right]\succeq\mathbb{E}\left[\mathcal{T}\mathcal{B}_{i}^{*}\mathcal{B}_{i}\mathcal{T}{\bf 1}_{E}\right]\succeq\mathbb{E}\left[\mathcal{T}\mathcal{B}_{i}^{*}\mathcal{B}_{i}\mathcal{T}{\bf 1}_{E\cap F_{i}}\right]$,
which yields 
\begin{equation}
\left\Vert \mathbb{E}\left[\mathcal{T}\mathcal{B}_{i}^{*}\mathcal{B}_{i}\mathcal{T}{\bf 1}_{E}\right]-\mathcal{T}\right\Vert \leq\left\Vert \mathbb{E}\left[\mathcal{T}\mathcal{B}_{i}^{*}\mathcal{B}_{i}\mathcal{T}{\bf 1}_{E\cap F_{i}}\right]-\mathcal{T}\right\Vert .\label{eq:BoundEcapF}
\end{equation}
Thus, it is sufficient to evaluate $\left\Vert \mathbb{E}\left[\mathcal{T}\mathcal{B}_{i}^{*}\mathcal{B}_{i}{\bf 1}_{E\cap F_{i}}\right]-\mathcal{\mathcal{T}}\right\Vert $.
To this end, we adopt an argument of similar spirit as \cite[Appendix B]{CandesPlan2011RIPless}.
Write 
\begin{align*}
\mathcal{\mathcal{T}} & =\mathbb{E}\left[\mathcal{\mathcal{T}}\mathcal{B}_{i}^{*}\mathcal{B}_{i}\mathcal{\mathcal{T}}\right]\\
 & =\mathbb{E}\left[\mathcal{\mathcal{T}}\mathcal{B}_{i}^{*}\mathcal{B}_{i}\mathcal{\mathcal{T}}{\bf 1}_{E\cap F_{i}}\right]+\mathbb{E}\left[\mathcal{\mathcal{T}}\mathcal{B}_{i}^{*}\mathcal{B}_{i}\mathcal{\mathcal{T}}{\bf 1}_{E^{c}\cup F_{i}^{c}}\right],
\end{align*}
and, consequently, 
\begin{align}
 & \left\Vert \mathbb{E}\left[\mathcal{\mathcal{T}}\mathcal{B}_{i}^{*}\mathcal{B}_{i}\mathcal{\mathcal{T}}{\bf 1}_{E\cap F_{i}}\right]-\mathcal{\mathcal{T}}\right\Vert \nonumber \\
 & \quad=\left\Vert \mathbb{E}\left[\mathcal{\mathcal{T}}\mathcal{B}_{i}^{*}\mathcal{B}_{i}\mathcal{\mathcal{T}}{\bf 1}_{E^{c}\cup F_{i}^{c}}\right]\right\Vert \label{eq:SeparateBB}\\
 & \quad\leq\left\Vert \mathbb{E}\left[\mathcal{\mathcal{T}}\mathcal{B}_{i}^{*}\mathcal{B}_{i}\mathcal{\mathcal{T}}{\bf 1}_{F_{i}\cap E^{c}}\right]\right\Vert +\left\Vert \mathbb{E}\left[\mathcal{\mathcal{T}}\mathcal{B}_{i}^{*}\mathcal{B}_{i}\mathcal{\mathcal{T}}{\bf 1}_{F_{i}^{c}}\right]\right\Vert ,\nonumber 
\end{align}
which allows us to bound $\left\Vert \mathbb{E}\left[\mathcal{\mathcal{T}}\mathcal{B}_{i}^{*}\mathcal{B}_{i}\mathcal{\mathcal{T}}{\bf 1}_{F_{i}\cap E^{c}}\right]\right\Vert $
and $\left\Vert \mathbb{E}\left[\mathcal{\mathcal{T}}\mathcal{B}_{i}^{*}\mathcal{B}_{i}\mathcal{\mathcal{T}}{\bf 1}_{F_{i}^{c}}\right]\right\Vert $
separately.

First, it follows from the identity $\left\Vert \mathcal{\mathcal{T}}\mathcal{B}_{i}^{*}\mathcal{B}_{i}\mathcal{\mathcal{T}}\right\Vert =\left\Vert \mathcal{\mathcal{T}}\left(\boldsymbol{B}_{i}\right)\right\Vert _{\text{F}}^{2}$
and the definition of the event $F_{i}$ that 
\begin{equation}
\left\Vert \mathbb{E}\left[\mathcal{\mathcal{T}}\mathcal{B}_{i}^{*}\mathcal{B}_{i}\mathcal{\mathcal{T}}{\bf 1}_{F_{i}\cap E^{c}}\right]\right\Vert \leq\left(20n\log n\right)^{2}\mathbb{P}\left(E^{c}\right)<\frac{1}{n^{2}}.\label{eq:BoundBB_FandEc}
\end{equation}

Second, applying the tail inequality on the quadratic form (e.g. \cite[Proposition 1.1]{hsu2012tail})
yields 
\begin{equation}
\mathbb{P}\left(\left\Vert \boldsymbol{A}_{i}\right\Vert _{\text{F}}\geq c_{20}\left(n+2\sqrt{nt}+2t\right)\right)\leq e^{-t}.\label{eq:SubgaussianFnorm}
\end{equation}
Thus, for any $t>\left(20n\log n\right)^{2}$, one has 
\begin{equation}
\mathbb{P}\left(\left\Vert \boldsymbol{A}_{i}\right\Vert _{\text{F}}\geq\sqrt{\frac{t}{3}}\right)\leq e^{-c_{21}\sqrt{t}}
\end{equation}
for some absolute constant $c_{21}>0$. Recall that $\left\Vert \boldsymbol{B}_{i}\right\Vert _{\text{F}}\leq\sqrt{3}\max\left\{ \left\Vert \boldsymbol{A}_{3i-2}\right\Vert _{\text{F}},\left\Vert \boldsymbol{A}_{3i-1}\right\Vert _{\text{F}},\left\Vert \boldsymbol{A}_{3i}\right\Vert _{\text{F}}\right\} $,
which indicates 
\begin{align*}
\mathbb{P}\left(\left\Vert \boldsymbol{B}_{i}\right\Vert _{\text{F}}^{2}\geq t\right) & \leq\mathbb{P}\left(\left\Vert \boldsymbol{A}_{3i-1}\right\Vert _{\text{F}}^{2}\geq\frac{t}{3}\right)+\mathbb{P}\left(\left\Vert \boldsymbol{A}_{3i-2}\right\Vert _{\text{F}}^{2}\geq\frac{t}{3}\right)\\
 & \quad\quad\quad\quad+\mathbb{P}\left(\left\Vert \boldsymbol{A}_{3i}\right\Vert _{\text{F}}^{2}\geq\frac{t}{3}\right)\\
 & \leq3\mathbb{P}\left(\left\Vert \boldsymbol{A}_{i}\right\Vert _{\text{F}}\geq\sqrt{\frac{t}{3}}\right)\\
 & \leq3e^{-c_{21}\sqrt{t}}:=g(t).
\end{align*}
A similar approach as introduced in \cite[Appendix B]{CandesPlan2011RIPless}
gives 
\begin{align}
 & \left\Vert \mathbb{E}\left[\mathcal{\mathcal{T}}\mathcal{B}_{i}^{*}\mathcal{B}_{i}\mathcal{\mathcal{T}}{\bf 1}_{F_{i}^{c}}\right]\right\Vert \leq\mathbb{E}\left[\left\Vert \boldsymbol{B}_{i}\right\Vert _{\text{F}}^{2}{\bf 1}_{F_{i}^{c}}\right]\nonumber \\
 & \quad\leq\left(20n\log n\right)^{2}g\left(\left(20n\log n\right)^{2}\right)+{\displaystyle \int}_{\left(20n\log n\right)^{2}}^{\infty}g\left(t\right)\mathrm{d}t\nonumber \\
 & \quad<\left(20n\log n\right)^{2}g\left(\left(20n\log n\right)^{2}\right)+{\displaystyle \int}_{\left(20n\log n\right)^{2}}^{\infty}\frac{1}{t^{5}}\mathrm{d}t\nonumber \\
 & \quad<\frac{c_{15}}{n^{2}}\label{eq:BoundBB_Fc}
\end{align}
for some absolute constant $c_{15}>0$. This taken collectively with
(\ref{eq:BoundEcapF}), (\ref{eq:SeparateBB}) and (\ref{eq:BoundBB_FandEc})
yields 
\[
\left\Vert \mathbb{E}\left[\mathcal{\mathcal{T}}\mathcal{B}_{i}^{*}\mathcal{B}_{i}\mathcal{\mathcal{T}}{\bf 1}_{E}\right]-\mathcal{\mathcal{T}}\right\Vert \leq\left\Vert \mathbb{E}\left[\mathcal{\mathcal{T}}\mathcal{B}_{i}^{*}\mathcal{B}_{i}\mathcal{\mathcal{T}}{\bf 1}_{E\cap F_{i}}\right]-\mathcal{\mathcal{T}}\right\Vert \leq\frac{\tilde{c}_{15}}{n^{2}}
\]
for some absolute constant $\tilde{c}_{15}>0$.

\section{Proof of Lemma \ref{lemma:EstimatePAAP}\label{sec:Proof-of-Lemma-EstimatePAAP}}

Dudley's inequality \cite[Theorem 11.17]{ledoux1991probability} allows
us to bound the supremum of the Gaussian process as follows 
\begin{align}
 & \mathbb{E}\left[\left.\sup_{T\in\mathcal{M}_{r}^{\mathrm{t}},\boldsymbol{X}\in T,\left\Vert \boldsymbol{X}\right\Vert _{\text{F}}=1}\Bigg|\sum_{i=1}^{m}g_{i}\left|\mathcal{B}_{i}\left(\boldsymbol{X}\right)\right|^{2}\Bigg|\text{ }\right|\mathcal{B}_{i}\text{ }(1\leq i\leq m)\right]\nonumber \\
 & \quad\leq24{\displaystyle \int}_{0}^{\infty}\log^{\frac{1}{2}}N\left(\mathcal{D}_{2r}^{2},d\left(\cdot,\cdot\right),u\right)\mathrm{d}u,\label{eq:EntropyBoundGaussian}
\end{align}
where $\mathcal{D}_{r}^{2}:=\left\{ \boldsymbol{X}\mid\left\Vert \boldsymbol{X}\right\Vert _{\text{F}}=1,\mathrm{rank}\left(\boldsymbol{X}\right)\leq2r\right\} $.
Here, $N\left(\mathcal{Z},d\left(\cdot,\cdot\right),u\right)$ denotes
the smallest number of balls of radius $u$ centered in points of
$\mathcal{Z}$ needed to cover the set $\mathcal{Z}$, under the pseudo
metric $d\left(\cdot,\cdot\right)$ defined as follows 
\begin{align*}
d\left(\boldsymbol{X},\boldsymbol{Y}\right): & =\sqrt{\sum_{i=1}^{m}\left(\left|\mathcal{B}_{i}\left(\boldsymbol{X}\right)\right|^{2}-\left|\mathcal{B}_{i}\left(\boldsymbol{Y}\right)\right|^{2}\right)^{2}}.
\end{align*}
For any $\left(\boldsymbol{X},\boldsymbol{Y}\right)$ that satisfy
$\left\Vert \boldsymbol{X}\right\Vert _{\text{F}}=\left\Vert \boldsymbol{Y}\right\Vert _{\text{F}}=1$,
$\text{rank}\left(\boldsymbol{X}\right)\leq r$ and $\text{rank}\left(\boldsymbol{Y}\right)\leq r$,
the pseudo metric satisfies 
\begin{align*}
 & d\left(\boldsymbol{X},\boldsymbol{Y}\right)\leq\sqrt{\left(\max_{i:1\leq i\leq m}\left|\mathcal{B}_{i}\left(\boldsymbol{X}-\boldsymbol{Y}\right)\right|^{2}\right)\sum_{i=1}^{m}\left|\mathcal{B}_{i}\left(\boldsymbol{X}+\boldsymbol{Y}\right)\right|^{2}}\\
 & \quad\leq\sqrt{2}\sqrt{\sum_{i=1}^{m}\left|\mathcal{B}_{i}\left(\boldsymbol{X}\right)\right|^{2}+\left|\mathcal{B}_{i}\left(\boldsymbol{Y}\right)\right|^{2}}\max_{i:1\leq i\leq m}\left|\mathcal{B}_{i}\left(\boldsymbol{X}-\boldsymbol{Y}\right)\right|\\
 & \quad\leq\small\left\{ \sqrt{\left\langle \boldsymbol{X},\left(\sum_{i=1}^{m}\mathcal{B}_{i}^{*}\mathcal{B}_{i}\right)\left(\boldsymbol{X}\right)\right\rangle }+\sqrt{\left\langle \boldsymbol{Y},\left(\sum_{i=1}^{m}\mathcal{B}_{i}^{*}\mathcal{B}_{i}\right)\left(\boldsymbol{Y}\right)\right\rangle }\right\} \\
 & \quad\quad\quad\quad\cdot\sqrt{2}\max_{i:1\leq i\leq m}\left|\mathcal{B}_{i}\left(\boldsymbol{X}-\boldsymbol{Y}\right)\right|\\
 & \quad\leq2\sqrt{2}\sup_{T:T\in\mathcal{M}_{r}^{\mathrm{t}}}\sqrt{\left\Vert \sum_{i=1}^{m}\mathcal{P}_{T}\mathcal{B}_{i}^{*}\mathcal{B}_{i}\mathcal{P}_{T}\right\Vert }\max_{i:1\leq i\leq m}\left|\left\langle \boldsymbol{B}_{i},\boldsymbol{X}-\boldsymbol{Y}\right\rangle \right|,
\end{align*}
where the last inequality relies on the observation that $\left\Vert \boldsymbol{X}\right\Vert _{\text{F}}=\left\Vert \boldsymbol{Y}\right\Vert _{\text{F}}=1$.

If we introduce the quantity 
\begin{equation}
R:=\sup_{T:T\in\mathcal{M}_{r}^{\mathrm{t}}}\sqrt{\left\Vert \sum_{i=1}^{m}\mathcal{P}_{T}\mathcal{B}_{i}^{*}\mathcal{B}_{i}\mathcal{P}_{T}\right\Vert }\label{eq:DefnR}
\end{equation}
and define another pseudo metric $\left\Vert \cdot\right\Vert _{\mathcal{B}}$
as 
\begin{equation}
\left\Vert \boldsymbol{X}\right\Vert _{\mathcal{B}}:=\max_{i:1\leq i\leq m}\left|\left\langle \boldsymbol{B}_{i},\boldsymbol{X}\right\rangle \right|,\label{eq:MetricA}
\end{equation}
then $d\left(\boldsymbol{X},\boldsymbol{Y}\right)\leq2\sqrt{2}R\left\Vert \boldsymbol{X}-\boldsymbol{Y}\right\Vert _{\mathcal{B}}$,
which allows us to bound 
\begin{align}
 & {\displaystyle \int}_{0}^{\infty}\log^{\frac{1}{2}}N\left(\mathcal{D}_{2r}^{2},d\left(\cdot,\cdot\right),u\right)\mathrm{d}u\nonumber \\
 & \quad\leq{\displaystyle \int}_{0}^{\infty}\log^{\frac{1}{2}}N\left(\mathcal{D}_{2r}^{2},2\sqrt{2}R\left\Vert \cdot\right\Vert _{\mathcal{B}},u\right)\mathrm{d}u\nonumber \\
 & \quad={\displaystyle \int}_{0}^{\infty}\log^{\frac{1}{2}}N\left(\frac{1}{\sqrt{2r}}\mathcal{D}_{2r}^{2},\left\Vert \cdot\right\Vert _{\mathcal{B}},\frac{u}{4R\sqrt{r}}\right)\mathrm{d}u\nonumber \\
 & \quad\leq{\displaystyle \int}_{0}^{\infty}\log^{\frac{1}{2}}N\left(\mathcal{D}_{2r}^{1},\left\Vert \cdot\right\Vert _{\mathcal{B}},\frac{u}{4R\sqrt{r}}\right)\mathrm{d}u\nonumber \\
 & \quad\leq\text{ }4R\sqrt{r}{\displaystyle \int}_{0}^{\infty}\log^{\frac{1}{2}}N\left(\mathcal{D}^{1},\left\Vert \cdot\right\Vert _{\mathcal{B}},u\right)\mathrm{d}u.\label{eq:EntropyBound}
\end{align}
Here, $\mathcal{D}_{r}^{1}$ and $\mathcal{D}^{1}$ stand for 
\begin{align*}
\mathcal{D}_{r}^{1}: & =\left\{ \boldsymbol{X}\mid\left\Vert \boldsymbol{X}\right\Vert _{*}\leq1,\text{rank}\left(\boldsymbol{X}\right)\leq r\right\} ,\\
\mathcal{D}^{1}: & =\left\{ \boldsymbol{X}\mid\left\Vert \boldsymbol{X}\right\Vert _{*}\leq1\right\} ,
\end{align*}
and we have exploited the containment 
\[
\frac{1}{\sqrt{2r}}\mathcal{D}_{2r}^{2}\subseteq\mathcal{D}_{2r}^{1}\subseteq\mathcal{D}^{1}.
\]
Hence it suffices to bound 
\[
E_{2}:=4R\sqrt{r}{\displaystyle \int}_{0}^{\infty}\log^{\frac{1}{2}}N\left(\mathcal{D}^{1},\left\Vert \cdot\right\Vert _{\mathcal{B}},u\right)\mathrm{d}u.
\]

It remains to bound the covering number (or metric entropy) of the
nuclear-norm ball $\mathcal{D}^{1}$. Repeating the well-known procedure
as in \cite[Page 1113]{aubrun2009almost} yields 
\begin{align*}
{\displaystyle \int}_{0}^{\infty}\sqrt{\log N\left(\mathcal{D}^{1},\left\Vert \cdot\right\Vert _{\mathcal{B}},u\right)}\mathrm{d}u & \leq C_{10}K\left(\log n\right)^{5/2}\sqrt{\log m}\\
 & \leq C_{11}K\log^{3}n
\end{align*}
for some constants $C_{10},C_{11}>0$. This taken collectively with
(\ref{eq:EntropyBoundGaussian}) and (\ref{eq:EntropyBound}) gives
that conditioning on $\mathcal{B}_{i}$'s, one has 
\begin{align}
 & \mathbb{E}\left[\left.\sup_{T\in\mathcal{M}_{r}^{\mathrm{t}}}\Bigg\|\mathcal{P}_{T}\left(\sum_{i=1}^{m}g_{i}\mathcal{B}_{i}^{*}\mathcal{B}_{i}\right)\mathcal{P}_{T}\Bigg\|\text{ }\right|\mathcal{B}_{i}\text{ }(1\leq i\leq m)\right]\nonumber \\
 & \quad\leq C_{14}\sqrt{r}K\log^{3}n\sqrt{\sup_{T:T\in\mathcal{M}_{r}^{\mathrm{t}}}\left\Vert \sum_{i=1}^{m}\mathcal{P}_{T}\mathcal{B}_{i}^{*}\mathcal{B}_{i}\mathcal{P}_{T}\right\Vert }.\label{eq:EstimatePAAP}
\end{align}
for some absolute constant $C_{14}>0$. \bibliographystyle{IEEEtran}
\bibliography{bibfileToeplitzPR}

\begin{IEEEbiographynophoto}{Yuxin Chen} (S'09) received the B.S. in Microelectronics with High Distinction from Tsinghua University in 2008, the M.S. in Electrical and Computer Engineering from the University of Texas at Austin in 2010, the M.S. in Statistics from Stanford University in 2013, and the Ph.D. in Electrical Engineering from Stanford University in 2015. He is currently a postdoctoral scholar in the Department of Statistics at Stanford University. His research interests include high-dimensional structured estimation, information theory, compressed sensing, and network science. \end{IEEEbiographynophoto} 

\begin{IEEEbiographynophoto}{Yuejie Chi}
(S'09-M'12)
received the Ph.D. degree in Electrical Engineering from Princeton University in 2012, and the B.E. (Hon.) degree in Electrical Engineering from Tsinghua University, Beijing, China, in 2007. Since September 2012, she has been an assistant professor with the department of Electrical and Computer Engineering and the department of Biomedical Informatics at the Ohio State University.

She is the recipient of the IEEE Signal Processing Society Young Author Best Paper Award in 2013 and the Best Paper Award at the IEEE International Conference on Acoustics, Speech, and Signal Processing (ICASSP) in 2012. She received AFOSR Young Investigator Program Award in 2015, the Ralph E. Powe Junior Faculty Enhancement Award from Oak Ridge Associated Universities in 2014, a Google Faculty Research Award in 2013, the Roberto Padovani scholarship from Qualcomm Inc. in 2010, and an Engineering Fellowship from Princeton University in 2007. She has held visiting positions at Colorado State University, Stanford University and Duke University, and interned at Qualcomm Inc. and Mitsubishi Electric Research Lab. Her research interests include statistical signal processing, information theory, machine learning and their applications in high-dimensional data analysis, active sensing and bioinformatics.

\end{IEEEbiographynophoto}

\begin{IEEEbiographynophoto}{Andrea J. Goldsmith} (S'90-M'93-SM'99-F'05) is the Stephen Harris professor of Electrical Engineering at Stanford University. She was previously on the faculty of Electrical Engineering at Caltech. Her research interests are in information theory and communication theory, and their application to wireless communications and related fields. Dr. Goldsmith co-founded and served as CTO for two wireless companies: Accelera, Inc., which develops software-defined wireless network technology for cloud-based management of WiFi  access points, and  Quantenna Communications, Inc., which develops high-performance WiFi chipsets. She has also held industry positions at Maxim Technologies, Memorylink Corporation, and AT\&T Bell Laboratories. She is a Fellow of the IEEE and of Stanford, and  has received several awards for her work, including  the IEEE ComSoc Edwin H. Armstrong Achievement Award as well as Technical Achievement Awards in Communications Theory and in Wireless Communications,  the National Academy of Engineering Gilbreth Lecture Award, the IEEE ComSoc and Information Theory Society Joint Paper Award, the IEEE ComSoc Best Tutorial Paper Award, the Alfred P. Sloan Fellowship, and the Silicon Valley/San Jose Business Journal's Women of Influence Award. She is author of the book ``Wireless Communications'' and co-author of the books ``MIMO Wireless Communications'' and ``Principles of Cognitive Radio,'' all published by Cambridge University Press, as well as an inventor on 28 patents. She received the B.S., M.S. and Ph.D. degrees in Electrical Engineering from U.C. Berkeley.  

Dr. Goldsmith has served as editor for the IEEE Transactions on Information Theory, the Journal on Foundations and Trends in Communications and Information Theory and in Networks, the IEEE Transactions on Communications, and the IEEE Wireless Communications Magazine as well as on the Steering Committee for the IEEE Transactions on Wireless Communications. She participates actively in committees and conference organization for the IEEE Information Theory and Communications Societies and has served on the Board of Governors for both societies. She has also been a Distinguished Lecturer for both societies, served as President of the IEEE Information Theory Society in 2009, founded and chaired the student committee of the IEEE Information Theory society, and chaired the Emerging Technology Committee of the IEEE Communications Society. At Stanford she received the inaugural University Postdoc Mentoring Award, served as Chair of Stanfords Faculty Senate in 2009, and currently serves on its Faculty Senate, Budget Group, and Task Force on Women and Leadership. 
\end{IEEEbiographynophoto}
\end{document}